\newtheorem{fact}{Fact}[theorem]
\def\rint{\int_{\mathbb{R}}}
\begin{document}
\numberwithin{equation}{section}

\begin{frontmatter}
\title{Improved inference for vaccine-induced immune responses via shape-constrained methods}
\runtitle{Shape-constrained inference in vaccine trials}

\begin{aug}
\author{\fnms{Nilanjana} \snm{Laha}},
\author{\fnms{Zoe} \snm{Moodie,}
} \\
\author{\fnms{Ying} \snm{Huang,}
}
and
\author{\fnms{Alex} \snm{Luedtke}
}

\runauthor{Laha et al.}


\end{aug}

\begin{abstract}
We study the performance of shape-constrained methods for evaluating immune response profiles from early-phase vaccine trials. The motivating problem for this work involves quantifying and comparing the IgG binding immune responses to the first and second variable loops (V1V2 region) arising in  HVTN 097 and HVTN 100 HIV vaccine trials. 
We consider unimodal and log-concave shape-constrained methods to compare the immune profiles of the two vaccines, which is reasonable because the data support that the underlying densities of the immune responses could have these shapes. 
To this end, we develop novel shape-constrained tests of stochastic dominance and shape-constrained plug-in estimators of  the squared Hellinger distance between two densities. Our techniques are either tuning parameter free, or rely on only one tuning parameter, but their performance  is either better (the  tests of stochastic dominance)  or comparable with the  nonparametric methods (the estimators of the squared Hellinger distance). The minimal dependence on tuning parameters is especially desirable in clinical contexts where analyses must be prespecified and reproducible.   \textcolor{black}{Our methods are supported by theoretical results and simulation studies.}
\end{abstract}

\begin{keyword}[class=MSC]
\kwd[ ]{62G07, 62G10, 62G05, 	62P10}
\end{keyword}

\begin{keyword}
\kwd{vaccine trial}
\kwd{shape-constraint}
\kwd{test of stochastic dominance}
\kwd{Hellinger distance}
\kwd{logconcave}
\kwd{unimodal}
\end{keyword}

\end{frontmatter}
\section{Introduction}
\label{sec: intro}

To date, the RV144 trial conducted in Thailand is the only vaccine efficacy  trial to show a signal of efficacy (31\%) against HIV infection  \citep{RV144}. RV144 inspired a phase 1b trial, named HIV Vaccine Trials Network (HVTN) 097, which evaluated the safety and immunogenicity  of the same regimen  in a South African population \citep{HVTN097}. The predominant subtype of HIV in South Africa 
is Clade C.
Therefore, in an effort to  increase the potential for high efficacy against clade C infections, scientists modified the HVTN 097 regimen to  include HIV strains matched to the South African clade C infections \citep{HVTN100_primary}.
A phase 1/2 trial named HVTN 100 assessed the  safety and immunogenicity of the modified regimen in South Africa. However, the HVTN 702 phase 2B/3 trial of the HVTN 100 regimen in South Africa met its non-efficacy criteria in February 2020 at a planned interim analysis.
 
 

 In light of the above,  the comparison between  the immune response profiles of HVTN 097 and HVTN 100 trials becomes important because the latter can shed some light on why the HVTN 100 regimen lacks efficacy.
 To fix ideas, here we focus on one class of immune responses, namely the binding of IgG antibodies to the first and second variable loops (V1V2 region) of the HIV envelope. This immune response is of particular interest because the RV144 trial revealed an inverse association between HIV infection and this immune response among vaccinees \citep{RV144_immuno}. Using data from HVTN 097 and HVTN 100, we focus on answering the following three questions:

 \begin{compactitem}
 \item[Q1.] How can we estimate the densities of the aggregated IgG binding immune responses (to HIV-1 envelope proteins)? 
 
 
 \item[Q2.] Is there any ordering between the distributions of the immune responses from the two trials?
 
 
 \item[Q3.] How can we measure the discrepancy between the densities of the immune responses from the two trials?
\end{compactitem}
 
 Although Q1 is not directly related with comparison of the two vaccine trials,
answering Q1 is important because the resulting density estimators can  help  in designing subsequent vaccines.  To answer Q2,  we resort to testing for stochastic dominance, which addresses the ordering of the underlying distribution functions.
  To answer Q3, we rely on the squared Hellinger distance as a measure of discrepancy between two densities, where, for densities $\fx$ and $\fy$, the  Hellinger distance $\hd(\fx,\fy)$ is defined by
\[\hd(\fx,\fy)=\sqrt{\frac{1}{2}\rint\lb\sqrt{f(x)}-\sqrt{g(x)}\rb^2 dx}.\]
 \textcolor{black}{The significance of  Q3 may not be immediately obvious. However, the squared Hellinger distance between the immune responses of HVTN 097 and HVTN 100 trials can serve as a benchmark when new pairs of vaccines are compared in future vaccine trials. The reason behind choosing the squared Hellinger distance  as the measure of discrepancy in particular is discussed in Section~\ref{sec: measure of discrep}.   }
 



{\color{black}
  There are numerous  nonparametric methods that can be implemented to carry out the aforementioned  steps. However, traditional nonparametric methods do not exploit any information on the shape of the underlying densities.
  The uniformity of the trial population and exploratory analyses bear evidence that the underlying densities can be unimodal.  \textcolor{black}{The data are also consistent with the possibility  that the densities are log-concave. The latter is an important subclass of unimodal densities, often advocated for use in modelling because it contains most of the well-known subexponential unimodal densities and allows powerful density estimation tools \citep{walther2009}.}  Of late, shape-constrained density estimation has gained much attention. Among other reasons, the reduced burden of external tuning parameters  \citep{samworth2018, johnson2018} makes shape-constrained density estimation an attractive alternative to \textcolor{black}{traditional nonparametric approaches like  kernel-based or basis expansion type methods, which  are known to be sensitive to the choice of the tuning parameters \citep[cf. pp. 327,][]{efromovich2008, laha2021}.} Also, shape-constrained density estimation methods require weaker smoothness assumptions for asymptotic consistency than do the nonparametric methods. Shape-constrained techniques are widely used in economics and operations research \citep{johnson2018}, and have seen application in other domains such as  circuit design \citep{hannah2012}. \textcolor{black}{For a detailed account on the recent development of the shape constraint literature, we refer the survey articles \cite{samworth2018} and  \cite{review}.}
  }
  
 

 Although leveraging shape information   can potentially increase efficiency, there is little to no literature on the application of shape-constrained tools in  vaccine trials. To answer our motivating questions, therefore,  we develop  new   methods using shape-constrained tools. 
 The application of the new methods   is not limited to vaccine trials. 
 \textcolor{black}{
 For example, our tests can be applied to other areas of medical research where tests of stochastic dominance are relevant. See \cite{leshno2004} for an in-depth discussion of potential uses of tests of stochastic dominance in medical research. In particular, our methods are applicable to the mortality data considered in \cite{leshno2004} to infer whether a surgery increases   the mortality of patients with abdominal aortic aneurysma. See also \cite{stinnett1998} and \cite{defauw2011} for the use of stochastic 
dominance in cost effective analysis of healthcare.
 Outside medical research, our tests have substantial applicability  in finance, economics, social welfare, and  operations research, where stochastic dominance is a popular tool to  compare  portfolios, income, utility, poverty, opportunity etc.; see \cite{levy1992},  \cite{ sriboonchita2009}, \cite{le1991}, among others, for a detailed account.   On the other hand, Hellinger distance has also seen successful application as a measure of discrepancy in various disciplines ranging from machine learning \citep{cieslak2009, gonzalez, gonzalez2010} to ecology  \citep{rao1995} to fraud detection \citep{yamanishi2004}. However, for income data, log-concavity based methods should be used with caution since  distributions with log-concave density are always sub-exponential \citep{theory}, whereas income data can have heavier tails.  Diagnostic tools such as those in \cite{asmussen2017} can be used to enquire if the data has heavier tail, e.g. regular varying tails, in these cases. 
 }



 \subsection{Organization of article and main contributions}
 \label{sec: contribution}
 
  Although the methods developed in this paper are general, central to our application lies the HVTN data, which we describe in Section~\ref{sec: data description}.
Below we briefly discuss  our methods and  main contributions. We write $\fn$ and $\fh$ for the densities of the  immune responses in the HVTN 097 and HVTN 100 trials respectively, and $\Fn$ and $\Fh$ for the corresponding distribution functions.
 \subsection*{Estimating $\fn$ and $\fh$:}  {In vaccine trials,  traditionally a kernel density estimator (KDE) is used for the purpose of density estimation  \citep[cf.][]{miladinovic2014}. However, using  cross-validation, in Section \ref{sec: density estimation}, we show that the log-concave maximum likelihood estimator (MLE) based estimators of \cite{2009rufi} and \cite{smoothed} minimize the estimated mean integrated squared error (MISE) among a class of shape-constrained density estimators and  KDEs. }
\subsection*{Shape-constrained tests of stochastic dominance:}
{The claim of a stochastic ordering between two samples is made stronger when it is backed by a test of stochastic dominance.
We say a distribution function $\Fx$  stochastically dominates  another distribution function $\Fy$ in first order ($\Fx\succeq\Fy$)  if  $\Fx(x)\leq \Fy(x)$ for all $x\in\RR$. \textcolor{black}{If $X$ and $Y$ are two random variables with distribution functions $\Fx$ and $\Fy$, respectively, then in this case, we say $X$ stochastically dominates $Y$, and write $X\succeq Y$.}  The dominance is regarded as  ``strict" $(\Fx\succ\Fy$ or $X\succ Y$) if, in addition, there exists $x\in\RR$ such that  $\Fx(x)<\Fy(x)$. 
If  $\Fx$ does not strictly stochastically dominate $\Fy$, then this event is defined as the non-dominance  ($\Fx\nsucceq\Fy$) of $\Fx$ over $\Fy$ \citep[][p. 25]{whang2019}.}

 The rejection of the null of non-dominance against the alternative of stochastic  dominance makes the strongest case for ranking one distribution over the other \citep{davidson2013, alvarez2016, LW2012}.  However, the resulting  test suffers from lack of power because the  overlap of distribution functions at the tails make unrestricted stochastic dominance  almost impossible to establish via hypothesis testing \citep[cf.][]{davidson2013,  whang2019}.  There are many ways to deal with this difficulty. For example, \cite{LW2012}  exploits the fact that the dominance between two distribution functions employs a dominance between their Fourier coefficients under a carefully chosen basis. However, their method 
 rejects the null of non-dominance in some scenarios when the distribution functions cross each other.
Another remedy discussed in the literature \citep{kaur, davidson2013,  alvarez2016} involves excluding the tail region  because it does not contain enough reliable information for the problem at hand. This type of tests focus only on a compact set $D$ inside the interior of the combined support $\{0<F+G<2\}$. In Section~\ref{sec: tests}, we follow the latter strategy because it allows for many ways of incorporating shape-constraints with desirable asymptotic properties.



\textcolor{black}{
 To the best of our knowledge, we are the first to introduce the use of shape constraints in the context of testing the null of non-dominance against stochastic dominance. 
 Moreover,  one of our nonparametric test statistics in Section~\ref{sec: null of non dominance} has not, to our knowledge, previously been studied in the context of testing the null of non-dominance. \textcolor{black}{In Section \ref{section: asymptotic}, we show that the shape-constrained and nonparametric versions of our tests control the asymptotic type I error at any null configuration under reasonable conditions.}  We also show that  our tests are asymptotically unbiased, and  consistent against all alternatives lying in the interior of the class of alternative distributions. \textcolor{black}{In Section \ref{sec:simulation:1}, we empirically show that the shape-constrained tests have better power than their nonparametric counterparts although they have the same asymptotic critical values.}   Section \ref{sec: application to our data} analyses the application of these tests to our data. The proofs are deferred to Appendix~\ref{sec: appendix: tests}.}
 \subsection*{Shape-constrained plug-in estimators of the squared Hellinger distance:}
In Section~\ref{sec: measure of discrep}, we construct plug-in estimators of the squared Hellinger distance. It is well known that,
unless bias-corrected, plug-in estimators based on the KDE  generally have a first-order bias \citep[cf. Section 2 of][]{robins2009}. In contrast, for some smooth functionals, shape constrained MLE based plug-in estimators  do not require further bias correction \citep[cf.][]{Jankowski2010, ramu2018}.  The results in \cite{lopuhaa2019} indicate that the squared Hellinger distance is an example of such a smooth functional.
When the underlying density is unimodal, we show that our unimodal density based plug-in estimator   enjoys the same asymptotic guarantees as that of the bias corrected  KDE based estimators \citep{robin2015} under some regularity conditions. 
The simulation studies  in Section~\ref{sec: simulation : measure of discrep} suggest that similar results hold for our log-concave MLE based plug in estimators as well. \textcolor{black}{In fact, our smooth log-concave MLE based estimator shows stable performance across all settings, 
where even the bias-corrected version of the KDE based estimator struggles in some cases.}

In the process, we develop theoretical tools for analyzing the asymptotic behavior of plug-in estimators based on the unimodal density estimator of \citeauthor{birge1997}, which may  of independent interest. We defer the latter analysis to Appendix~\ref{sec: appendix: measure of discrep}. 
 \textcolor{black}{  The methods developed in this paper are implementable using the R package  \texttt{SDNNtests} \citep{SDNNtests}, which is available on GitHub.} 


\subsection{Notations and terminologies}
\label{sec: notation: S2}
Before proceeding further, we introduce some notation that will be used throughout this paper. We consider two independent samples $X_1,\ldots,X_m$ and $Y_1,\ldots, Y_n$ drawn from distributions with densities $\fx$ and $\fy$. We denote the corresponding  distribution functions by $\Fx$ and $\Fy$, respectively. 
The respective empirical distribution functions will be denoted by $\Fmx$ and $\Fmy$. The pooled sample $(X_1,\ldots, X_m,Y_1,\ldots,Y_n)$ has  size $N=m+n$. We denote the corresponding empirical distribution function by  $\mathbb{H}_{N}$.

For $k\geq 1$, we let $||\cdot||_{k}$  denote the usual $L_{k}$ norm, i.e.
 $||\mu||_k=\nolinebreak(\rint |\mu(x)|^k dx)^{1/k}$,
where $\mu$ is a function supported on the real line. Also, we denote
 $||\mu||_{\infty}=\sup_{x\in\RR}\mu(x)$.
For a density $f$,  denote by $\supp(f)$ the set $\{x\ :\ f(x)>0\}$. For a concave function $f:\RR\mapsto\RR$, the domain  $\dom(f)$ will be defined as in \citep[][p. 40]{rockafellar}, that is, $\dom(f)=\{x\in\RR\ :\ f(x)>-\infty\}$.
 For a sequence of measures $\{P_n\}_{n\geq 1}$, we say  $P_n$ converges weakly to  $P$, and write $P_n\to_d P$,  if   $\lim\limits_{n\to\infty}\int \mu d P_n=\int \mu dP$ holds for any bounded continuous function $\mu:\RR\mapsto\RR$.
  For any two sets $A,B\subset\RR$, we denote by $\text{dist}(A,B)$ the quantity
 $\min\{|x-y|\ :\ x\in A, y\in B\}$. 
 We let $\iint(A)$ denote the interior of the set $A$.

\section{Background: HVTN 097 and HVTN 100}
\label{sec: data description}
This section presents an exploratory analysis of the dataset.
  For both trials, we consider the magnitude of IgG binding to the V1V2 region of seven clade C glycoprotein $70$  antigens. The immune responses were measured by an HIV-1 binding antibody multiplex assay (BAMA). \textcolor{black}{Following \cite{RV144_immuno}, \cite{HVTN097}, and \cite{HVTN100_primary}, we use the log-transformed net median fluorescence intensity (MFI) as the measure of immune response for statistical analysis.}

In HVTN 097 and HVTN 100, four injections of HIV vaccines were given at months 0, 1, 3, and 6. In this study, we only consider the responses measured two weeks after the month six vaccination, which is considered to be the peak immune response time point. \textcolor{black}{ We include only those vaccinees in this study who (a) completed  the first four scheduled vaccinations and provided samples at two weeks  after the month six vaccination (known as vaccinated per-protocol participants), and (b) developed a positive immune response  for at least one of the seven clade C V1V2 antigens.  There are  $68$ and $180$  vaccinees in the  HVTN 097 and HVTN 100 trial, respectively, who satisfy the above criteria.   We base our analysis  on the aggregated response, averaged over the seven clade C antigens mentioned above, and refer only to the latter when we say ``immune response".} We let $\Fn$ and $\Fh$ denote the distribution functions corresponding to the aggregated response from the two trials. 

 \begin{figure}[h]
\centering
\begin{subfigure}{.4\textwidth}
 \includegraphics[width=\textwidth, height= 1.5 in]{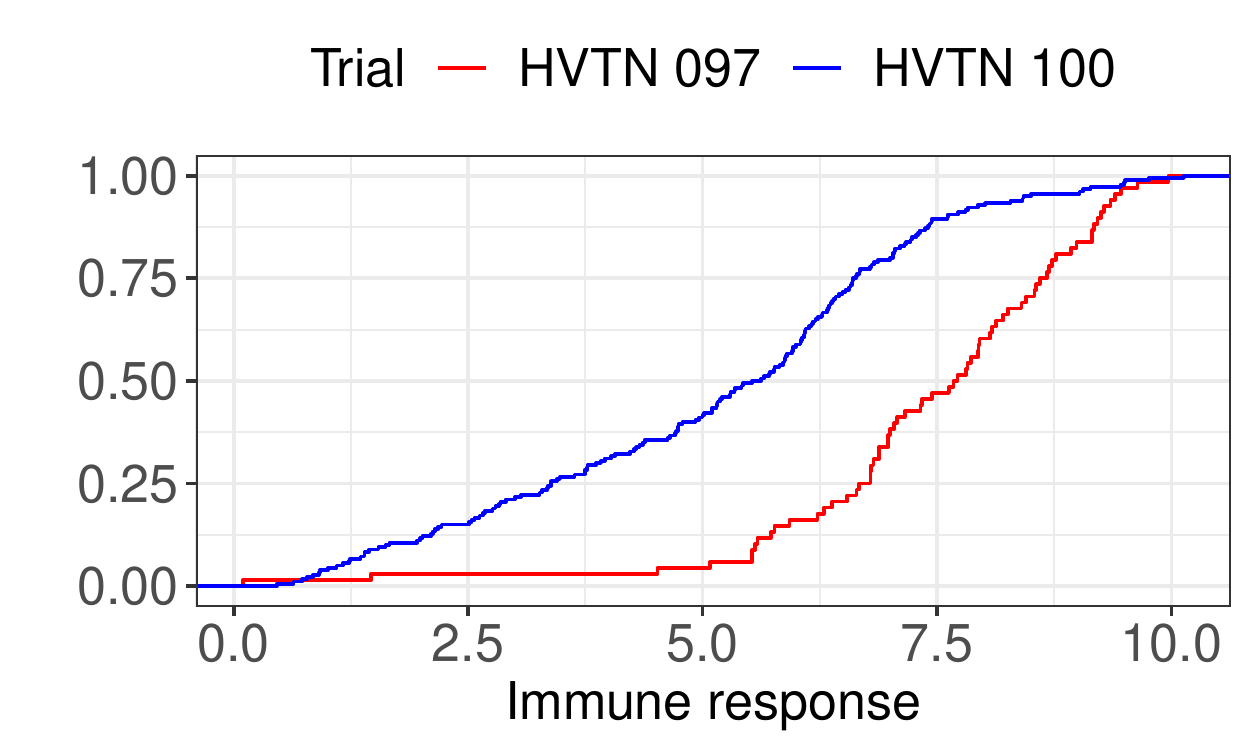}
 \caption{ECDF}
 \label{Fig: ecdf}
\end{subfigure}~
\begin{subfigure}{.4\textwidth}
 \includegraphics[width=\textwidth, height=1.5  in]{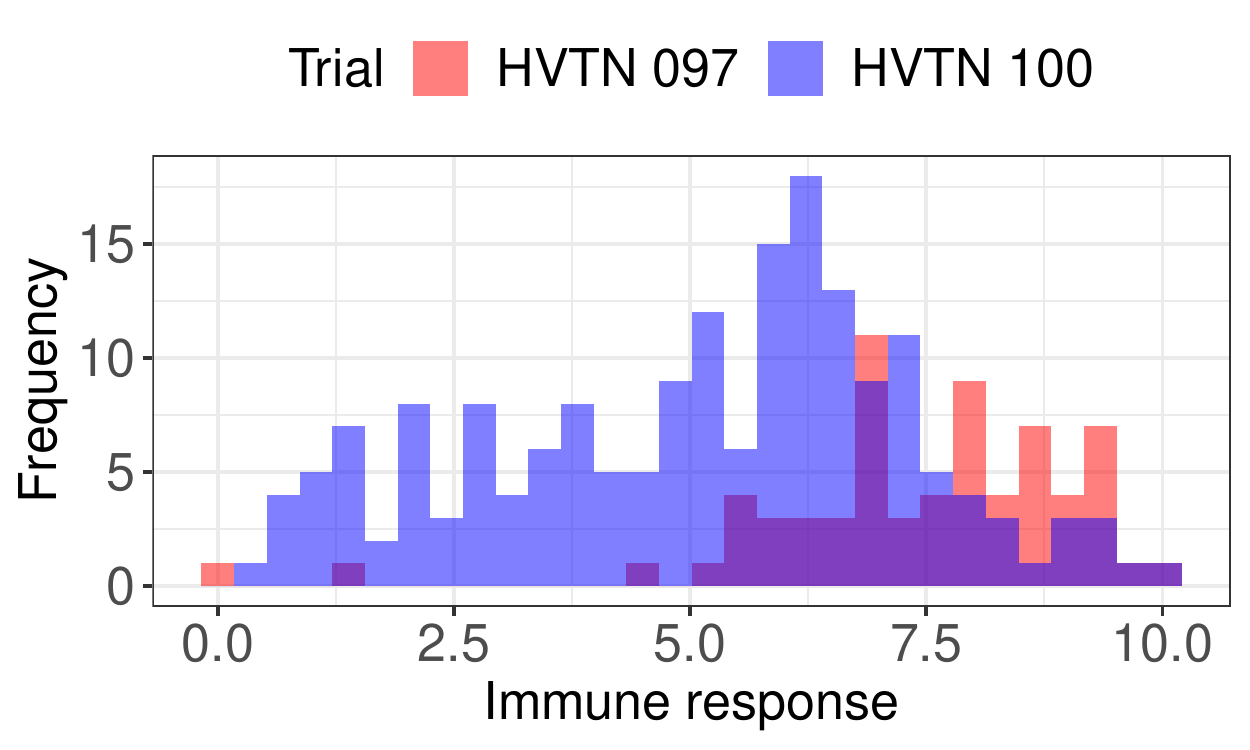}
 \caption{Histogram}
 \label{Fig: Histogram}
\end{subfigure}\\
 \begin{subfigure}{.4\textwidth}
 \includegraphics[width=\textwidth, height= 1.5 in]{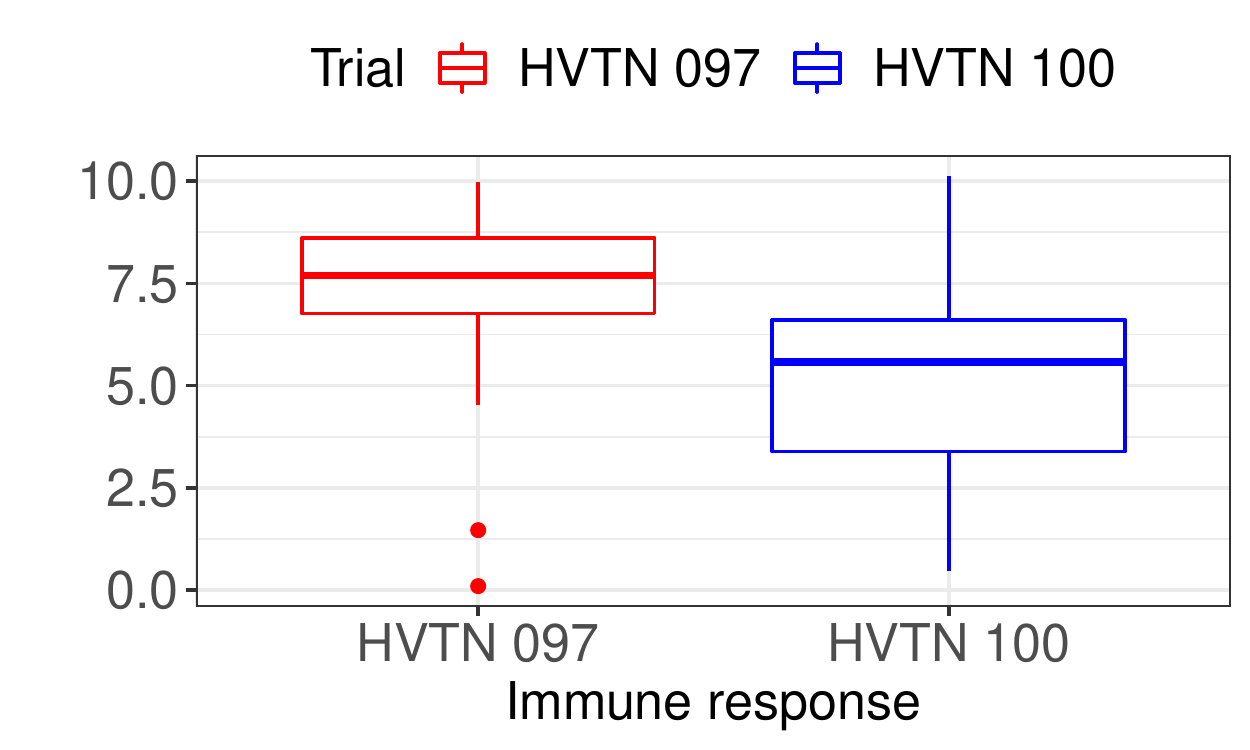}
 \caption{Boxplot}
 \label{Fig: box}
\end{subfigure}~
\begin{subfigure}{.4\textwidth}
 \includegraphics[width=\textwidth, height=1.5 in]{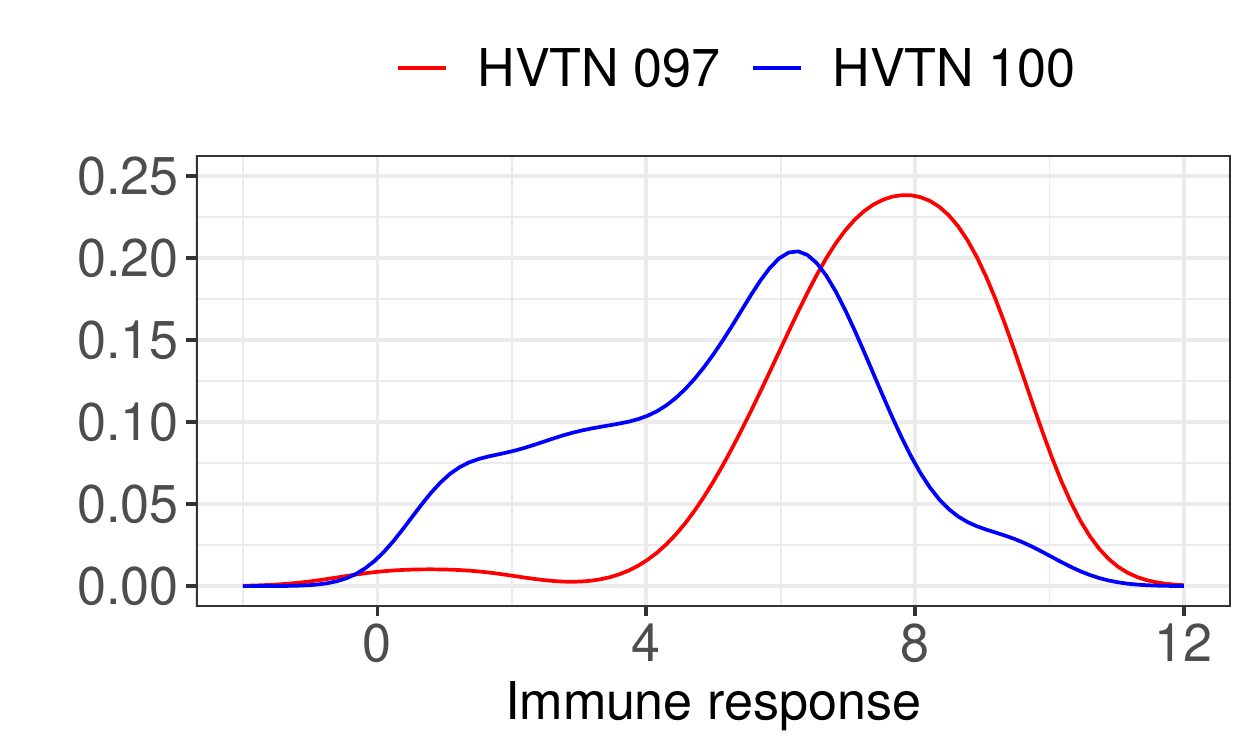}
 \caption{KDE}
 \label{Fig: kde}
\end{subfigure}
    \caption{
      Plots (i), (ii), (iii), (iv) display the empirical distribution functions (ECDF), the histogram, the boxplot, and the KDEs of the average IgG binding responses  corresponding to the HVTN 097 and HVTN 100 regimens.}
    \label{fig: preliminary plots}   
        \end{figure}


 Figure~\ref{fig: preliminary plots} illustrates the empirical CDFs, the histogram, the boxplot, and the KDES of the immune responses from the two trials.
 Figure~\ref{Fig: ecdf}  illustrates  that  $\Fh$ is always greater than $\Fn$   except  at the tails, hinting at the stochastic dominance of the HVTN 097 immune response over the HVTN 100 immune response. The histogram in Figure~\ref{Fig: Histogram}, the boxplot in Figure~\ref{Fig: box}, and the plot of the KDEs in Figure~\ref{Fig: kde}  also suggest that the  HVTN 097 trial induces higher immune response. \cite{HVTN100_primary} also indicated that the magnitude of positive responses in the HVTN 100 trial is lower than that of the RV144 trial, which, on the other hand, is reported to be slightly lower than  the responses  in the HVTN 097 trial \citep{HVTN097}.
 Therefore, it makes sense to posit the null  of non-dominance of $\Fn$ over $\Fh$ against the alternative that $\Fn$ stochastically dominates $\Fh$. Note also that  the comparisons between the two sets of immune responses  is reasonable  because the trials  were  conducted on similar populations and share approximately the same support (cf. Figure~\ref{fig: preliminary plots}). 
 See also Table~\ref{table: HVTN 097 and HVTN 100} in  Appendix~\ref{app: additional tables and figures} for a comparison between the two trials. 
  


\section{Density estimation}
\label{sec: density estimation}
This section  compares different  estimators of $f_{097}$ and $f_{100}$.
Since our study includes some unimodal and log-concave density estimators,  we begin by presenting some observations in support of the shape-restriction assumptions. \\
{
As mentioned previously, the unimodality assumption is not unreasonable owing to the homogeneity of the trial populations. 
The histogram in Figure~\ref{Fig: Histogram} and the KDEs displayed in Figure~\ref{Fig: kde} both support this claim. \textcolor{black}{Although unimodality is a naturally occurring shape constraint, the  class of all unimodal densities is too large to admit an MLE \citep{birge1997}.  The class of log-concave densities is a  subclass of the class of unimodal densities, which is small enough to admit an MLE \citep{exist}, but contains most of the commonly used subexponetial unimodal densities \citep{walther2009}. The  log-concave MLE can be computed efficiently using the R package \texttt{logcondens}. Furthermore, there is a smoothed version of the log-concave MLE, which is also free of tuning parameters \citep{smoothed}, and thus can potentially replace the smoothed unimodal density estimators, which generally depend crucially on external tuning parameters.
In view of the above, many researchers, e.g. \cite{walther2002} and \cite{walther2009}, advocate opting for log-concavity shape constraint in situations where unimodality may seem plausible.}

Although it is difficult to provide visual evidence in favor of the assumption of  log-concavity, the  KDE plot in Figure~\ref{Fig: kde}  does not indicate a departure from log-concavity either.
Using the test of log-concavity in \citeauthor{smoothed}, we test the null of log-concavity against the alternative of violation of log-concavity for $f_{097}$ and $f_{100}$. The corresponding p-values for $f_{097}$ and $f_{100}$ are $0.4890$ and $0.4631$, respectively, which implies that our data does not have enough evidence for rejecting the null of log-concavity. } 
        \begin{figure}[h]
    \centering
    \begin{subfigure}[b]{.3\textwidth}
     \includegraphics[width=\textwidth, height=\textwidth]{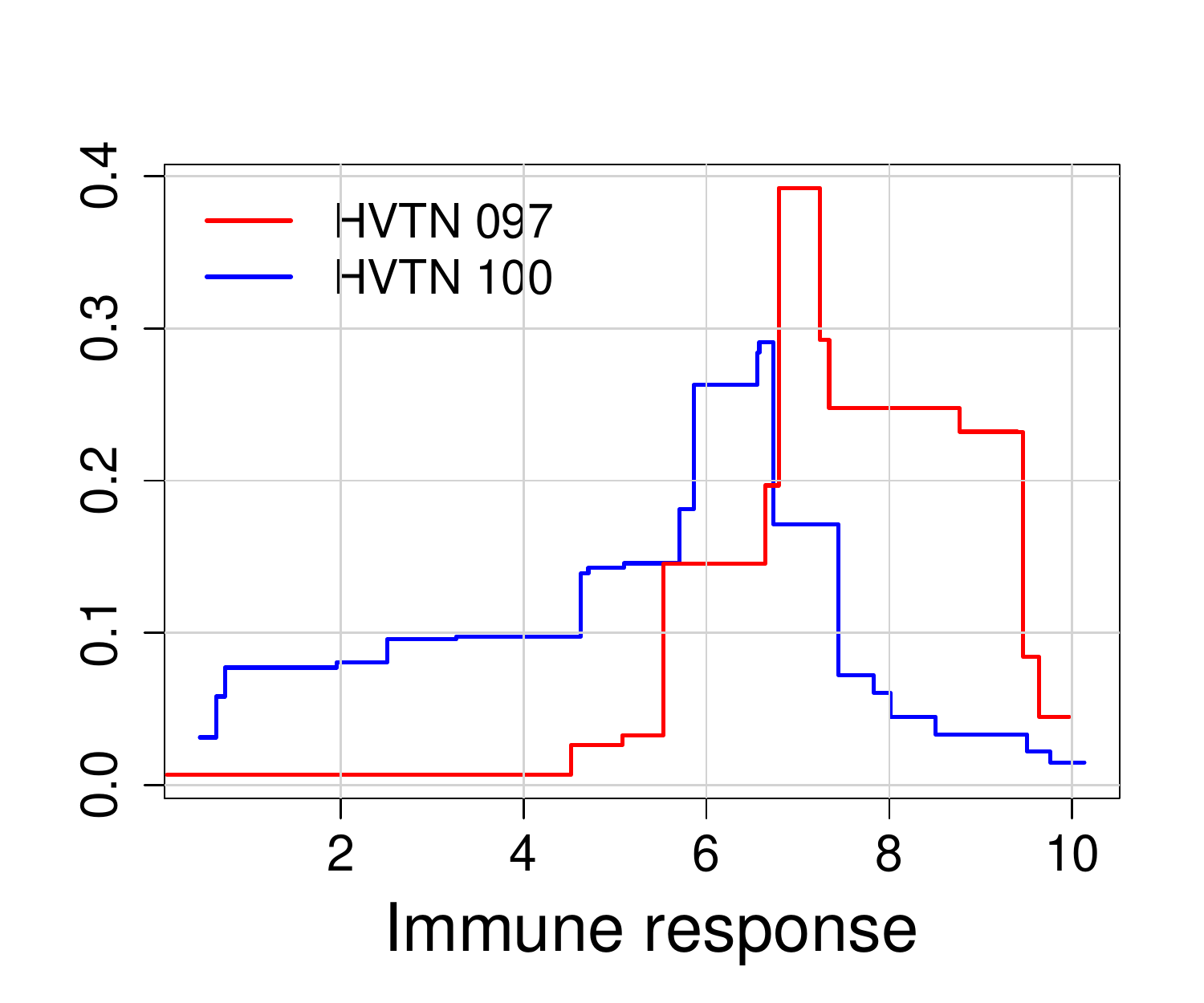}
      \end{subfigure}~
    \begin{subfigure}[b]{.3\textwidth}
        \includegraphics[width=\textwidth, height=\textwidth]{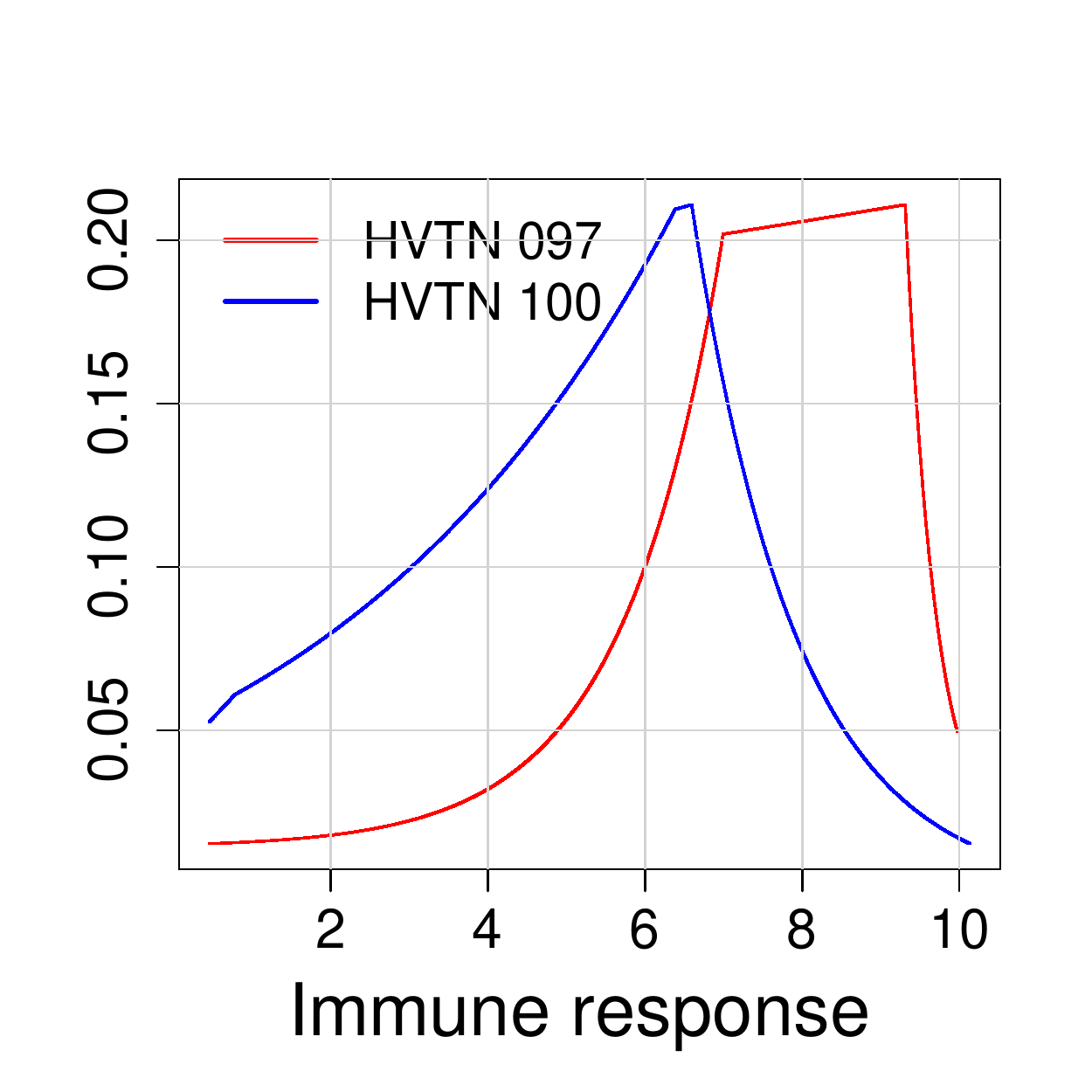}
    \end{subfigure}~
    \begin{subfigure}[b]{.3\textwidth}
     \includegraphics[width=\textwidth, height=\textwidth]{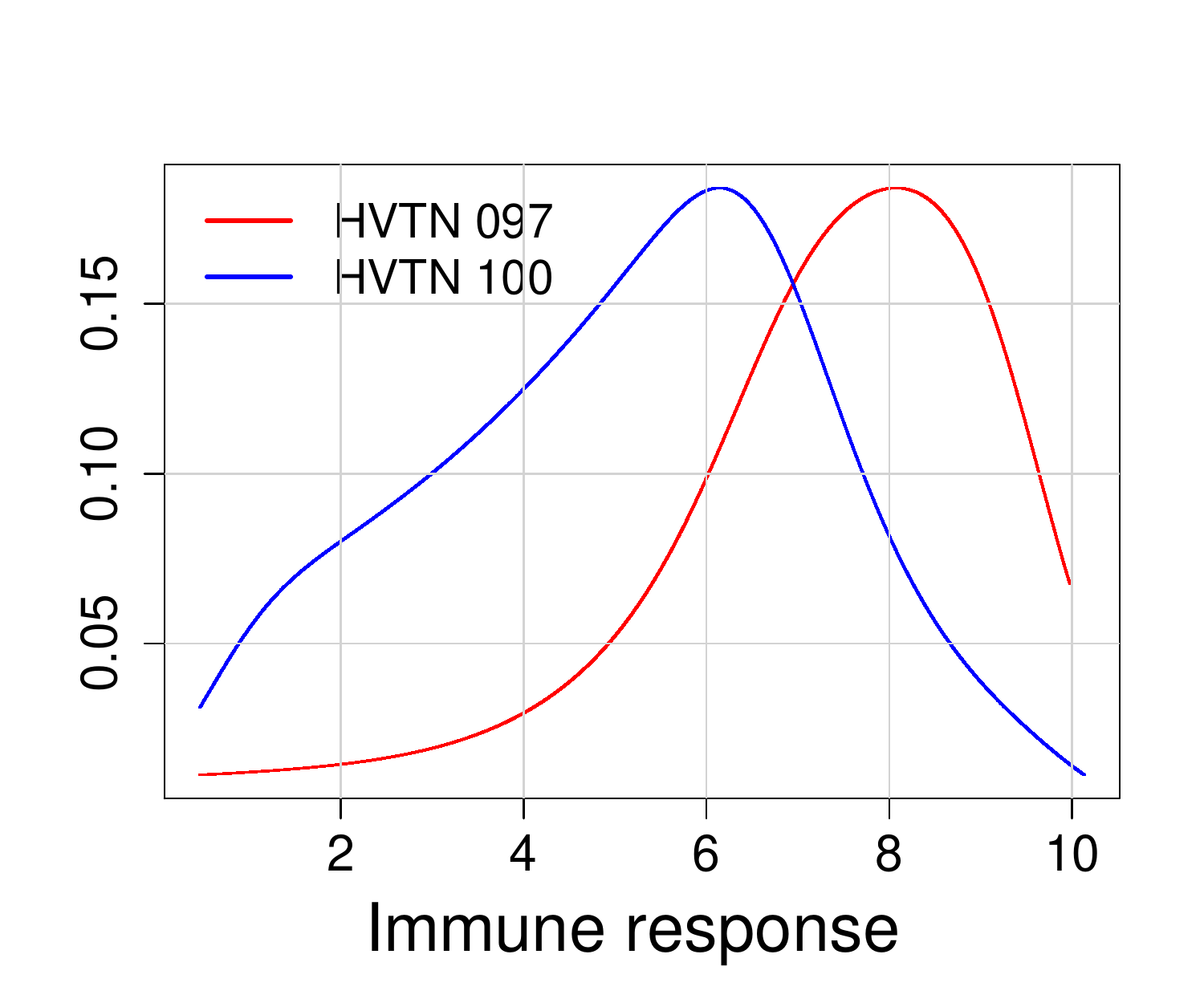}
    \end{subfigure}
    \caption{\birge's estimator (left), log-concave MLE (middle), and the smoothed log-concave MLE (right)  based on the immune responses in the HVTN 097 and the HVTN 100 trial.}   
    \label{fig:log-concave densities}
        \end{figure}
     We briefly describe below the  density estimators that we consider in this section.\\
     \textbf{Grenander-type unimodal  density estimator \citep{birge1997}:}
    \textcolor{black}{
    Although the class of all unimodal densities does not permit an MLE \citep{birge1997}, 
    when the mode of a unimodal density $\fx$ is known, \textcolor{black}{the MLE $\fnxm$ exists \citep{prokasa}, and it is a piecewise constant unimodal density with mode at the true mode.}
    However, the MLE $\fnxm$ is generally not useful due to the lack of knowledge on the location of the true mode.  The estimator $\fnx$ presented in \cite{birge1997} is a piecewise constant estimator of $\fx$, constructed in such a way so that the Kolmogorov-Smirnov distance between the corresponding distribution functions  can be made arbitrarily small, in particular, smaller than a pre-fixed  number $\eta>0$.
    Although this $\eta>0$ is an external parameter, unlike the kernel bandwidth, a smaller $\eta$ always leads to a more accurate estimation \citep{birge1997}, and hence it does not require actual tuning. Indeed, if we choose the parameter $\eta=o(m^{-1})$, then our Lemma~\ref{lemma: weak convergence of dist of unimodal density functions} in Appendix~\ref{sec: appendix: tests} ensures that  the total variation distance between  $\fnx$ and $\fnxm$ is $o(m^{-1/2})$ with probability one, \textcolor{black}{where $\fnxm$ was defined to be the MLE  of $\fx$ had the true mode been known. }
Therefore, we choose $\eta$ to be the inverse of the combined sample size of the two trials.}

   \textbf{Smooth unimodal estimator:} {\textcolor{black}{ There is a substantial body of literature on smooth unimodal density estimators. See for instance, \cite{Eggermont2000}, \cite{mammen2001}, \cite{weightedKDE},  \cite{greedy}, \cite{meyer2012}, \cite{bernstein}, and \cite{AdjustedKDE}, among others.  The smooth unimodal density estimators generally depend crucially on external tuning parameters, as mentioned previously.  Since we opt for shape constraints mainly to avoid tuning parameters and we already have at our disposal the tuning free smooth log-concave MLE estimator, the smooth unimodal density estimators are not particularly attractive to us.
Still, since the current section solely focuses on density estimation, we include some smooth unimodal estimators for comparison here, namely the estimators of \cite{bernstein}, \cite{weightedKDE},  \cite{greedy},  and \cite{AdjustedKDE}}.
   

\cite{bernstein} approximates the unknown unimodal density using Bernstein polynomial.  
We use the condition number approach of \cite{bernstein} to select tuning parameters related to the degree of the polynomial.
\textcolor{black}{ The estimators in \cite{weightedKDE},  \cite{greedy},   and \cite{AdjustedKDE} are kernel based. To compute them, we use the R package \texttt{scdensity} with the default choice of bandwidth.}

}


\textbf{Log-concave  estimators:}{ Our first log-concave estimator is the  MLE among the class of all log-concave densities.   The MLE  is continuous but non-smooth \citep{2009rufi}. The second estimator is a smoothed version of the MLE \citep{2009rufi,smoothed}. The smoothing parameter for the latter is data dependent and has a closed form formula, and hence it does not require external tuning.
Figure~\ref{fig:log-concave densities} displays these two density estimators. \textcolor{black}{For more on the  properties of the log-concave density estimators, see, e.g., \cite{balabdaoui2009}, \cite{dumbreg}, \cite{theory}, and  \cite{dossglobal}. }
}\\
\textbf{Kernel density estimators:}
{ We consider kernel density estimators (KDE) with Gaussian kernel.
The optimal bandwidth was chosen either by the  univariate plug-in selector of \cite{wand1994}, or the univariate least square cross-validation (LSCV) selector of \cite{bowman} and \cite{rudemo} (see Figure~\ref{Fig: kde}). }

 We prefer the estimator $\widehat f_m$ of a density $f$ with the smallest mean integrated squared error
 (MISE), which is given by
 $E\int_\RR (\widehat f_m (x)-\fx(x))^2 dx$. 
Noting minimizing the MISE with respect to $\widehat f_n$ is equivalent to minimizing
 \begin{equation}\label{criterion: Least square cross-validation}
 \text{MISE-err}= E\int_\RR\widehat f_m^2(x)dx-2E\int_\RR\widehat f_m(x)f(x)dx,
 \end{equation}
 we estimate the latter quantity using a ten folds cross-validation. We also estimate the negative log-likelihood 
 \begin{equation}\label{criterion: log-likelihood}
 -l_m=-m^{-1}\si \log \widehat f_m(X_i).
 \end{equation}
 
 \begin{table*}[h]
 \caption{Table of the estimated risks for different density estimators of the aggregated immune response in trials HVTN 097 and HVTN 100}\label{Table: cross validation}
     \begin{tabular}{@{}lllll@{}}
\toprule

  & HVTN 097 &  & HVTN 100 &  \\ 
 
Estimators   & MISE-err & $-l_m$ & MISE-err & $-l_m$ \\ 
 \midrule

Unimodal (\birge's estimator) & -0.172 & 1.662 & -0.126 & 2.071\\

 Unimodal (Bernstein) &  -0.191 & 1.797 & -0.130 & 2.133 \\ 
 
 Unimodal \citep{greedy} & -0.190  & 1.878 & -0.102 & 2.549\\ 
 
 Unimodal \citep{AdjustedKDE} & -0.190 & 1.795 & -0.124 &  2.300  \\ 

 Unimodal \citep{weightedKDE} & -0.190 & 1.965 & -0.124 & 2.288  \\ 
  Smooth log-concave MLE  & -0.196 & 1.758 &  -0.129 & 2.127 \\ 
  
Log-concave MLE  & -0.193  & 1.758 & -0.130 & 2.127\\ 
 KDE (plug-in bandwidth selector) & -0.189 & 1.750 & -0.128 & 2.140\\
 
 KDE (LSCV bandwidth selector) & -0.189 & 1.777 & -0.128 & 2.140\\
 \bottomrule
 \end{tabular} 
 \end{table*}
 
  Based on these risks  (see Table~\ref{Table: cross validation}), our recommended estimators are the log-concave estimators  which exhibit the lowest risk in an overall sense.
  Table~\ref{Table: cross validation}  indicates that \birge's estimator excels in minimizing the $-l_m$ risk but it has higher estimated MISE  when compared to the other estimators. This can be attributed to its spikes  at the mode (see Figure~\ref{fig:log-concave densities}), which contributes  large positive terms to  $l_n$ and MISE. \textcolor{black}{Grenander type unimodal estimators are known to exhibit such ``spike-problem"   at the mode \citep{walther2009}, which is caused by the inconsistency of the density estimator at the mode \citep[for more details, see][]{Woodroofe1993, balabdaoui2009}.}



\section{Test of stochastic dominance}
\label{sec: tests}
To provide an answer to  Q2, we construct tests for the null of non-dominance against that of stochastic dominance using the log-concave MLEs and the unimodal estimator of \birge. We compare the resulting shape-constrained tests with their nonparametric counterparts.

Our shape-restricted methods rely on estimating the densities $\fx$ and $\fy$. We denote the corresponding unimodal estimators of \birge\  by $\fnx$ and $\fny$,  respectively. The construction of \birge's estimators requires a tuning parameter $\eta$, which we set to be $N^{-1}$ where $N=m+n$.  We let $\flx$ and $\fly$ denote the log-concave MLEs of $\fx$ and $\fy$ \citep{2009rufi}, and write $\flxs$ and $\flys$ for their respective smoothed versions \citep{smoothed}. The corresponding distribution functions will be denoted by $\Fnx$, $\Fny$, $\Flx$, $\Fly$, $\Flxs$, and $\Flys$, respectively.


As $m$ and $n$  approach $\infty$, we assume that $ m/N\to\lambda\in(0,1)$.

 Letting $H=\lambda F+(1-\lambda)G$, for $p\in(0,1/2)$,
 we also define the sets 
  \begin{equation}\label{def: Dp}
  D_p(F,G):= D_p=[H^{-1}(p),H^{-1}(1-p)]\quad\text{ and }\quad D_{p,m,n}:=[\Hm^{-1}(p),\Hm^{-1}(1-p)].
  \end{equation}

\subsection{Construction of the tests}
\label{sec: null of non dominance}

Suppose $D\subset\supp(f)\cup\supp(g)$ is compact.
Following \cite{kaur} and \citeauthor{davidson2013}, we  formulate the hypotheses as follows:
 \begin{equation}\label{test: restricted stochastic dominance}
 \Hm_0: \Fx(z)\geq \Fy(z)\text{ for some }z\in D\quad\text{vs.}\quad \Hm_1: \Fx(z)<\Fy(z)\text{ for all }z\in D.
 \end{equation}
 \textcolor{black}{The null configuration $\mathcal H_0$ occurs if $F=G$, or $G$ stochastically dominates $F$, or if $F$ and $G$ touch or cross each other on $D$.}
  Thus our formulation is unable to reject the null when  $\Fx$ and $\Fy$ touch at a point in $D$ even if $\Fx\succ\Fy$.  This limitation seems to be unavoidable because such a configuration $(\Fx,\Fy)$ lies on the common boundary shared by $\{(F,G):F\succeq G\}$ and $\{(F,G):F\npreceq G\}$, and hence can not be discriminated from the null without sacrificing  control over the size of test. Notably, our exploratory analysis (see Fig~\ref{Fig: ecdf}) suggests that it is unlikely that $\Fn$ and $\Fh$ fall in this category.  See Figure~\ref{Plot: non-dominance hypotheses}  for examples of  different scenarios associated with our hypotheses.
 
 \begin{sloppypar}
  Regarding the choice of $D$, we need to ensure that $D$ is inside the combined support of $f$ and $g$ because otherwise,  ${\inf_{z\in D}[ G(z)-F(z)]}$ will always be $0$.
  The set $D_p$ defined in \eqref{def: Dp} satisfies this criterion.
  In practice, we replace this unknown  $D_p$ by $D_{p,m,n}$ defined in \eqref{def: Dp}, which always   utilizes  $100(1-2p)\%$ of the combined data.  
\textcolor{black}{Naturally, if $p$ is too small, rejection of the null will be difficult, where a large $p$ will exclude a large portion of the data, which might be unnecessary. If only some particular interval of the data is of practical interest (e.g. some particular range of immune responses or biomarkers), we suggest setting $D$  to be the smallest  superset of that interval.
  In the absence of such prior knowledge, we suggest choosing the largest $p$ so that $D_{p,m,n}$ excludes the  tail region where empirical distribution functions   overlap. We will return to this issue later in Section~\ref{sec: application to our data}, with a demonstration on our motivating dataset.} 
  \end{sloppypar}

\label{sec: tests: restricted stochastic dominance; design}
%
Now we are in a position to introduce our test statistics.

\textbf{Minimum t-statistic:}{ This statistic was first introduced by  \cite{kaur} in context of second order stochastic dominance, and then extended to the first order by  \citeauthor{davidson2013}. For  distribution functions $F_1$ and $F_2$, this statistic is given by
\begin{equation}\label{definition: T1}
T_{m,n}^{\text{min}}(F_1,F_2)=\inf_{x\in D_{p,m,n}} \dfrac{\slb F_2(x)-F_1(x)\srb }{\sqrt{\dfrac{F_1(x)\slb 1-F_1(x)\srb}{m}+\dfrac{F_2(x)\slb 1-F_2(x)\srb}{n}}}.
\end{equation}

\textcolor{black}{Our tests reject the $\Hm_0$ for large values of $T_{m,n}^{\text{min}}(\Fnx,\Fny)$, $T_{m,n}^{\text{min}}(\Flx,\Fly)$, and $T_{m,n}^{\text{min}}(\Fmx,\Fmy)$. }
The last test-statistic, which is nonparametric, equals the minimum t-statistic of the \citeauthor{kaur} in context of first order stochastic dominance. 
}

\textbf{Two sample empirical process (TSEP) type test statistic:}{Our second test rejects the $\Hm_0$ for large values of $T_{m,n}^{\text{tsep}}(\Fnx,\Fny)$, $T_{m,n}^{\text{tsep}}(\Flx,\Fly)$, or $T_{m,n}^{\text{tsep}}(\Fmx,\Fmy)$, where for distribution functions $F_1$ and $F_2$, $T_{m,n}^{\text{tsep}}$ is  defined by
\begin{equation}\label{definition: T2}
T_{m,n}^{\text{tsep}}(F_1,F_2)=\sqrt{\dfrac{mn}{N}}\inf_{z\in[p,1-p]}\dfrac{F_2(\Hm^{-1}(z))-F_1(\Hm^{-1}(z))}{\sqrt{z(1-z)}}.
\end{equation}
 \cite{LW2013} uses a  test statistic similar to $T_{m,n}^{\text{tsep}}(\Fmx,\Fmy)$ \citep[the second test statistic in Section 2.2 of][]{LW2013} for testing the null of stochastic dominance against  non-dominance. The pivotal distribution of their test statistic  is completely different from ours because they based their critical values on the configuration $F=G$, which is very different from what we will consider. We are not aware of any existing test which uses   $T_{m,n}^{\text{tsep}}(\Fmx,\Fmy)$  for testing  non-dominance against stochastic dominance. }


\textbf{Wilcoxon rank sum (WRS) type test statistic:}
 Wilcoxon rank sum (WRS) test is widely used for   comparing two vaccines \citep[cf.][]{miladinovic2014} although the WRS  test is  actually designed  for  testing location shift. It is  a popular choice for testing the null $F=G$ against the alternative $F=G(\cdot-\delta)$ for $\delta>0$ \citep{lee1976}.
The WRS test is the most powerful nonparametric test for testing the following hypotheses  \citep[cf. Example 25.46 of][]{vdv}:
 \begin{equation}\label{test: WRS}
           \Hm_0^a: \rint \Fy(z)d\Fx(z)\geq 1/2 \quad vs \quad \Hm_1^a: \rint \Fy(z)d\Fx(z)< 1/2.
           \end{equation} 
Although the WRS test is not  designed to test the null of non-dominance, we include this test to demonstrate its failure to control the type I error at some null configurations.

 { The  one-sided WRS test rejects $\Hm_0$ for large values of $T_{m,n}^{\text{wrs}}(\Fmx,\Fmy)$, where, for distribution functions $F_1$ and $F_2$, 
\begin{equation}\label{def: test statistic: T3}
T_{m,n}^{\text{wrs}}(F_1,F_2)=\sqrt{\dfrac{12mn}{N+1}}\lb\rint F_2(x)dF_1(x)-1/2\rb.
\end{equation}
$T_{m,n}^{\text{wrs}}(\Fmx,\Fmy)$ is the Mann-Whitney form of the two-sample WRS statistic.
The corresponding shape-constraint versions are given by $T_{m,n}^{\text{wrs}}(\Fnx,\Fny)$ and $T_{m,n}^{\text{wrs}}(\Flx,\Fly)$.}
 

We excluded tests based on the smoothed log-concave MLE because rigorous asymptotic analysis of the corresponding tests is out of the scope of the present paper.
 However,  our empirical study in Section \ref{sec:simulation:1}
 includes  minimum t-test and TSEP test based on the smoothed log-concave MLE. 
 Our simulations indicate that the asymptotic critical values of the  tests based on $T_{m,n}^{\text{min}}(\Flx,\Fly)$ and $T_{m,n}^{\text{tsep}}(\Flx,\Fly)$ are valid for the corresponding smoothed log-concave  tests. Our simulations also indicate that the finite sample performance of the tests based on the log-concave MLE and the smoothed log-concave MLE are quite similar. We leave the rigorous  analysis of the tests based on the smoothed log-concave MLE for future study.
 
  \begin{remark}
  Since the nonparametric tests use the empirical distribution function, shape constrained methods  do not gain any advantage in terms of tuning parameters. Also, we will see that the nonparametric and shape constrained tests  are asymptotically equivalent. For moderate sized samples, however, our simulations in Section~\ref{sec:simulation:1} show that the shape-constrained tests exhibit better performance. 
  \end{remark}

  \begin{remark}
   \textcolor{black}{ \citeauthor{davidson2013} proposed an  empirical likelihood ratio approach to test $\Hm_0$ vs $\Hm_1$. We did not appeal to this nonparametric approach in this paper because this approach does not extend easily to shape-constrained scenarios.}
   Regardless, we point out that \citeauthor{davidson2013} showed that their empirical likelihood ratio test  is  asymptotically equivalent to the minimum t-test.
  \end{remark}

\textcolor{black}{
 In the sequel, we may use the terms ``log-concave" or ``unimodal"  to refer to the test statistics based on the log-concave or unimodal density estimators. For example, we may refer to $T_{m,n}^{\text{min}}(\Fnx,\Fny)$ and $T_{m,n}^{\text{min}}(\Flx,\Fly)$ as the unimodal minimum t-statistic and the log-concave minimum t-statistic, respectively. 
 Also,  unless otherwise specified,
 the terms ``null" and ``alternative" will refer to the $\Hm_0$ and $\Hm_1$ defined in \eqref{test: restricted stochastic dominance}, respectively. 
 }
  \begin{figure}[h]
 \includegraphics[ width=\textwidth]{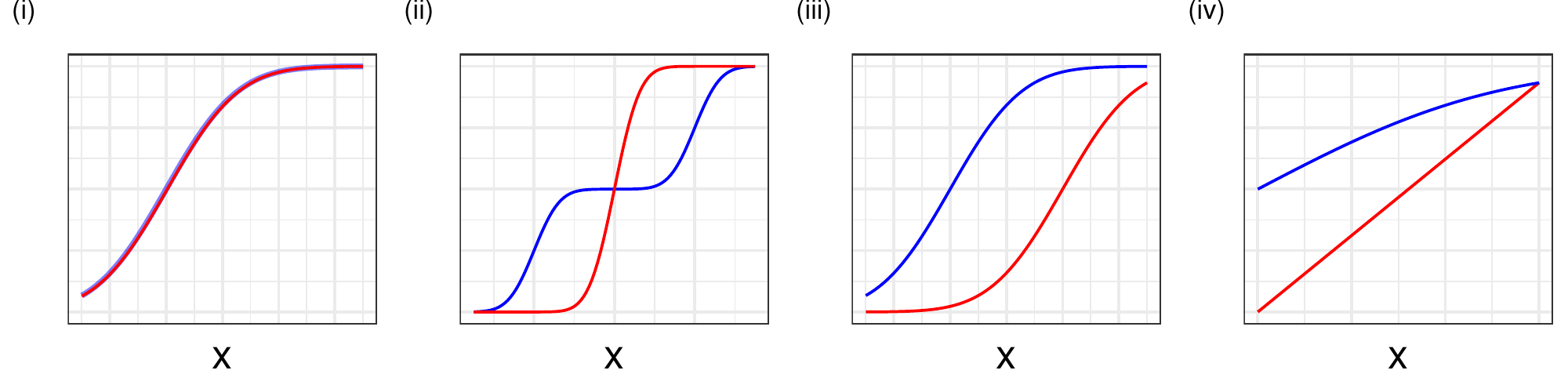} 
 \caption[foo bar]{
 This figure displays plots of two distribution functions $F$ (red) and $G$ (blue). The range of $x$ in these plots correspond to  $D_p:= D_p(F,G)$.
 (i) $F=G$ on $D_p$. This is a null configuration.
 (ii) $F$ and $G$ cross each other on $D_p$. This is also a null configuration.
 (iii) $F$ strictly stochastically dominates $G$. In fact $G(x)>F(x)$ for all $x\in D_p$. This is an alternative configuration.
 (iv) $F$ and $G$ touch each other at the endpoint of $D_p$. This is a null configuration.
 }\label{Plot: non-dominance hypotheses}
 \end{figure}
 
\subsection{Asymptotic distribution }
\label{section: asymptotic}

In this section, we explore the asymptotic distribution of  our test statistics. 
We show that the minimum t-test and the two sample empirical process (TSEP) test  asymptotically control type I error  for each null configuration and they are asymptotically consistent against each $(F,G)\in \Hm_1$.  We also show that, with the exception of the test based on log-concave MLE, the WRS type tests control the type I error for distributions in $\Hm_0^{a}$ and  are consistent against $(F,G)\in \Hm_1^{a}$.
We first prove the asymptotic results on the nonparametric test statistics.
 Then, we show that the shape-constrained test statistics are equivalent to their nonparametric counterparts up to a $o_p(N^{-1/2})$ term, which implies that the same critical values can be used for them.
 
 \textcolor{black}{We exclude the log-concave MLE based WRS statistic  from  our discussion because we are unable to infer on its asymptotic limit. The difficulty arises due to our inability to track the asymptotic behaviour of  $\sqrt{m}||\Flx-\Fmx||_{\infty}$. }In the remainder of this section, by  ``shape-constrained test statistics",  we will therefore refer to   $T_{m,n}^{\text{min}}(\Fnx,\Fny)$,  $T_{m,n}^{\text{tsep}}(\Fnx,\Fny)$, $T_{m,n}^{\text{wrs}}(\Fnx,\Fny)$, $T_{m,n}^{\text{min}}(\Flx,\Fly)$, and $T_{m,n}^{\text{tsep}}(\Flx,\Fly)$ only.  For the sake of clarity, in Table \ref{table: summarize test}, we summarize the current state of results on the different tests discussed in this paper.
 




 
   \begin{table*}[h]
   \centering
  \begin{tabular}{lcc}
\toprule
Test & Asymptotic results & Empirical results \\ 
\midrule
\multicolumn{3}{c}{Nonparametric}\\

\midrule

Minimum t-test & \makecell{previously known\\\citep{kaur}} & included in this paper\\
TSEP & we derived & $---''---$\\
WRS & \makecell{previously known\\ \citep{dwass1956}} & $---''---$\\
\midrule
\multicolumn{3}{c}{Unimodal}\\
\midrule
Minimum t-test & we derived & included in this paper\\
TSEP & $---''---$ & $---''---$\\
WRS & $---''---$ & $---''---$\\
\midrule
\multicolumn{3}{c}{Log-concave}\\
\midrule
Minimum t-test & we derived & included in this paper\\
TSEP & $---''---$ & $---''---$\\
WRS & unknown & $---''---$\\
\midrule
\multicolumn{3}{c}{Smoothed log-concave}\\
\midrule
Minimum t-test & unknown & included in this paper\\
TSEP & $---''---$ & $---''---$\\
WRS & $---''---$ & not included\\
\bottomrule
\end{tabular} 
\caption{Table summarizing  different tests used in this paper and current results on them. }
\label{table: summarize test}
  \end{table*}





 Before going into further details, we state a technical condition that will be required by all of our Theorems.
 \begin{cond}{N}\label{assump: Nonparametric}
 $F$ and $G$ are continuous. Also, $F$ and $G$ have densities $f$ and $g$, respectively, such that $\supp(f)\cup\supp(g)$ contains an open neighborhood of $D_p$.
 \end{cond}
The first requirement of Condition \ref{assump: Nonparametric}, i.e., the continuity of $F$ and $G$, is necessary for the weak convergence of the empirical processes to Brownian bridges. The second requirement ensures that $\text{dist}(D_{p,m,n},D_p)$ approaches $0$ as $m,n\to\infty$ with probability one. These assumptions are likely to be satisfied by our immune response data provided $p$ is not too small (see Figure~\ref{Fig: ecdf} and Figure \ref{Fig: Histogram}).
For the rest of the paper, we restrict our attention to $F$ and $G$ that satisfy Condition~\ref{assump: Nonparametric}.

 \subsubsection{Asymptotic critical values of the nonparametric tests}
\label{section: LFC}

We begin our discussion with the minimum t-statistic and the TSEP statistic.
Our first objective is to identify the  null  configurations that lead to the highest asymptotic type I error. \textcolor{black}{Here we remind the readers that $(F,G)$ is a null configuration if there exists $x\in D_p$ so that $G(x)\leq F(x)$.}
One may guess that the interesting cases appear on the boundary of $\Hm_0$. However, to formally discuss the boundary of  $\Hm_0$, we need to equip it with a suitable topology. To formalize our discussion, we consider the space $\mathcal F$ of all continuous distribution functions on $\RR$, and equip it with the uniform metric
${d(F,F')=\sup_{x\in \RR}|F(x)-F'(x)|.}$
Consider the product space $\mathcal F\times \mathcal F$ with the  metric
\[d_2\slb (F,G),(F',G')\srb =\max\lbs d(F,F'),d(G,G')\rbs.\]

By an abuse of notation, we  denote by $\Hm_0$ and $\Hm_1$ the set of all combinations $(F,G)\in\mathcal F\times\mathcal F$   that satisfy the hypotheses $\Hm_0$ and $\Hm_1$, respectively. For $i=0,1$, we denote the closure of $\Hm_i$ in $\mathcal F\times \mathcal F$ by $\text{cl}(\Hm_i)$. Then the boundary of $\Hm_i$ is given by $\text{cl}(\Hm_i)\setminus \iint (\Hm_i)$.
The following lemma characterizes $\bd(\Hm_0)$ and $\iint(\Hm_0)$.

\textcolor{black}{
\begin{lemma}\label{Lemma: geometry of H0}
$\Hm_0$ is a closed subset of $\mathcal F\times\mathcal F$ with boundary
\[\bd (\Hm_0)=\lbs(F,G)\in\mathcal F\times\mathcal F\ :\ \sup_{x\in D_p}\slb F(x)-G(x)\srb=0 \rbs.\]
Moreover, $\bd(\Hm_0)=\bd(\Hm_1)$. Also, the interior of $\Hm_0$ is given by
\[\iint(\Hm_0)=\lbs(F,G)\in\mathcal F\times\mathcal F\ :\ F(z)>G(z)\text{ for some }z\in D_p \rbs.\]
\end{lemma}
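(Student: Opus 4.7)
The plan is to handle the lemma in three stages: (i) establish that $\Hm_0$ is closed in $(\mathcal F \times \mathcal F, d_2)$; (ii) characterize $\bd(\Hm_0)$ by a two-way inclusion with the set $\mathcal{S}:=\{(F,G)\in\mathcal F\times\mathcal F:\sup_{x\in D_p(F,G)}(F(x)-G(x))=0\}$; and (iii) read off $\iint(\Hm_0)$ from $\Hm_0\setminus\bd(\Hm_0)$. The equality $\bd(\Hm_0)=\bd(\Hm_1)$ is then automatic because $\Hm_1=(\Hm_0)^c$. The structural ingredient used throughout is that, under Condition~\ref{assump: Nonparametric}, the density $h=\lambda f+(1-\lambda)g$ is strictly positive on an open neighborhood of $D_p$; hence $H$ is strictly increasing there, $H^{-1}$ is continuous at $p$ and $1-p$, and uniform perturbations of $(F,G)$ in $d_2$ induce Hausdorff-continuous perturbations of the endpoints of $D_p(F,G)$.

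For closedness, I would take $(F_n,G_n)\in\Hm_0$ with $d_2((F_n,G_n),(F,G))\to 0$, select $z_n\in D_p(F_n,G_n)$ with $F_n(z_n)\geq G_n(z_n)$, extract a convergent subsequence $z_n\to z\in D_p(F,G)$ using boundedness and Hausdorff convergence of $D_p(F_n,G_n)$, and combine uniform convergence of the CDFs with continuity of $F$ and $G$ to pass $F(z)\geq G(z)$ through the limit, yielding $(F,G)\in\Hm_0$.

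For $\mathcal{S}\subseteq\bd(\Hm_0)$, given $(F,G)\in\mathcal{S}$ the supremum is attained by continuity on compact $D_p$, placing $(F,G)$ in $\Hm_0$; to exhibit an approximating sequence in $\Hm_1$, I would use shifted perturbations $F_\epsilon(x)=F(x-\epsilon)$ and $G_\epsilon(x)=G(x+\epsilon)$ for $\epsilon\downarrow 0$ and rewrite
\[G_\epsilon(x)-F_\epsilon(x)=[G(x+\epsilon)-G(x)]+[G(x)-F(x)]+[F(x)-F(x-\epsilon)].\]
The first bracket is uniformly positive on a neighborhood $U$ of $D_p$ (since $g>0$ there by Condition~\ref{assump: Nonparametric}), and the remaining two brackets are nonnegative on $D_p$, so for sufficiently small $\epsilon$ one has $D_p(F_\epsilon,G_\epsilon)\subset U$ and $F_\epsilon<G_\epsilon$ throughout $D_p(F_\epsilon,G_\epsilon)$, i.e., $(F_\epsilon,G_\epsilon)\in\Hm_1$. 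Uniform continuity of $F$ and $G$ then delivers $d_2((F_\epsilon,G_\epsilon),(F,G))\to 0$.

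For the reverse inclusion $\bd(\Hm_0)\subseteq\mathcal{S}$ and the interior characterization, closedness from step (i) gives $\bd(\Hm_0)\subseteq\Hm_0$, so $\sup_{D_p}(F-G)\geq 0$ already; an approximating sequence $(F_n,G_n)\in\Hm_1$ pins any fixed $z\in\iint(D_p(F,G))$ inside $D_p(F_n,G_n)$ eventually, forcing $F_n(z)<G_n(z)$ and hence $F(z)\leq G(z)$ on $\iint(D_p)$, which extends to $D_p$ by continuity, giving $\sup_{D_p}(F-G)\leq 0$ and therefore equality. The interior characterization then follows from $\iint(\Hm_0)=\Hm_0\setminus\bd(\Hm_0)$ together with the elementary equivalence, on the compact set $D_p$, between $\sup(F-G)>0$ and the existence of a point of $D_p$ at which $F>G$. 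The principal technical obstacle throughout is the moving dependence of $D_p$ on $(F,G)$: each limit and perturbation step must control the endpoints of $D_p$ as $(F,G)$ varies, which is exactly why Condition~\ref{assump: Nonparametric} is invoked to secure continuity of the quantile map $H^{-1}$ at $p$ and $1-p$.
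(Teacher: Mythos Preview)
There are two issues, one a misreading and one a genuine gap.

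First, you treat $D_p$ as varying with the pair $(F,G)$, and much of your argument is spent tracking the endpoints of $D_p(F,G)$ under perturbation via Hausdorff convergence and quantile continuity. In the paper's setup, however, the hypotheses $\Hm_0$ and $\Hm_1$ in \eqref{test: restricted stochastic dominance} are formulated with respect to a \emph{fixed} compact set $D$; the set $D_p$ is simply one admissible choice, fixed once and for all. The paper's proof exploits this: closedness and the interior characterization reduce to the elementary continuity of the map $(F,G)\mapsto\sup_{x\in D_p}(F(x)-G(x))$ under the uniform metric, with no endpoint analysis needed.

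Second, and more seriously, your horizontal-shift perturbation for $\mathcal S\subseteq\bd(\Hm_0)$ has a gap. You need $G_\epsilon(x)-F_\epsilon(x)>0$ on $D_p$ and argue that the bracket $G(x+\epsilon)-G(x)$ is uniformly positive ``since $g>0$ there by Condition~\ref{assump: Nonparametric}.'' But Condition~\ref{assump: Nonparametric} asserts only that $\supp(f)\cup\supp(g)$ contains a neighborhood of $D_p$, not that $g>0$ there. At a contact point $x\in C_p$ one can have $g=0$ on $(x,x+\epsilon)$ and $f=0$ on $(x-\epsilon,x)$ simultaneously (with $f>0$ to the right and $g>0$ to the left of $x$, so Condition~\ref{assump: Nonparametric} still holds), in which case all three brackets vanish and the strict inequality fails. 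The paper sidesteps this entirely by using a \emph{vertical} perturbation: in the auxiliary Lemma~\ref{Lemma: geometry: boundary lemma} it sets $\tilde F=F-\epsilon'$ on a carefully chosen interval containing $C_p$ (patched at the tails to remain a continuous CDF), so that $\tilde F<F\le G$ on $D_p$ is immediate, requiring no positivity of either density.
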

}

\textcolor{black}{
 Figure~\ref{Plot: non-dominance hypotheses}(ii) 
 gives an example of an $(F,G)$ pair in the interior of $\Hm_0$.
 The following lemma entails that the minimum t-statistic and the TSEP statistic are asymptotically degenerate  on $\iint(\Hm_0)$. 
   \begin{lemma}\label{thm: dist based: null distribution: A}
 Suppose  $(F,G)\in\iint(\Hm_0)$ satisfies Condition~\ref{assump: Nonparametric},  $m/N\to\lambda$, and  $\supp(f)\cup\supp(g)$ contains an open neighborhood of $D_p$. Then, 
\[T_{m,n}^{\text{min}}(\Fmx,\Fmy)\to_p -\infty\quad\text{and}\quad  T_{m,n}^{\text{tsep}}(\Fmx,\Fmy)\to_p -\infty.\]
\end{lemma}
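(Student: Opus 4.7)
My plan is to exploit the fact that on $\iint(\Hm_0)$ there is a point $z_0\in D_p$ at which $F$ strictly exceeds $G$, and then to show that the infima defining the two statistics are upper-bounded by their values at (essentially) $z_0$, where the numerator converges to a strictly negative constant while either the denominator is $O(N^{-1/2})$ (minimum t-statistic) or the prefactor $\sqrt{mn/N}$ diverges (TSEP statistic).

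By Lemma~\ref{Lemma: geometry of H0}, $(F,G)\in\iint(\Hm_0)$ yields some $z_1\in D_p$ with $F(z_1)>G(z_1)$. Since $F$ and $G$ are continuous by Condition~\ref{assump: Nonparametric}, $\{z:F(z)>G(z)\}$ is open, and continuity of $H$ combined with $p<1/2$ forces $H^{-1}(p)<H^{-1}(1-p)$, so $\iint(D_p)$ is non-empty. Shifting $z_1$ into the interior if necessary, one obtains $z_0\in\iint(D_p)$ with $\delta:=F(z_0)-G(z_0)>0$. Next, the hypothesis that $\supp(f)\cup\supp(g)$ contains an open neighborhood of $D_p$ makes $H$ strictly increasing at the endpoints of $D_p$; combined with the Glivenko-Cantelli theorem $\|\mathbb{H}_N-H\|_\infty\to 0$ a.s., this gives $\mathbb{H}_N^{-1}(p)\to H^{-1}(p)$ and $\mathbb{H}_N^{-1}(1-p)\to H^{-1}(1-p)$ a.s., so $z_0\in D_{p,m,n}$ eventually a.s.

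For $T_{m,n}^{\text{min}}(\Fmx,\Fmy)$, I would upper-bound the infimum by the infimand evaluated at $z_0$. The numerator $\Fmy(z_0)-\Fmx(z_0)\to -\delta$ a.s.\ by Glivenko-Cantelli, while the denominator is bounded above by $\sqrt{1/(4m)+1/(4n)}=O(N^{-1/2})$. Hence the ratio at $z_0$, and thus the infimum, tends to $-\infty$ a.s.; the degenerate case in which the denominator vanishes only strengthens the conclusion because the numerator is then eventually a nonzero negative number.

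For $T_{m,n}^{\text{tsep}}(\Fmx,\Fmy)$, let $w_0=H(z_0)\in(p,1-p)$. Because $\mathbb{H}_N(z_0)\to w_0$ a.s., the value $z_*=\mathbb{H}_N(z_0)$ lies in $[p,1-p]$ eventually a.s., and $\mathbb{H}_N^{-1}(z_*)\to z_0$ a.s.\ by continuity of $H^{-1}$ at $w_0$. Uniform Glivenko-Cantelli then drives the numerator $\Fmy(\mathbb{H}_N^{-1}(z_*))-\Fmx(\mathbb{H}_N^{-1}(z_*))$ to $-\delta$ while $\sqrt{z_*(1-z_*)}\to\sqrt{w_0(1-w_0)}>0$, and multiplying by $\sqrt{mn/N}\to\infty$ sends the infimand at $z_*$ (and hence the infimum) to $-\infty$. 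The main subtlety is tracking the generalized inverse $\mathbb{H}_N^{-1}$ near $z_0$ so that the uniform convergence transfers cleanly; this reduces to a routine Glivenko-Cantelli argument under the strict monotonicity of $H$ at $z_0$ guaranteed by the support hypothesis.
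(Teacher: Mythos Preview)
Your proposal is correct and follows essentially the same approach as the paper: locate an interior point $z_0\in\iint(D_p)$ with $F(z_0)>G(z_0)$, bound each infimum by its value at (or near) $z_0$, and combine Glivenko--Cantelli for the numerator with the $O(N^{-1/2})$ denominator (for $T_{m,n}^{\text{min}}$) or the diverging prefactor $\sqrt{mn/N}$ (for $T_{m,n}^{\text{tsep}}$). The only difference worth noting is in the TSEP step: the paper evaluates the infimand at the \emph{fixed} index $q=H(z_0)\in(p,1-p)$ and invokes $\mathbb{H}_N^{-1}(q)\to H^{-1}(q)=z_0$ directly (Fact~\ref{fact: quantile}A), which cleanly avoids the subtlety you flag about pushing the random index $z_*=\mathbb{H}_N(z_0)$ through the generalized inverse $\mathbb{H}_N^{-1}$.
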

}

\textcolor{black}{
The proof of Lemma~\ref{thm: dist based: null distribution: A} for the minimum t-statistic can be found in \cite{whang2019}  \citep[see also][]{davidson2013}. However, we include it in Appendix~\ref{thm: dist based: null distribution: A} for the sake of completeness.
Lemma~\ref{thm: dist based: null distribution: A} indicates that non-trivial type I errors can originate only at the boundary of $\Hm_0$, which is a subset of $\Hm_0$ because the latter is a closed set (see Lemma~\ref{Lemma: geometry of H0}). 
Lemma~\ref{Lemma: geometry of H0} also implies that  $\text{bd}(\Hm_0)$ consists of all those $F$ and $G$ that touch each other on $D_p$. \textcolor{black}{To concretize this idea, we define the contact set $C_p$ by
\begin{equation*}
    C_p=\{ x\in D_p\ :\ F(x)=G(x) \}.
\end{equation*}
Note that if $(F,G)\in\text{bd}(\Hm_0)$, then
 $C_p\neq\emptyset$, where $F=G$ on $C_p$, and $F<G$ on $ D_p\setminus C_p$.}  Let us also define
 \begin{equation}\label{def: B3}
H(C_p)=\{t\in[p,1-p]\ |\ H^{-1}(t)\in C_p\}.
\end{equation}
Theorem~\ref{thm: dist based: null distribution: B} shows that the asymptotic  distribution of the test statistics on $\bd(\Hm_0)$ crucially depends on this contact set $C_p$ and $H(C_p)$. The proof of  Theorem~\ref{thm: dist based: null distribution: B} is given in Appendix~\ref{sec:proof: main thm}.
}
\begin{theorem}\label{thm: dist based: null distribution: B}
 Suppose  $(F,G)\in\bd(\Hm_0)$,  $m/N\to\lambda$,  and  $F$ and $G$ have continuous densities $f$ and $g$ satisfying
 \[\inf_{x\in D_p}\min\{f(x),g(x)\}>0,\] 
 where $D_p=[H^{-1}(p),H^{-1}(1-p)]$.
   Let $\mathbb{U}$ denote a standard Brownian bridge.
   Then under the stated conditions, the following assertions hold:
   \begin{itemize}
       \item[A.]  \begin{align}\label{eq:asT1mnlim}
 T_{m,n}^{\text{min}}(\Fmx,\Fmy)\to_d\inf_{x\in C_p}\dfrac{\mathbb{U}\circ F(x)}{\sqrt{F(x)(1-F(x))}},
\end{align}
where $ C_p=\{ x\in D_p\ :\ F(x)=G(x) \}$.
\item[B.]
\[T_{m,n}^{\text{tsep}}(\Fmx,\Fmy)\to_d \inf_{t\in H(C_p)}\sqrt{\dfrac{\lambda}{1-\lambda}}\dfrac{\mathbb{L}_0(t)}{\sqrt{t(1-t)}},\]
where $\mathbb{L}_0$ is the centred Gaussian process given by \eqref{def: L0} of Appendix~\ref{sec: appendix: tests}, and $H(C_p)$ is as in \eqref{def: B3}.
\item[C.] In particular, if $C_p\subset\iint(D_p)$, then 
\[T_{m,n}^{\text{tsep}}(\Fmx,\Fmy)\to_d \inf_{t\in H(C_p)}\dfrac{\mathbb{U}(t)}{\sqrt{t(1-t)}}.\]
   \end{itemize}
  
 \end{theorem}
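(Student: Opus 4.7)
The plan is to combine Donsker's theorem, which gives $\sqrt{m}(\Fmx - F) \to_d \mathbb{B}_F \circ F$ and $\sqrt{n}(\Fmy - G) \to_d \mathbb{B}_G \circ G$ for independent standard Brownian bridges $\mathbb{B}_F, \mathbb{B}_G$ on $[0,1]$, with the functional delta method applied to $\mathbb{H}_N^{-1}$ and a localization argument that confines both infima to the contact set asymptotically. As a preliminary, $\text{dist}(D_{p,m,n}, D_p) \to 0$ almost surely follows from uniform convergence of $\mathbb{H}_N$ together with the assumed positivity of $h := \lambda f + (1-\lambda) g$ near $D_p$, so the infima may be taken over sets that converge to $D_p$ or $[p, 1-p]$. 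The decomposition
\[
\sqrt{mn/N}\bigl[(\Fmy - \Fmx) - (G - F)\bigr] = \sqrt{m/N}\,\sqrt{n}(\Fmy - G) - \sqrt{n/N}\,\sqrt{m}(\Fmx - F)
\]
then gives weak convergence of the centred two-sample process to $W := \sqrt{\lambda}\,\mathbb{B}_G \circ G - \sqrt{1-\lambda}\,\mathbb{B}_F \circ F$, which on $C_p$ (where $F = G$) equals $\mathbb{U} \circ F$ in law for a standard Brownian bridge $\mathbb{U}$, since the squared weights $\lambda$ and $1 - \lambda$ sum to one.

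For Part A, I would multiply numerator and denominator of the minimum $t$-statistic by $\sqrt{mn/N}$. The denominator becomes $\sqrt{(1 - m/N)\Fmx(1-\Fmx) + (m/N)\Fmy(1-\Fmy)}$, which converges uniformly on $D_p$ to $\sqrt{(1-\lambda) F(1-F) + \lambda G(1-G)}$ and equals $\sqrt{F(1-F)}$ on $C_p$. On $D_p \setminus C_p$ the deterministic drift $\sqrt{mn/N}(G - F)$ diverges to $+\infty$, so the infimum concentrates on an arbitrarily small neighborhood of $C_p$ in probability. Combining this with the preceding weak convergence and applying the continuous mapping theorem to the infimum yields the limit $\inf_{x \in C_p} \mathbb{U}(F(x))/\sqrt{F(x)(1 - F(x))}$.

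Part B follows the same strategy for the TSEP statistic, but requires an extra Hadamard expansion,
\[
[\Fmy - \Fmx](\mathbb{H}_N^{-1}(z)) = [\Fmy - \Fmx](H^{-1}(z)) + (g - f)(H^{-1}(z))\bigl(\mathbb{H}_N^{-1}(z) - H^{-1}(z)\bigr) + o_p(N^{-1/2}),
\]
uniformly in $z \in [p, 1-p]$, together with the delta-method limit $\sqrt{N}(\mathbb{H}_N^{-1} - H^{-1}) \to_d -(\sqrt{\lambda}\,\mathbb{B}_F \circ F + \sqrt{1-\lambda}\,\mathbb{B}_G \circ G) \circ H^{-1}/(h \circ H^{-1})$. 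On $H(C_p)$, where $F \circ H^{-1}$ and $G \circ H^{-1}$ both equal the identity, the two Gaussian summands add to the process appearing as $\sqrt{\lambda/(1-\lambda)}\,\mathbb{L}_0$ in the statement. Part C then drops out: since $C_p \subset \iint(D_p)$ and $F - G \le 0$ on $D_p$ with equality on $C_p$, every point of $C_p$ is an interior local maximum of $F - G$, so continuity of $f$ and $g$ forces $f = g$ on $C_p$; the $(g - f)$ correction vanishes on $H(C_p)$, leaving $\mathbb{U}(z) = \sqrt{\lambda}\,\mathbb{B}_G(z) - \sqrt{1-\lambda}\,\mathbb{B}_F(z)$ as the relevant process.

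The principal difficulties I expect are (i) establishing the Hadamard expansion uniformly in $z$ near the endpoints of $[p, 1-p]$, which I would handle via the delta method for quantiles under the assumed positivity of $h$ on an open neighborhood of $D_p$; and (ii) rigorously implementing the ``infimum concentrates on $C_p$'' step when $C_p$ is an arbitrary closed subset of $D_p$. I would address the latter by a sandwich argument: for each $\varepsilon > 0$, the infimum over $D_{p,m,n}$ minus an $\varepsilon$-tube around $C_p$ exceeds any fixed constant with probability tending to one, while tightness of the limit process and continuity of $x \mapsto \mathbb{U}(F(x))/\sqrt{F(x)(1-F(x))}$ (and its analogue for the TSEP statistic) allow one to send $\varepsilon \downarrow 0$ and identify the infimum over the tube with the infimum over $C_p$.
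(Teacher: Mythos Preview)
Your overall strategy---Donsker, localization to the contact set via a sandwich/tube argument, and continuous mapping for the infimum---matches the paper's proof almost exactly for Part~A. The paper carries this out via an almost-sure (Skorokhod) representation and a three-piece partition $D_{p,m,n}=B\cup E\cup C$, where $B$ is an $\varepsilon$-tube complement (infimum $\to+\infty$), $E$ is the annulus between the tube and $C_{p,m,n}$ (infimum within $\varepsilon$ of that over $C_{p,m,n}$ by uniform continuity of the limit process), and $C_{p,m,n}\to C_p$. Your sandwich argument is the same idea stated more compactly.

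For Part~B your route genuinely diverges from the paper's. You propose a Hadamard/delta expansion of $(\Fmy-\Fmx)\circ\Hm^{-1}$ around $H^{-1}$, bringing in the quantile-process limit for $\Hm^{-1}$ and then combining it with the Donsker limit. The paper instead exploits the algebraic identity $N\Hm=m\Fmx+n\Fmy$ to rewrite
\[
\sqrt{\tfrac{mn}{N}}\bigl(\Fmy-\Fmx\bigr)\circ\Hm^{-1}(t)
= -\sqrt{\tfrac{mN}{n}}\bigl(\Fmx\circ\Hm^{-1}(t)-t\bigr)+O(N^{-1/2}),
\]
and then invokes the Pyke--Shorack limit for $\sqrt{N}\bigl(\Fmx\circ\Hm^{-1}(t)-F\circ H^{-1}(t)\bigr)$ directly. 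Both lead to the same limiting process (your coefficient calculation on $H(C_p)$ collapses to $-\sqrt{1-\lambda}\,s_2\,\mathbb{B}_F+\sqrt{\lambda}\,s_1\,\mathbb{B}_G$, which equals the paper's $\sqrt{\lambda/(1-\lambda)}\,\mathbb{L}_0$ in distribution). The paper's route is shorter because it sidesteps the uniform Hadamard differentiability of the quantile map and the need to argue joint weak convergence of the empirical and quantile processes; your route is more self-contained and makes the $(g-f)$ correction explicit, which is exactly what drives Part~C.

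One point to tighten: your displayed expansion applies the derivative $(g-f)$ to $\Fmy-\Fmx$, which is a step function. What you need is to split $(\Fmy-\Fmx)\circ\Hm^{-1}-(\Fmy-\Fmx)\circ H^{-1}$ into the smooth part $(G-F)\circ\Hm^{-1}-(G-F)\circ H^{-1}$ (mean-value theorem gives the $(g-f)$ factor) and the empirical-fluctuation part, which is $o_p(N^{-1/2})$ uniformly by asymptotic equicontinuity of the Brownian-bridge limit together with $\Hm^{-1}(z)-H^{-1}(z)\to 0$. With that decomposition in place your argument goes through. Part~C is then exactly as you say: $C_p\subset\iint(D_p)$ forces $f=g$ on $C_p$, the $(g-f)$ correction vanishes on $H(C_p)$, and $s_1=s_2=1$ reduces $\mathbb{L}_0$ to a Brownian bridge; this is precisely the paper's Lemma~\ref{Dist of L0: under null}.
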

 
 \textcolor{black}{The Gaussian process $\mathbb L_0$, which is defined in \eqref{def: L0} of Appendix~\ref{sec: appendix: tests}, depends on $F$ and $G$. We postpone further discussion on the form of $\mathbb L_0$ till Appendix~\ref{sec: appendix: tests}. Next we discuss the implication of Theorem \ref{thm: dist based: null distribution: B} on the minimum t-test. Then we will discuss the case of the TSEP test.}
 
 \textcolor{black}{
 \textbf{Asymptotic critical value of minimum t-test:}
  Theorem~\ref{thm: dist based: null distribution: B} reveals an interesting fact: the length of $C_p$ imposes a stochastic ordering among the limiting laws of the minimum t-statistic for the boundary configurations.
 To elaborate, let us consider $(F_1,G_1)$ and $(F_2,G_2)\in\bd(\Hm_0)$, with respective contact sets $C_{p}^1$ and $C_p^2$. Then, 
   under the conditions of Theorem~\ref{thm: dist based: null distribution: B}, 
  \[T_{m,n}^{\text{min}}(\Fmx,\Fmy)\xrightarrow{F_i,G_i}_d\inf_{x\in C^i_p}\dfrac{\mathbb{U}\circ F(x)}{\sqrt{F(x)(1-F(x))}},\quad\text{ for }i=1,2.\]
If the contact sets satisfy the ordering $C_{p}^1\subset C_{p}^2$, then
   \[\inf_{z\in C^2_p}\dfrac{\mathbb{U}\circ F(z)}{\sqrt{F(z)(1-F(z))}}\preceq\inf_{z\in C^1_p}\dfrac{\mathbb{U}\circ F(z)}{\sqrt{F(z)(1-F(z))}},\]
   implying that the limiting law of the minimum t-statistic under $(F_1,G_1)$ stochastically dominates that under $(F_2,G_2)$.
   The extreme cases for $C_p^1$ are the singleton sets $\{x\}$, where $x\in D_p$. In this case, the asymptotic distribution of both test  statistics is standard Gaussian.
   Therefore, we set the critical value of our tests to be $z_{\alpha}$, the $(1-\alpha)$-th quantile of the standard Gaussian distribution.
   }
   \textcolor{black}{
   The  class of boundary configurations with a singleton contact set is referred to as ``the least favorable class" (LFC) \citep{davidson2013}.
    Figure~\ref{figure: boundary of Hknot}  illustrates the difference between an LFC and an ordinary non-LFC  boundary combination.
   Theorem 2 of \citeauthor{davidson2013} shows that   there is no null configuration under which  the law of the minimum t-statistic  strictly stochastically dominates that of the LFC configuration. This result, which holds for any $m$ and $n$, implies that, among the null configurations, the LFC configurations lead to the greatest dominance of $F$ over $G$. 
       The above finding, in conjunction with our  Theorem~\ref{thm: dist based: null distribution: B}, imply that our tests, whose critical values are based on the LFC class, is likely to have asymptotic size $\alpha$. This being a stronger assertion than the asymptotic control of type I error can be an interesting topic for further investigation. 
    }
    \begin{figure}[h]\label{Fig: B}
\begin{subfigure}[b]{.5\textwidth}
\centering
\includegraphics[height=1 in, width=\textwidth]{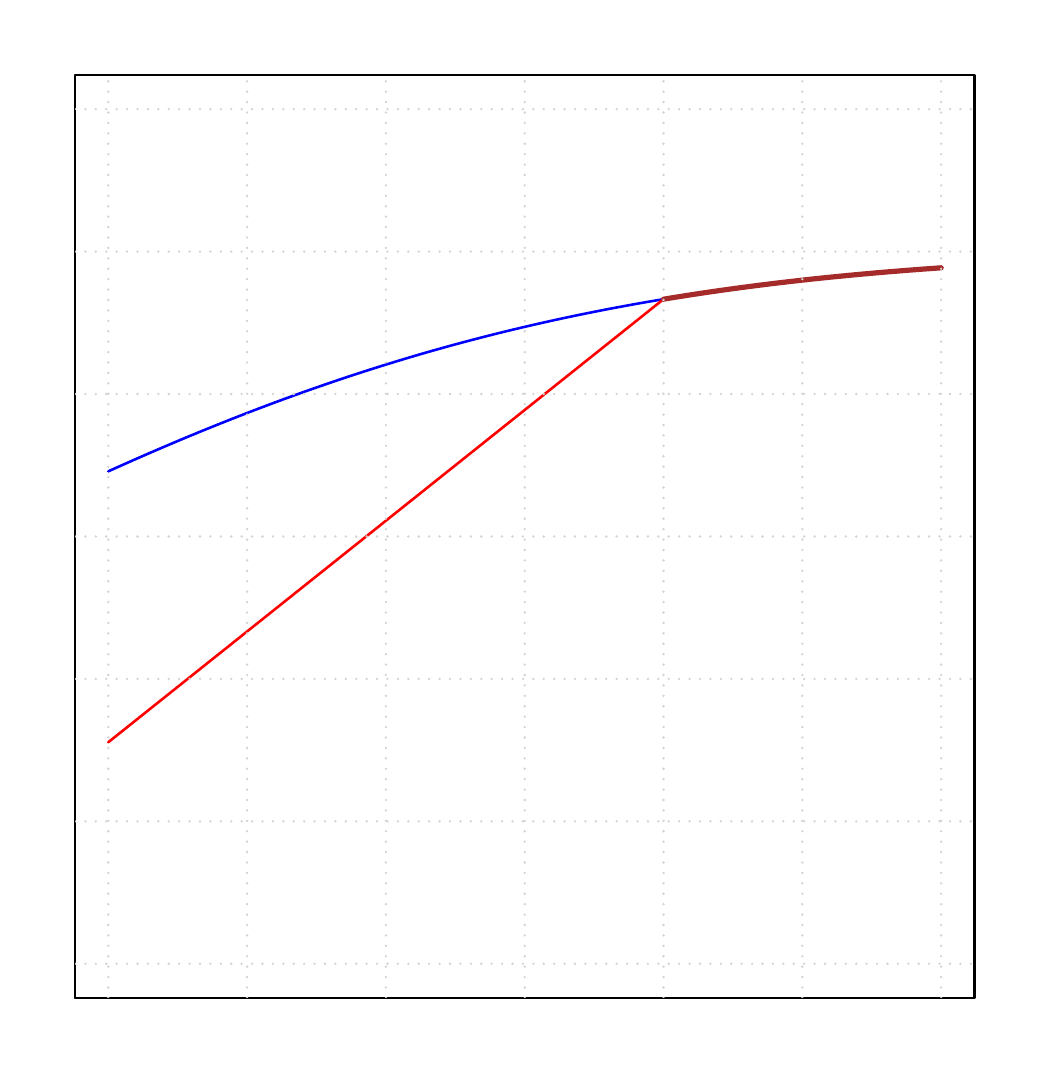}
\caption{$(F,G)\in\bd(\Hm_0)$ but not LFC}
\label{Figure: touch: region}
\end{subfigure}~
\begin{subfigure}[b]{.5\textwidth}
\centering
\includegraphics[height=1  in, width=\textwidth]{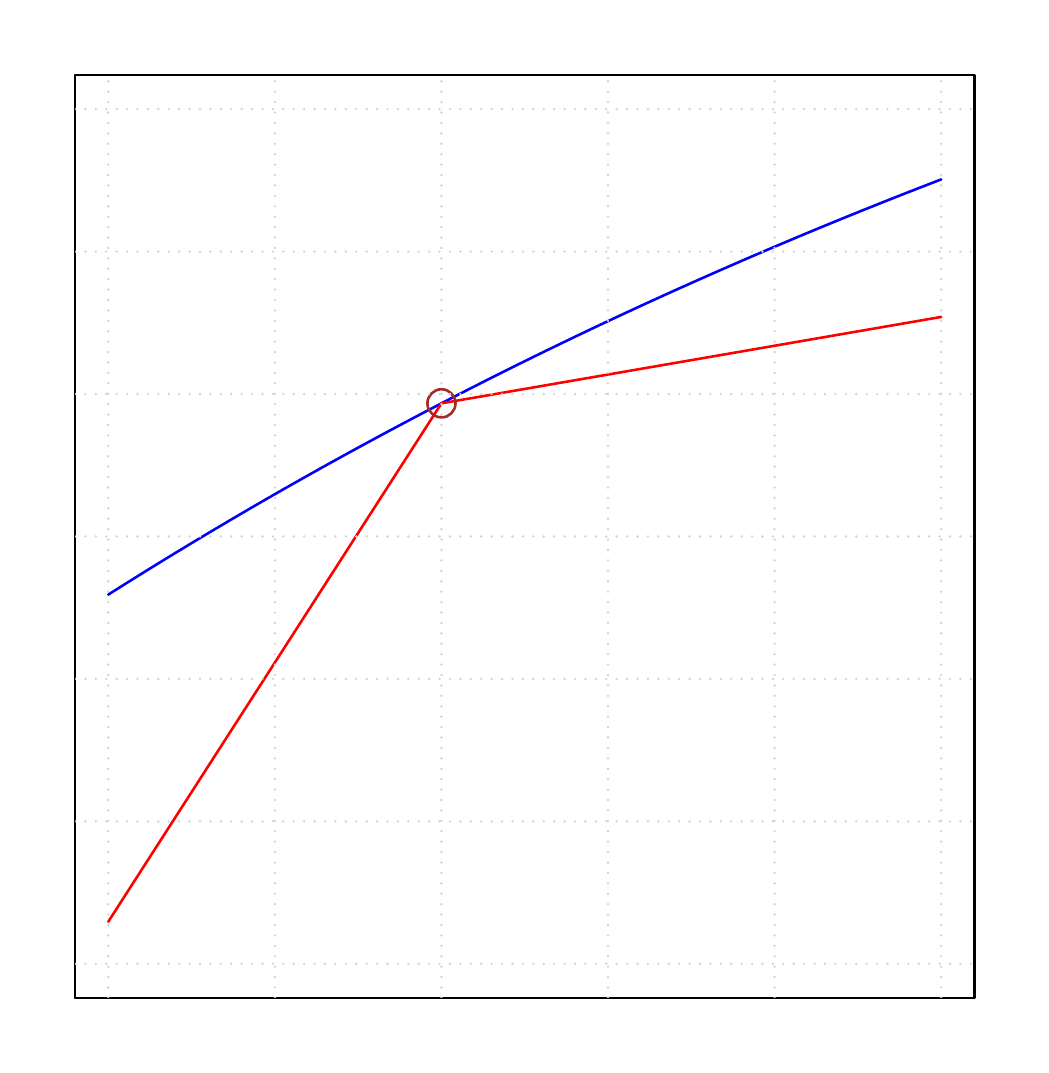}
\caption{$(F,G)$ is LFC}
\label{Figure: touch: LFC}
\end{subfigure}
\caption{Illustration of $F$ (red) and $G$ (blue) on $\bd(\Hm_0)$. (i) The contact set $C_p$ is an interval (ii) The contact set is singleton; so this is an LFC configuration.}\label{figure: boundary of Hknot}
\end{figure}
    \begin{remark}
 The asymptotic behavior of  the nonparametric  minimum t-statistic has been previously studied \citep{davidson2013,  kaur}. However,  previous studies  focus only on the asymptotic behaviour of the minimum t-statistic at the LFC configurations and the interior of $\Hm_0$, whereas our results show that there are other  classes of  boundary configurations with non-vanishing type I error.
Although existing results are enough for the purpose of constructing critical values, our new results provide a more complete understanding of the scenario.
\end{remark}

    \textcolor{black}{
    \textbf{Asymptotic critical value of the TSEP test:}  The asymptotic distribution of the TSEP statistic under $\Hm_0$ also exhibits a monotonocity property similar to the minimum t-test. To elaborate, suppose the pairs $(F_1,G_1)$ and $(F_2,G_2)$ have respective contact sets $C_p^1$ and $C_p^2$ satisfying $C_p^1\subset C_p^2$. Then
    \[\inf_{t\in H(C_p^2)}\frac{\mathbb L_0(t)}{\sqrt{t(1-t)}}\preceq \inf_{t\in H(C_p^1)}\frac{\mathbb L_0(t)}{\sqrt{t(1-t)}}.\]
Therefore, similar to the case of the minimum t-test,  the configurations with singleton $H(C_p)$ constitute the class of  LFC configurations for the TSEP test. Under Condition \ref{assump: Nonparametric}, $H$ is continuous and strictly increasing on $C_p$. Therefore, $H(C_p)$ is singletone if and only if $C_p$ is singletone, i.e., $C_p$ is of the form $\{x\}$, where $x\in D_p$. To find the critical value, it suffices to study the asymptotics in such LFC cases. If $x\in\iint(D_p)$, the  TSEP test statistic weakly converges to a standard Gaussian distribution  by part C of  Theorem~\ref{thm: dist based: null distribution: B}. Till this point, there has been no difference between the TSEP and the minimum t-test statistic. If, however, $x\in\text{bd}(D_p)$ (see Figure~\ref{Fig: Cp, etc.}iv), the asymptotic distribution of the TSEP test statistic can be different. To this end, first we state a lemma, and then using this lemma, we explain the asymptotics of the TSEP test when $C_p=\{x\}\subset \{p,1-p\}$.
    }
    
    \textcolor{black}{\begin{lemma}
\label{Dist of L0: at bd}
Suppose $F$ and $G$ are as in Theorem \ref{thm: dist based: null distribution: B} and $H(C_p)=\{t\}$ for some $t\in[p,1-p]$. Then
\[\sqrt{\frac{\lambda}{1-\lambda}}\dfrac{\mathbb L_0(t)}{\sqrt{t(1-t)}}\sim N(0,\sigma_{TSEP}^2),\] 
where
\begin{equation}
    \label{def: sigma TSEP}
    \sigma_{TSEP}^2=\frac{\lambda (f\circ H^{-1}(t))^2+(1-\lambda)(g\circ H^{-1}(t))^2}{\slb\lambda f\circ H^{-1}(t)+(1-\lambda)g\circ H^{-1}(t)\srb^2}.
\end{equation}
Moreover, the following assertions also hold:
\begin{itemize}
    \item [A.]  $\sigma_{TSEP}^2=1$ if and only if $f\circ H^{-1}(t)=g\circ H^{-1}(t)$. Otherwise,
\begin{equation}
    \label{L0 var: bounds}
   1< \sigma_{TSEP}^2\leq \max\{(1-\lambda)^{-1}, \lambda^{-1}\}.
\end{equation}
\item[B.] Given any  $\e\in(0,2(1-\lambda)/\lambda)$, we can find  a constant $C_\lambda>0$, depending only on $\lambda>0$, so that whenever  $g(H^{-1}(t))/f(H^{-1}(t))<C_\lambda$, then $\sigma^2_{TSEP}>\lambda^{-1}-\e.$
\item[C.]  Given any  $\e\in(0,2\lambda/(1-\lambda))$, we can find  a constant $C'_\lambda>0$, depending only on $\lambda>0$, so that whenever $f(H^{-1}(t))/g(H^{-1}(t))<C'_\lambda$, then $\sigma^2_{TSEP}>(1-\lambda)^{-1}-\e.$
\end{itemize}
\end{lemma}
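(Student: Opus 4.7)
The plan has three stages. First, I will show Gaussianity of $\sqrt{\lambda/(1-\lambda)}\,\mathbb L_0(t)/\sqrt{t(1-t)}$ by inheritance from the centered Gaussian process $\mathbb L_0$, and compute its variance directly from the form given in \eqref{def: L0}. Second, I will verify (A) by reducing $\sigma^2_{TSEP}$ to a one-variable rational function of $r = g(H^{-1}(t))/f(H^{-1}(t))$. Third, I will obtain (B) and (C) as continuity statements at $r = 0$ and $r = \infty$.

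For the variance computation, write $x = H^{-1}(t)$ and $h = \lambda f + (1-\lambda) g$. The boundary hypothesis $H(C_p) = \{t\}$ forces $F(x) = G(x) = t$. Interpreting $\mathbb L_0$ through its role as the weak limit of the normalized two-sample empirical process evaluated at $\mathbb H_N^{-1}$, a Bahadur-type expansion of $\mathbb H_N^{-1}(t)$ around $x$, a first-order Taylor expansion of $G - F$ at $x$, and stochastic equicontinuity of the empirical processes yield
\begin{align*}
\sqrt{mn/N}\bigl[G_n(\mathbb H_N^{-1}(t)) - F_m(\mathbb H_N^{-1}(t))\bigr]
&= \sqrt{mn/N}\bigl[(G_n - G)(x) - (F_m - F)(x)\bigr] \\
&\quad - \frac{g(x) - f(x)}{h(x)}\,\sqrt{mn/N}\bigl[\mathbb H_N(x) - H(x)\bigr] + o_p(1),
\end{align*}
where $F_m$ and $G_n$ are the marginal empirical distribution functions of the two samples. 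By independence of the samples, $\sqrt m\,(F_m - F)(x)$ and $\sqrt n\,(G_n - G)(x)$ jointly converge to independent $N(0, t(1-t))$ variables $Z_F, Z_G$. Setting $c := (g(x) - f(x))/h(x)$ and using $\sqrt{m/N}\to\sqrt\lambda$, $\sqrt{n/N}\to\sqrt{1-\lambda}$, the right-hand side tends in distribution to
\[
\sqrt\lambda\bigl(1 - (1-\lambda) c\bigr) Z_G - \sqrt{1-\lambda}\bigl(1 + \lambda c\bigr) Z_F.
\]
Expanding the variance and using $\lambda + (1-\lambda) = 1$ collapses cross terms to $t(1-t)[1 + \lambda(1-\lambda) c^2]$, and a short algebraic manipulation identifies this with $t(1-t)(\lambda f(x)^2 + (1-\lambda) g(x)^2)/h(x)^2 = t(1-t)\,\sigma^2_{TSEP}$.

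For (A), Jensen's inequality applied to $u \mapsto u^2$ with the probability measure $(\lambda, 1-\lambda)$ on $\{f(x), g(x)\}$ gives $h(x)^2 \le \lambda f(x)^2 + (1-\lambda) g(x)^2$ with equality iff $f(x) = g(x)$, so $\sigma^2_{TSEP} \ge 1$ with equality iff $f(x) = g(x)$. For the upper bound, substituting $r = g(x)/f(x)$ yields
\[
\sigma^2_{TSEP}(r) = \frac{\lambda + (1-\lambda) r^2}{(\lambda + (1-\lambda) r)^2}, \qquad (\sigma^2_{TSEP})'(r) = \frac{2\lambda(1-\lambda)(r - 1)}{(\lambda + (1-\lambda) r)^3},
\]
so $r = 1$ is the unique critical point (a minimum with value $1$), and $\sigma^2_{TSEP}(r)$ is strictly decreasing on $(0, 1)$ and strictly increasing on $(1, \infty)$ with boundary limits $1/\lambda$ at $r \downarrow 0$ and $1/(1-\lambda)$ at $r \uparrow \infty$. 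This gives $\sigma^2_{TSEP} \le \max\{\lambda^{-1}, (1-\lambda)^{-1}\}$. Parts (B) and (C) then follow by continuity: for $\epsilon$ in the stated range, continuity of $\sigma^2_{TSEP}$ at $r = 0$ produces a threshold $C_\lambda$ such that $r < C_\lambda$ forces $\sigma^2_{TSEP}(r) > 1/\lambda - \epsilon$; the upper restriction on $\epsilon$ simply keeps the target lower bound above the universal infimum $1$, so the conclusion is non-trivial. Part (C) is the mirror argument as $r \to \infty$.

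The main obstacle will be the variance identification in the first stage: the $o_p(1)$ residuals from the Bahadur expansion and the joint convergence of the two-sample empirical process must be tracked carefully enough to realize the limit as a single linear combination of the independent Gaussians $Z_F$ and $Z_G$, and the algebraic identity $1 + \lambda(1-\lambda) c^2 = (\lambda f^2 + (1-\lambda) g^2)/h^2$ has to be verified. Reconciling the precise normalization in \eqref{def: L0} with this limit is routine bookkeeping; once that step is done, parts (A)--(C) reduce to elementary calculus on the rational function $\sigma^2_{TSEP}(r)$.
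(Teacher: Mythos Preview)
Your argument is correct, but the first stage takes a substantial detour that the paper avoids entirely. Since $\mathbb L_0$ is \emph{defined} in \eqref{def: L0}, the paper simply plugs in: from $H(C_p)=\{t\}$ one has $F(H^{-1}(t))=G(H^{-1}(t))=t$, so \eqref{def: L0} reads
\[
\mathbb L_0(t)=(1-\lambda)\bigl[\lambda^{-1/2}s_2(t)\,\mathbb V_1(t)-(1-\lambda)^{-1/2}s_1(t)\,\mathbb V_2(t)\bigr],
\]
and the variance is immediate because $\mathbb V_1,\mathbb V_2$ are independent Brownian bridges. Your Bahadur--Taylor--equicontinuity route effectively re-derives this pointwise evaluation of \eqref{def: L0} from scratch; it lands on the same formula (your identity $1+\lambda(1-\lambda)c^2=(\lambda f^2+(1-\lambda)g^2)/h^2$ is correct), but the ``routine bookkeeping'' you defer is precisely the content of the Pyke--Shorack result already encapsulated in \eqref{def: L0}. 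The paper's approach is a two-line calculation where yours spends a page.

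For parts (A)--(C) the two approaches are close cousins. Both use Jensen for the lower bound. For the upper bound the paper parameterizes by $s_1\in(0,\lambda^{-1})$, writes $\sigma^2_{TSEP}=(\lambda/(1-\lambda))[\lambda^{-1}-1+(1-s_1)^2]$, and maximizes this quadratic at the endpoints; your parameterization by $r=g/f$ with a derivative sign analysis is arguably cleaner and gives the same endpoint limits $\lambda^{-1}$ and $(1-\lambda)^{-1}$. For (B) and (C), your pure continuity argument actually works for every $\epsilon>0$, whereas the paper's explicit algebraic lower bound genuinely needs $\epsilon/2<\lambda^{-1}-1$ to ensure $s_1>1$ before expanding $(1-s_1)^2$. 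Your side remark that the $\epsilon$ restriction ``keeps the target lower bound above the universal infimum $1$'' is not quite right (that would give $\epsilon<(1-\lambda)/\lambda$, not $2(1-\lambda)/\lambda$), but this is cosmetic and does not affect the validity of your continuity argument.
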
}

\textcolor{black}{Lemma~\ref{Dist of L0: at bd} has some interesting consequences. First, if 
$C_p=\{H^{-1}(p)\}$ or $\{H^{-1}(1-p)\}$, then using Theorem~\ref{thm: dist based: null distribution: B}B and Lemma~\ref{Dist of L0: at bd} one can show that $T_{m,n}^{\text{tsep}}$ converges weakly to a centred Gaussian distribution with variance $\sigma^2_{TSEP}$. Part A of Lemma~\ref{Dist of L0: at bd}  implies that if $F$ and $G$ touch at $C_p$, i.e. if $f=g$ at the point of contact, then $\sigma^2_{TSEP}$ is still one. Hence, the TSEP test statistic is asymptotically standard Gaussian for this case. However, if $F$ and $G$ cross at the point of contact instead of touching, i.e. if $f\neq g$ at the point of contact, then $\sigma^2_{TSEP}>1$. The precise value of $\sigma^2_{TSEP}$ is given by \eqref{def: sigma TSEP}. Moreover, the value of $\sigma^2_{TSEP}$ increases as the value of $f$ and $g$ diverges at the point of contact. On one hand,   if $f$ is much larger than $g$, then part B of Lemma~\ref{Dist of L0: at bd} implies $\sigma^2_{TSEP}$ is close to $\lambda^{-1}$. On the other hand, if $g$ is much larger than $f$, then  part C of Lemma~\ref{Dist of L0: at bd} indicates that $\sigma^2_{TSEP}$ is close to $(1-\lambda)^{-1}$. These bounds are tight because, by part B of Lemma~\ref{Dist of L0: at bd}, $\sigma^2_{TSEP}$
can not be larger than $\max\{\lambda^{-1},(1-\lambda)^{-1}\}$ under the set up of Theorem~\ref{thm: dist based: null distribution: B}. 
}

\textcolor{black}{The above discussion leads to the following conclusion for the TSEP test. If we want to control the asymptotic type I error of the TSEP test at all null configurations, then we should use the critical value $C_{m,n}z_{\alpha}$ where $C_{m,n}=\max\{\sqrt{N/m},\sqrt{N/n}\}$. There is a caveat, however.
To see this, we begin by noting that $C_{m,n}\geq \sqrt 2$, with equality holding only when $m=n$.   For example, for our motivating dataset, $C_{m,n}\approx 1.9$. However,  if  $m/N\to 0$ or $n/N\to 0$,   then $C_{m,n}\to\infty$. Thus if $m$ and $n$ are not close to being equal,  $C_{m,n}$ can be a large quantity. Therefore, using $C_{m,n}z_\alpha$ as critical value yields a conservative test. Hence, we will call the corresponding TSEP tests the conservative TSEP tests. Our simulations indicate that   conservative TSEP tests have poorer power  compared to the minimum t-test even when $m$ and $n$ are  equal, and their power keeps degrading  as $C_{m,n}$ increases. 
}

\textcolor{black}{
If we use instead use the critical value $z_\alpha$, then we  control the asymptotic type I error at the null configurations with $C_p\subset\iint(D_p)$ or those with $f=g$ at $C_p$.
This only excludes the null cases where $F$ and $G$ may cross at $\text{bd}(D_p)$ (see Figure \ref{Plot: non-dominance hypotheses} iv). These type of configurations  can be considered  pathological cases.  Moreover, our simulations in  Section \ref{sec:simulation:1} (see case b)
show that even when $F$ and $G$ cross at $\text{bd}(D_p)$, the TSEP tests with critical value $z_{\alpha}$ control the type I error. The TSEP tests with critical value $z_\alpha$ have decent power and their overall performance  is comparable with the minimum t-tests. In view of the above, we recommend using the asymptotic critical value $z_\alpha$ when using the TSEP test.
}

    Our final result on the minimum t-test and the TSEP test
  establishes their asymptotic consistency.
     \begin{theorem}\label{thm:critical values}
Suppose  $(F,G)\in \Hm_1$ satisfy Condition~\ref{assump: Nonparametric}. Then if  $m/N\to\lambda$, then
\[\lim_{m,n\to\infty} T_{m,n}^{\text{min}}(\Fmx,\Fmy)\to_p \infty\quad\text{and}\quad T_{m,n}^{\text{tsep}}(\Fmx,\Fmy)\to_p \infty.\]
\end{theorem}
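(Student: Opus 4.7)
The plan is to exploit the fact that, under $\mathcal H_1$, the gap $G-F$ is uniformly bounded away from zero on the compact set $D_p$, while the stochastic fluctuations of the empirical quantities appearing in both test statistics are of order $N^{-1/2}$. Combined with the fact that $D_{p,m,n}$ concentrates around $D_p$, this forces both statistics to diverge.

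\textbf{Step 1: Uniform lower bound on $G-F$ in a neighborhood of $D_p$.}
By Condition~\ref{assump: Nonparametric}, $F$ and $G$ are continuous and $f+g>0$ on an open neighborhood $\mathcal N$ of $D_p$. Hence $H=\lambda F+(1-\lambda)G$ is strictly increasing and continuous on $\mathcal N$, so $H^{-1}$ is continuous on an open interval containing $[p,1-p]$. Since $(F,G)\in\mathcal H_1$, the continuous function $G-F$ is strictly positive on the compact set $D_p$; set $\delta:=\inf_{x\in D_p}(G(x)-F(x))>0$. By continuity, after shrinking $\mathcal N$ if necessary, we have $G(x)-F(x)\ge \delta/2$ for all $x\in\mathcal N$.

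\textbf{Step 2: Locating $D_{p,m,n}$ inside $\mathcal N$.}
By the Glivenko–Cantelli theorem, $\sup_{x}|\mathbb{H}_N(x)-H(x)|\to 0$ a.s., and since $H$ is continuous and strictly increasing near $[p,1-p]$, standard quantile arguments give $\mathbb H_N^{-1}(p)\to H^{-1}(p)$ and $\mathbb H_N^{-1}(1-p)\to H^{-1}(1-p)$ a.s. Therefore $\mathrm{dist}(D_{p,m,n},D_p)\to 0$ a.s., so on an event of probability tending to $1$ we have $D_{p,m,n}\subset\mathcal N$.

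\textbf{Step 3: Divergence of the minimum t-statistic.}
On the event in Step~2, for every $x\in D_{p,m,n}$,
\[
\mathbb F_{my}(x)-\mathbb F_{mx}(x)=\bigl(G(x)-F(x)\bigr)+\bigl(\mathbb F_{my}(x)-G(x)\bigr)-\bigl(\mathbb F_{mx}(x)-F(x)\bigr)\ge \tfrac{\delta}{2}-R_N,
\]
where $R_N:=\|\mathbb F_{my}-G\|_\infty+\|\mathbb F_{mx}-F\|_\infty=O_p(N^{-1/2})$ by the Dvoretzky–Kiefer–Wolfowitz inequality. Meanwhile the denominator satisfies
\[
\sqrt{\tfrac{\mathbb F_{mx}(1-\mathbb F_{mx})}{m}+\tfrac{\mathbb F_{my}(1-\mathbb F_{my})}{n}}\le\sqrt{\tfrac{1}{4m}+\tfrac{1}{4n}}\le C\,N^{-1/2}
\]
uniformly in $x$, where $C$ depends only on $\lambda$. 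Thus on the said event
\[
T_{m,n}^{\mathrm{min}}(\mathbb F_{mx},\mathbb F_{my})\ge\frac{\delta/2-R_N}{C N^{-1/2}}=\sqrt{N}\,\frac{\delta/2-o_p(1)}{C}\xrightarrow{p}\infty.
\]

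\textbf{Step 4: Divergence of the TSEP statistic.}
For $z\in[p,1-p]$, set $x_z=\mathbb H_N^{-1}(z)$. By Step~2, $x_z\in\mathcal N$ with probability tending to $1$, uniformly in $z$, so
\[
\mathbb F_{my}(x_z)-\mathbb F_{mx}(x_z)\ge\bigl(G(x_z)-F(x_z)\bigr)-R_N\ge\tfrac{\delta}{2}-R_N.
\]
Since $\sqrt{z(1-z)}\le 1/2$ on $[p,1-p]$,
\[
T_{m,n}^{\mathrm{tsep}}(\mathbb F_{mx},\mathbb F_{my})\ge\sqrt{\tfrac{mn}{N}}\cdot 2\bigl(\tfrac{\delta}{2}-R_N\bigr).
\]
With $m/N\to\lambda\in(0,1)$, $\sqrt{mn/N}\sim\sqrt{\lambda(1-\lambda)\,N}\to\infty$, and $R_N=o_p(1)$, so the right-hand side tends to $+\infty$ in probability.

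The argument is essentially routine; the only delicate point is step~2, which ensures that the \emph{random} infimum sets $D_{p,m,n}$ can be treated as if they lay inside the fixed neighborhood $\mathcal N$ on which the gap $G-F$ is uniformly bounded below. Once that is in place, the rate $N^{-1/2}$ of the empirical fluctuations is no match for the fixed gap $\delta$, and both divergences follow immediately.
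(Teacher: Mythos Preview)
Your proof is correct and follows essentially the same approach as the paper: both exploit compactness of $D_p$ to get a uniform positive gap $\delta$ for $G-F$, extend this to a slightly larger set containing $D_{p,m,n}$ with high probability, and then use a crude upper bound on the denominators to show divergence. Your presentation is somewhat cleaner---using a continuous neighborhood $\mathcal N$ in place of the paper's $D_{p'}$, and invoking DKW to quantify $R_N=O_p(N^{-1/2})$ rather than merely $o_p(1)$---but these are cosmetic differences, not genuinely different routes.
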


\textcolor{black}{
The asymptotic distribution of the WRS statistic is well-established in the literature. Suppose  $\Fx$ and $\Fy$ are continuous distribution functions.  In that case, it is  well known  that, when $\Fx=\Fy$, the WRS statistic is asymptotically distributed as a standard gaussian random variable, i.e. $T_{m,n}^{\text{wrs}}(\Fmx,\Fmy)\to_d N(0,1)$ \citep{dwass1956}. If $(\Fx,\Fy)\in H^a_0$ satisfies $\rint G(z)dF(z)>1/2$, however, $ T_{m,n}^{\text{wrs}}(\Fmx,\Fmy)\to_p -\infty$, whereas for $(F,G)\in \Hm_1^{a}$, we have $ T_{m,n}^{\text{wrs}}(\Fmx,\Fmy)\to_p \infty$.}

 \subsubsection{Asymptotic critical values of the shape-constrained tests} 
We will show that under some additional conditions, the difference between the nonparametric and the  shape-constrained  test statistics  is $o_p(1)$, which automatically implies that the shape-constrained tests enjoy the same asymptotic properties as the nonparametric tests.


\textcolor{black}{
For the unimodal case, the additional condition is a curvature condition, which requires $\fx$ and $\fy$ to be nowhere flat within their respective domains. 
\begin{cond}{A}\label{Cond A}
For the density $\mu$, the Lebesgue measure of the set $\{\mu'=0, \mu>0\}$ is $0$, where $\mu'$ is the derivative of $\mu$.
\end{cond}
}

  
 \textcolor{black}{
 For densities satisfying Condition \ref{Cond A}, $\sqrt{m}(\widehat F_m^0-F)$ almost surely weakly converges to $\mathbb V\circ F$, where $\mathbb V$ is a Brownian bridge. In this case, it can be shown that \citep{kiefer}
$\sqrt{m}\|\Fnx^0-\Fmx\|_\infty=o_p(1)$. Our Lemma 
\ref{lemma: weak convergence of dist of unimodal density functions} in
Appendix \ref{app: shpe constrained tests} states that  $\sqrt{m}\|\widehat F_m^0-\Fnx\|_\infty=o_p(1)$, which implies $\sqrt{m}\|\Fnx-\Fmx\|_\infty=o_p(1)$ in this case. However, under the violation of
 Condition \ref{Cond A}, the process $\sqrt{m}(\widehat F_m^0-F)$ no longer  converges to $\mathbb V\circ F$ weakly.  The densities that 
violate Condition \ref{Cond A} form the boundary of the class of unimodal densities. For these densities, $\sqrt{m}\|\widehat F_m^0-\Fmx\|_\infty$, and hence,  $\sqrt{m}\|\Fnx-\Fmx\|_\infty$ is non-negligible.  The limiting process of $\sqrt{m}(\widehat F_m^0-F)$ in this case is slightly convoluted, and we refer to \cite{beare2017, carolan1999, carolan2002} for more details on the limiting process.  Just to give an example, if $f\sim U[0,1]$, then the limiting process is the least concave majorant of a Brownian bridge \citep{carolan1999}.  Condition \ref{Cond A} is thus required to ensure that neither $\fx$ nor  $\fy$ is one of these problematic boundary densities.
}
 
 
\begin{lemma}\label{theorem: convergence: Ti: unimodal }
Suppose that $\fx$ and $\fy$ are unimodal densities satisfying Condition~\ref{Cond A}. Further suppose that $\fx$ and $\fy$ are bounded away from $0$ on an open set containing  $D_p$, and $m,n$ satisfy  $m/N\to\lambda$. Then, 
\[|T_i(\Fmx,\Fmy)-T_i(\Fnx,\Fny)|\as 0,\quad\text{ for }i=1,2,3.\]
\end{lemma}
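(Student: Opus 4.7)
The plan is to reduce the equivalence for each of the three statistics to a single uniform estimate, namely
\[
\sqrt{N}\,\|\Fnx-\Fmx\|_\infty + \sqrt{N}\,\|\Fny-\Fmy\|_\infty \;\longrightarrow\; 0,
\]
and then propagate that estimate through each $T_i$ by a statistic-specific perturbation argument. The uniform estimate itself follows by chaining two facts already recorded earlier in the paper. Kiefer's theorem, applied to the known-mode Grenander-type MLE $\widehat F_m^0$ under Condition~\ref{Cond A} on $f$, gives $\sqrt{m}\|\widehat F_m^0-\Fmx\|_\infty\to_p 0$, and Lemma~\ref{lemma: weak convergence of dist of unimodal density functions}, applied with $\eta=N^{-1}=o(m^{-1})$, gives $\sqrt{m}\|\Fnx-\widehat F_m^0\|_\infty\to 0$ a.s.\ (as a consequence of the $o(m^{-1/2})$ total-variation bound between $\fnx$ and $\fnxm$). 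The triangle inequality and $m/N\to\lambda\in(0,1)$ then deliver the displayed rate for both samples.

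For $T_1=T_{m,n}^{\text{min}}$ and $T_2=T_{m,n}^{\text{tsep}}$ I would exploit the Lipschitz-type bound $|\inf f-\inf g|\le \sup|f-g|$ to reduce the difference of the two infima to a uniform supremum over $D_{p,m,n}$ (respectively $[p,1-p]$). The assumption that $f$ and $g$ are bounded below on an open neighborhood of $D_p$ forces the denominator in \eqref{definition: T1} to be of exact order $N^{-1/2}$ uniformly, and bounds the values of $F$ and $G$ on $D_{p,m,n}$ away from $0$ and $1$. A first-order expansion of the ratio in its two distributional arguments then yields a bound of the form
\[
C\sqrt{N}\bigl(\|\Fnx-\Fmx\|_\infty+\|\Fny-\Fmy\|_\infty\bigr),
\]
which vanishes by Step~1. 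For $T_2$ the argument is identical after the change of variable $z\mapsto \Hm^{-1}(z)$; the factor $\sqrt{z(1-z)}$ is bounded on $[p,1-p]$ and does not involve the estimator, so only the numerator has to be controlled.

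For $T_3=T_{m,n}^{\text{wrs}}$ I would use the decomposition
\[
\int \Fny\,d\Fnx - \int \Fmy\,d\Fmx = \int(\Fny-\Fmy)\,d\Fnx + \int \Fmy\,d(\Fnx-\Fmx),
\]
bound the first term trivially by $\|\Fny-\Fmy\|_\infty$, and treat the second by integration by parts, noting that $\Fmy$ has total variation one and that the boundary terms are controlled by $\|\Fnx-\Fmx\|_\infty$. Multiplication by the normalizing factor $\sqrt{12mn/(N+1)}=O(\sqrt{N})$ and an appeal to Step~1 gives the required $o_p(1)$ bound.

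The genuinely delicate ingredient is Step~1: the proof really needs Condition~\ref{Cond A} because, as explained in the preamble to the lemma, without it $\sqrt{m}(\widehat F_m^0-F)$ ceases to track the empirical process $\sqrt{m}(\Fmx-F)$ and converges instead to the least-concave-majorant-type limit of \cite{beare2017, carolan1999}; Steps~2--4 are otherwise routine perturbation calculations once the uniform rate has been secured.
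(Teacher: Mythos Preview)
Your proposal is correct and, for $T_1$ and $T_2$, coincides with the paper's argument: both bound $|\inf_x \omega - \inf_x \tilde\omega|$ by a supremum, expand the ratio into a numerator perturbation plus a denominator perturbation, control the latter on $D_{p,m,n}$ via the assumption that $f,g$ are bounded below there, and close by invoking Part~(B) of Lemma~\ref{lemma: weak convergence of dist of unimodal density functions}, which packages exactly the Kiefer/Birg\'e chain you describe in Step~1.

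For $T_3$ your route is genuinely different and more elementary than the paper's. The paper proceeds via Hadamard differentiability of the bilinear functional $\xi(F,G)=\int F\,dG$ and a functional delta-method argument: it shows separately that $\sqrt{mn/(N+1)}\bigl(\xi(\Fnx,\Fny)-\xi(F,G)\bigr)$ and $\sqrt{mn/(N+1)}\bigl(\xi(\Fmx,\Fmy)-\xi(F,G)\bigr)$ both converge a.s.\ to the same random variable $\mathbb{Y}=\dot\xi(F,G;\sqrt{1-\lambda}\,\mathbb{V}_1\circ F,\sqrt{\lambda}\,\mathbb{V}_2\circ G)$, then takes the difference. Your direct decomposition plus integration by parts (legitimate here since $\Fnx$ is piecewise linear and hence continuous, and the jumps of $\Fmx$ and $\Fmy$ are a.s.\ disjoint) yields the sharper pointwise bound $|\xi(\Fnx,\Fny)-\xi(\Fmx,\Fmy)|\le \|\Fny-\Fmy\|_\infty+\|\Fnx-\Fmx\|_\infty$ without passing through weak limits at all. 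This is shorter and avoids the machinery of Hadamard derivatives; the paper's approach, on the other hand, makes the common limiting law explicit, which is not needed for the present lemma but connects more directly to the asymptotic distribution of the WRS statistic itself.
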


\textcolor{black}{
In case  of the log-concave test statistics, however, we require a smoothness condition as well as a curvature condition. We will quantify smoothness via a H\"older condition.  For a compact set $K\subset\RR$, a function $h$ is said to be in the H\"older class $\mathcal{H}^{\beta,L}(K)$
 with exponent $\beta\in[1,2]$ and constant $L>0$ if, for  all $x,y\in K$, $|h(x)-h(y)|\leq L|x-y|$ if $\beta=1$ and
 $ |h'(x)-h'(y)|\leq  L|x-y|^{\beta-1}\quad\text{if }\beta>1.$}\\
  We say that a density $\mu$ ($\mu=\fx$ or $\fy$) satisfies Condition~B1 if the following holds.
 \begin{cond}{B1}\label{Cond: B1}
There exists $\beta\in[1,2]$, $L>0$, and a compact $K\subset\RR$ such that the density $\mu$ satisfies $\log \mu\in\mathcal{H}^{\beta,L}(K)$.
\end{cond}

In addition to Condition~\ref{Cond: B1}, we also require $\fx$ and $\fy$ to satisfy a curvature condition. 



\begin{cond}{B2}\label{Cond: B2}
 $\mu$ is a log-concave density with log-density $\phi=\log\mu$. Suppose $K\subset\dom(\phi)$ is compact. Then there exists $C>0$ such that  all $x,y\in K$ satisfying $x<y$ obeys
 
 {\centering
  $ \displaystyle
    \begin{aligned}
    \phi'(x)-\phi'(y)\geq C(y-x),
    \end{aligned}
  $ 
\par}
 where $\phi'$ is the left derivative or the right derivative of $\phi$.
\end{cond}
\textcolor{black}{
Note that, since $\phi$ is concave, its left and right derivatives always exist.  If $\phi'$ is differentiable on $K$, Condition~\ref{Cond: B2} reads as $\phi''(x)\leq -C$ for $x\in K$. 
 \textcolor{black}{Conditions of  type \ref{Cond: B1} and \ref{Cond: B2} also appear in  \citeauthor{2009rufi}.}
 }

\begin{lemma}\label{theorem: convergence: Ti: log-concave }
Suppose that $\fx$ and $\fy$ are log-concave densities satisfying Conditions \ref{Cond: B1} and \ref{Cond: B2}. Suppose, further, $\fx$ and $\fy$ are bounded away from $0$ on an open set containing $D_p$, and  $m/N\to\lambda$.
 Then it follows that
$|T_i(\Fmx,\Fmy)-T_i(\Flx,\Fly)|=o_p(1)$ for $i=1,2$.
\end{lemma}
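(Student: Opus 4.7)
The plan is to mirror the proof of Lemma \ref{theorem: convergence: Ti: unimodal }: first establish a log-concave analogue of the Kiefer--Wolfowitz-type estimate used there, then propagate it through the definitions of $T_1$ and $T_2$ to conclude.

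The key ingredient I would prove first is the uniform estimate that, under Conditions \ref{Cond: B1} and \ref{Cond: B2}, for any compact $K\subset \iint(\supp(\fx)\cup\supp(\fy))$,
\begin{equation*}
\sqrt{m}\,\sup_{t\in K}|\Flx(t)-\Fmx(t)|=o_p(1)\quad\text{and}\quad \sqrt{n}\,\sup_{t\in K}|\Fly(t)-\Fmy(t)|=o_p(1).
\end{equation*}
Given this, the conclusion for $i=1$ follows routinely. Apply $|\inf A-\inf B|\le \sup|A-B|$ on $D_{p,m,n}$; the pointwise difference of the two ratios splits into a numerator part dominated by $\sqrt{N}\bigl(\|\Flx-\Fmx\|_{\infty,K}+\|\Fly-\Fmy\|_{\infty,K}\bigr)=o_p(1)$ (by the key estimate, since $m/N\to\lambda$), and a denominator part that is $o_p(1)$ because both variance factors are bounded below by a multiple of $1/\sqrt{N}$ on $D_p$ (where $F$ and $G$ are bounded away from $0$ and $1$) and their ratio tends uniformly to $1$ by Glivenko--Cantelli applied to $\Flx$ and $\Fly$. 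For $i=2$, the inner sup over $z\in[p,1-p]$ translates to a sup over $\mathbb H_N^{-1}([p,1-p])=D_{p,m,n}$; since the weight $1/\sqrt{z(1-z)}$ is bounded by $1/\sqrt{p(1-p)}$, the same argument applies. Condition \ref{assump: Nonparametric} ensures that $D_{p,m,n}$ is, almost surely for all large $m,n$, contained in a fixed compact $K\subset \iint(\supp(\fx)\cup\supp(\fy))$, which is why the bound on $K$ transfers to a uniform bound on $D_{p,m,n}$.

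The main obstacle is proving the key estimate. The plan is to exploit the piecewise log-linear structure of $\flx$: let $\tau_1<\cdots<\tau_k$ denote its knots in a neighborhood of $K$ and use the standard KKT identity $\Flx(\tau_j)=\Fmx(\tau_j)$ at every knot. For $x$ lying between consecutive knots $\tau_j$ and $\tau_{j+1}$,
\begin{equation*}
\Flx(x)-\Fmx(x)=\int_{\tau_j}^{x}\bigl[\flx(t)-\fx(t)\bigr]\,dt+\bigl[(F-\Fmx)(x)-(F-\Fmx)(\tau_j)\bigr].
\end{equation*}
Conditions \ref{Cond: B1} and \ref{Cond: B2} supply, via the rate theory of \citeauthor{2009rufi}, a uniform density rate $\|\flx-\fx\|_{\infty,K}=O_p\bigl(m^{-\beta/(2\beta+1)}(\log m)^{c}\bigr)$ together with a knot-spacing bound $\max_j(\tau_{j+1}-\tau_j)=O_p\bigl(m^{-1/(2\beta+1)}(\log m)^{c'}\bigr)$ on $K$: the smoothness Condition \ref{Cond: B1} controls the density rate, while the curvature Condition \ref{Cond: B2} prevents the knots from being too sparse. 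Combined with the standard empirical-process modulus of continuity $\sup_{|s-t|\le\delta_m}|(F-\Fmx)(s)-(F-\Fmx)(t)|=O_p\bigl(m^{-1/2}\sqrt{\delta_m\log m}\bigr)$ applied with $\delta_m$ equal to the knot-spacing bound, both terms on the right-hand side become $o_p(m^{-1/2})$ uniformly on $K$. The identical argument applied to $\fly$, $\Fly$, and $\Fmy$ yields the second half of the key estimate, completing the proof.
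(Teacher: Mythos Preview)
Your proposal is correct and follows essentially the same route as the paper. The paper's proof simply invokes Theorem~4.4 of \citeauthor{2009rufi} (which, under Conditions~\ref{Cond: B1} and~\ref{Cond: B2}, gives $\sup_{z\in D_{p'}}|\Flx(z)-\Fmx(z)|=o_p(m^{-1/2})$ for a suitable $D_{p'}\supset D_p$) and then mirrors the propagation argument of Lemma~\ref{theorem: convergence: Ti: unimodal }; your sketch of the knot/KKT/modulus-of-continuity argument is precisely how that theorem is proved, so you are re-deriving the cited result rather than taking a different path. One small slip: for both estimates to hold on the same compact $K$ you need $K\subset\iint(\supp(\fx))\cap\iint(\supp(\fy))$, not the union, which is guaranteed here since $\fx$ and $\fy$ are both bounded away from $0$ on an open set containing $D_p$.
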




 
 \begin{remark}
  All results of Section~\ref{section: asymptotic} hold if we replace $D_p$ by a compact set $D$ as long as there is a  $p>0$ so that $D\subset D_p$ and $D_p$ satisfies the conditions of the theorems stated in Section~\ref{section: asymptotic}.
 \end{remark}

\subsection{Simulations} \label{sec:simulation:1}

This section compares the performance of the shape-constrained tests designed in Section~\ref{sec: null of non dominance} with their nonparametric counterparts.  We let $m=n=100$, which is reflective of the sample sizes anticipated in many phase 1b or phase 2 vaccine trials --- for example, our motivating dataset has $m+n =248$. For all TSEP tests in this section, we use the critical value $z_{\alpha}$. See Appendix \ref{app: conservative test simulations}
 for the simulations with TSEP tests that  have  critical value $C_{m,n}z_{\alpha}$. For the TSEP and the minimum t-test, we also include the tests based on the  smoothed log-concave MLE. We will refer to the corresponding test as the smoothed log-concave test. Although we do not have any theoretical result for this test, we   use the asymptotic critical value $z_{\alpha}$. 


By the design of our hypotheses,  $\Hm_0$ encompasses a broad number of cases ranging from $G\succ F$, $F=G$ to cases where $\Fx$ and $\Fy$ touch or  cross each other. We develop  simulation schemes so that we can explore a wide range of scenarios. 
Our simulation schemes involves a parameter $\gamma$ varying over the range $[0,1]$. Here $\gamma$ quantifies the difference between the data generating distribution functions $\Fx_{\gamma}$  and $\Fy_{\gamma}$. We evaluate the power $\nu(\gamma)$ at a grid of equally spaced points in $[0,1]$. 


For our simulation study, we consider the following cases:
\begin{compactitem}
\item[(a)] $F_{\gamma}\sim N(\gamma,1)$, and $G_{\gamma}\sim N(0,1)$.
\item[(b)] $F_{\gamma}\sim N(3\gamma,1)$, and $G_{\gamma}\sim N(0.5,2)$.
\item[(c)]$F_{\gamma}\sim Gamma(2,0.1+0.4\gamma)$ and $G_{\gamma}\sim Gamma(1,0.5)$,  \textcolor{black}{where $Gamma(a,b)$ is a Gamma random variable with shape parameter $a$ and scale parameter $b$.}
\item[(d)] $F_{\gamma}\sim Gamma(2,1)$  and $G_{\gamma}\sim Pareto(0.5+ 2\gamma,1)$. Here $Pareto(a,b)$ is the Pareto distribution function with shape parameter $a$ and scale parameter $b$. 
\item[(e)] $F_{\gamma}\sim N(0,1)$ and $G_{\gamma}\sim N(2\gamma+ 4,1)/2+N(2\gamma-2,1)/2$.
\end{compactitem}

\begin{figure}[h]
\includegraphics[height=1.5 in, width=\textwidth]{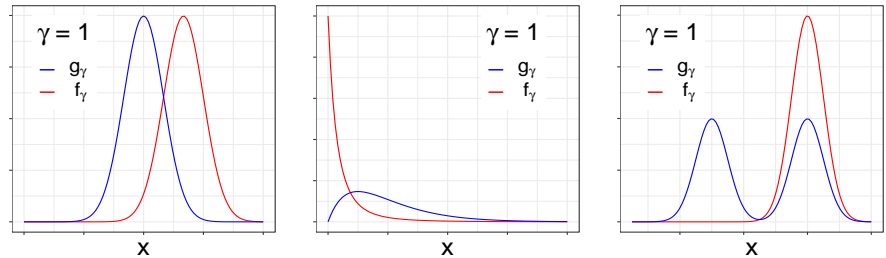}
\caption{Plots of the densities $f_{\gamma}$ and $g_{\gamma}$ corresponding to  the cases (a), (d), and (e) for $\gamma=1$. 
}\label{Figure: 3 density: S2}
\end{figure}
\textcolor{black}{
Case (a) corresponds to  the traditional setting of null of equity against a shift alternative.    The last four cases cover $\Hm_0$ combinations that include crossing. In those cases, $\Fx_{\gamma}$ and $\Fy_{\gamma}$ cross each other  on $\iint(D_p)$ at $\gamma=0$.  As $\gamma$ increases, however, $\Fx_{\gamma}$ and $\Fy_{\gamma}$ eventually touch (cases c, d, and e), or cross (case b) each other at $\text{bd}(D_p)$,  generating a LFC configuration. We denote the corresponding $\gamma$  by $\gamma^*$. Finally, at $\gamma=1$,  $\Fx_{\gamma}$  strictly dominates $\Fy_{\gamma}$ in the sense of $\Hm_1$. Figures~\ref{Figure: cross dist: a}, \ref{Figure: cross dist: d}, \ref{Figure: cross dist: e}, and \ref{Figure: cross dist: b}  in Appendix~\ref{app: additional tables and figures}  display the plots of $\Fx_{\gamma}$ and $\Fy_{\gamma}$ for several values of $\gamma$ in cases (b), (c), (d) and (e), respectively.  }
\textcolor{black}{
 Figure~\ref{Figure: 3 density: S2}  illustrates the densities in cases (a), (d), and (e) for $\gamma=1$.
 Cases (d) and (e) are chosen to reflect violations of the shape constraints that we consider. The Pareto density in (d) violates the log-concavity assumption and the normal mixture in (e) violates  both the unimodality  and the log-concavity assumptions. All other densities satisfy both shape constraints. }
 
%



%
We evaluate the properties of the tests under consideration using 10,000 Monte Carlo replicates. 
\textcolor{black}{We set the $p$ in $D_{p,m,n}$ to be $0.05$, where $D_{p,m,n}$ was defined to be the set $[\mathbb H_N^{-1}(p),\mathbb H_N^{-1}(1-p)]$.} Also, the level of significance is $0.05$ for all our tests. 
The tests based on the empirical cumulative distribution function will be referred to as NP (nonparametric) tests. For brevity, we will refer to the nonparametric, unimodal, log-concave and smoothed log-concave tests by NP, UM,  LC, and smoothed LC tests, respectively. 


{\color{black}
Figure~\ref{fig: power} displays the power curves for the minimum t-test and the TSEP test. 
In terms of power, the LC and smootheed LC tests generally outperform the UM tests, which generally outperform the NP tests. In their simulation study, \cite{davidson2013} reported  the NP minimum t-test to be conservative, which aligns with our observation. Also, the shape-constrained minimum t-tests have slightly higher power than the shape-constrained TSEP tests, although the difference is not always  significant. For the NP tests, the rejection rates of the minimum t-tests and the TSEP tests are almost identical. 
 
 Except for in case (d), where $G_{\gamma}$ is Pareto, all tests control the type I error in all null set-ups, including the LFC configuration where $\gamma=\gamma^*$. Since LFC configurations constitute the boundary of $\Hm_0$, this observation indicates that our tests have size $0.05$ for all cases except case (d). Although Pareto density violates only the log-concavity assumption, apparently no test has the correct size, albeit the NP and UM tests performing the best in terms of size. \textcolor{black}{Also, for case (d), the smoothed LC test  has lower type I error than LC test.} Surprisingly, in case (e), where both log-concavity and unimodality are violated, the shape-constrained tests have overall better performance than the NP tests although all tests exhibit poor power in this case.

 All the WRS type tests, including the LC WRS test (whose asymptotic behavior is yet unknown),  exhibit a much larger size than 0.05 in all cases except the null of equity type case (a), which is the ideal scenario for WRS type tests. Figure~\ref{Figure: WRS}  illustrates the power curve $\nu(\gamma)$ in cases (b) and (c), where all three WRS tests exhibit very high type I error at several null configurations. Our findings are consistent with the observation in \cite{LW2012} that the WRS test can be  misleading for testing the null of non-dominance.

 \textcolor{black}{ 
 In summary,   under a correctly specified model, the shape-constrained minimum t-test and TSEP test outperform their nonparametric counterparts, with the log-concave minimum t-test having the best power. When the shape constraints are violated, the nonparametric tests are not distinguishably better than the unimodal tests. On the other hand, \textcolor{black}{The WRS tests do not control type I error for  null configurations with crossing, as expected. Another important takeaway from this section is that  with  the critical value $z_{\alpha}$, the smoothed log-concave tests seem to perform as well as the log-concave tests  under log-concavity. }
 }

  \textcolor{black}{
  The current article uses asymptotic critical values for performing the above-mentioned tests, but bootstrap critical values could also be an option. Bootstrap  for the shape-constrained tests, however, is not straightforward. Generation of observations from shape-constrained LFC configuration poses some challenges, which involves solving non-trivial optimization problems. Further discussion in this direction is out of the scope of the present paper. Therefore, we leave  bootstrap  tests for future research.}
  
  }

%
%
%

\subsection{Application to HVTN 097 and HVTN 100 data}
\label{sec: application to our data}
 


{\color{black}
Our first task is to select a $p$ for choosing $D_{p,m,n}$. Figure~\ref{fig: preliminary plots}  and some inspection show that  the empirical distribution functions are very close on the sets $(-\infty,\Hm^{-1}(0.072)]$ and  $[\Hm^{-1}(0.975),\infty)$ in that they either cross or touch each other on these regions.  Therefore, this is the problematic region we wish to exclude  from our $D_{p,m,n}$, because clearly there is not enough evidence of any dominance in this region.
    Therefore, we set the $p$ in $D_{p,m,n}$ to be $0.075$. \textcolor{black}{We remark that it is ideal to choose $p$ in a systematic way without looking at the data. However,  constructing a rigorous procedure for choosing $p$ is out of the scope of the present paper, and we leave it for future research.} Table~\ref{table: p-values}, which tabulates the p-values of the tests, displays  that all  tests  reject the null at the level of significance $0.05$. The highest p-value is observed for the  NP TSEP test, which is approximately $0.042$.}

  \begin{table*}[h]
   \centering
  \begin{tabular}{lcccc}
\toprule
Tests & Nonparametric & Unimodal & Log-concave & Smoothed log-concave\\ 
\midrule
Minimum t-test & 0.015 & 0.007 & 0.007 & 0.001 \\
TSEP & 0.042 & 0.037 & 0.035 & 0.030\\
\bottomrule
\end{tabular} 
\caption{Table of the p-values of different tests applied on our data. Here   the $p$ in $D_{p,m,n}$ is set to be $0.075$. The critical value was $z_\alpha$ for all the tests. }
\label{table: p-values}
  \end{table*}   

   It is natural to ask if we can estimate the power of our tests
   at $(\Fn,\Fh)$.
   Because  $(\Fn,\Fh)$ is unavailable, we analyze the power in a neighborhood of $(\hFns,\hFhs)$ instead, where  $\hFns$ and $\hFhs$ correspond to the distributions of the smoothed log-concave MLE \citep{smoothed} estimators of $f_{100}$ and $f_{097}$, respectively. 

Let us denote the smoothed log-concave MLE of the pooled sample by $\tilde{f}_{m,n}^{0}$. 
    Letting  $\tilde{F}_{m,n}^0$ denote the corresponding distribution function, we consider the mixture distributions
\begin{equation}\label{def: power curve}
\hFns(\gamma)=(1-\gamma) \tilde{F}_{m,n}^0+\gamma\hFns,\quad\text{ and }\quad\hFhs(\gamma)=(1-\gamma) \tilde{F}_{m,n}^0+\gamma\hFhs,
\end{equation}
where $\gamma\in[0,1]$. Note that, similar to  case (a) in Section \ref{sec:simulation:1}, here also $\gamma$ quantifies the departure of the configuration $(\hFns(\gamma),\hFhs(\gamma))$ from the null of equality of distributions. Also, the distance between $\hFns(\gamma)$ and $\hFhs(\gamma)$ increases as $\gamma$ approaches $1$.   
We denote the  densities of $\hFns(\gamma)$ and $\hFhs(\gamma)$ by $\hfns(\gamma)$ and $\hfhs(\gamma)$, respectively.
%

   Now observe that when $\gamma\in(0,1)$,  the mixture densities may not be log-concave or even unimodal. Hence, we compute the log-concave projections \citep{dumbreg}  of $\hfns(\gamma)$ and $\hfhs(\gamma)$, respectively. Log-concave projection of a density $f$ is the log-concave density   closest to  $f$ in Kullback-Leibler (KL) distance. Since the log-concave projection of any arbritrary density is not directly computable, we adopt a two step approach to approximate the log-concave projections.  
  In the first step, we simulate $1000$ observations from each of $\hfns(\gamma)$ and $\hfhs(\gamma)$. In the second step, we 
  calculate the smoothed log-concave MLE density estimators of \citeauthor{smoothed} based on the simulated samples in the last step.  The resulting densities are  the approximate smoothed log-concave projections of $\hfns(\gamma)$ and $\hfhs(\gamma)$. 
  Finally, we generate two samples of size $68$ and $180$ from the projected densities using the methods in  \cite{lcsoft} and the R package \textit{logcondens}, and replicate this process 10,000 times. 


   
 Figure~\ref{fig:power: ourdata} entails that all the tests exhibit decent power for higher values of $\gamma$. The LC minimum t-test  exhibits the highest power, which is unsurprising since the underlying data is generated from log-concave densities. 
 Also, since $\hFns(\gamma)=\hFhs(\gamma)$ at $\gamma=0$, the power curves resemble that of case (a) in our simulation schemes. 
\section{Measures of discrepancy}
 \label{sec: measure of discrep}
 

This section describes an approach for quantifying the difference between $\fx$ and $\fy$ via estimates of the squared Hellinger distance $\hd^2(\fx,\fy)$, which provides complementary insights to the tests of stochastic dominance presented in Section \ref{sec: tests}. 
 \textcolor{black}{The Hellinger distance is an example of $f$-divergence. These divergences are widely used to measure the similarity or dissimilarity between two probability measures. Compared to other commonly used $f$-divergences such as the KL divergence or the $\chi^2$ divergence \citep{chi2}, the Hellinger distance is appealing due to its symmetry in its arguments, which is a desirable property for a measure of discrepancy. Another crucial advantage of the Hellinger distance  is that its value is finite for every pair of densities \citep{gibbs2002}. In contrast, the KL and $\chi^2$ divergences can both be infinite if the densities under consideration do not share the same support. It is not always reasonable to assume that the underlying densities of responses collected from different vaccine trials will have the same support. Therefore,  the Hellinger distance appeals to us more than  the KL divergence or the $\chi^2$ divergence. The Hellinger distance has also seen successful application in various disciplines ranging from machine learning \citep{cieslak2009, gonzalez, gonzalez2010}, to ecology  \citep{rao1995}, to fraud detection \citep{yamanishi2004}.  Finally, we choose to work with the squared Hellinger distance instead of the Hellinger distance because due to its simpler form, the squared version easily lends itself to efficient estimation procedures. Regardless, a 95\% confidence interval for the Hellinger distance can always be constructed using that of  its squared version.
 }

{\color{black}
  We estimate the squared Hellinger distance between $\fx$ and $\fy$ using the same density estimators  involved in the construction of tests of stochastic dominance, that is, the log-concave MLE \citep{2009rufi} and its smooth version \citep{smoothed}, or  \birge's estimator. Recalling the definition of the density estimators  $\fnx$, $\fny$, $\flx$, $\fly$, $\flxs$, and $\flys$ from Section \ref{sec: null of non dominance}, we propose the plug-in estimators   $\hd^2(\fnx,\fny)$, $\hd^2(\flx,\fly)$, and $\hd^2(\flxs,\flys)$ for the purpose of estimating $\hd^2(\fx,\fy)$.  We refer to the resulting estimators as the ``unimodal", ``log-concave", and the ``smoothed log-concave"   estimator, respectively. }


\subsection{Asymptotic properties of the unimodal estimator:}
\label{subsec: measure of discrep: unimodal }
 We will show that, under some regularity conditions,  $\hd^2(\fnx,\fny)$ is a $\sqrt{N}$-consistent estimator of $\hd^2(\fx,\fy)$ with asymptotic variance
 \begin{equation}\label{def: sigma f g}
    \sigma_{f,g}^2=\frac{2\hd^2(f,g)-\hd^4(f,g)}{4\lambda(1-\lambda)}.
\end{equation}
Letting $b,B>0$, we denote by $\mP(b,B)$ the class of densities that are bounded below and above by $b$ and $B$ on their support, that is,
  \begin{equation}\label{def: bounded density class}
  \mP(b,B)=\bigg\{f\in\mP\ :\ b\leq f(x)\leq B,\ \text{ for }x\in\supp(f)\bigg\}.
  \end{equation}
  \textcolor{black}{
We will assume that $\fx,\fy\in\mP(b,B)$. Simulations suggest that this  condition may not be necessary. However, this  condition is required for  technical reasons in our proof. Such technical condition is quite common in the literature, and has appeared in the analysis of plug-in  estimators \citep{robin2015} and functionals of Grenander estimators \citep{ ramu2018}.} 

 \begin{theorem}\label{thm: Hellinger distance}
 Suppose  $\fx$ and $\fy$ are unimodal densities in $\mP(b,B)$, where  $b$, $B>0$. Further suppose that $\fx$ and $\fy$ satisfy condition~\ref{Cond A} and  $m/N\to\lambda$. 
 Then,
 \begin{equation}\label{intheorem: statement: hellinger}
     \sqrt{N}[\hd^2(\fnx,\fny)-\hd^2(f,g)]\to_d N(0,\s^2_{f,g}),
 \end{equation}
 where $\s^2_{f,g}$ is as defined in \eqref{def: sigma f g}.
 \end{theorem}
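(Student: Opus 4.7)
The plan is to treat $\hd^2(\fnx,\fny)$ as a plug-in estimator of the smooth functional $\psi(\fx,\fy)=1-\rint\sqrt{\fx\fy}\,dx$ and derive its $\sqrt N$-asymptotic normality via a von Mises-type expansion. Writing $a=\fnx-\fx$ and $b=\fny-\fy$, a second-order Taylor expansion of $(u,v)\mapsto\sqrt{uv}$ about $(\fx,\fy)$ yields
\[\hd^2(\fx,\fy)-\hd^2(\fnx,\fny)=\tfrac12\rint\sqrt{\fy/\fx}\,a\,dx+\tfrac12\rint\sqrt{\fx/\fy}\,b\,dx+R_{m,n},\]
where $R_{m,n}$ collects quadratic and higher-order terms. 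Since $\fx,\fy\in\mP(b,B)$ bounds $1/\sqrt{\fx\fy}$, $\sqrt{\fx/\fy}$, and $\sqrt{\fy/\fx}$ uniformly on the relevant supports, each summand of $R_{m,n}$ is controlled by a constant multiple of $\|a\|_2^2+\|b\|_2^2+\|a\|_2\|b\|_2$.

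To control $R_{m,n}$, I would use that on $\mP(b,B)$ the Hellinger and $L_2$ distances are equivalent, so Birg\'e's unimodal Hellinger rate $\hd(\fnx,\fx)=O_p((\log m/m)^{1/3})$ yields $\|a\|_2^2=O_p((\log m/m)^{2/3})=o_p(N^{-1/2})$ (and analogously for $b$). Hence $R_{m,n}=o_p(N^{-1/2})$ under $m/N\to\lambda\in(0,1)$.

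Next, I would linearize the first-order terms into empirical-process form. Under Condition~\ref{Cond A}, $h_{\fx}:=\sqrt{\fy/\fx}$ is of bounded variation on $\supp(\fx)$ (which is a bounded interval since $\fx\geq b$ on its support forces finite Lebesgue measure). Integration by parts then gives
\[\rint h_{\fx}\,a\,dx=\int h_{\fx}\,d(\Fnx-\Fmx)+\int h_{\fx}\,d(\Fmx-\Fx),\]
and by construction of Birg\'e's estimator with $\eta=N^{-1}$, $\|\Fnx-\Fmx\|_\infty\leq \eta=o(N^{-1/2})$, so the first piece is $o(N^{-1/2})\cdot\mathrm{TV}(h_{\fx})=o(N^{-1/2})$. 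The second piece equals $m^{-1}\si h_{\fx}(X_i)-E_{\fx}h_{\fx}(X)$, and an analogous expansion holds in the $Y$-sample.

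Finally, applying the independent two-sample CLT with $m/N\to\lambda$, and using $E_{\fx}\sqrt{\fy(X)/\fx(X)}=\rint\sqrt{\fx\fy}=1-\hd^2$ together with $\mathrm{Var}_{\fx}\sqrt{\fy(X)/\fx(X)}=1-(1-\hd^2)^2=2\hd^2-\hd^4$, the asymptotic variance becomes
\[\frac{1}{4\lambda}(2\hd^2-\hd^4)+\frac{1}{4(1-\lambda)}(2\hd^2-\hd^4)=\frac{2\hd^2-\hd^4}{4\lambda(1-\lambda)}=\sigma^2_{f,g},\]
giving \eqref{intheorem: statement: hellinger}. I expect the main obstacle to be the rigorous handling of the integration-by-parts boundary terms, since $\supp(\fnx)$ is random and may not coincide with $\supp(\fx)$; I would address this by restricting integration to a fixed compact interval containing both supports eventually almost surely and controlling the values of $h_{\fx}$ at the endpoints using that $\Fnx$ and $\Fx$ agree at those endpoints up to $o(N^{-1/2})$ via Lemma~\ref{lemma: weak convergence of dist of unimodal density functions}.
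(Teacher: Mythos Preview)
Your strategy matches the paper's: a von Mises expansion of $\hd^2$, Hellinger/$L_2$ rate control of the quadratic remainder, integration by parts to turn $\int h_{\fx}\,(\fnx-\fx)\,dx$ into a functional of $\Fnx-\Fx$, and a two-sample CLT yielding the variance in \eqref{def: sigma f g}. Two steps, however, are not correct as written.

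The assertion $\|\Fnx-\Fmx\|_\infty\le\eta$ ``by construction of Birg\'e's estimator'' is false. Birg\'e's Theorem~1 controls the $L_1$ distance between $\fnx$ and the Grenander estimator $\fnxm$ at the \emph{true} mode, namely $\tfrac12\|\fnx-\fnxm\|_1\le\eta+\|\Fnxm-\Fmx\|_\infty$; it says nothing directly about $\|\Fnx-\Fmx\|_\infty$. The bound you actually need, $\sqrt m\,\|\Fnx-\Fmx\|_\infty\to0$, is exactly Lemma~\ref{lemma: weak convergence of dist of unimodal density functions}(B), and it goes through Kiefer--Wolfowitz for $\Fnxm$, which is where Condition~\ref{Cond A} is used. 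So the conclusion survives, but only via the Grenander intermediate you are trying to skip.

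Your remainder bound $R_{m,n}\lesssim\|a\|_2^2+\|b\|_2^2+\|a\|_2\|b\|_2$ requires the second derivatives of $(u,v)\mapsto\sqrt{uv}$ to be bounded at intermediate points between $(\fx,\fy)$ and $(\fnx,\fny)$, hence $\fnx,\fny$ must themselves lie in some $\mP(b',B')$ on their supports. Membership of $\fx,\fy$ in $\mP(b,B)$ does not give this. The paper secures it for $\fnxm,\fnym$ by showing $\fnxm,\fnym\in\mP(b/2,B_\e)$ with probability at least $1-\e$ (the upper bound is nontrivial because of the spike at the mode); for $\fnx,\fny$ the argument again routes through $\fnxm,\fnym$.

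The paper makes this two-step structure explicit: first prove the result for $(\fnxm,\fnym)$ via a general plug-in theorem (Theorem~\ref{theorem: divergence: general} and Corollary~\ref{thm: Hellinger: grenander}), then show $|\hd^2(\fnx,\fny)-\hd^2(\fnxm,\fnym)|=o_p(N^{-1/2})$ separately (Lemma~\ref{lemma: equivalence of Hellinger}). Your direct attack is the same idea compressed, but both gaps above are filled precisely by passing through $\fnxm$.
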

Note that, since $\sigma_{f,g}^2$ can be consistently estimated plugging in the estimator $\hd^2(\fnx,\fny)$, a Wald type confidence interval is readily available for $\hd^2(f,g)$.
Also, the asymptotic variance $\s^2_{f,g}$ equals the lower bound of the asymptotic variance on  a regular estimator of the squared Hellinger distance under the nonparametric model \citep{robin2015}. See \cite{vdv, birge1995} for more detail on the lower bound and related theory.

\textcolor{black}{
We now give a high-level explanation of the idea behind   Theorem~\ref{thm: Hellinger distance}. It can be shown that the squared Hellinger distance is a smooth  functional of the underlying distribution functions in some suitable sense. Specifically,  we will show that it allows a first order Von Mises expansion  \citep{vonmises1}, which has the same essence as the Taylor series expansion \citep{robin2015}. On the other hand,  for $\Fnx^0$,  the unimodal MLE  of  $\Fx$  based on the true mode, we show that $\sqrt{m}(\Fnx^0-\Fx)$ converges weakly to a Brownian process almost surely under Condition~\ref{Cond A}.   It can then be shown via a delta-method type argument, applied on the squared Hellinger distance functional, that \eqref{intheorem: statement: hellinger} holds for the unimodal MLEs of $\Fx$ and $\Fy$. The final step in proving Theorem~\ref{thm: Hellinger distance} is  showing that the squared Hellinger distance between  \birge's estimator and  the MLE is small.}
 

  
  \subsection{Asymptotic properties of the log-concave estimators:}
     Recall that our log-concave estimators of the squared Hellinger distance are given by $\hd^2(\flx,\fly)$ and $\hd^2(\flxs,\flys)$, where we remind the reader that $\flx$, $\fly$ are the log-concave MLEs and $\flxs$, $\flys$ are the smoothed log-concave MLEs. Lemma~\ref{lemma: measure of discrep: log-concave} implies that
      these estimators are strongly consistent for $\hd^2(\fx,\fy)$ provided the shape constraint holds.
  \begin{lemma}\label{lemma: measure of discrep: log-concave}
Suppose that the densities $\fx$ and $\fy$ are log-concave and continuous.  Then, as $m,n\to\infty$, $\hd^2(\flx,\fly)\as \hd^2(\fx,\fy)$ and $\hd^2(\flxs,\flys)\as \hd^2(\fx,\fy)$.
  \end{lemma}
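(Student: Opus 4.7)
The plan is to reduce the claim to two ingredients: (i) the Hellinger distance is a metric on densities, and (ii) both the log-concave MLE and its smoothed version are strongly Hellinger-consistent for the true density whenever the true density is log-concave and continuous. Ingredient (ii) is a known consequence of the work of \citeauthor{2009rufi}: for i.i.d.\ draws from a continuous log-concave $f$, one has $\hd(\flx,\fx)\to 0$ almost surely (and similarly $\hd(\flxs,\fx)\to 0$ a.s.\ for the smoothed log-concave MLE). Since the $X$-sample and the $Y$-sample are independent and each density satisfies the hypothesis, we obtain simultaneously $\hd(\flx,\fx)\to 0$ and $\hd(\fly,\fy)\to 0$ a.s., and the analogous statement with the smoothed versions.

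Given these facts, the triangle inequality for the Hellinger metric gives
\begin{equation*}
\bigl|\hd(\flx,\fly)-\hd(\fx,\fy)\bigr|\;\le\;\hd(\flx,\fx)+\hd(\fly,\fy)\;\longrightarrow\;0\quad\text{a.s.}
\end{equation*}
To pass from $\hd$ to $\hd^2$, I would use the identity
\begin{equation*}
\bigl|\hd^2(\flx,\fly)-\hd^2(\fx,\fy)\bigr|\;=\;\bigl|\hd(\flx,\fly)-\hd(\fx,\fy)\bigr|\cdot\bigl(\hd(\flx,\fly)+\hd(\fx,\fy)\bigr),
\end{equation*}
and observe that the squared Hellinger distance between any two densities is bounded by $1$, so the second factor is bounded by $2$. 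Hence $\hd^2(\flx,\fly)\to\hd^2(\fx,\fy)$ almost surely. The identical argument, applied with $(\flxs,\flys)$ in place of $(\flx,\fly)$, gives the smoothed claim.

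The main (essentially the only) nontrivial step is verifying ingredient (ii). For the log-concave MLE $\flx$, strong Hellinger consistency is standard in the log-concave literature and follows from the fact that $\flx$ converges to $\fx$ in total-variation/Hellinger distance almost surely on the support of $\fx$ under continuity of $\fx$; I would cite \cite{2009rufi} directly. For the smoothed log-concave MLE $\flxs$, I would appeal to the construction of \cite{smoothed}: $\flxs$ is obtained by convolving $\flx$ with a Gaussian whose bandwidth tends to $0$, so $\hd(\flxs,\flx)\to 0$ a.s.\ by a straightforward argument using $L_1$-continuity of translation and dominated convergence, and then a further application of the triangle inequality (using $\hd(\flx,\fx)\to 0$ a.s.) yields $\hd(\flxs,\fx)\to 0$ a.s. Everything else reduces to the elementary triangle-inequality manipulation above.
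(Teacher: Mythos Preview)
Your argument is correct, but it is organized differently from the paper's proof. The paper does not pass through the triangle inequality for $\hd$. Instead, it first invokes Theorem~4 of \cite{theory} (not \cite{2009rufi}) to obtain almost sure uniform convergence $\flx\to\fx$ and $\fly\to\fy$, hence pointwise convergence of the product $\flx\fly\to\fx\fy$. It then applies a Scheff\'e/generalized dominated convergence argument directly to the affinity integral to conclude
\[
\rint\sqrt{\flx(x)\fly(x)}\,dx\;\as\;\rint\sqrt{\fx(x)\fy(x)}\,dx,
\]
from which $\hd^2(\flx,\fly)=1-\rint\sqrt{\flx\fly}\as 1-\rint\sqrt{\fx\fy}=\hd^2(\fx,\fy)$. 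For the smoothed estimator the paper proceeds identically, citing Theorem~1 of \cite{smoothed} for the uniform convergence $\flxs\to\fx$ (under finite second moment, automatic for log-concave densities), rather than arguing via $\hd(\flxs,\flx)\to 0$ through the shrinking Gaussian bandwidth.

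Your route is more modular: once one knows $\hd(\flx,\fx)\to 0$ a.s.\ (which follows from $L_1$-consistency via $\hd^2\le\tfrac12\|\cdot\|_1$, with the $L_1$-consistency coming from \cite{theory}, Theorem~4), the triangle inequality immediately handles any pair of independently estimated densities, and the passage from $\hd$ to $\hd^2$ via $|a^2-b^2|=|a-b|(a+b)\le 2|a-b|$ is clean. The paper's route avoids isolating the marginal Hellinger consistency as a separate step but needs the Scheff\'e-type lemma for the cross term $\sqrt{\flx\fly}$. Two small points on your write-up: the relevant consistency reference for the log-concave MLE is \cite{theory} rather than \cite{2009rufi}; and for $\flxs$ it is simpler to cite the convergence theorem in \cite{smoothed} directly, since the convolution argument you sketch, while workable, requires a little more care than ``$L_1$-continuity of translation and dominated convergence'' suggests.
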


 \textcolor{black}{ Our simulations  indicate that both the log-concave and the smoothed log-concave plug-in estimators are $\sqrt N$-consistent with asymptotic variance $\sigma^2_{f,g}$ when $f$ and $g$ are continuous log-concave densities. Therefore, in  our empirical study,  we include Wald type confidence intervals based on  these  log-concave plug-in estimators as well. Our simulations in Section \ref{sec: simulation : measure of discrep}  indicate that under the violation of the continuity assumption, the  log-concave plug-in estimator may still remain $\sqrt{N}$-consistent, but the $\sqrt{N}$-consistency of the 
   the smoothed log-concave plug-in estimator may  fail to hold. }
  
  \textcolor{black}{
It may be possible to analyze the $\sqrt N$-consistency of the log-concave estimators working along the lines of \cite{Kulikov2006,groeneboom1984,GROENEBOOMP1989}, which pertain to the shape restriction of monotonicity. However, we leave this investigation for future research because a detailed treatment of the $\sqrt N$-consistency of the log-concave estimators is out of scope of the present paper. We remark in passing that  proving the  $\sqrt N$-consistency of the log-concave plug-in estimator may be easier  for some special cases, e.g., when the logarithm of $f$ and $g$ are linear or piece-wise affine, by using the results of \cite{kim2018}. We do not pursue this direction because  it is unlikely that, for our data, $f$ and $g$ belong to such restricted classes.
 }

\begin{remark}\label{remark: model misspecification: log-concave}
 The case of model misspecification is of natural interest in the study of shape-constrained estimators. Suppose  $\fx$ and $\fy$ are not log-concave, but they are bounded continuous densities with finite first moments.
 Then it follows that there exist unique log-concave densities $\fx^*$ and $\fy^*$, which are almost sure limits of $\flx$ and $\fly$ in both uniform and $L_1$ metric \citep[Theorem 4][]{theory}. Here $\fx^*$ and $\fy^*$ are also the log-concave projections of $\fx$ and $\fy$, respectively, in the sense of \cite{dumbreg}. Using the same arguments as in the proof of Lemma~\ref{lemma: measure of discrep: log-concave}, it can be shown that the log-concave estimator of the squared Hellinger distance converges almost surely to $\hd^2(\fx^*,\fy^*)$.
 If $\fx$ and $\fy$ additionally have finite second moments, a similar phenomenon takes place for the smoothed log-concave MLEs as well. In this case, however,  $\flxs$ and $\flys$  converge uniformly (and also in $L_1$) to  different limits $\fx^{**}$ and $\fy^{**}$, which can be interprated as the respective smoothed versions of the log-concave projections $\fx^*$ and $\fy^*$  \citep[cf. Theorem 1,][]{smoothed}. In this case also, the smoothed log-concave estimator of the squared  Hellinger distance converges almost surely to $\hd^2(\fx^{**},\fy^{**})$.
In summary, if the log-concavity assumption is violated, the log-concave plug-in estimators  converge to a different limit, whose distance from $\hd^2(\fx,\fy)$ depends on the departure of $\fx$ and $\fy$ from log-concavity.
\end{remark}

  \textbf{KDE based plug-in estimators:}{
  The natural non-parametric comparators of the shape constrained plug-in estimators are the KDE based plug-in estimators.
   Though the latter is simple to implement, it can have a bias of order $||\fnxk-\fx||_2+||\fnyk-\fy||_2$ \citep[cf. Section 2 of][]{robins2009}, where  $\fnxk$ and $\fnyk$ are the KDEs of \fx\ and \fy, respectively. The above bias decreases to zero at a rate slower than $N^{-1/2}$  \citep{kderate}, thereby leading to a suboptimal performance. See Section 5 of \cite{robin2015} for more discussion on the disadvantages of the KDE based na\"{i}ve plug-in estimators. }
  
One can improve the  na\"{i}ve plug-in estimator, however, using a one-step Newton-Raphson procedure \citep[cf.][]{vdv,pfanzagl2013}, which leads to a bias corrected plug-in estimator. 
In context of the Hellinger distance, the bias-corrected estimator $ (\hd^2)^*(\fnxk,\fnyk)$ takes the form \citep{robin2015}
  \begin{align*}
   & \hd^2(\fnxk,\fnyk) +\rint \ps_f(x;\fnxk,\fnyk)d\Fmx(x)+\rint \ps_g(y;\fnxk,{\fnyk\vphantom{\fnxk}})d\Fmy(y) \\
 &= 1-\dfrac{1}{2}\lb\rint \sqrt{\fnyk(x)/\fnxk(x)}d\Fmx(x)+\rint \sqrt{\fnxk(y)/\fnyk(y)}d\Fmy(y)\rb,
\end{align*}  
where $\psi_f$ and $\psi_g$ are the influence functions corresponding to the functional $(f,g)\mapsto \hd^2(f,g)$. We will formally introduce the influence functions in Appendix~\ref{sec: appendix: measure of discrep}. 
Note that learning the form of the bias-corrected estimator thus requires the explicit computation of the influence functions $\psi_f$ and $\psi_g$. From a broader prospective, each time one tries to estimate a functional of the underlying distributions using a bias-corrected plug-in estimator,  they have to carry out some extra analytical calculations that depend on the functional of interest.

  In contrast, Theorem~\ref{thm: Hellinger distance} shows that when the shape constraint is satisfied, the unimodal estimator of the squared  Hellinger distance does not require any bias correction for $\sqrt N$-consistency. Our analysis in Appendix~\ref{sec: appendix: measure of discrep} indicates that this is not an artifact of the Hellinger distance, but rather  the result of  \birge's estimator's proximity to the unimodal MLE, i.e. the Grenander estimator based on the true mode. 
 \textcolor{black}{In fact, Theorem \ref{theorem: divergence: general} in Appendix~\ref{sec: appendix: measure of discrep} shows that, if  $T$ is a smooth functional of the distribution functions $\Fx$, then  $T(\fnxm)$ is $\sqrt m$-consistent for $T(\Fx)$ under mild conditions. Because the total variation distance between $\fnx$ and $\fnxm$ is $o_p(m^{-1/2})$ (see our Lemma~\ref{lemma: weak convergence of dist of unimodal density functions} in Appendix~\ref{sec: appendix: tests}), it is then natural to expect that  $T(\fnx)$ would be $\sqrt m$-consistent  if $T$ is sufficiently smooth. }

Although we do not have any such theoretical evidence for  the log-concave estimators,  our simulations and the simulations of \cite{cule2010} on  plug-in estimators based on log-concave MLEs (see Figure 18 therein)  indicate that these estimators do not have any $O_p(N^{-1/2})$ bias term either.
This allows users of correctly specified shape-constrained estimators to avoid analytic calculation of the influence functions entirely. 

  In our upcoming simulation study, we use both $\hd^2(\fnxk,\fnyk)$ and  $(\hd^2)^*(\fnxk,\fnyk)$ as comparators, where the KDEs are based on the Gaussian kernel. We refer to these estimators as the KDE estimator and the bias-corrected KDE estimator, respectively. As in Section \ref{sec: density estimation}, the  kernel bandwidth is chosen using the univariate least square cross-validation (LSCV) selector of \cite{bowman} and \cite{rudemo}. \textcolor{black}{ Generally, kernel-based bias corrected estimators satisfy $\sqrt N$ consistency results of the type \eqref{intheorem: statement: hellinger} \citep[cf. Theorem 6,][]{robin2015}. Therefore, we use the bias-corrected KDE based confidence intervals to benchmark the performance of our shape constrained estimators. However, we do not report any confidence interval based on the na\"{i}ve plug-in  estimator because our simulations indicate that usually its coverage is much less than the nominal level. For the sake of clarity, in Table \ref{table: summarize plug-in}, we summarize the current state of results on the above-mentioned estimators of the squared Hellinger distance.
  }
   \begin{table*}[h]
   \centering
  \begin{tabular}{lcccc}
\toprule
Estimator & Consistency & $\sqrt N$-consistency & Showed in & \makecell[vl]{Confidence\\ interval$^*$}\\ 
\midrule
Na\"{i}ve KDE & yes & no & \makecell{ \cite{robin2015}} & no\\
\makecell{Bias\\ corrected\\ KDE} & yes & yes & \cite{robin2015} & yes\\
Unimodal & yes & yes & current paper & yes\\
Log-concave & yes & \makecell{unknown, but\\suggested by\\  simulations} & current paper & yes\\
\makecell{Smoothed \\log-concave} & yes & $---''---$ & current paper & yes\\
\bottomrule
\end{tabular} 
\caption{Table summarizing the current  results on different plug-in estimators of the squared Hellinger distance.  The consistency and $\sqrt N$-consistency of the shape constrained estimators are proved under the correct specification of the shape constraints. Empirical studies are provided on all the estimators listed above.\\
* Indicates whether we included confidence intervals for the corresponding estimator in the empirical study of Section \ref{sec: simulation : measure of discrep}.}
\label{table: summarize plug-in}
  \end{table*}          
            
\subsection{Simulations:}
\label{sec: simulation : measure of discrep}

\begin{figure}[h]
\includegraphics[ width=\textwidth]{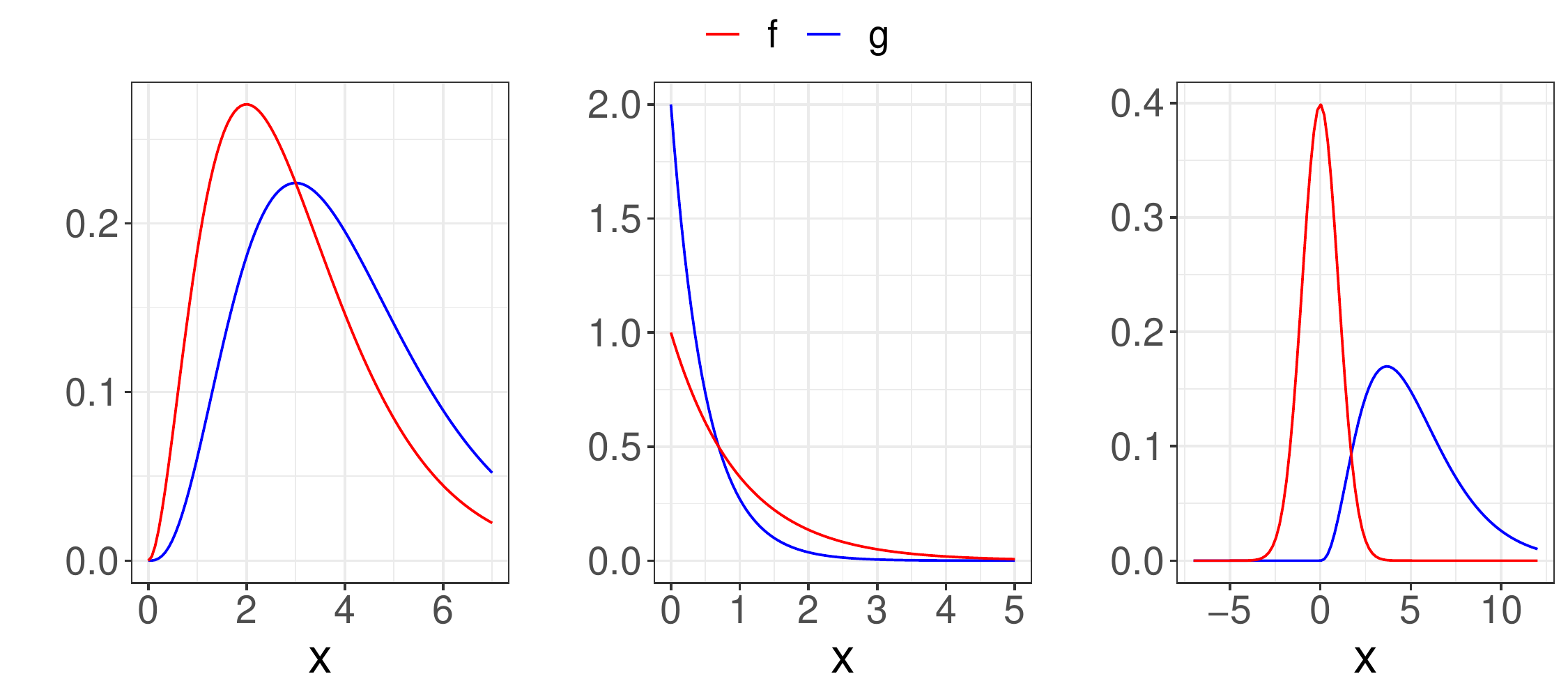}
\caption{Plots of the densities $f$ and $g$ for the cases (a) (left), (e) (middle), and (f) (right)  in the simulation scheme of Section~\ref{sec: simulation : measure of discrep} }
\label{Figure: density: s3}
\end{figure}

\begin{figure}[h]
\includegraphics[width=\textwidth]{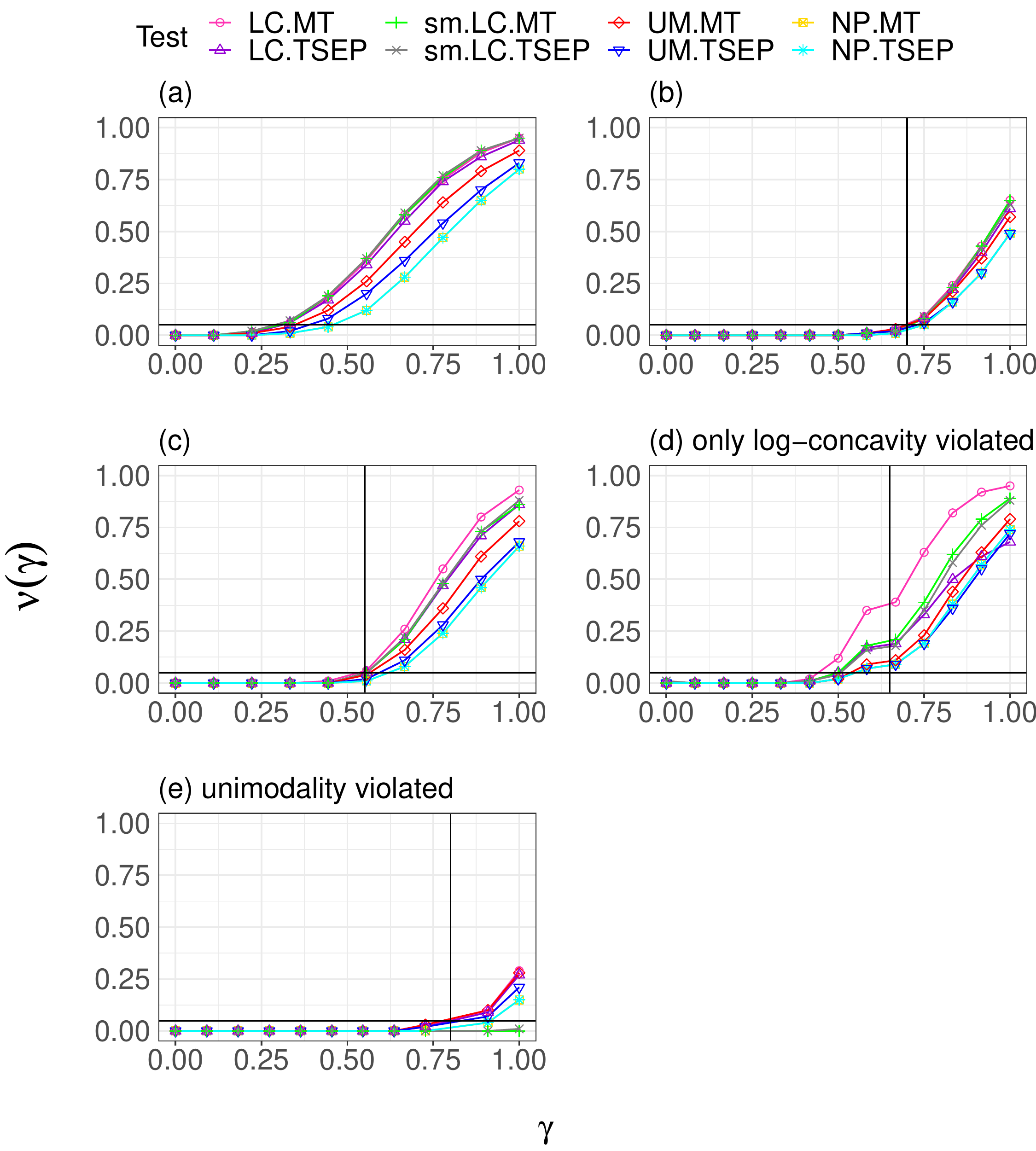}
 \caption{ Plot of estimated power $\nu(\gamma)$ vs $\gamma$: the labels (a)-(e) correspond to simulation schemes (a)-(e). Here MT and TSEP correspond to the minimum t-test and the TSEP test, respectively. The standard deviation of the $\nu(\gamma)$ estimate in each case is less than $0.005$.
  The black  horizontal line corresponds to the level of the test, $\alpha=0.05$. For cases (b)-(e), the black vertical line   represents the LFC configuration $\gamma^*$, taking value 0.70 (b), 0.55 (c), 0.65 (d), and 0.80 (e). }
 \label{fig: power}
\end{figure}

\begin{figure}[h]
\includegraphics[ width=\textwidth]{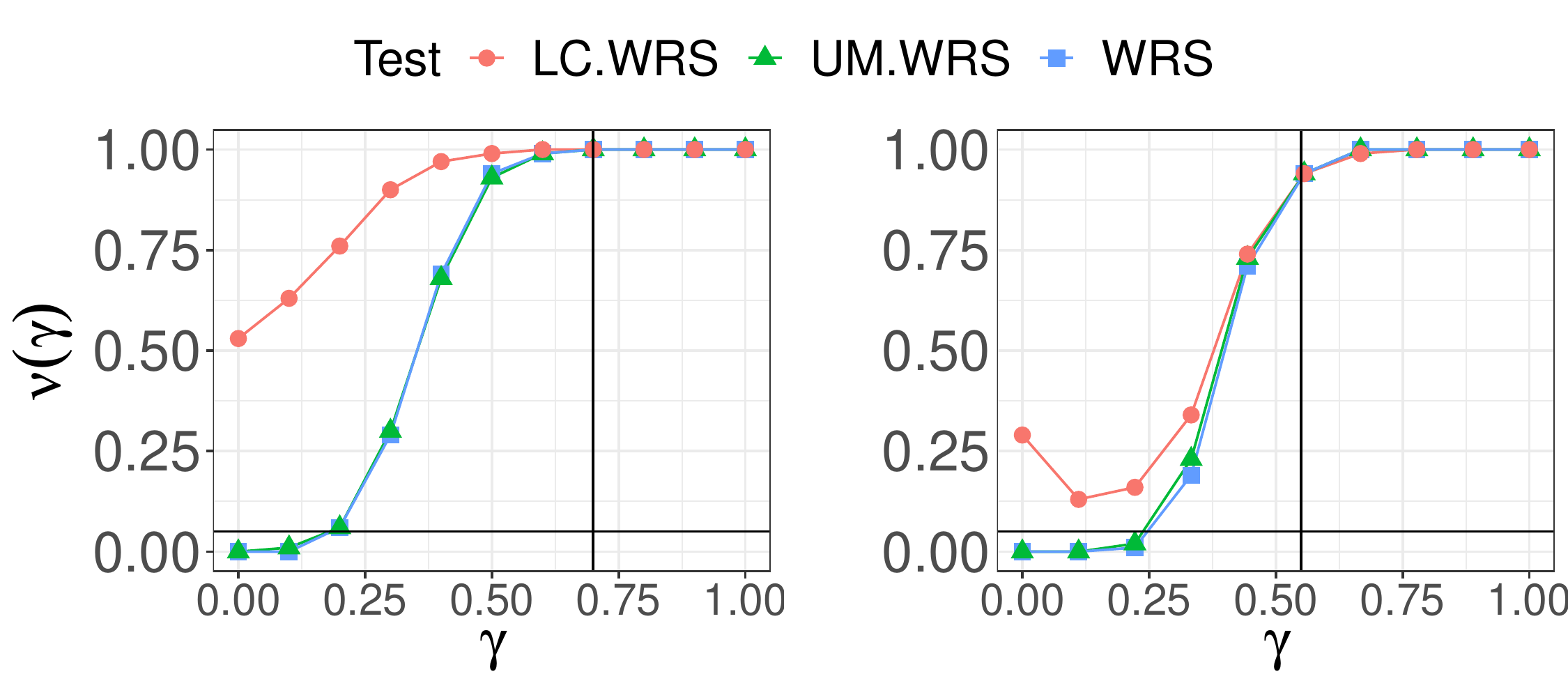}
\caption{Estimated power curves of WRS type tests for case (b) (left) and case (c) (right).  Here LC.WRS and UM.WRS correspond to the tests based on $T_{m,n}^{\text{wrs}}(\Flx,\Fly)$ and $T_{m,n}^{\text{wrs}}(\Fnx,\Fny)$, respectively.
 The  standard errors of the estimated powers are less than 0.005. The black  horizontal line corresponds to the level of the test, $\alpha=0.05$. The black vertical lines represent the LFC configuration $\gamma^*$, which are 0.70 (b) and 0.55 (c).}\label{Figure: WRS}
\end{figure}
 
  \begin{figure}[h]
\centering
    {\includegraphics[width=\textwidth]{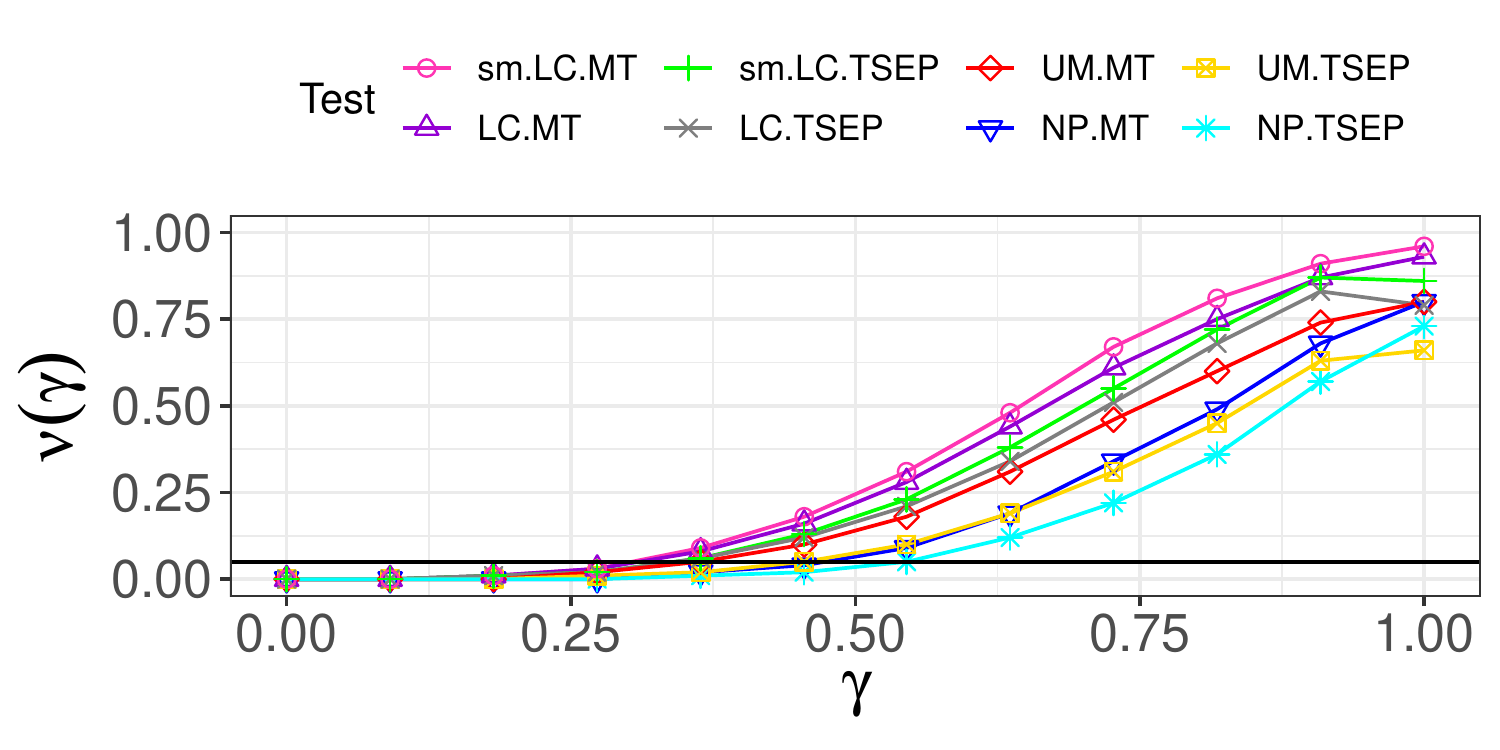}
    }
    \caption{Estimated power curve applied on our datasets: Here MT and TSEP correspond to the minimum t-test and the TSEP test, respectively. The black  horizontal line corresponds to the level of the test, which is $0.05$. The standard deviation of the estimated power does not exceed $0.005$ in any case.}
\label{fig:power: ourdata}
    \end{figure}

To compare the performance of different estimators of the squared Hellinger distance, we consider the following  combinations of $f$ and $g$:
\begin{compactitem}
\item[(a)] $f\sim Gamma(4,1)$ and $\fy\sim Gamma(3,1)$.
\item[(b)] $f\sim N(1,1)$ and $\fy\sim N(0,1)$, which corresponds to case (a) in Section \ref{sec:simulation:1} with $\gamma=1$.
\item[(c)] $f=\hfns$ and $g=\hfhs$, where $\hfns$ and $\hfhs$ are the smoothed log-concave MLEs of $\fn$ and $\fh$, respectively (see Figure~\ref{fig:log-concave densities}).
\item[(d)] $f\sim\text{Exp}(1)$ and $g\sim\text{Exp}(2)$ where $\text{Exp}(t)$ is the exponential distribution with rate $t$.
\item[(e)] $(f,g)$ corresponds to case (e) in Section \ref{sec:simulation:1} with $\gamma=1$.  
\item[(f)] $f\sim N(0,1)$, and $g\sim Gamma(3.61,1.41)$.
\end{compactitem}
The plots of the above schemes can be found in Figure~\ref{Figure: 3 density: S2}  and Figure~\ref{Figure: density: s3}.
. In cases (a), (b), (c), (d), and (f),   both shape constraints are satisfied,  where in case (e), both shape constraints are violated. \textcolor{black}{We will refer to case (a), (b), (c), (d), and (f), therefore, as the ``correctly specified'' cases and case (e),  as the ``misspecified" case.} 
 In case (d), the logarithm of the densities are linear on their respective supports. This is also the only case where the densities are discontinuous (see Figure~\ref{Figure: density: s3}; discontinuity at zero).
In case (f), $g$ was chosen so as to resemble the density estimate of  $\fh$ (see  Figure~\ref{fig:log-concave densities} and Figure~\ref{Figure: density: s3}). This is the only correctly specified case where the densities are known to come from different families, and the densities also have very different shapes.

We generate two samples of same size from each simulation setting.  We vary the common sample size $n$ from $50$ to $500$ in increments of $50$. We do not consider larger values of $n$ because in our motivating phase 1b and phase 2 vaccine trial applications,  the sample sizes are generally no larger than 500. We consider 10,000 Monte Carlo replications for each sample size.

\begin{figure}[h]\centering
\centering
\includegraphics[width=0.95\textwidth]{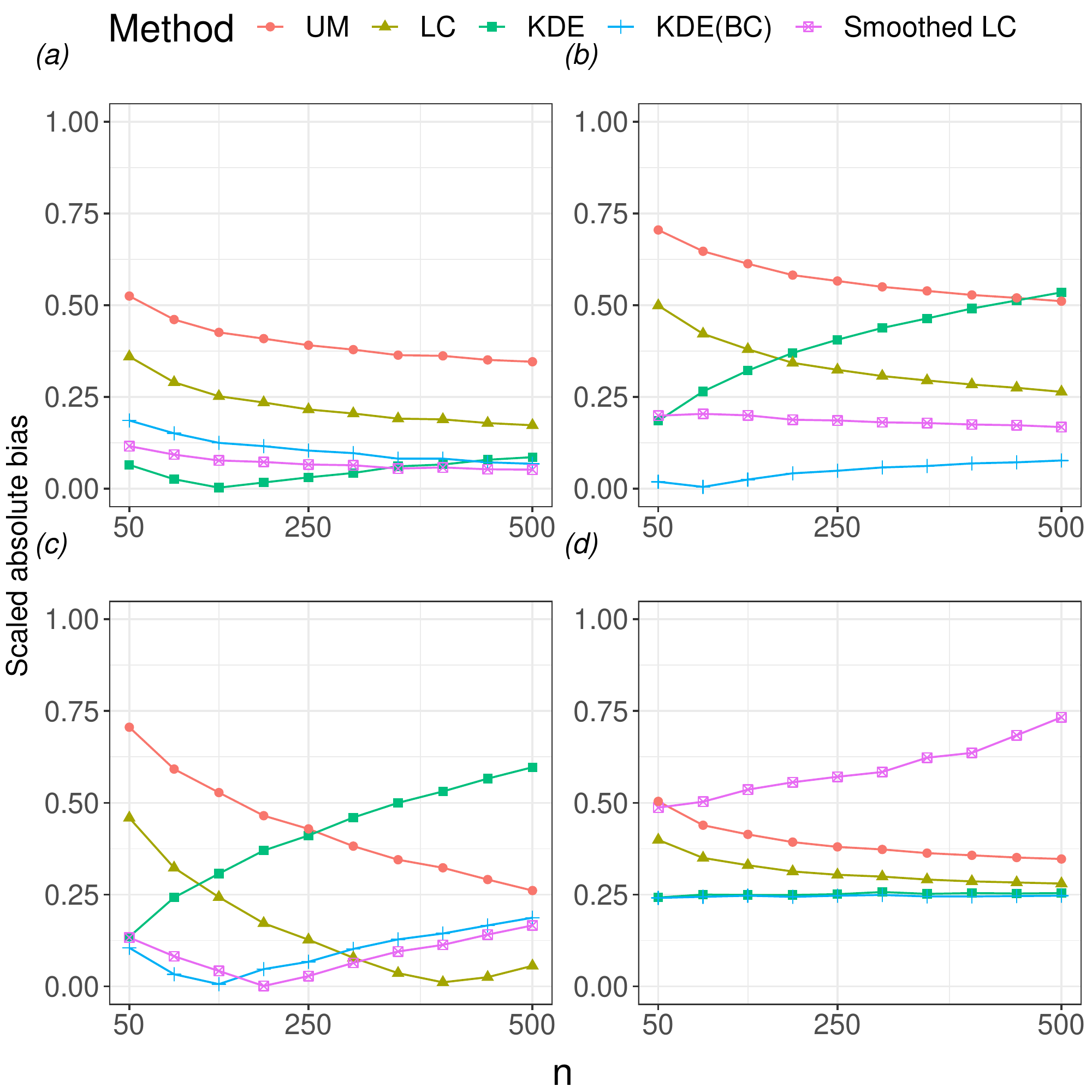}
\caption{Plots of the absolute values of the bias (scaled by $\sqn$) for  cases (a)--(d).  Here ``KDE (BC)" and ``smoothed LC" stand for the bias corrected KDE estimator and the smoothed log-concave estimator, respectively.
For each estimator, either the standard error is less than 0.006 or the relative standard error is less than 2\%.  }
\label{Figure: bias 1}
\end{figure}
\begin{figure}
\centering
\includegraphics[width=0.95\textwidth]{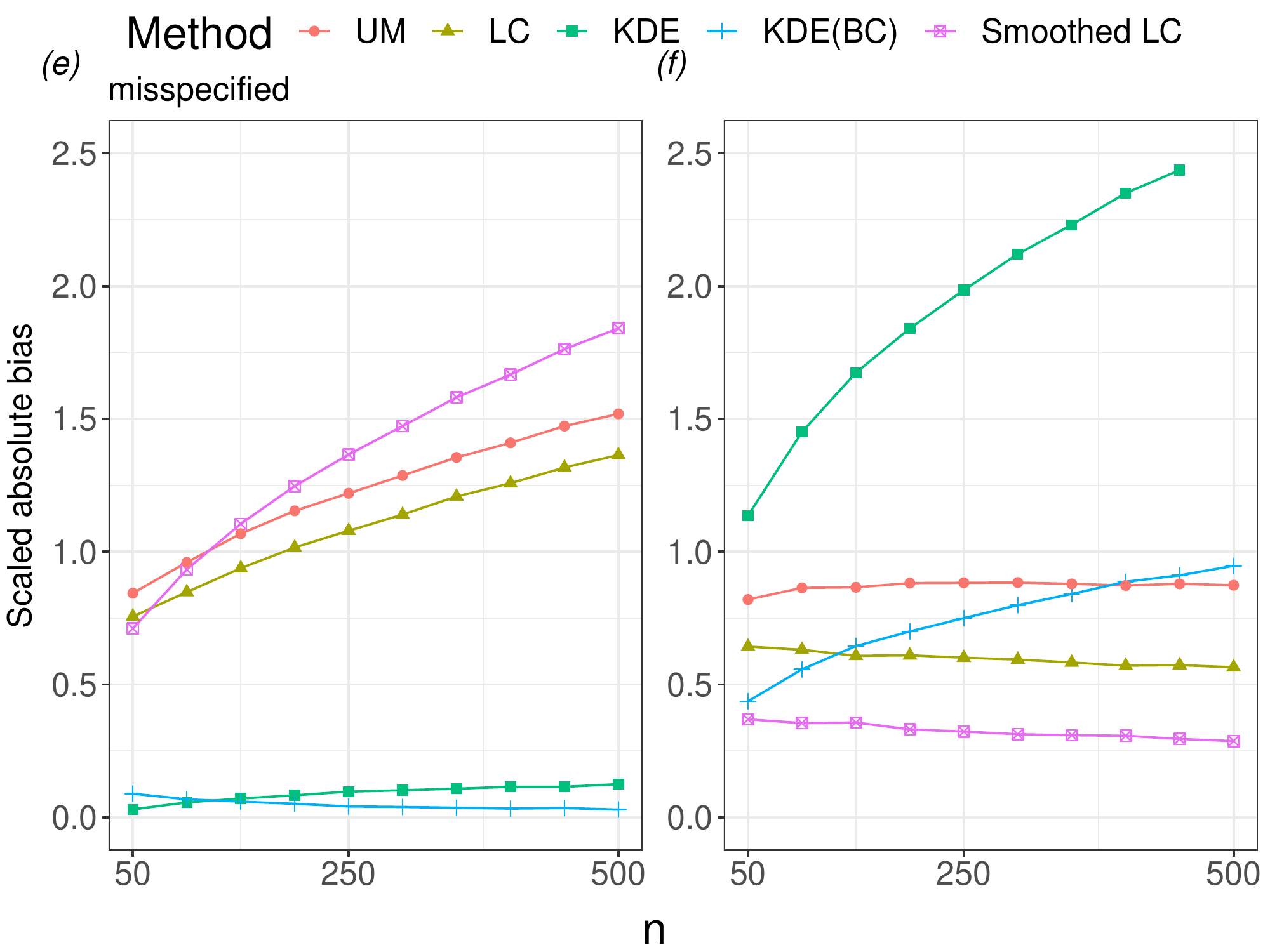}
\caption{Plots of the bias (scaled by $n$) for cases (e) and (f).  Here ``KDE (BC)" and ``smoothed LC" stand for the bias corrected KDE estimator and the smoothed log-concave estimator, respectively.
For each estimator, either the standard error is less than 0.006 or the relative standard error is less than 2\%. 
}\label{Figure: bias 2}
\end{figure}

\begin{figure}[h]\centering
\begin{subfigure}{\textwidth}\centering
\includegraphics[width=\textwidth]{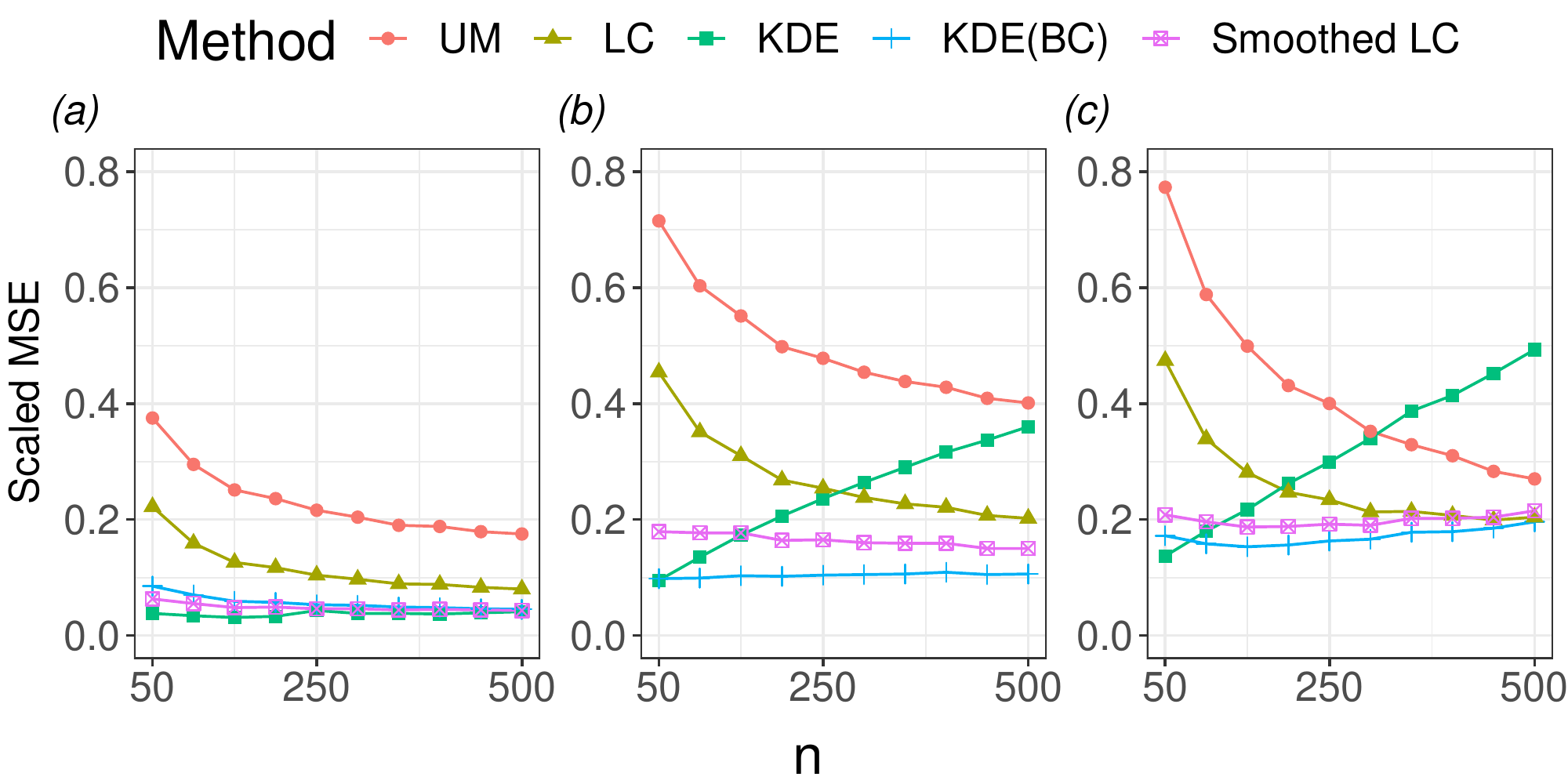}
\caption{Plots of the  MSE (scaled by $n$) for  cases (a)--(c). }
\label{Figure: MSE 1}
\end{subfigure}\\
\begin{subfigure}{\textwidth}\centering
\includegraphics[width=\textwidth]{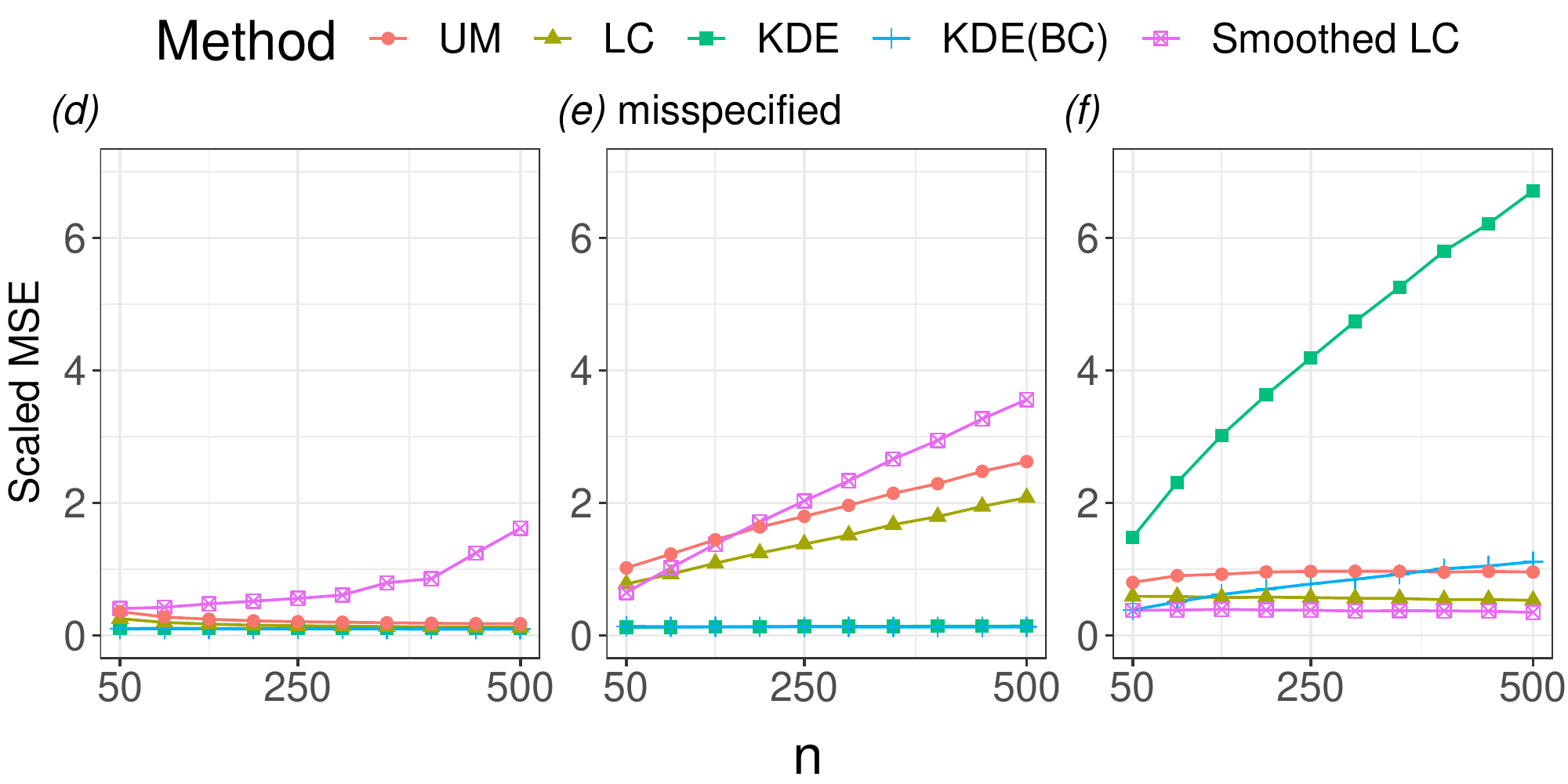}
\caption{Plots of the MSE (scaled by $n$) for cases (d)--(f). }
\label{Figure: MSE 2}
\end{subfigure}
\caption{Plot of MSE: Here ``KDE (BC)" and ``smoothed LC" stand for the bias corrected KDE estimator and the smoothed log-concave estimator, respectively.
For each estimator, either the standard error is less than 0.006 or the relative standard error is less than 2\%. 
}\label{Fig:  MSE}
\end{figure}

\begin{figure}[h]\centering
\begin{subfigure}{\textwidth}\centering
\includegraphics[width=\textwidth]{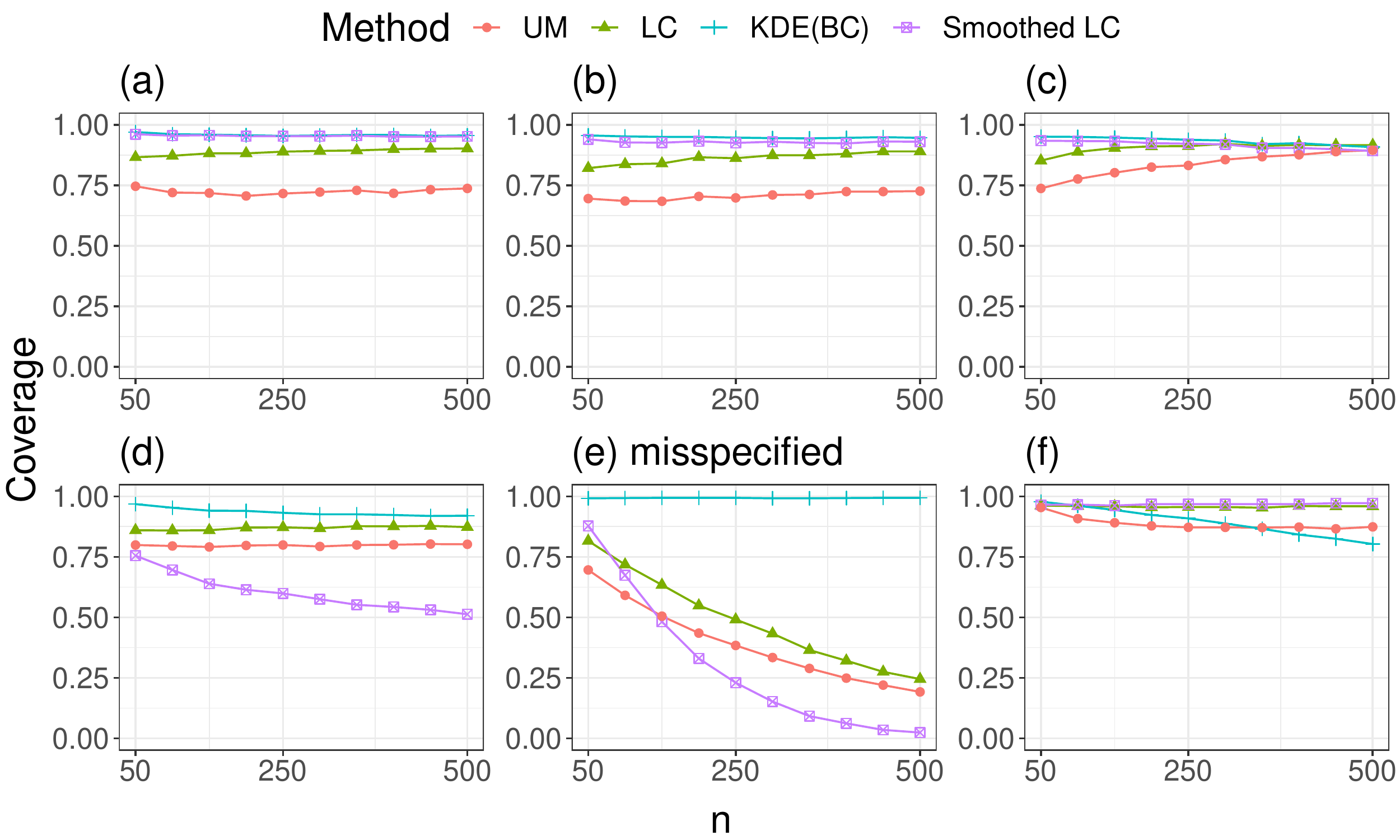}
\caption{Coverage probability of the 95$\%$ confidence intervals for  cases (a)--(f). }
\label{Figure: coverage}
\end{subfigure}\\
\begin{subfigure}{\textwidth}\centering
\includegraphics[width=\textwidth]{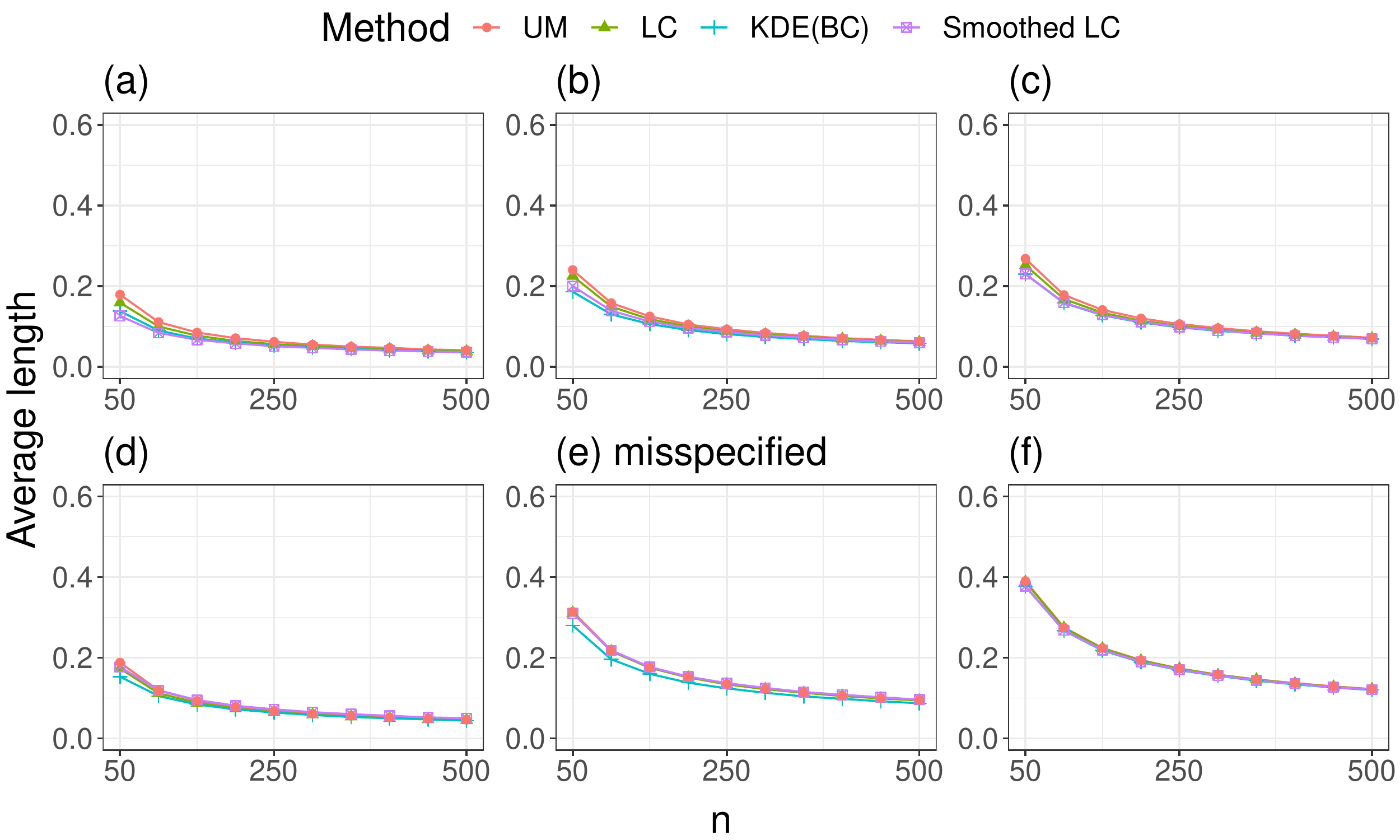}
\caption{Average length of the 95$\%$ confidence intervals for cases (a)--(f). }
\label{Figure: length}
\end{subfigure}
\caption{Here ``KDE (BC)" and ``smoothed LC" stand for the bias corrected KDE estimator and the smoothed log-concave estimator, respectively. 
In all the cases, the standard error is less than 0.005, and hence, not plotted. 
}\label{Fig: cov-length}
\end{figure}

Figures~\ref{Figure: bias 1} and \ref{Figure: bias 2}  plot $\sqn$ times the  absolute bias,  and Figure  \ref{Fig:  MSE} plots  $n$ times the  mean squared error (MSE). Figure~\ref{Figure: coverage} and Figure~\ref{Figure: length} display, respectively, the coverage and the average length of the confidence intervals, both of  which are estimated using  10,000 Monte Carlo samples. 
In the misspecified case (e),  the performance of the shape-constrained estimators deteriorate sharply with $n$, which is unsurprising due to the violation of the shape constraints in this case. \textcolor{black}{In case (d), where the densities are exponential, the scaled bias of the smoothed log-concave plug-in estimator increases with $n$, which implies $\sqrt N$-consistency does not hold for this estimator in this case.    Closer inspection reveals that although it is not $\sqrt N$-consistent, the smoothed log-concave plug-in estimator is  still consistent in case (d).  We found out that the smoothed log-concave density estimator fails to approximate the exponential densities near zero, their point of discontinuity (see Figure~\ref{fig:smooth exp} in Appendix \ref{app: additional tables and figures}). It is worth mentioning that although the curvature Condition~\ref{Cond: B2}  is violated in case (d), it does not affect the performance of the  log-concave plug-in estimator.  }

For all other cases,  the smoothed log-concave estimator exhibits the best performance among the shape-constrained estimators. In all cases, the unimodal estimator underperforms. Importantly, Figure~\ref{Figure: coverage} indicates that the unimodal estimator would require  sample size larger than 500 for the Wald type confidence interval to be valid, where for the other confidence intervals, this sample size is sufficient  for the asymptotics to kick in.

 The KDE-based plug-in estimator experiences an increase in the bias and the MSE  with $n$, which agrees with  our previous discussion on  KDE based plug-in estimators. 
Its bias corrected version performs comparably to the  shape-constrained estimators in all cases except case (f). In   case (f), however, the  bias corrected estimator yields a confidence interval with poor coverage, which is probably  due to the large bias incurred by the original  KDE-based plug-in estimator in this case (see panel (f) of  Figure~\ref{Figure: bias 2}). Case (f) is a case where the underlying densities differ by both shape and scale. We suspect that the cross-validated bandwidth for the KDE estimator does not work in this case even with  bias correction.  Finally, the average length of the confidence intervals do not vary noticeably  across different methods. 

 
 \textcolor{black}{
 In summary, the   log-concave plug-in estimator exhibits reliable performance when  the  shape restriction holds. The smoothed log-concave plug-in estimator may also require the underlying densities to be continuous, but otherwise it performs comparably with the bias-corrected KDE-based plug-in estimator. The latter performs well in many settings, but it is not always reliable. The lacking  performance of the KDE based methods in some settings is probably due to the variable nature of the optimal bandwidth under different settings.
 }

  \subsection{Application to HVTN 097 and HVTN 100 data:}

Table \ref{table: hell estimate} tabulates the point estimates of the squared Hellinger distance (between $\fh$ and $\fn$) and the corresponding 95\% intervals. \textcolor{black}{Table \ref{table: hell estimate} also displays the 95\% confidence intervals of the Hellinger distance, which are obtained by taking square root of the upper and lower bounds of the previous confidence intervals.}

 \begin{table*}[h]
   \centering
  \begin{tabular}{lccccc}
\toprule
Estimator & \makecell{Na\"{i}ve\\ KDE} & \makecell{Bias\\-corrected\\ KDE} & UM & LC & \makecell{Smoothed\\ LC}\\ 
\midrule
\makecell[vl]{Point\\ estimate of\\ $\hd^2(\fn,\fh)$} & 0.16 & 0.19 & 0.18 & 0.15 & 0.21 \\
\makecell[vl]{95\% CI for\\  $\hd^2(\fn,\fh)$}  & \makecell[vl]{not\\available} & $(0.110,0.274)$ & $(0.128, 0.300)$ & $(0.098,0.256)$ & $(0.079,0.228)$ \\
\makecell[vl]{95\% CI for\\  $\hd(\fn,\fh)$} & $--''--$ & $(0.332, 0.523)$ & $(0.358, 0.548)$ & $(0.313, 0.506)$ & $(0.281,0.447)$\\
\bottomrule
\end{tabular} 
\caption{Table of the point estimates of  $\hd^2(\fn,\fh)$, and 95\% confidence intervals of  $\hd^2(\fn,\fh)$ and  $\hd(\fn,\fh)$.  }
\label{table: hell estimate}
  \end{table*}  
  To give the reader some perspective, if $f_0$ is a $N(0,1)$ distribution and $f_\mu$ is a $N(\mu,1)$ distribution, then, when $\mu$ is equal to 1.00, 1.25, 1.50, 1.75, and 2.00, $\hd(f_0,f_\mu)$ is equal to 0.346, 0.424, 0.500, 0.566, and 0.624, respectively. \textcolor{black}{Also,  in Figure~\ref{fig:hell-plot}, we display some density pairs $(f,g)$ satisfying $F\succeq G$ with   Hellinger distance in the range $0.35-0.60$, which is  similar to our data.}
\begin{figure}
    \centering
    \includegraphics[width=\textwidth]{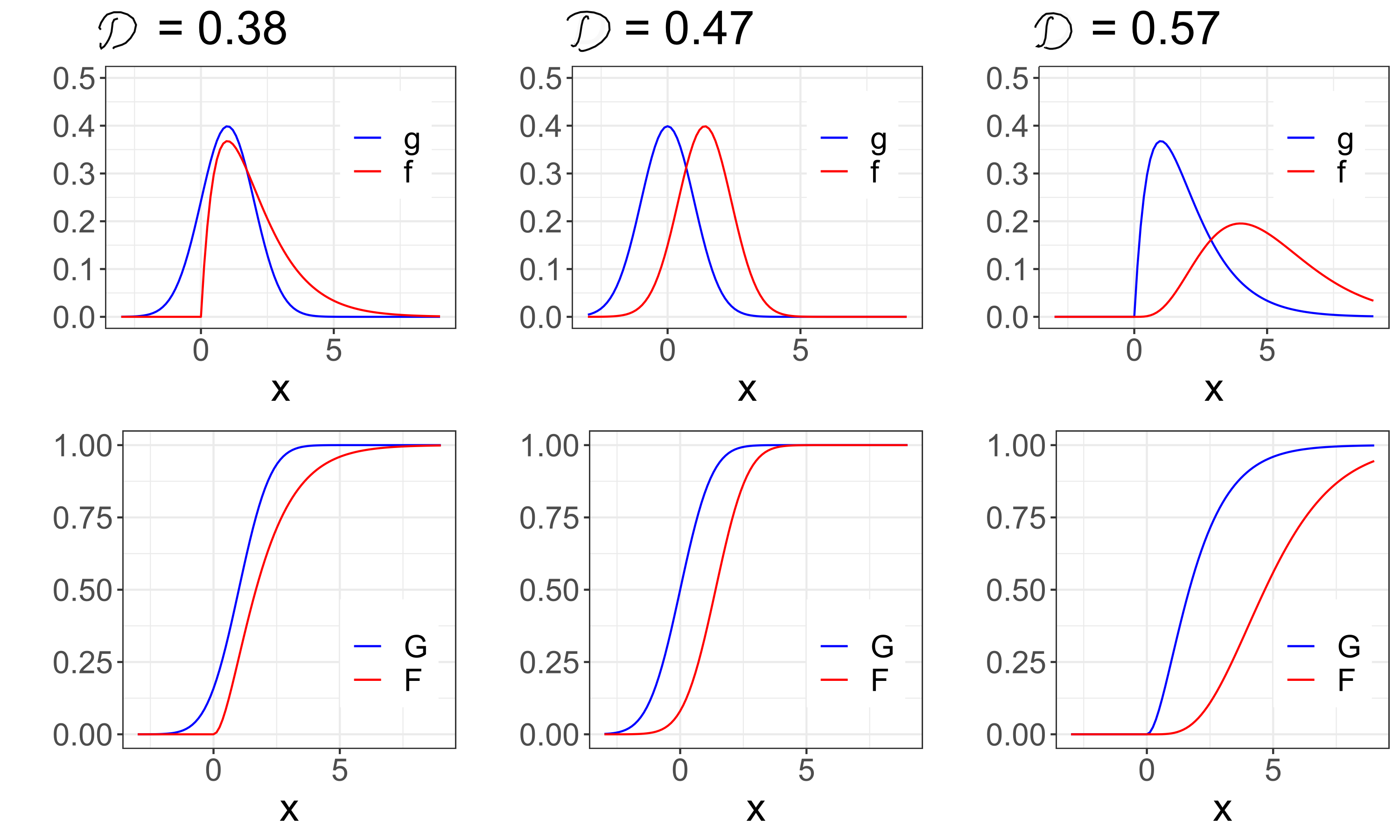}
    \caption{Plot of different density pairs and their distribution functions. The Hellinger distance $(\mathcal D)$ is given on top of each pair. Left: $f\sim Gamma(25,1)$, $g\sim N(0.5,1)$; middle: $f\sim N(1.40, 1),$ $g\sim N(0,1)$; right: $f\sim Gamma(5,1)$, $g\sim Gamma(2,1)$.}
    \label{fig:hell-plot}
\end{figure}

\section{Discussion}

The first contribution of our work is a novel analysis of the data from the HVTN 097 and HVTN 100 trials. All of our tests  reject the null of non-dominance   in favor of the strict stochastic dominance of $\Fn$ over $\Fh$. To provide further insight into the discrepancy between the two   IgG binding response distributions, we estimated the squared Hellinger distance between the corresponding densities, which turns out to be approximately 0.20 (95\% CI 0.10-0.30). 
We remark that our findings are consistent with those of \cite{HVTN100_primary}, who found that the average magnitude of IgG binding to V1V2 antigens observed in the HVTN 100 trial is lower than that in the RV 144 trial. Although the latter used the same regimen as HVTN 097, it was conducted in a different population (Thailand).  Our findings indicate that the difference in the magnitude of IgG binding response between HVTN 100 and RV 144 regimen may be attributable to the HIV clade difference rather than to the difference in populations.


The outcome of our tests become meaningful when viewed  against the lack of efficacy observed
 in the phase 2b/3 HVTN 702 trial, which evaluated the safety and efficacy of the HVTN 100 regimen in South Africa. 
A possible hypothesis for  why no efficacy was observed when the HVTN 100 regimen was evaluated in this trial, whereas efficacy was observed when the HVTN 097 regimen was evaluated in the RV144 trial, is that the HVTN 100 regimen leads to a lower magnitude of  IgG binding to the V1V2 region. This possibility is supported by the observation made by \cite{RV144_immuno} regarding the negative correlation between rate of infection and the magnitude of IgG binding to V1V2 region. This hypothesis can be tested when the immune profile of the participants in HVTN 702 trial becomes available.

  
  Another contribution of our work  relates to density estimation in the context of vaccine trials. Based on a cross-validated analysis of the HVTN 097 and HVTN 100 data, we believe that the log-concave density estimators of \cite{2009rufi} and \cite{smoothed}  may yield improved density estimation in  vaccine studies. In future work, it would be worth further validating this claim on other vaccine trial datasets.
  
 We also made several methodological contributions. In Section \ref{sec: tests}, we introduce three novel shape-constrained tests. These tests have the desirable asymptotic properties of nonparametric tests, and simulations illustrate that the shape-constrained tests have better overall performance than the nonparametric tests. Moreover, even under the violation of the shape-constraints, their performance is not much worse than the nonparametric tests. 
     We also introduce shape-constrained plug-in estimators of the squared Hellinger distance  and provide asymptotic consistency and distributional results. Our simulations suggest that, when the shape constraint is  satisfied, the log-concave plug-in estimators exhibit overall lower MSE and absolute bias than the KDE based plug-in estimator. In fact, they perform comparably with the bias-corrected KDE based estimator. 
    However, unlike the bias-corrected KDE plug-in estimator, the log-concave plug-in estimators require neither selecting a tuning parameter nor carrying out the analytic calculations needed to derive the bias-correction term. Therefore, the log-concave plug-in estimators may be preferred in settings where this shape constraint is plausible. 
\section{Acknowledgements}
This work was supported by the National Institutes of Health (NIH) through award numbers DP2-LM013340 and 5UM1AI068635-09. The content is solely the responsibility of the authors and does not necessarily represent the official views of the NIH.
GlaxoSmithKline Biologicals SA was provided the opportunity to review a preliminary version of this manuscript, but the authors are solely responsible for final content and interpretation.


\bibliographystyle{natbib}
\bibliography{location_estimation_1}

\begin{thebibliography}{}

\bibitem[{\'A}lvarez-Esteban {\em et~al.}(2016){\'A}lvarez-Esteban, Del~Barrio,
  Cuesta-Albertos, and Matr{\'a}n]{alvarez2016}
{\'A}lvarez-Esteban, P., Del~Barrio, E., Cuesta-Albertos, J., and Matr{\'a}n,
  C. (2016).
\newblock A contamination model for the stochastic order.
\newblock {\em Test\/}, {\bf 25}, 751--774.

\bibitem[Ambrosio {\em et~al.}(2000)Ambrosio, Fusco, and Pallara]{pallara2000}
Ambrosio, L., Fusco, N., and Pallara, D. (2000).
\newblock {\em Functions of bounded variation and free discontinuity
  problems\/}.
\newblock Oxford Science Publications.

\bibitem[Asmussen and Lehtomaa(2017)Asmussen and Lehtomaa]{asmussen2017}
Asmussen, S. and Lehtomaa, J. (2017).
\newblock Distinguishing log-concavity from heavy tails.
\newblock {\em Risks\/}, {\bf 5}(1), 10.

\bibitem[Balabdaoui {\em et~al.}(2009)Balabdaoui, Rufibach, and
  Wellner]{balabdaoui2009}
Balabdaoui, F., Rufibach, K., and Wellner, J.~A. (2009).
\newblock Limit distribution theory for maximum likelihood estimation of a
  log-concave density.
\newblock {\em Ann. Statist.}, {\bf 37}, 1299.

\bibitem[Beare {\em et~al.}(2017)Beare, Fang, {\em et~al.}]{beare2017}
Beare, B.~K., Fang, Z., {\em et~al.} (2017).
\newblock Weak convergence of the least concave majorant of estimators for a
  concave distribution function.
\newblock {\em Electron. J. Stat.}, {\bf 11}, 3841--3870.

\bibitem[Bekker {\em et~al.}(2018)Bekker, Moodie, Grunenberg, Laher, Tomaras,
  Cohen, Allen, Malahleha, Mngadi, Daniels, {\em et~al.}]{HVTN100_primary}
Bekker, L.-G., Moodie, Z., Grunenberg, N., Laher, F., Tomaras, G.~D., Cohen,
  K.~W., Allen, M., Malahleha, M., Mngadi, K., Daniels, B., {\em et~al.}
  (2018).
\newblock Subtype c alvac-hiv and bivalent subtype c gp120/mf59 hiv-1 vaccine
  in low-risk, hiv-uninfected, south african adults: a phase 1/2 trial.
\newblock {\em The Lancet HIV\/}, {\bf 5}(7), e366--e378.

\bibitem[Billingsley(2013)Billingsley]{billingsley2013}
Billingsley, P. (2013).
\newblock {\em Convergence of probability measures\/}.
\newblock John Wiley \& Sons.

\bibitem[Birg{\'e}(1997)Birg{\'e}]{birge1997}
Birg{\'e}, L. (1997).
\newblock Estimation of unimodal densities without smoothness assumptions.
\newblock {\em Ann. Statist.}, {\bf 25}, 970--981.

\bibitem[Birg{\'e} and Massart(1995)Birg{\'e} and Massart]{birge1995}
Birg{\'e}, L. and Massart, P. (1995).
\newblock Estimation of integral functionals of a density.
\newblock {\em Ann. Statist.}, {\bf 23}, 11--29.

\bibitem[Bobkov and Ledoux(2016)Bobkov and Ledoux]{bobkovbig}
Bobkov, S. and Ledoux, M. (2016).
\newblock One-dimensional empirical measures, order statistics and
  {Kantorovich} transport distances.
\newblock {\em preprint\/}.

\bibitem[Bowman(1984)Bowman]{bowman}
Bowman, A.~W. (1984).
\newblock An alternative method of cross-validation for the smoothing of
  density estimates.
\newblock {\em Biometrika\/}, {\bf 71}, 353--360.

\bibitem[Carolan and Dykstra(1999)Carolan and Dykstra]{carolan1999}
Carolan, C. and Dykstra, R. (1999).
\newblock Asymptotic behavior of the grenander estimator at density flat
  regions.
\newblock {\em Can. J. Stat.}, {\bf 27}, 557--566.

\bibitem[Carolan(2002)Carolan]{carolan2002}
Carolan, C.~A. (2002).
\newblock The least concave majorant of the empirical distribution function.
\newblock {\em Canadian Journal of Statistics\/}, {\bf 30}(2), 317--328.

\bibitem[Chen and Samworth(2013)Chen and Samworth]{smoothed}
Chen, Y. and Samworth, R.~J. (2013).
\newblock Smoothed log-concave maximum likelihood estimation with applications.
\newblock {\em Statist. Sinica\/}, {\bf 23}, 1373--1398.

\bibitem[Cieslak and Chawla(2009)Cieslak and Chawla]{cieslak2009}
Cieslak, D.~A. and Chawla, N.~V. (2009).
\newblock A framework for monitoring classifiers’ performance: when and why
  failure occurs?
\newblock {\em Knowl. Inf. Syst.}, {\bf 18}, 83--108.

\bibitem[Cule and Samworth(2010)Cule and Samworth]{theory}
Cule, M. and Samworth, R. (2010).
\newblock Theoretical properties of the log-concave maximum likelihood
  estimator of a multidimensional density.
\newblock {\em Electron. J. Statist.}, {\bf 4}, 254--270.

\bibitem[Cule {\em et~al.}(2010)Cule, Samworth, and Stewart]{cule2010}
Cule, M., Samworth, R., and Stewart, M. (2010).
\newblock Maximum likelihood estimation of a multidimensional log-concave
  density.
\newblock {\em J. Royal Stat. Soc: Series B\/}, {\bf 72}, 545--607.

\bibitem[Davidson and Duclos(2013)Davidson and Duclos]{davidson2013}
Davidson, R. and Duclos, J.-Y. (2013).
\newblock Testing for restricted stochastic dominance.
\newblock {\em Econometric Rev.}, {\bf 32}, 84--125.

\bibitem[DeFauw(2011)DeFauw]{defauw2011}
DeFauw, M.~C. (2011).
\newblock {\em Cost Effectiveness Analysis in Healthcare Decision-Making:
  Stochastic Modeling and Statistical Inference.}
\newblock Ph.D. thesis.

\bibitem[Doss and Wellner(2016)Doss and Wellner]{dossglobal}
Doss, C.~R. and Wellner, J.~A. (2016).
\newblock Global rates of convergence of the {MLEs} of log-concave and $ s
  $-concave densities.
\newblock {\em Ann. Statist.}, {\bf 44}, 954--981.

\bibitem[D{\"u}mbgen and Rufibach(2009)D{\"u}mbgen and Rufibach]{2009rufi}
D{\"u}mbgen, L. and Rufibach, K. (2009).
\newblock Maximum likelihood estimation of a log-concave density and its
  distribution function: Basic properties and uniform consistency.
\newblock {\em Bernoulli\/}, {\bf 15}, 40--68.

\bibitem[D{\"u}mbgen and Rufibach(2010)D{\"u}mbgen and Rufibach]{lcsoft}
D{\"u}mbgen, L. and Rufibach, K. (2010).
\newblock logcondens: Computations related to univariate log-concave density
  estimation.
\newblock {\em J. Stat. Softw.}, {\bf 39}, 1--28.

\bibitem[D{\"u}mbgen {\em et~al.}(2011)D{\"u}mbgen, Samworth, and
  Schuhmacher]{dumbreg}
D{\"u}mbgen, L., Samworth, R., and Schuhmacher, D. (2011).
\newblock Approximation by log-concave distributions, with applications to
  regression.
\newblock {\em Ann. Statist.}, {\bf 39}, 702--730.

\bibitem[Dwass(1956)Dwass]{dwass1956}
Dwass, M. (1956).
\newblock The large-sample power of rank order tests in the two-sample problem.
\newblock {\em Ann. Math. Statist.}, pages 352--374.

\bibitem[Efromovich(2008)Efromovich]{efromovich2008}
Efromovich, S. (2008).
\newblock {\em Nonparametric curve estimation: methods, theory, and
  applications\/}.
\newblock Springer Science \& Business Media.

\bibitem[Eggermont and LaRiccia(2000)Eggermont and LaRiccia]{Eggermont2000}
Eggermont, P. P.~B. and LaRiccia, V.~N. (2000).
\newblock {Maximum likelihood estimation of smooth monotone and unimodal
  densities}.
\newblock {\em Ann. Statist.}, {\bf 28}, 922 -- 947.

\bibitem[Fernholz(2012)Fernholz]{vonmises1}
Fernholz, L.~T. (2012).
\newblock {\em Von Mises calculus for statistical functionals\/}, volume~19.
\newblock Springer Science \& Business Media.

\bibitem[Gibbs and Su(2002)Gibbs and Su]{gibbs2002}
Gibbs, A.~L. and Su, F.~E. (2002).
\newblock On choosing and bounding probability metrics.
\newblock {\em International statistical review\/}, {\bf 70}(3), 419--435.

\bibitem[Gonz{\'a}lez-Castro {\em et~al.}(2010)Gonz{\'a}lez-Castro,
  Alaiz-Rodr{\'\i}guez, Fern{\'a}ndez-Robles, Guzm{\'a}n-Mart{\'\i}nez, and
  Alegre]{gonzalez2010}
Gonz{\'a}lez-Castro, V., Alaiz-Rodr{\'\i}guez, R., Fern{\'a}ndez-Robles, L.,
  Guzm{\'a}n-Mart{\'\i}nez, R., and Alegre, E. (2010).
\newblock Estimating class proportions in boar semen analysis using the
  {Hellinger} distance.
\newblock {\em IEA/AIE\/}, pages 284--293.

\bibitem[Gonz{\'a}lez-Castro {\em et~al.}(2013)Gonz{\'a}lez-Castro,
  Alaiz-Rodr{\'\i}guez, and Alegre]{gonzalez}
Gonz{\'a}lez-Castro, V., Alaiz-Rodr{\'\i}guez, R., and Alegre, E. (2013).
\newblock Class distribution estimation based on the {Hellinger} distance.
\newblock {\em Inf. Sci.}, {\bf 218}, 146--164.

\bibitem[Gray {\em et~al.}(2019)Gray, Huang, Grunenberg, Laher, Roux,
  Andersen-Nissen, De~Rosa, Flach, Randhawa, Jensen, Swann, Bekker, Innes,
  Lazarus, Morris, Mkhize, Ferrari, Montefiori, Shen, Sawant, Yates, Hural,
  Isaacs, Phogat, DiazGranados, Lee, Sinangil, Michael, Robb, Kublin, Gilbert,
  McElrath, Tomaras, and Corey]{HVTN097}
Gray, G.~E., Huang, Y., Grunenberg, N., Laher, F., Roux, S., Andersen-Nissen,
  E., De~Rosa, S.~C., Flach, B., Randhawa, A.~K., Jensen, R., Swann, E.~M.,
  Bekker, L.-G., Innes, C., Lazarus, E., Morris, L., Mkhize, N.~N., Ferrari,
  G., Montefiori, D.~C., Shen, X., Sawant, S., Yates, N., Hural, J., Isaacs,
  A., Phogat, S., DiazGranados, C.~A., Lee, C., Sinangil, F., Michael, N.~L.,
  Robb, M.~L., Kublin, J.~G., Gilbert, P.~B., McElrath, M.~J., Tomaras, G.~D.,
  and Corey, L. (2019).
\newblock Immune correlates of the thai rv144 hiv vaccine regimen in south
  africa.
\newblock {\em Sci. Transl. Med.}, {\bf 11}.

\bibitem[Groeneboom(1984)Groeneboom]{groeneboom1984}
Groeneboom, P. (1984).
\newblock Estimating a monotone density.
\newblock {\em Department of Mathematical Statistics\/}.

\bibitem[Groeneboom(1989)Groeneboom]{GROENEBOOMP1989}
Groeneboom, P. (1989).
\newblock Brownian-motion with a parabolic drift and airy functions.
\newblock {\em Probability theory and related fields\/}, {\bf 81}(1), 79--109.

\bibitem[Hall and Huang(2002)Hall and Huang]{weightedKDE}
Hall, P. and Huang, L.-S. (2002).
\newblock Unimodal density estimation using kernel methods.
\newblock {\em Statist. Sinica\/}, pages 965--990.

\bibitem[Hannah and Dunson(2012)Hannah and Dunson]{hannah2012}
Hannah, L. and Dunson, D. (2012).
\newblock Ensemble methods for convex regression with applications to geometric
  programming based circuit design.
\newblock {\em arXiv:1206.4645\/}.

\bibitem[Haynes {\em et~al.}(2012)Haynes, Gilbert, McElrath, Zolla-Pazner,
  Tomaras, Alam, Evans, Montefiori, Karnasuta, Sutthent, Liao, DeVico, Lewis,
  Williams, Pinter, Fong, Janes, DeCamp, Huang, Rao, Billings, Karasavvas,
  Robb, Ngauy, de~Souza, Paris, Ferrari, Bailer, Soderberg, Andrews, Berman,
  Frahm, De~Rosa, Alpert, Yates, Shen, Koup, Pitisuttithum, Kaewkungwal,
  Nitayaphan, Rerks-Ngarm, Michael, and Kim]{RV144_immuno}
Haynes, B.~F., Gilbert, P.~B., McElrath, M.~J., Zolla-Pazner, S., Tomaras,
  G.~D., Alam, S.~M., Evans, D.~T., Montefiori, D.~C., Karnasuta, C., Sutthent,
  R., Liao, H.-X., DeVico, A.~L., Lewis, G.~K., Williams, C., Pinter, A., Fong,
  Y., Janes, H., DeCamp, A., Huang, Y., Rao, M., Billings, E., Karasavvas, N.,
  Robb, M.~L., Ngauy, V., de~Souza, M.~S., Paris, R., Ferrari, G., Bailer,
  R.~T., Soderberg, K.~A., Andrews, C., Berman, P.~W., Frahm, N., De~Rosa,
  S.~C., Alpert, M.~D., Yates, N.~L., Shen, X., Koup, R.~A., Pitisuttithum, P.,
  Kaewkungwal, J., Nitayaphan, S., Rerks-Ngarm, S., Michael, N.~L., and Kim,
  J.~H. (2012).
\newblock Immune-correlates analysis of an {HIV-1} vaccine efficacy trial.
\newblock {\em N. Engl. J. Med.}, {\bf 366}, 1275--1286.

\bibitem[Jankowski(2010)Jankowski]{Jankowski2010}
Jankowski, H.~K. (2010).
\newblock Discussion of ``estimation of multi-dimensional log-concave density''
  by {C}ule, {S}amworth, and {S}tewart.

\bibitem[Johnson {\em et~al.}(2018)Johnson, Jiang, {\em et~al.}]{johnson2018}
Johnson, A.~L., Jiang, D.~R., {\em et~al.} (2018).
\newblock Shape constraints in economics and operations research.
\newblock {\em Statist. Sci.}, {\bf 33}, 527--546.

\bibitem[Kandasamy {\em et~al.}(2015)Kandasamy, Krishnamurthy, P{\'o}czos,
  Wasserman, and Robins]{robin2015}
Kandasamy, K., Krishnamurthy, A., P{\'o}czos, B., Wasserman, L.~A., and Robins,
  J.~M. (2015).
\newblock Nonparametric von mises estimators for entropies, divergences and
  mutual informations.
\newblock In {\em Adv. Neural. Inf. Process. Syst.}, pages 397--405.

\bibitem[Kaur {\em et~al.}(1994)Kaur, Prakasa~Rao, and Singh]{kaur}
Kaur, A., Prakasa~Rao, B., and Singh, H. (1994).
\newblock Testing for second-order stochastic dominance of two distributions.
\newblock {\em Econometric Theory\/}, {\bf 10}, 849--866.

\bibitem[Kiefer and Wolfowitz(1976)Kiefer and Wolfowitz]{kiefer}
Kiefer, J. and Wolfowitz, J. (1976).
\newblock Asymptotically minimax estimation of concave and convex distribution
  functions.
\newblock {\em Z. Wahr. verw. Gabiete\/}, {\bf 34}, 73--85.

\bibitem[Kim {\em et~al.}(2018)Kim, Guntuboyina, Samworth, {\em
  et~al.}]{kim2018}
Kim, A.~K., Guntuboyina, A., Samworth, R.~J., {\em et~al.} (2018).
\newblock Adaptation in log-concave density estimation.
\newblock {\em Annals of Statistics\/}, {\bf 46}(5), 2279--2306.

\bibitem[Kulikov and Lopuhaä(2006)Kulikov and Lopuhaä]{Kulikov2006}
Kulikov, V. and Lopuhaä, H. (2006).
\newblock The limit process of the difference between the empirical
  distribution function and its concave majorant.
\newblock {\em Statistics \& Probability Letters\/}, {\bf 76}, 1781--1786.

\bibitem[Laha(2021)Laha]{laha2021}
Laha, N. (2021).
\newblock Adaptive estimation in symmetric location model under log-concavity
  constraint.
\newblock {\em Electron. J. Stat.}, {\bf 15}, 2939--3014.

\bibitem[Laha and Luedtke(2020)Laha and Luedtke]{SDNNtests}
Laha, N. and Luedtke, A. (2020).
\newblock {SDNNtests}.
\newblock \url{https://github.com/nilanjanalaha/SDNNtests/}.

\bibitem[Le~Breton(1991)Le~Breton]{le1991}
Le~Breton, M. (1991).
\newblock Stochastic orders in welfare economics.
\newblock {\em Lecture Notes-Monograph Series\/}, pages 190--206.

\bibitem[Ledwina and Wy{\l}upek(2012)Ledwina and Wy{\l}upek]{LW2012}
Ledwina, T. and Wy{\l}upek, G. (2012).
\newblock Two-sample test against one-sided alternatives.
\newblock {\em Scand. J. Stat.}, {\bf 39}, 358--381.

\bibitem[Ledwina and Wy{\l{}}upek(2013)Ledwina and Wy{\l{}}upek]{LW2013}
Ledwina, T. and Wy{\l{}}upek, G. (2013).
\newblock Tests for first-order stochastic dominance.
\newblock {\em preprint\/}.

\bibitem[Lee and Wolfe(1976)Lee and Wolfe]{lee1976}
Lee, Y.~J. and Wolfe, D.~A. (1976).
\newblock A distribution-free test for stochastic ordering.
\newblock {\em J. Amer. Statist. Assoc.}, {\bf 71}, 722--727.

\bibitem[Lehmann(1975)Lehmann]{lehmann1975}
Lehmann, E.~L. (1975).
\newblock {\em {Nonparametrics:} {Statistical} methods based on ranks\/}.
\newblock San Francisco : Holden-Day.

\bibitem[Leshno and Levy(2004)Leshno and Levy]{leshno2004}
Leshno, M. and Levy, H. (2004).
\newblock Stochastic dominance and medical decision making.
\newblock {\em Health. Care. Manag. Sci.}, {\bf 7}, 207--215.

\bibitem[Levy(1992)Levy]{levy1992}
Levy, H. (1992).
\newblock Stochastic dominance and expected utility: Survey and analysis.
\newblock {\em J. Manag. Sci.}, {\bf 38}(4), 555--593.

\bibitem[Lopuha{\"a} and Musta(2019)Lopuha{\"a} and Musta]{lopuhaa2019}
Lopuha{\"a}, H.~P. and Musta, E. (2019).
\newblock A central limit theorem for the hellinger loss of grenander-type
  estimators.
\newblock {\em Statistica Neerlandica\/}, {\bf 73}(2), 180--196.

\bibitem[Mammen {\em et~al.}(2001)Mammen, Marron, Turlach, and
  Wand]{mammen2001}
Mammen, E., Marron, J.~S., Turlach, B.~A., and Wand, M.~P. (2001).
\newblock A general projection framework for constrained smoothing.
\newblock {\em Stat. Sci.}, {\bf 16}(3), 232--248.

\bibitem[Meyer(2012)Meyer]{meyer2012}
Meyer, M.~C. (2012).
\newblock Nonparametric estimation of a smooth density with shape restrictions.
\newblock {\em Stat. Sin.}, pages 681--701.

\bibitem[Miladinovic {\em et~al.}(2014)Miladinovic, Kumar, Mhaskar, and
  Djulbegovic]{miladinovic2014}
Miladinovic, B., Kumar, A., Mhaskar, R., and Djulbegovic, B. (2014).
\newblock Benchmarks for detecting ‘breakthroughs’ in clinical trials:
  empirical assessment of the probability of large treatment effects using
  kernel density estimation.
\newblock {\em BMJ open\/}, {\bf 4}, e005249.

\bibitem[Mukherjee {\em et~al.}(2019)Mukherjee, Sen, {\em et~al.}]{ramu2018}
Mukherjee, R., Sen, B., {\em et~al.} (2019).
\newblock On efficiency of the plug-in principle for estimating smooth
  integrated functionals of a nonincreasing density.
\newblock {\em Electron. J. Stat.}, {\bf 13}, 4416--4448.

\bibitem[Nielsen and Nock(2014)Nielsen and Nock]{chi2}
Nielsen, F. and Nock, R. (2014).
\newblock On the chi square and higher-order chi distances for approximating
  f-divergences.
\newblock {\em {IEEE} Signal Process. Lett.}, {\bf 21}, 10--13.

\bibitem[Pal {\em et~al.}(2007)Pal, Woodroofe, and Meyer]{exist}
Pal, J.~K., Woodroofe, M., and Meyer, M. (2007).
\newblock Estimating a p{\'o}lya frequency function$_2$.
\newblock {\em Lecture Notes-Monograph Series\/}, {\bf 54}, 239--249.

\bibitem[Pfanzagl and Wefelmeyer(1985)Pfanzagl and Wefelmeyer]{pfanzagl2013}
Pfanzagl, J. and Wefelmeyer, W. (1985).
\newblock {\em Asymptotic expansions for general statistical models\/}.
\newblock Lecture notes in statistics. Springer-Verlag.

\bibitem[Pyke and Shorack(1968)Pyke and Shorack]{pyke1968}
Pyke, R. and Shorack, G.~R. (1968).
\newblock Weak convergence of a two-sample empirical process and a new approach
  to {Chernoff-Savage} theorems.
\newblock {\em Ann. Math. Statist.}, {\bf 39}, 755--771.

\bibitem[Rao(1969)Rao]{prokasa}
Rao, B.~P. (1969).
\newblock Estimation of a unimodal density.
\newblock {\em Sankhya Ser. A\/}, {\bf 31}, 23--36.

\bibitem[Rao(1995)Rao]{rao1995}
Rao, C.~R. (1995).
\newblock A review of canonical coordinates and an alternative to
  correspondence analysis using {Hellinger} distance.
\newblock {\em Q{\"u}estii{\'o}: quaderns d'estad{\'\i}stica i investigaci{\'o}
  operativa\/}, {\bf 19}.

\bibitem[Rerks-Ngarm {\em et~al.}(2009)Rerks-Ngarm, Pitisuttithum, Nitayaphan,
  Kaewkungwal, Chiu, Paris, Premsri, Namwat, de~Souza, Adams, {\em
  et~al.}]{RV144}
Rerks-Ngarm, S., Pitisuttithum, P., Nitayaphan, S., Kaewkungwal, J., Chiu, J.,
  Paris, R., Premsri, N., Namwat, C., de~Souza, M., Adams, E., {\em et~al.}
  (2009).
\newblock Vaccination with alvac and aidsvax to prevent hiv-1 infection in
  thailand.
\newblock {\em N. Engl. J. Med.}, {\bf 361}, 2209--2220.

\bibitem[Robins {\em et~al.}(2009)Robins, Li, Tchetgen, and van~der
  Vaart]{robins2009}
Robins, J., Li, L., Tchetgen, E., and van~der Vaart, A.~W. (2009).
\newblock Quadratic semiparametric von mises calculus.
\newblock {\em Metrika\/}, {\bf 69}, 227--247.

\bibitem[Rockafellar(1970)Rockafellar]{rockafellar}
Rockafellar, R.~T. (1970).
\newblock {\em Convex {Analysis}\/}.
\newblock Princeton {University} {Press}.

\bibitem[Rudemo(1982)Rudemo]{rudemo}
Rudemo, M. (1982).
\newblock Empirical choice of histograms and kernel density estimators.
\newblock {\em Scand. J. Stat.}, {\bf 9}, 65--78.

\bibitem[Samworth(2018)Samworth]{review}
Samworth, R.~J. (2018).
\newblock Recent progress in log-concave density estimation.
\newblock {\em Statist. Sci.}, {\bf 33}, 493--509.

\bibitem[Samworth and Sen(2018)Samworth and Sen]{samworth2018}
Samworth, R.~J. and Sen, B. (2018).
\newblock Editorial: Special issue on ``nonparametric inference under shape
  constraints”.
\newblock {\em Statist. Sci.}, {\bf 33}, 469--472.

\bibitem[Shorack and Wellner(2009)Shorack and Wellner]{shorack}
Shorack, G. and Wellner, J. (2009).
\newblock {\em Empirical Processes with Applications to Statistics\/}.
\newblock Society for Industrial and Applied Mathematics.

\bibitem[Shorack(1984)Shorack]{shorack1984}
Shorack, G.~R. (1984).
\newblock Empirical and rank processes of observations and residuals.
\newblock {\em Canad. J. Statist.}, {\bf 12}, 319--332.

\bibitem[Shorack(2000)Shorack]{shorack2000}
Shorack, G.~R. (2000).
\newblock {\em Probability for {S}tatisticians\/}.
\newblock Springer.

\bibitem[Sriboonchita {\em et~al.}(2009)Sriboonchita, Nguyen, Wong, and
  Dhompongsa]{sriboonchita2009}
Sriboonchita, S., Nguyen, H.~T., Wong, W.-K., and Dhompongsa, S. (2009).
\newblock {\em Stochastic dominance and applications to finance, risk and
  economics\/}.
\newblock Chapman and Hall/CRC.

\bibitem[Stinnett and Mullahy(1998)Stinnett and Mullahy]{stinnett1998}
Stinnett, A.~A. and Mullahy, J. (1998).
\newblock Net health benefits: a new framework for the analysis of uncertainty
  in cost-effectiveness analysis.
\newblock {\em Med. Decis. Making.}, {\bf 18}, S68--S80.

\bibitem[Stupfler(2014)Stupfler]{kderate}
Stupfler, G. (2014).
\newblock On the weak convergence of kernel density estimators in lp spaces.
\newblock {\em J. Nonparametr. Stat.}, {\bf 26}, 721--735.

\bibitem[Turnbull and Ghosh(2014)Turnbull and Ghosh]{bernstein}
Turnbull, B.~C. and Ghosh, S.~K. (2014).
\newblock Unimodal density estimation using bernstein polynomials.
\newblock {\em Comput. Statist. Data Anal.}, {\bf 72}, 13 -- 29.

\bibitem[van~de Geer(2000)van~de Geer]{vandergeer}
van~de Geer, S. (2000).
\newblock {\em Empirical Processes in {M}-Estimation\/}.
\newblock Cambridge University Press.

\bibitem[{V}an~der Vaart(1998){V}an~der Vaart]{vdv}
{V}an~der Vaart, A. (1998).
\newblock {\em Asymptotic Statistics\/}.
\newblock Cambridge University Press.

\bibitem[Walther(2002)Walther]{walther2002}
Walther, G. (2002).
\newblock Detecting the presence of mixing with multiscale maximum likelihood.
\newblock {\em J. Amer. Statist. Assoc.}, {\bf 97}, 508--513.

\bibitem[Walther {\em et~al.}(2009)Walther {\em et~al.}]{walther2009}
Walther, G. {\em et~al.} (2009).
\newblock Inference and modeling with log-concave distributions.
\newblock {\em Statist. Sci.}, {\bf 24}(3), 319--327.

\bibitem[Wand and Jones(1994)Wand and Jones]{wand1994}
Wand, M.~P. and Jones, M.~C. (1994).
\newblock Multivariate plug-in bandwidth selection.
\newblock {\em Comput. Statist.}, {\bf 9}, 97--116.

\bibitem[Whang(2019)Whang]{whang2019}
Whang, Y.~J. (2019).
\newblock {\em Econometric Analysis of Stochastic Dominance: Concepts, Methods,
  Tools, and Applications\/}.
\newblock Cambridge University Press.

\bibitem[Wolters(2012)Wolters]{greedy}
Wolters, M. (2012).
\newblock A greedy algorithm for unimodal kernel density estimation by data
  sharpening.
\newblock {\em J. Stat. Softw.}, {\bf 47}, 1--26.

\bibitem[Wolters and Braun(2018)Wolters and Braun]{AdjustedKDE}
Wolters, M.~A. and Braun, W.~J. (2018).
\newblock Enforcing shape constraints on a probability density estimate using
  an additive adjustment curve.
\newblock {\em Comm. Statist. Simulation Comput.}, {\bf 47}, 672--691.

\bibitem[Woodroofe and Sun(1993)Woodroofe and Sun]{Woodroofe1993}
Woodroofe, M. and Sun, J. (1993).
\newblock A penalized maximum likelihood estimate of f (0+) when f is
  non-increasing.
\newblock {\em Statistica Sinica\/}, {\bf 3}.

\bibitem[Yamanishi {\em et~al.}(2004)Yamanishi, Takeuchi, Williams, and
  Milne]{yamanishi2004}
Yamanishi, K., Takeuchi, J.~I., Williams, G., and Milne, P. (2004).
\newblock On-line unsupervised outlier detection using finite mixtures with
  discounting learning algorithms.
\newblock {\em Data Min. Knowl. Discov.}, {\bf 8}, 275--300.

\end{thebibliography}
\FloatBarrier
\appendix

  \section{Proofs for Section~\ref{sec: tests}}
   \label{sec: appendix: tests}
  
 Before proceeding any further, we introduce some new notations. We let $l^{\infty}$ denote the collection of all bounded functions on $\RR$, which we equip with the uniform metric $\|\cdot\|_{\infty}$. For any function $\mu:\RR\mapsto\RR$, we define the norm $\|\cdot\|_{a}^{b}$ by
 $\|\mu\|_{a}^{b}=\sup_{x\in[a,b]}|\mu(x)|$.
 Also, we denote the boundary of a set $A$ by $\text{bd}(A)$.
 
 
  Suppose $(\Omega,\iint(\Hm_0),\mathcal{P})$ is the common probability space corresponding to the $X_i$'s and $Y_j$'s. For the rest of this section, we let $\to_p$ and $\as$ correspond to this probability space. Since $F$ and $G$ are continuous, using the construction of Section 1.1 of \cite{shorack1984} \citep[see also p.93 of][]{shorack}, we can show that there exist two independent Brownian bridges $\mathbb{V}_1$ and $\mathbb{V}_2$ on  $(\Omega,\iint(\Hm_0),\mathcal{P})$ such  that 
    \begin{equation}\label{intheorem: convg of empirical process : F}
 \norm{\sqrt{m}(\Fmx-\Fx)-\mathbb{V}_1\circ \Fx}_{\infty}\as 0\quad\text{ as }m\to\infty,
 \end{equation}
 and
 \begin{equation}\label{intheorem: convg of empirical process : G}
 \norm{\sqn(\Fmy-\Fy)-\mathbb{V}_2\circ\Fy}_{\infty}\as 0\quad\text{ as }n\to\infty.
 \end{equation}
 Let us denote
\begin{equation}\label{def: U}
    \mathbb{U} =\lambda ^{1/2}\mathbb{V}_2-(1-\lambda )^{1/2} \mathbb{V}_1,
\end{equation}
where $\lambda $ is so that  $m/N\to\lambda$. Note that $\mathbb{U}$ is also distributed as a Brownian bridge. We will show that the asymptotic distributions of our test statistics depend on $\mathbb{U}$. Also, for ours $F$ and $G$,
\[\|\Fmx-\Fx\|_\infty\as 0,\quad \|\Fmy-\Fy\|_\infty\as 0,\quad\|\mathbb H_N-H\|_\infty\as 0.\]

In the sequel, we will also use the following fact on the convergence of the quantiles of $\mathbb H_N$ and $D_{p,m,n}$, often without mentioning. As a corollary to Fact~\ref{fact: quantile}A, it follows that  $\text{dist}(D_{p,m,n},D_p)\as 0$ under Condition \ref{assump: Nonparametric}.
 \begin{fact}
 \label{fact: quantile}
  Suppose Condition \ref{assump: Nonparametric} holds and $U$ is the open set such that $D_p\subset U\subset \supp(f)\cup \supp(g)$. Further suppose $p'<p$ such that $D_{p'}\subset U$. Then
  \begin{itemize}
      \item[A.] For any $t\in[p',1-p']$, $\mathbb H_N^{-1}(t)\as H^{-1}(t)$.
      \item[B.] $D_{p,m,n}\subset D_{p'}$ for all sufficiently large $m$ and $n$ with probability one.
      \item[C.] For any $q>p$,  $D_{q}\subset D_{p,m,n}$ for all sufficiently large $m$ and $n$ with probability one.
  \end{itemize}
    \end{fact}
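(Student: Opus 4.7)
The plan is to reduce all three parts to two ingredients: (i) the uniform almost-sure convergence $\|\mathbb{H}_N-H\|_\infty\to 0$, which follows from the two-sample Glivenko--Cantelli theorem applied to $\Fmx$ and $\Fmy$ combined with $\mathbb{H}_N=(m/N)\Fmx+(n/N)\Fmy$ and $m/N\to\lambda$; and (ii) the strict monotonicity and continuity of $H$ on a neighborhood of the quantile points of interest. The latter is really the only place where Condition~\ref{assump: Nonparametric} is used, and once it is in hand the rest is routine generalized-inverse bookkeeping.

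First I would establish the key monotonicity property. Since $\supp(f)\cup\supp(g)=\{x:f(x)+g(x)>0\}$ contains the open set $U\supset D_{p'}$, the mixture density $\lambda f+(1-\lambda)g$ is strictly positive on $D_{p'}$. Together with the continuity of $F$ and $G$, this gives that $H$ is continuous on $\RR$ and strictly increasing on $D_{p'}$. Consequently, the generalized inverse $H^{-1}$ is single-valued, continuous, and strictly increasing on $[p',1-p']$.

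For part A, I would invoke the standard squeeze argument. Fix $t\in[p',1-p']$ and $\varepsilon>0$ small enough that $H^{-1}(t)\pm\varepsilon\in D_{p'}$. By strict monotonicity of $H$ on $D_{p'}$, the number $\delta:=\min\{t-H(H^{-1}(t)-\varepsilon),\,H(H^{-1}(t)+\varepsilon)-t\}$ is strictly positive. On the event $\{\|\mathbb{H}_N-H\|_\infty<\delta/2\}$, which occurs for all sufficiently large $N$ with probability one, one has $\mathbb{H}_N(H^{-1}(t)-\varepsilon)<t<\mathbb{H}_N(H^{-1}(t)+\varepsilon)$, hence $|\mathbb{H}_N^{-1}(t)-H^{-1}(t)|\le\varepsilon$. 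Letting $\varepsilon\downarrow 0$ yields $\mathbb{H}_N^{-1}(t)\as H^{-1}(t)$.

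For parts B and C, I would combine part A with the strict ordering of the relevant quantiles of $H$. Since $p'<p<1-p<1-p'$ and $p'<p<q<1-q<1-p<1-p'$, all quantile arguments lie in $[p',1-p']$, so strict monotonicity of $H$ on $D_{p'}$ gives $H^{-1}(p')<H^{-1}(p)<H^{-1}(q)<H^{-1}(1-q)<H^{-1}(1-p)<H^{-1}(1-p')$. Applying part A at $t=p$ and $t=1-p$, one has $\mathbb{H}_N^{-1}(p)\as H^{-1}(p)$ and $\mathbb{H}_N^{-1}(1-p)\as H^{-1}(1-p)$. Combined with the strict ordering above, eventually almost surely one obtains $H^{-1}(p')<\mathbb{H}_N^{-1}(p)$ and $\mathbb{H}_N^{-1}(1-p)<H^{-1}(1-p')$, giving part B; and $\mathbb{H}_N^{-1}(p)<H^{-1}(q)$ and $H^{-1}(1-q)<\mathbb{H}_N^{-1}(1-p)$, giving part C. The main (and only) subtlety in the argument is the reduction of Condition~\ref{assump: Nonparametric} to strict monotonicity of $H$ on $D_{p'}$; beyond that, everything follows from Glivenko--Cantelli and a standard squeeze on generalized inverses.
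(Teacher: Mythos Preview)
Your proposal is correct and follows essentially the same route as the paper: establish $\|\mathbb H_N-H\|_\infty\as 0$, use positivity of $h=\lambda f+(1-\lambda)g$ on $U$ to get strict monotonicity and continuity of $H$ (hence of $H^{-1}$) there, deduce part~A, and then read off B and C from the strict ordering of the population quantiles. The paper outsources the quantile-convergence step to a cited lemma, whereas you give the direct squeeze argument; the two are equivalent. One small refinement: carry out your squeeze on the open set $U$ rather than on $D_{p'}$, so that for the endpoint values $t\in\{p',1-p'\}$ you can still choose $\varepsilon>0$ with $H^{-1}(t)\pm\varepsilon\in U$; as written, $H^{-1}(t)\pm\varepsilon\in D_{p'}$ need not hold at the endpoints.
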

    
  \begin{proof}[Proof of Fact~\ref{fact: quantile}]
  Since $h=\lambda f+(1-\lambda)g>0$ on $U$, it follows that $H^{-1}$ is continuous on $U$ \citep[cf. Proposition A.7, pp. 98 of][]{bobkovbig}. Also,  under Condition \ref{assump: Nonparametric}, $\|\mathbb H_N-H\|_\infty\as 0$. Since $D_{p'}\subset U$, $H^{-1}(t)\in U$ for any $t\in [p',1-p']$. Hence, $\mathbb H_N^{-1}(t)\as H^{-1}(t)$ by Lemma A5, pp. 96 of \cite{bobkovbig}. Hence part (A) of Fact~\ref{fact: quantile} is proved.

  Part (A) of Fact~\ref{fact: quantile} implies
  $\mathbb H_N^{-1}(p)\as H^{-1}(p)$ and $\mathbb H_N^{-1}(1-p)\as H^{-1}(1-p)$.  To prove part B and C of Fact~\ref{fact: quantile}, therefore, it suffices to show that $H^{-1}(p')<H^{-1}(p)$, $H^{-1}(1-p')>H^{-1}(1-p)$, $H^{-1}(p)<H^{-1}(q)$, and $H^{-1}(1-q)<H^{-1}(1-p)$. Since $q<p<p'$ and  $D_{q}\subset D_p\subset D_{p'}\subset U$, it is enough to show that $H^{-1}$ is continuous and strictly increasing on $U$ . The latter holds if $H$ s continuous and strictly increasing on $U$. The proof now  follows  since we already showed that $H$ has a positive density on $U$.
 \end{proof}
{\color{black}
 \subsubsection{Proof of Lemma~\ref{Lemma: geometry of H0}}
 To prove Lemma~\ref{Lemma: geometry of H0}, we will use an alternative definition of $\Hm_0$ and $\Hm_1$.
Note that continuous distribution functions $(F,G)\in \Hm_0$ if and only if
 $\sup_{x\in D_p}[F(x)-G(x)]\geq 0$.
 Because $D_p$ is compact, such a pair is in $\Hm_1$ if and only if
 $\sup_{x\in D_p}[F(x)-G(x)]< 0$.
 We will prove the current lemma in some steps.\\
 \textbf{Step 1: closure of $\Hm_0$}
 Suppose $(F_n,G_n)\in \Hm_0$ converges to $(F,G)\in \mathcal F\times \mathcal F$ with respect to the metric $d_2$. Then
 \begin{align*}
 \MoveEqLeft \sup_{x\in D_p}(F(x)-G(x))
 \geq  -\sup_{x\in D_p}(F_n(x)-F(x))\\
 &\ + \sup_{x\in D_p}(F_n(x)-G_n(x))- \sup_{x\in D_p}(G(x)-G_n(x)),
\end{align*}
which is bounded below by $ -\|F_n-F\|_\infty -\|G_n-G\|_\infty$.
Taking $n\to\infty$ yields $\sup_{x\in D_p}(F(x)-G(x))\geq 0$. Therefore,
\[ \text{cl}(\Hm_0)\subset \lbs (F,G)\in\mathcal F\times \mathcal F: \sup_{x\in D_p}(F(x)-G(x))\geq 0\rbs=\Hm_0.\]
Hence, $\text{cl}(\Hm_0)=\Hm_0$, i.e. $\Hm_0$ is closed in $\mathcal F\times \mathcal F$.

\textbf{Step 2: interior and boundary of $\Hm_0$}\\
If $(F,G)\in \Hm_0$, either
$\sup_{x\in D_p}(F(x)-G(x))>0$ or $\sup_{x\in D_p}(F(x)-G(x))=0$. 
Let us denote 
\[\mathcal A=\{(F,G)\in\mathcal F\times \mathcal F: \sup_{x\in D_p}(F(x)-G(x))=0\}.\]
By Lemma~\ref{Lemma: geometry: boundary lemma}, $\mathcal A\subset \text{bd}(\Hm_0)$ because this set is not a part of the interior.  Hence,
\[\iint(\Hm_0)\subset \{(F,G)\in \mathcal F\times \mathcal F: \sup_{x\in D_p}(F(x)-G(x))>0\}.\]
Therefore, to prove
\begin{equation}\label{inlemma: geometry: H0 interior}
     \iint(\Hm_0)=\{(F,G)\in \mathcal F\times \mathcal F: \sup_{x\in D_p}(F(x)-G(x))>0\},
\end{equation}
it suffices to prove
\begin{equation}\label{inlemma: geometry: H0 interior: inclusion}
  \{(F,G)\in \mathcal F\times \mathcal F: \sup_{x\in D_p}(F(x)-G(x))>0\}\subset \iint(\Hm_0).
\end{equation}

Suppose $(F,G)\in \Hm_0$ satisfies $\sup_{x\in D_p}(F(x)-G(x))>\e$ for some $\e>0$. Consider any $(\tilde F,\tilde G)$ such that $d_2((\tilde F,\tilde G),(F,G))<\e/3$, which means $\|\tilde F-F\|_\infty<\e/3$ and $\|\tilde G-G\|_\infty<\e/3$.
Since $F-G$ is continuous and $D_p$ is compact, the supremum of $F-G$ over $D_p$ is attained at some $z\in D_p$. We have
 \begin{align*}
  \MoveEqLeft  \tilde F(z)-\tilde G(z) 
  \geq  F(z)-G(z)-\|\tilde F-F\|_\infty-\|\tilde G-G\|_\infty,
 \end{align*}
 which is greater than $\e/3$. Thus $\sup_{x\in D_p}(\tilde F(x)-\tilde G(x))>0$ which implies $(F,G)\in\iint(\Hm_0)$. Therefore, we have established \eqref{inlemma: geometry: H0 interior}.
Because $\text{cl}(\Hm_0)=\Hm_0$, \eqref{inlemma: geometry: H0 interior} also implies $\text{bd}(\Hm_0)=\mathcal A$.

  
 \textbf{Step 3:  boundary of $\Hm_1$} \\
 Our first step is to show
 \begin{align}\label{inlemma: geometry: H1  closure: inclusion}
    \text{cl}(\Hm_1)\subset \lbs (F,G)\in \mathcal F\times\mathcal F:\sup_{x\in D_p}(F(x)-G(x))\leq 0\rbs.
\end{align}
 To this end, consider $(F_n,G_n)\in \Hm_1$ converging to $(F,G)\in \mathcal F\times\mathcal F$ in $d_2$. Then 
\begin{align*}
 \sup_{x\in D_p}(F(x)-G(x))
 \leq  \sup_{x\in D_p}(F(x)-F_n(x))+ \sup_{x\in D_p}(F_n(x)-G_n(x))+ \sup_{x\in D_p}(G_n(x)-G(x)),
\end{align*}
which is bounded above by $ \|F_n-F\|_\infty +\|G_n-G\|_\infty$.
Letting $n\to\infty$, we obtain 
$\sup_{x\in D_p}(F(x)-G(x))\leq 0$ implying $(F,G)\in \Hm_1$. Hence, \eqref{inlemma: geometry: H1  closure: inclusion} holds, which implies $\text{cl}(\Hm_1)\subset \mathcal A\cup  \Hm_1$. Using Lemma \ref{Lemma: geometry: boundary lemma} we obtain
\[ \text{cl}(\Hm_1)\subset\mathcal A\cup  \Hm_1\subset \text{bd}(\Hm_1)\cup \Hm_1=\text{cl}(\Hm_1).\]
Hence, the inclusion in \eqref{inlemma: geometry: H1  closure: inclusion} is actually an equality and $\mathcal A\cup  \Hm_1= \text{bd}(\Hm_1)\cup \Hm_1$. The proof will be complete if we can show that $\iint(\Hm_1)=\Hm_1$, because then $\text{bd}(\Hm_1)=\mathcal A$ follows. To that end, consider $(F,G)\in \Hm_1$. Suppose $\sup_{x\in D_p}(F(x)-G(x))<-\e$. Let $(\tilde F,\tilde G)\in\FF$ be such that $d_2((\tilde F,\tilde G),(F,G))<\e/3$. Then 
\begin{align*}
  \MoveEqLeft \sup_{x\in D_p}(\tilde F(x)-\tilde G(x)) 
  \leq  \sup_{x\in D_p}( F(x)- G(x))+\|\tilde F-F\|_\infty+\|\tilde G-G\|_\infty,
\end{align*}
which is less than $-\e/3$. Therefore, $(\tilde F,\tilde G)\in \Hm_1$, which completes the proof.
  \hfill $\Box$



 \begin{lemma}\label{Lemma: geometry: boundary lemma}
 Under the set-up of Lemma~\ref{Lemma: geometry of H0},
 \[\mathcal A:=\lbs (F,G)\in\mathcal F\times \mathcal F: \sup_{x\in D_p}(F(x)-G(x))=0\rbs\subset \text{bd}(\Hm_1).\]
 \end{lemma}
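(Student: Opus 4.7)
My plan is to prove this by explicit perturbation: for any $(F,G)\in\mathcal A$, I will exhibit a sequence $(F_n,G_n)\in \Hm_1$ converging to $(F,G)$ in the $d_2$ metric, which puts $(F,G)$ in $\text{cl}(\Hm_1)$; since $(F,G)\notin\Hm_1$ (its supremum equals $0$, not $<0$), it cannot lie in $\iint(\Hm_1)\subset\Hm_1$, and hence $(F,G)\in\text{cl}(\Hm_1)\setminus\iint(\Hm_1)=\text{bd}(\Hm_1)$.

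The construction of the perturbing sequence is the main step. Writing $D_p=[a,b]$, I would fix an auxiliary pair $(F_*,G_*)\in\mathcal F\times\mathcal F$ with the property that $G_*(x)-F_*(x)\geq \delta>0$ for all $x\in D_p$. A very simple choice is to let $F_*$ be the CDF of the uniform distribution on $[b+1,b+2]$ and $G_*$ the CDF of the uniform distribution on $[a-2,a-1]$; both are continuous, and on $D_p$ one has $F_*\equiv 0$ and $G_*\equiv 1$, so $G_*-F_*\equiv 1$ on $D_p$. I then define the mixtures
\[
F_n=\bigl(1-\tfrac{1}{n}\bigr)F+\tfrac{1}{n}F_*,\qquad G_n=\bigl(1-\tfrac{1}{n}\bigr)G+\tfrac{1}{n}G_*,
\]
which are continuous CDFs (hence in $\mathcal F$) since convex combinations of continuous CDFs remain continuous CDFs.

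The verification is straightforward. For every $x\in D_p$,
\[
G_n(x)-F_n(x)=\bigl(1-\tfrac{1}{n}\bigr)\bigl(G(x)-F(x)\bigr)+\tfrac{1}{n}\bigl(G_*(x)-F_*(x)\bigr)\geq 0+\tfrac{1}{n}>0,
\]
where the first term is nonnegative because $(F,G)\in\mathcal A$ forces $F\leq G$ on $D_p$. Hence $\sup_{x\in D_p}(F_n(x)-G_n(x))<0$, so $(F_n,G_n)\in\Hm_1$. For convergence, $\|F_n-F\|_\infty=\tfrac{1}{n}\|F_*-F\|_\infty\leq \tfrac{2}{n}$ and likewise for $G_n$, giving $d_2((F_n,G_n),(F,G))\to 0$.

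The only subtle point I anticipate is ensuring that the auxiliary pair $(F_*,G_*)$ is genuinely available as continuous CDFs on $\mathbb{R}$ with the strict separation $G_*>F_*$ on the compact set $D_p$; the uniform-CDF construction above handles this cleanly without any appeal to shape constraints or to the underlying densities $f,g$. Everything else is routine, so no serious technical obstacle is expected.
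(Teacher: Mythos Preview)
Your argument is correct. The logical skeleton is the same as the paper's: show that any $(F,G)\in\mathcal A$ is a limit of points in $\Hm_1$, and note that $(F,G)\notin\Hm_1$ so it cannot lie in the interior. The difference is in how the approximating element of $\Hm_1$ is built.

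The paper perturbs only $F$: it identifies the contact set $C_p$, enlarges it slightly to an interval $[a,\beta]\subset D_p$, and defines $\tilde F$ piecewise by subtracting a small $\epsilon'$ from $F$ on $[a,t_1]$, then linearly interpolating back to $F$ on $[t_1,t_2]$, with care taken so that $\tilde F$ stays a continuous distribution function and $\tilde F<G$ on all of $D_p$. This requires tracking several cases (whether $C_p$ touches the endpoints of $D_p$) and checking continuity at the join points.

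Your mixture construction sidesteps all of that. By moving \emph{both} coordinates toward an auxiliary pair $(F_*,G_*)$ with $G_*-F_*\equiv 1$ on $D_p$, you get strict separation on $D_p$ for free from the convex combination, and continuity is automatic because convex combinations of continuous CDFs are continuous CDFs. The bound $\|F_n-F\|_\infty=\tfrac1n\|F_*-F\|_\infty$ (in fact $\le 1/n$, since both are CDFs) gives convergence with no casework. The only thing your approach gives up is that the perturbation does not keep $G$ fixed, but nothing in the statement requires that. Overall your route is shorter and cleaner than the paper's.
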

 
 \begin{proof}
 Consider $(F,G)\in\mathcal A$. Because $\mathcal A\cap \Hm_1=\emptyset$, it suffices to show that $(F,G)$ is a limit point of $\Hm_1$. We will show that given $\delta>0$, there exists $(\tilde F,\tilde G)\in \Hm_1$ so that
 $d_2((F,G),(\tilde F,\tilde G))<\delta$. 
 
 Since $F-G$ is continuous, its supremum over $D_p$ is attained. Thus, there exists $C_p\subset  D_p$ so that $F=G$ on $C_p$. Because $ p\leq \lambda F+(1-\lambda)G\leq 1-p$ on $D_p$,  $\inf_{z\in C_p}F(z)\geq p$ and $\sup_{z\in C_p}F(z)\leq 1-p$. Suppose $a'=\inf C_p$ and $\beta'=\sup C_p$. Since $F$ is continuous, there exists closed interval $L_p=[a,\beta]\supset C_p$ such that $L_p\subset D_p$ and  $p/2<F<1-p/2$ on $[a,\beta]$. We choose $a$ and $\beta$ so that additionally the followings hold:
 \begin{enumerate}
     \item If $a'>H^{-1}(p)$ then $a<a'$. Thus $a=a'$ only if $a'=H^{-1}(p)$, in which case, the only choice for $a$ is $a'$ because $[a',\beta']\subset [a,\beta]\subset D_p$.
     \item If $\beta'<H^{-1}(1-p)$ then $\beta'<\beta$. When $\beta'=H^{-1}(1-p)$, the only choice for $\beta$ is $\beta'$ because $[a',\beta']\subset [a,\beta]\subset D_p$.
 \end{enumerate}

 Letting $\e'=\min(\delta/2,p/4)$,  $t_1=F^{-1}(1-\e')$ and $t_2=t_1+1$,
 we define 
 \begin{equation}\label{inlemma: geometry: tilde F}
     \tilde F(x)=\begin{cases}
     \max(0,F(x)-\e') & x< a\\
     F(x)-\e' & x\in [a,t_1]\\
     (x-t_1)\frac{F(t_2)-\tilde F(t_1)}{t_2-t_1}+\tilde F(t_1) & x\in (t_1,t_2]\\
     F(x) & x>t_2
     \end{cases}
 \end{equation}
 That $\tilde F$ is a distribution function is clear from the definition. We claim that (a) $\tilde F$ is continuous, (b) $\|\tilde F-F\|_\infty < \delta$, and finally
 (c) $(\tilde F,G)\in \Hm_1$. Taking $\tilde G=G$, the proof of the current lemma follows. Hence, it remains to prove Claim (a), (b), and (c).

  First, we prove  Claim (a). Because  $F$ is continuous, $\max(0,F(x)-\e')$ is continuous in $x$ on $(-\infty, a)$. BSince $ F(a)>p/2>\e'$, continuity of $F$ also implies $\tilde F(a-)=F(a)-\e'=F(a+)$. Therefore, $\tilde F$ is continuous on $(-\infty,a]$.
  Therefore  \eqref{inlemma: geometry: tilde F} implies $\tilde F$ is continuous on $(-\infty,a]$. The continuity of $F(x)-\e'$ on $(a,t_1]$ follows because $F$ is continuous.  $\tilde F$ is linear on $(t_1,t_2]$, and equals $F$ on $(t_2,\infty)$ Moreover, the left and right limits of $\tilde F$ agree at $t_1$ and $t_2$. Therefore $\tilde F$ is continuous.
  
   To prove Claim (b), it suffices to show that $|\tilde F(x)-F(x)|\leq \e'$ for any $x\in(-\infty,a)\cup (t_1,t_2]$. Note that if $x<a$ and $F(x)\leq \e'$, then $\tilde F(x)=0$. Thus, $\tilde F(x)-F(x)\leq \e'$ in this case. If $x<a$ but $F(x)>\e'$, $\tilde F(x)-F(x)=\e'$. Therefore, 
   $|\tilde F(x)-F(x)|\leq \e'$ for $x\in(-\infty,a)$. 
   On the other hand, since $t_1=F^{-1}(1-\e')$, $F(x)$ is greater than $1-\e'$ on $(t_1,t_2]$. Also for $x\in [t_1,t_2]$,
   \begin{align*}
     \tilde F(x)\geq   \tilde F(t_1)= F(t_1)-\e'= F(F^{-1}(1-\e'))-\e'=1-2\e'.
   \end{align*}
     Thus on $[t_1,t_2]$, $|\tilde F(x)-F(x)|$ is bounded by $2\e'$, 
   which is not greater than $\delta$ by our choice of $\e'$. Hence, we have shown that $\|\tilde F-F\|_\infty<\delta$.
   
   To prove Claim (c), first note that $\tilde F\leq F$ on $\RR$.  Let us partition $D_p=[H^{-1}(p),a)\cup [a,\beta]\cup (\beta, H^{-1}(1-p)]$. 
   If $a=H^{-1}(p)$, then we define $[H^{-1}(p),a)$ to  be the empty set. Similarly, $(\beta, H^{-1}(1-p)]$ is non-empty only if $\beta< H^{-1}(1-p)$. Note that $\beta\leq t_1$, implying  $[a,\beta]\subset[a,t_1]$. Therefore, by \eqref{inlemma: geometry: tilde F},
  \[\sup_{x\in [a,\beta]}(\tilde F(x)-G(x))= \sup_{x\in [a,\beta]}(F(x)-G(x))-\e'\leq \sup_{x\in D_p}(F(x)-G(x)) -\e',\]
  which equals $-\e'$. Now let us consider the set 
  $[H^{-1}(p),a)$. Of course if $a=H^{-1}(p)$, there is nothing to prove. So suppose $a>H^{-1}(p)$. Because $C_p\cap [H^{-1}(p),a)=\emptyset$, there is no $x\in [H^{-1}(p),a)$ such that $F(x)-G(x)=0$. Therefore, if $\sup_{x\in [H^{-1}(p),a)}(F(x)-G(x))=0$, the set $[H^{-1}(p),a]$ must contain a limit point of $C_p$. However,  $\text{cl}(C_p)\subset [a',\beta']$ where $a<a'$ because $a>H^{-1}(p)$. Therefore, $[H^{-1}(p),a]$ can not contain any limit point of $C_p$ either. Thus we must have
  $\sup_{x\in [H^{-1}(p),a)}(F(x)-G(x))<0$. Therefore, $\sup_{x\in [H^{-1}(p),a)}(\tilde F(x)-G(x))<0$ as well because $\tilde F\leq F$.
  On the other hand, using $H^{-1}(1-p)<t_1$, we have
  \[\sup_{x\in (\beta, H^{-1}(1-p)]}(\tilde F(x)-G(x))=\sup_{x\in (\beta, H^{-1}(1-p)]}( F(x)-G(x))-\e\leq  \sup_{x\in D_p}(F(x)-G(x)) -\e',\]
  which equals $-\e'$. Combining the above pieces, we obtain $\sup_{x\in D_p}(\tilde F(x)-G(x))<0$, which completes the proof 
  of $(\tilde F,G)\in \Hm_1$.
 
\end{proof}
}

 \subsubsection*{Proof of Lemma~\ref{thm: dist based: null distribution: A}}
 \label{secpf: Lemma null dist A}
   Since $F,G\in\iint(\Hm_0)$,  there exists $x'_0\in D_p$ and $\delta>0$ such that $G(x'_0)-F(x'_0)<-3\delta$. Because $F$ and $G$ are continuous, we can find $x_0\in \iint(D_p)$  such that $G(x_0)-F(x_0)<-2\delta$. 
 Hence, \eqref{intheorem: type 1 error} indicates that with probability one,   $\Fmy(x_0)-\Fmx(x_0) <-\delta$ for all sufficiently large $m$ and $n$, 
which leads to
\[\dfrac{\Fmy(x_0)-\Fmx(x_0)}{\sqrt{\dfrac{\Fmx(x_0)(1-\Fmx(x_0))}{m}+\dfrac{\Fmy(x_0)(1-\Fmy(x_0))}{n}}} \leq \dfrac{-\delta}{\sqrt{1/m+1/n}},\]
which approaches $-\infty$ as $m,n\to\infty$.
 Here we used the fact that
 \[\dfrac{\Fmx(x_0)(1-\Fmx(x_0))}{m}+\dfrac{\Fmy(x_0)(1-\Fmy(x_0))}{n}\leq \frac{1}{m}+\frac{1}{n}.\]
Because $x_0\in\iint (D_p)$, Fact \ref{fact: quantile} implies  $x_0\in D_{p,m,n}$ almost surely.
 Therefore, the proof follows for the minimum t-statistic. 
 
 Now we will prove the result for for $T_{m,n}^{\text{tsep}}$. Let us denote $q=H(x_0)$ where $H$ was denoted to be $\lambda F+(1-\lambda)G$. Since either $f>0$ or $g>0$ on $D_p$ under Condition~\ref{assump: Nonparametric}, it follows that $h>0$ on $D_p$. Thus $H^{-1}(q)=x_0$ \citep[see Lemma A.3.7, pp.94,][]{bobkovbig}. 
 Hence,
 \[G(H^{-1}(q))-F(H^{-1}(q))<-2\delta.\]
 Since $H^{-1}(q)\in\iint(D_p)$, it follows that $q\in[p,1-p]$. Fact \ref{fact: quantile} also implies that $\mathbb H_N^{-1}(q)\as H^{-1}(q)$. Combined with the fact that $F$ and $G$ are continuous, we obtain that 
 \[\limsup_{m,n}\{G(\mathbb H_N^{-1}(q))-F(\mathbb H_N^{-1}(q))\}<-\delta\]
 with probability one. Since $\|\Fmx-\Fx\|_\infty$ and $\|\Fmy-\Fy\|_\infty$ converges to zero almost surely, the above implies
 \[\limsup_{m,n}\{\Fmy(\mathbb H_N^{-1}(q))-\Fmx(\mathbb H_N^{-1}(q))\}<-\delta/2\]
 almost surely.
 Note that
 \[\limsup_{m,n} T_{m,n}^{\text{tsep}}\leq \limsup_{m,n}\sqrt{\frac{mn}{N}}\frac{G(\mathbb H_N^{-1}(q))-F(\mathbb H_N^{-1}(q))}{q(1-q)}\leq -\lim_{m,n}\sqrt{\frac{mn}{N}}\frac{\delta}{2q(1-q)},\]
 which equals $-\infty$ because $m/N\to\lambda$.
 Hence, the proof follows.

  \subsubsection{Proof of Theorem~\ref{thm: dist based: null distribution: B} }
  \label{sec:proof: main thm}
 \begin{figure}
 \centering
  \begin{tikzpicture}[scale=.8]
\draw [-,thick, blue, ultra thick] (0,1) to [out=60,in=180] (3,3)
        to  (6,3) to [out=0,in=-135] (9,6) ; 
        \node[above]at (0,1.4){\large $G$};
 \draw [-,thick, red, ultra thick] (0,-.5) to [out=70,in=180] (5,3)
        to  (7,3) to [out=0,in=-135] (9,5) ; 
           \node[above]at (0,-.1){\large $F$};
   \draw[<->,thick,black] (4.8,3.3) to (6.2,3.3);
   \node[above ] at (5.5,3.3) {\large $C_{p,m,n}$};
   \draw[-,thick,black, dotted, thick] (4.8,3.6) to (4.8,3);
   \draw[-,thick,black, dotted, thick] (6.2,3.6) to (6.2,3);
    \draw[<->,thick, dashed,black] (4,5) to (7,5);
      \node[above ] at (5.5,5) {\large $C_{p,m,n}(\sigma_\e)$};
     \draw[-,thick, dotted,black] (4,5.5) to (4,2.5);
      \draw[-, thick, dotted,black] (7,5.5) to (7,2.5);
      \draw[<->,dotted,ultra thick, black] (2,2) to (8,2);
        \node[above ] at (5,2) {\large $D_{p,m,n}$};
      \draw[-,dotted, thick, black](2,1)--(2,4);
        \draw[-,dotted, thick, black](8,1)--(8,4);
\end{tikzpicture}
\caption{An illustration  of $D_{p,m,n}$, $C_{p,m,n}$, and  $C_{p,m,n}(\sigma_\e)$ as in the Proof of Theorem~\ref{thm: dist based: null distribution: B} for $(F,G)\in\mathcal B$.}
\label{Fig: Cp, etc.}
 \end{figure}
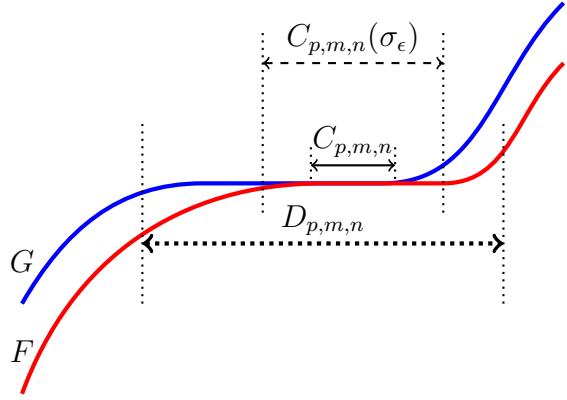

  Before proving Theorem~\ref{thm: dist based: null distribution: B}, we introduce some notations and two lemmas.
Recall that we defined the underlying probability space to be $(\Omega,\iint(\Hm_0),P)$. There exists $A\subset\Omega$ with  $P(A)=1$ such that as $m,n\to\infty$, the following assertions hold on $A$:
\begin{flalign}\label{intheorem: emp:  as}
\text{(a)} &&\norm{\Fmx-\Fx}_{\infty}\to 0,\quad\norm{\Fmy-\Fy}_{\infty}\to 0,\quad\quad \quad\quad \quad\quad \quad\quad
\end{flalign}
\begin{flalign}\label{intheorem: emp: weak}
 \text{(b)} && \norm{\sqrt{m}(\Fmx-\Fx)-\mathbb{V}_1\circ F}_{\infty}\to 0,\quad \norm{\sqrt{n}(\Fny-\Fy)-\mathbb{V}_2\circ \Fy}_{\infty}\to 0,\quad\quad \quad\quad \quad\quad
 \end{flalign}
where  $\mathbb{V}_1$ and $\mathbb{V}_2$ are the Brownian bridges defined in \eqref{intheorem: convg of empirical process : F} and \eqref{intheorem: convg of empirical process : G}, respectively. 

\noindent (c) The trajectories of the Brownian bridges $\mathbb{V}_1$ and $\mathbb{V}_2$  are continuous on $\RR$.

     Define
  \begin{align}\label{def: wmn}
  \omega_{m,n}(x)=\dfrac{\Fmy(x)-\Fmx(x)}{\sqrt{\dfrac{\Fmx(x)\slb 1-\Fmx(x)\srb}{m}+\dfrac{\Fmy(x)\slb 1 -\Fmy(x)\srb}{n}}}
  \end{align}
  and
  \begin{equation}\label{def: w1}
   \omega_{0}(x)=\dfrac{\sqrt{\lambda }\mathbb{V}_2(\Fy)-\sqrt{1-\lambda }\mathbb{V}_1(\Fx)}{\sqrt{(1-\lambda )\Fx(x)\slb 1-\Fx(x)\srb +\lambda \Fy(x)\slb 1-\Fy(x)\srb}}.
  \end{equation}
 Let us denote $C_{p,m,n}=D_{p,m,n}\cap \{x:\Fx(x)=\Fy(x)\}$. Proceeding as in the proof of Fact~\ref{fact: quantile}, we can show that
 \begin{equation}\label{conv: D and A}
  \text{Dist}(D_{p,m,n},D_p)\to 0\quad\text{ and }\quad \text{Dist}(C_{p,m,n},C_p)\to 0\quad\text{as}\quad m,n\to\infty
 \end{equation}
  on A, where $C_p$ is the contact set $D_p\cap \{x\in\RR\ :\ F(x)=G(x)\}$ discussed in Section~\ref{sec: tests}. Now we state the first lemma, which we require for proving part (A).
 
 \begin{lemma}\label{Lemma: Thm 2: 1}
 Suppose $D\subset \RR$ such that $\Fx$ and $\Fy$ are bounded away from $0$ and $1$ on $D$.   
Then, under the conditions of Theorem~\ref{thm: dist based: null distribution: B}, the following holds on $A$:
 \begin{equation*}
 \lim_{m,n\to\infty}\norm{ \omega_{m,n}-\omega_0-\sqrt{\dfrac{mn}{N}}\vartheta_{m,n}^{-1/2}(\Fy-\Fx) }_{D}=0,
 \end{equation*}   
 where
 \begin{equation}\label{def: vartheta}
\vartheta_{m,n}(x)= \dfrac{n\Fmx(x)\slb 1-\Fmx(x)\srb}{N}+\dfrac{m\Fmy(x)\slb 1 -\Fmy(x)\srb}{N}.
 \end{equation}
 Moreover, on $A$,
  \begin{equation}\label{intheorem: convg: vn}
 \lim_{m,n\to\infty}\|\vartheta_{m,n}^{-1/2}-\vartheta_0^{-1/2}\|_D= 0,
 \end{equation}
 where
 \[\vartheta_0(x)=(1-\lambda )\Fx(x)(1-\Fx(x))+\lambda \Fy(x)(1-\Fy(x)).\]
 \end{lemma}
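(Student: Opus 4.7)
\textbf{Proof proposal for Lemma~\ref{Lemma: Thm 2: 1}.} The plan is to rewrite $\omega_{m,n}$ in terms of $\vartheta_{m,n}$, isolate the drift and fluctuation components, and apply the almost sure uniform approximations in \eqref{intheorem: emp:  as}--\eqref{intheorem: emp: weak}. First I would observe the algebraic identity
\begin{equation*}
\frac{\Fmx(x)(1-\Fmx(x))}{m}+\frac{\Fmy(x)(1-\Fmy(x))}{n}=\frac{N}{mn}\,\vartheta_{m,n}(x),
\end{equation*}
so that $\omega_{m,n}(x)=\sqrt{mn/N}\,\vartheta_{m,n}^{-1/2}(x)(\Fmy(x)-\Fmx(x))$. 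Splitting $\Fmy-\Fmx=(G-F)+[(\Fmy-G)-(\Fmx-F)]$, the difference of interest becomes
\begin{equation*}
\omega_{m,n}-\omega_0-\sqrt{mn/N}\,\vartheta_{m,n}^{-1/2}(G-F)=\vartheta_{m,n}^{-1/2}\,R_{m,n}-\vartheta_0^{-1/2}\,R_0,
\end{equation*}
where $R_{m,n}=\sqrt{mn/N}[(\Fmy-G)-(\Fmx-F)]$ and $R_0=\sqrt{\lambda}\,\mathbb{V}_2\circ G-\sqrt{1-\lambda}\,\mathbb{V}_1\circ F$.

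Next I would prove \eqref{intheorem: convg: vn} directly. Because $F,G$ are bounded away from $0$ and $1$ on $D$, $\vartheta_0$ is bounded below by some $c>0$ on $D$. Writing
\begin{equation*}
\vartheta_{m,n}-\vartheta_0=\tfrac{n}{N}[\Fmx(1-\Fmx)-F(1-F)]+[\tfrac{n}{N}-(1-\lambda)]F(1-F)+\tfrac{m}{N}[\Fmy(1-\Fmy)-G(1-G)]+[\tfrac{m}{N}-\lambda]G(1-G),
\end{equation*}
each term is $o(1)$ uniformly in $x\in\RR$ on $A$, using \eqref{intheorem: emp:  as} and $m/N\to\lambda$. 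Hence $\vartheta_{m,n}\to\vartheta_0$ uniformly on $\RR$, so for all sufficiently large $m,n$ we have $\vartheta_{m,n}\geq c/2$ on $D$, and the uniform convergence $\vartheta_{m,n}^{-1/2}\to\vartheta_0^{-1/2}$ on $D$ follows from $|a^{-1/2}-b^{-1/2}|\leq |a-b|/(2\min(a,b)^{3/2})$.

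Third, using \eqref{intheorem: emp: weak} and $\sqrt{m/N}\to\sqrt{\lambda}$, $\sqrt{n/N}\to\sqrt{1-\lambda}$, I would show that on $A$
\begin{equation*}
\|R_{m,n}-R_0\|_\infty\leq\sqrt{m/N}\,\|\sqrt{n}(\Fmy-G)-\mathbb{V}_2\circ G\|_\infty+\sqrt{n/N}\,\|\sqrt{m}(\Fmx-F)-\mathbb{V}_1\circ F\|_\infty+o(1)\to 0,
\end{equation*}
where the $o(1)$ absorbs the scalar discrepancies $|\sqrt{m/N}-\sqrt{\lambda}|\cdot\|\mathbb{V}_2\circ G\|_\infty$ and analogously for the other term; these are finite because the trajectories of $\mathbb{V}_1,\mathbb{V}_2$ are continuous on $[0,1]$ by condition (c).

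Finally, on $D$, the bound
\begin{equation*}
\|\vartheta_{m,n}^{-1/2}R_{m,n}-\vartheta_0^{-1/2}R_0\|_D\leq \|\vartheta_{m,n}^{-1/2}\|_D\,\|R_{m,n}-R_0\|_\infty+\|\vartheta_{m,n}^{-1/2}-\vartheta_0^{-1/2}\|_D\,\|R_0\|_\infty
\end{equation*}
tends to $0$ on $A$, combining steps two and three with the fact that $\|\vartheta_{m,n}^{-1/2}\|_D$ is eventually bounded and $\|R_0\|_\infty<\infty$. This yields the main display of the lemma. The only nontrivial obstacle is the uniform control of $\vartheta_{m,n}^{-1/2}$, which is handled by the positivity of $F(1-F)+G(1-G)$ on $D$ guaranteed by the assumption that $F,G$ are bounded away from $0$ and $1$ on $D$.
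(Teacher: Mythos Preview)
Your proof is correct and follows essentially the same route as the paper: both rewrite $\omega_{m,n}=\sqrt{mn/N}\,\vartheta_{m,n}^{-1/2}(\Fmy-\Fmx)$, split off the drift $G-F$, and control the remaining fluctuation term $\vartheta_{m,n}^{-1/2}R_{m,n}-\vartheta_0^{-1/2}R_0$ via the same add-and-subtract decomposition, using the uniform lower bound on $\vartheta_{m,n}$ on $D$ together with \eqref{intheorem: emp:  as}, \eqref{intheorem: emp: weak}, and $m/N\to\lambda$. Your presentation is slightly more explicit about the algebraic identity and the Lipschitz bound $|a^{-1/2}-b^{-1/2}|\le |a-b|/(2\min(a,b)^{3/2})$, but the argument is the same.
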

 
 \begin{proof}
 Note that
 \begin{align*}
 \MoveEqLeft \omega_{m,n}(x)-\omega_0(x)- \sqrt{\dfrac{mn}{N}}\vartheta_{m,n}(x)^{-1/2}\slb \Fy(x)-\Fx(x)\srb \\
 =&\ \sqrt{\dfrac{mn}{N}}\dfrac{\Fmy(x)-\Fy(x)-(\Fmx(x)-\Fx(x))}{\vartheta_{m,n}(x)^{1/2}} -\dfrac{\sqrt{\lambda }\mathbb{V}_2(\Fy(x))-\sqrt{1-\lambda }\mathbb{V}_1(\Fx(x))}{\vartheta_{m,n}(x)^{1/2}}\\
 &\ +\lb \sqrt{\lambda }\mathbb{V}_2(\Fy(x))-\sqrt{1-\lambda }\mathbb{V}_1(\Fx(x))\rb\lb\dfrac{1}{\vartheta_{m,n}(x)^{1/2}}-\dfrac{1}{\vartheta_0(x)^{1/2}}\rb.
 \end{align*}
Because   $\Fx$ and $\Fy$ are bounded away from $0$ and $1$ on $D$, there exist $c>0$ and $c'<1$ such that $F(x),G(x)\in(c,c')$ for $x\in D$.
 We see that \eqref{intheorem: emp:  as} and  $m/N\to\lambda$ imply on $A$, the following hold:
 \begin{equation}\label{intheorem: bound : vn}
 \limsup_{m,n\to\infty}\|\vartheta_{m,n}\|_{D}, \|\vartheta_{0}\|_{D}<c'(1-c),
 \end{equation}
 \begin{equation}\label{intheorem bound: inv vn}
 \limsup_{m,n\to\infty}\|\vartheta_{m,n}^{-1/2}\|_{D}, \|\vartheta_{0}^{-1/2}\|_{D}<c^{-1/2}(1-c')^{-1/2}.
 \end{equation}
Since on the probability one set $A$, $\vartheta_{m,n}$ converges to $\vartheta_0$ uniformly, and both functions are bounded below on $D$, the following also holds:
 \begin{equation*}
 \lim_{m,n\to\infty}\|\vartheta_{m,n}^{-1/2}-\vartheta_0^{-1/2}\|_D= 0.
 \end{equation*}
 Therefore, 
 \begin{align*}
 \MoveEqLeft\limsup\limits_{m,n\to\infty}\norm{ \vartheta_{m,n}^{-1/2}\lbs \sqrt{\dfrac{mn}{N}}\lb\Fmy-\Fy-(\Fmx-\Fx)\rb-\sqrt{\lambda }\mathbb{V}_2(\Fy)+\sqrt{1-\lambda }\mathbb{V}_1(\Fx)\rbs}_{D}\\
 \leq & \limsup\limits_{m,n\to\infty}\|\vartheta_{m,n}^{-1/2}\|_{D}\\
 &\ \times\limsup\limits_{m,n\to\infty}\norm{\sqrt{\dfrac{mn}{N}} \lb\Fmy-\Fy-(\Fmx-\Fx)\rb-\sqrt{\lambda }\mathbb{V}_2(\Fy)+\sqrt{1-\lambda }\mathbb{V}_1(\Fx)}_{D},
 \end{align*}
 which,  since $\limsup\limits_{m,n\to\infty}\|\vartheta_{m,n}^{-1/2}\|_{D}$ is bounded, converges to $0$ on $A$ since  $m/N\to\lambda$ and  \eqref{intheorem: emp: weak} holds. Also on $A$, the Brownian bridges $\mathbb V_1$ and $\mathbb V_2$ have continuous trajectories,  which indicates $\sqrt{\lambda }\mathbb{V}_2(\Fy)-\sqrt{1-\lambda }\mathbb{V}_1(\Fx)$
 is a continuous function, and hence bounded on $D$.  Therefore, on $A$,
 \begin{align*}
 \MoveEqLeft\limsup_{m,n\to\infty} \norm{\lb \sqrt{\lambda }\mathbb{V}_2(\Fy(x))-\sqrt{1-\lambda }\mathbb{V}_1(\Fx(x))\rb\lb\dfrac{1}{\vartheta_{m,n}(x)^{1/2}}-\dfrac{1}{\vartheta_0(x)^{1/2}}\rb}_D\\
 \leq &\ \norm{ \sqrt{\lambda }\mathbb{V}_2(\Fy)-\sqrt{1-\lambda }\mathbb{V}_1(\Fx)}_{D} \limsup_{m,n\to\infty}\|\vartheta_{m,n}^{-1/2}-\vartheta_0^{-1/2}\|_D
 \end{align*}
 equals zero, which completes the proof.
 \end{proof}

  The second lemma, which is required for proving part (B), relies on the objects $s_1 : \tilde{t}\mapsto \left.\frac{d}{dt}\Fx\circ H^{-1}(t)\right|_{t=\tilde{t}}$, $s_2 : \tilde{t}\mapsto \left.\frac{d}{dt}\Fy\circ H^{-1}(t)\right|_{t=\tilde{t}}$, and 
  \begin{equation}\label{def: L0}
\mathbb{L}_0(t)=(1-\lambda )\lb\lambda ^{-1/2} s_2(t)\mathbb{V}_1(F\circ H^{-1}(t))-(1-\lambda )^{-1/2}s_1(t)\mathbb{V}_2(G\circ H^{-1}(t))\rb.
\end{equation}
Note that
\begin{align}
    \label{def: s1 and s2}
&\  s_1(t)=\dfrac{f\circ H^{-1}(t)}{\lambda f\circ H^{-1}(t)+(1-\lambda)g\circ H^{-1}(t)}\nn\\
    \quad\text{and}&\ \quad s_2(t)=\dfrac{g\circ H^{-1}(t)}{\lambda f\circ H^{-1}(t)+(1-\lambda)g\circ H^{-1}(t)}.
\end{align}
Because $\mathbb V_1$ and $\mathbb V_2$ are independent Brownian bridges, it is immediate that $\mathbb L_0$ is a Gaussian process.

  We now state a lemma that concerns the two sample empirical process $\sqrt{N}( \Fmx\circ \Hm^{-1}(t)-\Fx\circ H^{-1}(t))$.
  \begin{lemma}\label{lem:pykeshorack}
  Under the setting of Theorem~\ref{thm: dist based: null distribution: B},
    \begin{equation}\label{intheeorwm: result: pyke}
    \sup_{t\in[p',1-p']}\left|\frac{\sqrt{N}\slb \Fmx\circ \Hm^{-1}(t)-\Fx\circ H^{-1}(t)\srb}{[t(1-t)]^{1/2}}-\dfrac{\mathbb{L}_0(t)}{[t(1-t)]^{1/2}}\right|\to_p 0
\end{equation}
for any $p'\in(0,1/2)$.
Moreover, $\mathbb{L}_0$ has  continuous trajectories almost surely.
  \end{lemma}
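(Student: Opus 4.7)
The plan is to decompose
\begin{align*}
\sqrt{N}\slb \Fmx \circ \Hm^{-1}(t) - \Fx \circ H^{-1}(t)\srb &= \sqrt{N}(\Fmx-\Fx)\circ \Hm^{-1}(t) \\
&\quad + \sqrt{N}\slb \Fx \circ \Hm^{-1}(t) - \Fx \circ H^{-1}(t)\srb,
\end{align*}
identify the uniform-in-$t$ limit of each summand on $[p',1-p']$, and then invoke the fact that $t(1-t)$ is bounded below on this compact interval to absorb the $\sqrt{t(1-t)}$ weighting. I will work on the probability-one event $A$ on which \eqref{intheorem: emp:  as}--\eqref{intheorem: emp: weak} hold and on which $\mathbb{V}_1,\mathbb{V}_2$ have continuous trajectories.

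For the first summand, Fact~\ref{fact: quantile} yields $\sup_{t\in[p',1-p']}|\Hm^{-1}(t)-H^{-1}(t)|\to 0$ on $A$. Combining this with the continuity of $\Fx$, the sample-path continuity of $\mathbb{V}_1$ on $[0,1]$, and \eqref{intheorem: emp: weak}, I obtain on $A$
\[\sup_{t\in[p',1-p']}\left|\sqrt{N/m}\cdot\sqrt{m}(\Fmx-\Fx)(\Hm^{-1}(t))-\lambda^{-1/2}\mathbb{V}_1(\Fx\circ H^{-1}(t))\right|\to 0,\]
since $m/N\to\lambda$ and uniform continuity of $\mathbb{V}_1$ on the compact range of $\Fx\circ H^{-1}$ restricted to $[p',1-p']$ takes care of the change of argument.

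For the second summand, I first establish a uniform Bahadur-type expansion $\sqrt{N}(\Hm^{-1}(t)-H^{-1}(t))=-\sqrt{N}(\Hm-H)(H^{-1}(t))/h(H^{-1}(t))+o_p(1)$ on $[p',1-p']$. This uses that $h=\lambda f+(1-\lambda)g$ is continuous and strictly positive on a neighborhood of $H^{-1}([p',1-p'])$, which follows from $\inf_{D_p}\min(f,g)>0$ and the continuity of $f,g$. A first-order Taylor expansion of $\Fx$ then gives, uniformly on $[p',1-p']$,
\[\sqrt{N}\slb \Fx\circ \Hm^{-1}(t)-\Fx\circ H^{-1}(t)\srb=-s_1(t)\sqrt{N}(\Hm-H)(H^{-1}(t))+o_p(1).\]
Writing $\sqrt{N}(\Hm-H)=\sqrt{m/N}\sqrt{m}(\Fmx-\Fx)+\sqrt{n/N}\sqrt{n}(\Fmy-\Fy)$ and invoking \eqref{intheorem: emp: weak}, the right-hand side converges uniformly to $-s_1(t)\slb\sqrt{\lambda}\,\mathbb{V}_1(\Fx\circ H^{-1}(t))+\sqrt{1-\lambda}\,\mathbb{V}_2(\Fy\circ H^{-1}(t))\srb$. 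Summing the two limits, the coefficient of $\mathbb{V}_1(\Fx\circ H^{-1}(t))$ becomes $\lambda^{-1/2}-\sqrt{\lambda}\,s_1(t)=(1-\lambda)\lambda^{-1/2}s_2(t)$, where the identity uses $\lambda s_1(t)+(1-\lambda)s_2(t)\equiv 1$ from \eqref{def: s1 and s2}, while the coefficient of $\mathbb{V}_2(\Fy\circ H^{-1}(t))$ is $-\sqrt{1-\lambda}\,s_1(t)=-(1-\lambda)(1-\lambda)^{-1/2}s_1(t)$; this matches \eqref{def: L0} exactly.

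Almost-sure continuity of the trajectories of $\mathbb{L}_0$ follows because $s_1,s_2$ are continuous on $[p',1-p']$ (by continuity of $f,g$ and $h>0$ on a neighborhood of $D_p$), $\Fx\circ H^{-1}$ and $\Fy\circ H^{-1}$ are continuous, and $\mathbb{V}_1,\mathbb{V}_2$ have continuous sample paths. The main technical obstacle is the uniform Bahadur expansion with $o_p(N^{-1/2})$ remainder; this will be carried out along classical lines for empirical quantile processes, exploiting that $\sqrt{N}(\Hm-H)$ is tight in the uniform topology on $H^{-1}([p',1-p'])$ by \eqref{intheorem: emp: weak} and that $h$ is bounded away from zero there, so that inversion of $\Hm$ is controlled up to a $o_p(N^{-1/2})$ error by a standard modulus-of-continuity argument for Brownian bridges.
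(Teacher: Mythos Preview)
The paper does not prove \eqref{intheeorwm: result: pyke} from first principles; it simply invokes Theorem~4.1 and Corollary~4.1 of Pyke and Shorack (1968), remarking that the correct form of the limit process is \eqref{def: L0}. Your route is genuinely different: you reconstruct the two-sample quantile-process limit directly via the decomposition into $(\Fmx-\Fx)\circ\Hm^{-1}$ plus $\Fx\circ\Hm^{-1}-\Fx\circ H^{-1}$, handling the latter by a uniform Bahadur expansion of $\Hm^{-1}$. This buys self-containment and makes the coefficients $(1-\lambda)\lambda^{-1/2}s_2$ and $-(1-\lambda)^{1/2}s_1$ in \eqref{def: L0} emerge transparently from the algebra, but note that the ``main technical obstacle'' you defer---the uniform $o_p(1)$ Bahadur remainder---is precisely the substance of the Pyke--Shorack theorem the paper cites, so your sketch is in effect a reduction to that same classical machinery rather than a more elementary alternative. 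Your continuity argument for $\mathbb{L}_0$ coincides with the paper's.

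One point to repair: the identity $\sqrt{N}(\Hm-H)=\sqrt{m/N}\,\sqrt{m}(\Fmx-\Fx)+\sqrt{n/N}\,\sqrt{n}(\Fmy-\Fy)$ is not correct as written, since $\Hm=(m/N)\Fmx+(n/N)\Fmy$ whereas the paper defines $H=\lambda\Fx+(1-\lambda)\Fy$; you are silently dropping the term $\sqrt{N}(m/N-\lambda)(\Fx-\Fy)$, which need not be $o(1)$ under the bare hypothesis $m/N\to\lambda$. The clean fix is to run the Bahadur step against $H_N:=(m/N)\Fx+(n/N)\Fy$ (for which your identity is exact) and only replace the finite-sample ratios $f/h_N$, $g/h_N$ by $s_1,s_2$ when passing to the limit, where $m/N\to\lambda$ suffices.
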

  \begin{proof}[Proof of Lemma~\ref{lem:pykeshorack}]
  Theorem~4.1 and Corollary~4.1 of \cite{pyke1968} indicate \eqref{intheeorwm: result: pyke}, where here we emphasize that \eqref{def: L0} represents the corrected formula for $\mathbb{L}_0$, given by (30) of \cite{LW2012}, rather than the original formula for this quantity given in (3.8) of \cite{pyke1968}. 

  Now observe that, because $h>0$ on $D_p$ under Condition \ref{assump: Nonparametric}, $H$ is strictly increasing, which implies that $H^{-1}$ is a continuous function.
 Since $f$, $g$ and $H^{-1}$ are continuous, \eqref{def: s1 and s2} implies that $s_1$ and $s_2$ are both continuous.
 Therefore, from \eqref{def: L0} and the fact that $\mathbb V_1$ and $\mathbb V_2$ have continuous trajectories almost surely,  it is not hard to see that $\mathbb{L}_0$ is  continuous almost surely.  
  \end{proof}
  Our next lemma characterizes the Gaussian process  $\mathbb L_0$ on the set $C_p$ when $C_p\subset \iint(D_p)$.
\begin{lemma}
\label{Dist of L0: under null}
Suppose $F$ and $G$ are as in Theorem \ref{thm: dist based: null distribution: B} and $C_p\subset \iint(D_p)$. Then the Gaussian process 
\[\lbs \sqrt{\frac{\lambda}{1-\lambda}}\mathbb L_0(t): t\in H(C_p)\rbs\] is distributed as $\{\mathbb U(t):t\in H(C_p)\}$ where   $\mathbb U$ is a Brownian bridge. 
\end{lemma}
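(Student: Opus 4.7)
The plan is to exploit the explicit formula \eqref{def: L0} for $\mathbb{L}_0$ and show that, on the index set $H(C_p)$, the process $\sqrt{\lambda/(1-\lambda)}\,\mathbb{L}_0$ reduces to an explicit linear combination of $\mathbb{V}_1$ and $\mathbb{V}_2$ whose covariance equals that of a Brownian bridge. The crucial observation is that, under the hypothesis $C_p\subset\iint(D_p)$, not only do the distribution functions agree ($F=G$ on $C_p$) but so do their densities ($f=g$ on $C_p$), which forces the weights $s_1$ and $s_2$ defined in \eqref{def: s1 and s2} to both equal one on $H(C_p)$.

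To establish the density identity, I would recall that, because $C_p\ne\emptyset$, the pair $(F,G)$ lies in $\bd(\Hm_0)$, and Lemma~\ref{Lemma: geometry of H0} then yields $F\le G$ throughout $D_p$ with equality exactly on $C_p$. Thus every $x\in C_p\subset\iint(D_p)$ is an interior local maximum of $F-G$. The conditions of Theorem~\ref{thm: dist based: null distribution: B} ensure that $f$ and $g$ are continuous on $D_p$, so $F-G$ is continuously differentiable there with derivative $f-g$; the first-order condition then gives $f(x)=g(x)$. Combined with the identity $\lambda s_1(t)+(1-\lambda)s_2(t)=1$ (immediate from \eqref{def: s1 and s2}), this forces $s_1(t)=s_2(t)=1$ for every $t\in H(C_p)$.

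Next, since $H=\lambda F+(1-\lambda)G=F$ at every point of $C_p$, for any $t\in H(C_p)$ we have $F\circ H^{-1}(t)=G\circ H^{-1}(t)=t$. Substituting these identifications, together with $s_1(t)=s_2(t)=1$, into \eqref{def: L0} yields
\[
\sqrt{\frac{\lambda}{1-\lambda}}\,\mathbb{L}_0(t)=\sqrt{1-\lambda}\,\mathbb{V}_1(t)-\sqrt{\lambda}\,\mathbb{V}_2(t),\qquad t\in H(C_p).
\]
The right-hand side is a centred Gaussian process, and since $\mathbb{V}_1$ and $\mathbb{V}_2$ are independent Brownian bridges, its covariance at $(s,t)\in H(C_p)^2$ equals
\[
(1-\lambda)(s\wedge t-st)+\lambda(s\wedge t-st)=s\wedge t-st,
\]
which is precisely the Brownian-bridge covariance. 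This identifies the law of the restricted process with that of $\{\mathbb{U}(t):t\in H(C_p)\}$, completing the proof.

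The only non-routine step is the derivative argument in the second paragraph: here the hypothesis $C_p\subset\iint(D_p)$ is indispensable. If the contact set instead touched $\bd(D_p)$, the interior-maximum reasoning would break down, $f$ need not equal $g$ at the contact point, and the weights $s_1,s_2$ would no longer cancel---this is exactly the mechanism behind the variance factor $\sigma_{TSEP}^2\ne 1$ isolated in Lemma~\ref{Dist of L0: at bd}.
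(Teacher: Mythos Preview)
Your proof is correct and follows essentially the same route as the paper's: both identify $s_1=s_2=1$ on $H(C_p)$ via the interior-maximum argument (using Lemma~\ref{Lemma: geometry of H0} and the continuous differentiability of $F-G$), both reduce $F\circ H^{-1}(t)=G\circ H^{-1}(t)=t$ on $H(C_p)$, and both conclude by recognising the resulting linear combination of independent Brownian bridges as a Brownian bridge. Your explicit covariance computation is slightly more complete than the paper's variance-only check, but otherwise the arguments coincide.
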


\begin{proof}[Proof of Lemma \ref{Dist of L0: under null}]
If $C_p=\emptyset$ then the statement is vacuously true. So we will assume that $C_p\neq\emptyset$. 
We first claim that if $C_p\subset\iint(D_p)$, then $s_1(t)=s_2(t)=1$ for all $t\in H(C_p)$. If the claim is true, then from \eqref{def: L0} it follows that 
\[\mathbb L_0(t)=(1-\lambda )\lb\lambda ^{-1/2} \mathbb{V}_1(F\circ H^{-1}(t))-(1-\lambda )^{-1/2}\mathbb{V}_2(G\circ H^{-1}(t))\rb\]
for $t\in H(C_p)$.
Also for $t\in H(C_p)$,
\[F(H^{-1}(t))=G(H^{-1}(t))=\lambda F(H^{-1}(t))+(1-\lambda)G(H^{-1}(t))=H(H^{-1}(t)).\]
Under the set up of Theorem \ref{thm: dist based: null distribution: B}, $H$ has positive and continuous density on a open neighborhood of $D_p$. Therefore, $H$ is strictly increasing on this open set, which implies $H^{-1}$ is also continuous on this open set. Hence, $H(H^{-1}(t))=t$ for all $t\in H(C_p)$, leading to
\[\mathbb{V}_1(F\circ H^{-1}(t))=\mathbb{V}_1(t)\quad\text{and}\quad \mathbb{V}_2(G\circ H^{-1}(t))=\mathbb{V}_2(t),\]
which implies
\[\mathbb L_0(t)=(1-\lambda)\lb\lambda ^{-1/2} \mathbb{V}_1(t)-(1-\lambda )^{-1/2}\mathbb{V}_2(t)\rb\quad\text{for all } t\in H(C_p).\]
Since $\mathbb V_1$ and $\mathbb V_2$ are independent Brownian bridge processes, it follows that the Gaussian process $\{\mathbb L'(t):t\in[0,1]\}$ defined by 
\[\mathbb L'(t)=(1-\lambda)\lb\lambda ^{-1/2} \mathbb{V}_1(t)-(1-\lambda )^{-1/2}\mathbb{V}_2(t)\rb\]
has variance
\[\text{var}(\mathbb L'(t))=(1-\lambda)^2\slb \lambda^{-1}+(1-\lambda)^{-1}\srb (t-t^2)=\frac{1-\lambda}{\lambda}(t-t^2).\]
In particular, it can be seen that $\sqrt{\lambda/(1-\lambda)}\mathbb L'$ is a Brownian bridge. Hence the proof follows if we can prove the claim that $s_1(t)=s_2(t)=1$ for all $t\in H(C_p)$.

 Since $(F,G)\in\bd(\Hm_0)$, Lemma~\ref{Lemma: geometry of H0} implies that $F-G$ attains maxima at $C_p$ whenever $C_p\neq\emptyset$. Because $F-G$ is continuously differentiable and $C_p\subset \iint(D_p)$, $f-g=0$ on $C_p$. Therefore,   $f(H^{-1}(t))=g(H^{-1}(t))$ for all $t\in H(C_p)$. The claim now follows from \eqref{def: s1 and s2}.
\end{proof}

\subsubsection{Proof of Lemma \ref{Dist of L0: at bd}}
\begin{proof}
[Proof of Lemma~\ref{Dist of L0: at bd}]
Since $t\in H(C_p)$, $H^{-1}(t)\in C_p$, implying
\[F(H^{-1}(t))=G(H^{-1}(t))=H(H^{-1}(t)).\]
Arguing as in the proof of  Lemma \ref{Dist of L0: under null}, we can show that  $H(H^{-1}(t))=t$ under the set up of Theorem \ref{thm: dist based: null distribution: B}.
Thus \eqref{def: L0} implies
\[\mathbb L_0(t)=(1-\lambda)\slb \lambda^{-1/2}s_2(t)\mathbb V_1(t)-(1-\lambda)^{-1/2}s_1(t)\mathbb V_2(t)\srb.\]
Because $\mathbb V_1$ and $\mathbb V_2$ are independent Brownian bridges, $\mathbb L_0$ is distributed as a centered normal variable with variance
\begin{align}
\label{def: var L0 t}
 \text{var}(\mathbb L_0(t))=&\  (1-\lambda)^2\slb \lambda^{-1}s_2(t)^2 \text{var}(\mathbb V_1(t))+ (1-\lambda)^{-1}s_1(t)^2\text{var}(\mathbb V_2(t))\srb\nn\\
    =&\ (1-\lambda)^2 t(1-t)\slb \lambda^{-1}s_2(t)^2 + (1-\lambda)^{-1}s_1(t)^2\srb
\end{align}
because
\[\text{var}(\mathbb V_1(t))=\text{var}(\mathbb V_2(t))=t(1-t).\]
Thus it follows that
\begin{align*}
  \sigma_{TSEP}^2= \frac{\lambda} {1-\lambda}\frac{\text{var}(\mathbb L_0(t))}{t(1-t)}=&\  \lambda s_1(t)^2+(1-\lambda)s_2(t)^2\\
    =&\ \frac{\lambda (f\circ H^{-1}(t))^2+(1-\lambda)(g\circ H^{-1}(t))^2}{\slb\lambda f\circ H^{-1}(t)+(1-\lambda)g\circ H^{-1}(t)\srb^2}
\end{align*}
where the last step follows from \eqref{def: s1 and s2}. This completes the proof of \eqref{def: sigma TSEP}. 

Next, we will establish the lower bound on $\sigma^2_{TSEP}$. 
Since $x\mapsto x^2$ is convex, by Jensen's inequality,
\begin{equation}
\label{jensen: L0}
    (1-\lambda)s_2(t)^2+\lambda_1 s_1(t)^2\geq \slb (1-\lambda)s_1(t)+\lambda s_1(t)\srb^2=1
\end{equation}
 since $\lambda s_1+(1-\lambda) s_2=1$ by
 \eqref{def: s1 and s2}. 
 Therefore, \eqref{def: var L0 t} implies
 \[ \text{var}(\mathbb L_0(t))\geq \frac{1-\lambda}{\lambda}t(1-t)\]
 which implies
 \begin{equation}
 \label{inlemma: lower bound: L0}
  1\leq    \frac{\lambda}{1-\lambda} \frac{\text{var}(\mathbb L_0(t))}{t(1-t)}=\sigma^2_{TSEP}.
 \end{equation}
 Since $x\mapsto x^2$ is strictly convex, the inequality in \eqref{jensen: L0} is an equality if and only if $s_1(t)=s_2(t)$. Because $f$ and $g$ are positive on $D_p$, \eqref{def: s1 and s2} implies that the latter occurs if and only if $f(H^{-1}(t))=g(H^{-1}(t))$.
 
 Now we will establish the upper bound on $\sigma^2_{TSEP}$.
 Because  $\lambda s_1+(1-\lambda) s_2=1$, we also have $s_2=(1-\lambda s_1)/(1-\lambda)$. Therefore, using \eqref{def: var L0 t}, we derive that $\text{var}(\mathbb L_0(t))$ equals
\begin{align}\label{inlemma: var L0}
 \MoveEqLeft (1-\lambda)^2 t(1-t)\lb \lambda^{-1}\lb\frac{1-\lambda s_1(t)}{(1-\lambda)} \rb^2 + \frac{s_1(t)^2}{1-\lambda}\rb \nn\\
 =&\ t(1-t)\lb \lambda^{-1}-2s_1(t)+\lambda s_1(t)^2+(1-\lambda) s_1(t)^2\rb\nn\\
 =&\ t(1-t)\slb \lambda^{-1}-2s_1(t)+s_1(t)^2\srb\nn\\
 =&\ t(1-t) \slb \lambda^{-1}-1+(1-s_1(t))^2\srb.
\end{align}
Note that \eqref{def: s1 and s2} implies $s_1(t)\in[0, \lambda^{-1}]$. On any interval, the convex function $\lambda^{-1}-2x+x^2$ attains maxima at either endpoints of the interval.
Therefore,
\begin{align*}
   \frac{\text{var}(\mathbb L_0(t))}{t(1-t)}\leq &\  \max\{\lambda^{-1}, \lambda^{-1}-1+(1-\lambda^{-1})^2\}\\
  =&\ \max\lbs \lambda^{-1}, \lambda^{-1}(\lambda^{-1}-1)\rbs.
\end{align*}
Therefore,
\begin{equation}
    \label{inlemma: upper bound: L0}
 \sigma^2_{TSEP}=   \frac{\lambda}{1-\lambda} \frac{\text{var}(\mathbb L_0(t))}{t(1-t)}\leq \max\{(1-\lambda)^{-1}, \lambda^{-1}\},
\end{equation}
which, combined with \eqref{inlemma: lower bound: L0}, completes the proof of part (A) of the current lemma.

Now if $s_1(t)=\lambda^{-1}$, then $ \sigma^2_{TSEP}=\lambda^{-1}$. Using \eqref{def: s1 and s2}, we can write
\[s_1(t)=\frac{1}{\lambda+ (1-\lambda) g\circ H^{-1}(t)/f\circ H^{-1}(t)}.\]
Thus $s_1(t)=\lambda^{-1}$ if and only if $g\circ H^{-1}(t)/f\circ H^{-1}(t)= 0$.
However, under our set up, $f$ and $g$ are positive on $D_p$, implying $g\circ H^{-1}(t)/f\circ H^{-1}(t)> 0$ for any $t\in[p,1-p]$. Regardless, since the function $x\mapsto 1/(\lambda+(1-\lambda)x)$ is right continuous at $0$, given any  $\e>0$, there exists $C_\lambda>0$, depending only on $\lambda>0$, so that if $g\circ H^{-1}(t)/f\circ H^{-1}(t)<C_\lambda$, then $s_1(t)>\lambda^{-1}-\e/2$. Suppose $\e$ is so small such that $\e/2<\lambda^{-1}-1$.  Then from \eqref{inlemma: var L0} it also follows that
\begin{align*}
     \sigma^2_{TSEP}\geq &\ \frac{\lambda}{1-\lambda}\slb \lambda^{-1}-1+(\lambda^{-1}-\e/2-1)^2\srb\\
     \geq &\ \frac{\lambda}{1-\lambda}\slb \lambda^{-1}-1+(\lambda^{-1}-1)^2-\e(\lambda^{-1}-1)\srb\\
     =&\ \lambda^{-1}-\e,
\end{align*}
which completes the proof of part B of the current lemma.

Similarly, we can show that if $s_1(t)=0$, then $\sigma^2_{TSEP}=(1-\lambda)^{-1}$. Using \eqref{def: s1 and s2} again, we can write 
\[s_1(t)=\frac{f\circ H^{-1}(t)/g\circ H^{-1}(t)}{\lambda f\circ H^{-1}(t)/g\circ H^{-1}(t)+ 1-\lambda }.\]
Therefore $s_1(t)=0$ if and only if $f\circ H^{-1}(t)/g\circ H^{-1}(t)=0$, which is impossible since $f,g>0$ on $D_p$ under our set up. However, since the map $x\mapsto x/(\lambda x+1-\lambda)$ is right continuous at $0$, given any  $\e>0$, we can find $C'_\lambda>0$, depending only on $\lambda>0$, so that if $f\circ H^{-1}(t)/g\circ H^{-1}(t)<C'_\lambda$, then $s_1(t)<(1-\lambda)\e/(2\lambda)$.   If $\e<2\lambda/(1-\lambda)$, then from \eqref{inlemma: var L0}, it also follows that
\begin{align*}
     \sigma^2_{TSEP}\geq &\ \frac{\lambda}{1-\lambda}\slb \lambda^{-1}-1+(1-(1-\lambda)\e/(2\lambda))^2\srb\\
     \geq &\ \frac{\lambda}{1-\lambda}\slb \lambda^{-1}-(1-\lambda)\e/\lambda\srb\\
     =&\ (1-\lambda)^{-1}-\e,
\end{align*}
which completes the proof of part C of the current lemma.
\end{proof}

  \subsubsection{Proof of part (A)}

First we will consider the case when $C_p\neq D_p$. The main steps of the proof are as follows:
   \begin{itemize}
  \item[(a)] We fix $\e>0$, and  choose some set $C_{p,m,n}(\sigma_\e)$ satisfying
  $ C_{p,m,n}(\sigma_\e)\subset D_{p,m,n}$. \textcolor{black}{Here $C_{p,m,n}(\sigma_\e)$ depends on  $\sigma_\epsilon$, which is a random  positive number that will be chosen appropriately.} Next we 
    partition $D_{p,m,n}$ as follows (see Figure~\ref{Fig: Cp, etc.}):
  \[D_{p,m,n}= \underbrace{ [D_{p,m,n}\setminus  C_{p,m,n}(\sigma_\e)]}_{B_{p,m,n}(\s_\e)}\cup \underbrace{[C_{p,m,n}(\sigma_\e)\setminus  C_{p,m,n}]}_{E_{p,m,n}(\s_\e)}\cup C_{p,m,n}.\]
  Therefore,   we can rewrite $T_{m,n}^{\text{min}}(\Fmx,\Fmy)$ as
  \begin{equation}\label{intheorem: def of T1mn}
  \min\lbs  \inf_{x\in B_{p,m,n}(\sigma_\e)}\omega_{m,n}(x),  \inf_{x\in E_{p,m,n}(\sigma_\e)}\omega_{m,n}(x),  \inf_{x\in C_{p,m,n}}\omega_{m,n}(x) \rbs.
  \end{equation}
  \item[(b)] We show that  on $A$,
  \[\liminf_{m,n\to\infty} \inf_{x\in B_{p,m,n}(\sigma_\e)}\omega_{m,n}(x)=\infty.\]
  \item[(c)] We show that the following holds on $A$:
  \[\liminf_{m,n\to\infty}\inf_{x\in E_{p,m,n}(\sigma_\e)}\omega_{m,n}(x)>\limsup_{m,n\to\infty}\inf_{x\in C_{p,m,n}}\omega_{m,n}(x)-\e.\]
  \item[(d)] Finally we show that on $A$,
  \[  \lim_{m,n\to\infty} \inf_{x\in C_{p,m,n}}\omega_{m,n}(x)= \inf_{x\in C_{p}}\omega_{0}(x),\]
from  which, we show that, \eqref{eq:asT1mnlim} follows.
\end{itemize}
We will restrict our attention only to the set $A$ for this part of the proof. However, because $P(A)=1$,  this serves our purpose.
\subsubsection*{Proof of step (a)}
 For $\sigma>0$ and $p\in[0,1]$, let us define
  \begin{equation}\label{def: C p sigma}
  C_{p}(\sigma)=\{x\in D_{p}\  :\ Dist(x,C_{p})<\sigma\}.
  \end{equation}
  Because $H$ is continuous, there exists  $p'\in(0,p)$ such that $H^{-1}(p')<H^{-1}(p)$, which implies $D_{p'}\supset D_p$. For any $\sigma>0$, we   define
  \begin{equation}\label{def: C p m n: sigma}
C_{p,m,n}(\sigma)=C_{p'}(\sigma)\cap D_{p,m,n}.
\end{equation}
The first task is to properly choose a  $\sigma_\e$ so that certain properties hold on $C_{p'}(\sigma_\e)$.

  Since $A\subset \Omega$ was chosen so that on this set $\mathbb V_1$ and $\mathbb V_2$ have continuous trajectories,  $\omega_0$ also has continuous trajectory, which is also
   uniformly continuous on $D_{p'}$ because the latter is a compact set. Hence, for each such trajectory, 
   \[\sigma_{\e}^1=\sup\lbs \sigma>0:  |x-y|<\sigma\text{  implies }|\omega_0(x)-\omega_0(y)|<\epsilon/2\quad\text{ for all }x,y\in D_{p'}\rbs\]
    is well defined and positive.  Note that $\sigma_{\e}^1$ is a random quantity, which can take the value $0$ on the set $A^c$ but $\sigma_{\e}^1>0$ on $A$. 
    
    On the other hand, since $C_{p}\neq D_p$ and $p'<p$, we have $C_{p'}\neq D_{p'}$. Therefore  there exists $x\in D_{p'}$ so that $G(x)-F(x)>0$. Suppose $\delta_p=(G(x)-F(x))/4$. Because  $G-F$ is continuous and $D_{p'}$ is compact,  $D_{p'}\setminus C_{p'}$  contains at least one interval where $G-F>2\delta_p$.
  Using the continuity of $G-F$, we  can choose $\sigma_2$ so small such that
  $ D_{p'}\setminus C_{p'}(\sigma_2)$ contains at least an interval where $G-F>\delta_p$, i.e. $ D_{p'}\setminus C_{p'}(\sigma_2)\neq \emptyset$. 
  We will take 
  \[\sigma_\e=\min(\sigma_\e^1,\sigma_2).\]
Since $\sigma_\e^1$ is random, $\sigma_\e$ is also random. Moreover,  $\sigma_\e>0$ on $A$.

Note that, $B_{p,m,n}(\sigma_\e)$ and $E_{p,m,n}(\sigma_\e)$ can be empty for small  $m$ and $n$. In that case, we define the infimum of $\omega_{m,n}$ over those set to be $\infty$. Also since 
  \[\inf_{x\in D_{p'}}\min\{f(x),g(x)\}>0,\] $H^{-1}$ is continuous on $p'$.
 
\subsubsection{Proof of step (b)}
For any $\sigma>0$, let us denote
$B_{p'}(\sigma)=D_{p'}\setminus C_{p'}(\sigma)$. We claim that $B_{p'}(\sigma_\e)$, is non-empty
which follows because  our choice of $\sigma_{\e}$ in step (a) implies  $B_{p'}(\sigma_\e)\supset B_{p'}(\sigma_2)$ where $ B_{p'}(\sigma_2)\neq\emptyset$ by definition of $\sigma_2$. By the continuity of $G-F$ it follows that there exists a random quantity $\delta_{\sigma_{\epsilon}}>0$ so that  $\Fy-\Fx>\delta_{\sigma_{\epsilon}}$ on  $B_{p'}(\sigma_\e)$.
Recall the definition of $\vartheta_{m,n}$ from \eqref{def: vartheta}.
   Since $f$ and $g$ are bounded away from $0$ on $D_{p'}$, Lemma~\ref{Lemma: Thm 2: 1}  implies that on $A$,
   \begin{align*}
 \inf_{x\in B_{p'}(\s_\e)} \omega_{m,n}(x)\geq \inf_{x\in B_{p'}(\s_\e)} \omega_{0}(x)+\sqrt{\dfrac{mn}{N}}\dfrac{\delta_p}{\sup_{x\in B_{p'}(\s_\e)}\vartheta_{m,n}(x)^{1/2}}+o(1).
   \end{align*}
 where the $o(1)$ term approaches zero as $m,n\to\infty$. However, by \eqref{intheorem bound: inv vn}, $\vartheta_{m,n}^{-1/2}$ is bounded above by constant depending only on $p'$, $F$, and $G$ on $D_{p'}$. 
   Noting that the continuous function $\omega_0$ is bounded on $D_{p'}$, and using  $m/N\to\lambda$, we deduce that 
   $$\liminf_{m,n\to\infty} \inf_{x\in B_{p'}(\s_\e)} \omega_{m,n}(x)=\infty$$
    on $A$. 
   Note that \eqref{def: C p m n: sigma} implies
   \[B_{p, m,n}(\s_{\e})=D_{p,m,n}\cap(C_{p'}(\s_\e)\cap D_{p,m,n})^c=D_{p,m,n}\setminus C_{p'}(\sigma_\e).\]
  On the set  $A$, we thus have $D_{p,m,n}\subset D_{p'}$ eventually as $m,n\to\infty$, which implies
   \[B_{p, m,n}(\s_{\e})=D_{p,m,n}\setminus C_{p'}(\sigma_\e)\subset D_{p'}\setminus C_{p'}(\sigma_\e)=B_{p'}(\sigma_{\e})\]
eventually as $m,n\to\infty$,   leading to
\begin{equation}\label{intheorem: contact: first minimum}
\liminf_{m,n\to\infty}\inf_{x\in B_{p, m,n}(\s_{\e})} \omega_{m,n}(x)=\infty.
\end{equation}

  \subsubsection*{Proof of step (c)} 
 We denote  the boundary of the set $C_p$ by $ \text{bd}(C_p)$. We can show that $C_p$ is a closed set, which implies $\text{bd}(C_p)\subset C_p$. Therefore, for all $y\in \text{bd}(C_p)$, we have $G(y)=F(y)$. Therefore, by Lemma~\ref{Lemma: Thm 2: 1}, for any $x\in  E_{p,m,n}(\s_{\e})$, and $y\in \text{bd}(C_p)\cap D_{p,m,n}$,   the following holds on $A$ for all sufficiently large $m$ and $n$:
   \begin{align*}
 \MoveEqLeft  \bl \omega_{m,n}(x)-\omega_{m,n}(y)-\sqrt{\dfrac{mn}{N}}\vartheta_{m,n}(x)^{-1/2}\lb G(x)-F(x)\rb \bl\\
   \leq &\ \e/2+\sup_{\substack{x\in E_{p,m,n}(\s_{\e}),\\ y\in \in \text{bd}(C_p)\cap D_{p,m,n}}}|\omega_0(x)-\omega_0(y)|\\
  \stackrel{(a)}{\leq} &\ \e/2+\sup_{|x-y|<\s_{\e},\ x,y\in D_{p'}}|\omega_0(x)-\omega_0(y)|,
   \end{align*}
   which, by our choice of $\s_\e$, is not larger than $\e/2$.  Here $(a)$ follows because $E_{p,m,n}(\s_{\e})\subset C_{p,m,n}(\s_{\e})\subset C_{p'}(\s_{\e})$ for sufficiently large $m$ and $n$.
  The above leads to
  \begin{align*}
  \inf_{x\in E_{p,m,n}(\s_{\e})}\omega_{m,n}(x)\geq &\  \sqrt{\dfrac{mn}{N}}\inf_{x\in E_{p,m,n}(\s_{\e})}\dfrac{\Fy(x)-\Fx(x)}{\vartheta_{m,n}(x)^{1/2}}\\
  &\ + \inf_{y\in D_{p,m,n}\cap \text{bd}(C_p)}\omega_{m,n}(y)-\e.
  \end{align*}
Since $(F,G)\in\text{bd}(\Hm_0)$, we have $G-F\geq 0$, which yields
  \[\liminf_{m,n\to\infty}\inf_{x\in E_{p,m,n}(\s_\e)}\omega_{m,n}(x)\geq \limsup_{m,n\to\infty}\inf_{y\in D_{p,m,n}\cap C_p}\omega_{m,n}(y)-\e.\] 
  Noting $C_{p,m,n}=D_{p,m,n}\cap C_p$, we conclude this step.
  
  \subsubsection*{Proof of step (d)}
  This step follows from 
Lemma~\ref{Lemma: Thm 2: 1}. To see this, note that, Lemma~\ref{Lemma: Thm 2: 1}  implies that $\|\omega_{m,n}-\omega_0\|_{C_{p'}}\to 0$ on $A$ as $m,n\to\infty$. Because $\Fx=\Fy$ on $C_{p'}$, it also follows that
\[\omega_0(x)=\dfrac{\sqrt{\lambda }\mathbb{V}_2(\Fx(x))-\sqrt{1-\lambda }\mathbb{V}_1(\Fx(x))}{\sqrt{\Fx(x)\slb 1-\Fx(x)\srb}}=\dfrac{\mathbb{U}\circ F(x)}{\sqrt{\Fx(x)\slb 1-\Fx(x)\srb}}\]
where $\mathbb U=\sqrt{\lambda}\mathbb V_2-\sqrt{1-\lambda}\mathbb{V}_1$ is a Brownian bridge.
Since on $A$, $D_{p,m,n}\subset D_p\subset D_{p'}$ for sufficiently large $m$ and $n$, it entails that $C_{p,m,n}\subset C_{p}\subset C_{p'}$ for sufficiently large $m$ and $n$ as well. Therefore,
\[\norm{\omega_{m,n}-\dfrac{\mathbb{U}\circ F}{\sqrt{\Fx(1-\Fx)}}}_{C_{p,m,n}}\to 0\]
as well.
  Since $Dist(C_{p,m,n},C_p)\to 0$ on $A$ by \eqref{conv: D and A}, the above readily yields that on $A$,
  \begin{align}\label{step d: convg}
   \lim_{m,n\to\infty} \inf_{x\in C_{p,m,n}}\omega_{m,n}(x)= \inf_{x\in C_p} \dfrac{\mathbb{U}\circ F(x)}{\slb \Fx(x)\slb 1-\Fx(x)\srb\srb^{1/2}}.
  \end{align}
Combining  steps (a)--(d) with \eqref{intheorem: def of T1mn} yields on $A$,
 \begin{align*}
  &\limsup_{m,n\to\infty} \inf_{x\in C_{p,m,n}}\omega_{m,n}(x)-\e\leq   \liminf_{m,n\to\infty}T_{m,n}^{\text{min}}(\Fmx,\Fmy)\\
  \leq &\  \limsup_{m,n\to\infty}T_{m,n}^{\text{min}}(\Fmx,\Fmy)\leq  \liminf_{m,n\to\infty}\inf_{x\in C_{p,m,n}}\omega_{m,n}(x).
 \end{align*}
 Letting $\e\to 0$, we have
 \[\lim_{m,n\to\infty}T_{m,n}^{\text{min}}(\Fmx,\Fmy)=  \lim_{m,n\to\infty}\inf_{x\in C_{p,m,n}}\omega_{m,n}(x)= \inf_{x\in C_p} \dfrac{\mathbb{U}\circ F(x)}{\sqrt{\Fx(x)\slb 1-\Fx(x)\srb}}\]
 on $A$, where the last step follows from \eqref{step d: convg}. 
The above concludes the proof of  \eqref{eq:asT1mnlim} when $C_p\neq D_p.$

Now suppose $C_p=D_p$. In this case, we will only use Step C and D. Let us restrict our attention to only $A$. We define  $\s_\e=\s_\E^1$.  Letting $C_{p'}(\s_\e)$ be as in \eqref{def: C p sigma}, we have 
$D_p\subset  \iint(C_{p'}(\s_\e))$,
and also, 
$D_{p,m,n}\subset  \iint(C_{p'}(\s_\e))$ for sufficiently large $m$ and $n$. Let us also denote $C_{p,m,n}$ as in \eqref{def: C p m n: sigma}
and  $E_{p,m,n}=C_{p,m,n}(\s_{\e})\setminus C_{p,m,n}$ as in step C. Then for large $m$ and $n$, the partition $D_{p,m,n}=E_{p,m,n}(\s_{\e})\cup C_{p,m,n}$ is valid. Therefore, the proof follows from combining Step C and D.

\subsubsection*{Proof of part (B) of Theorem~\ref{thm: dist based: null distribution: B}}
     We now prove part~(B). For the ease of reference, we let
 \[\nu_{m,n}(t)=\dfrac{\sqrt{\dfrac{mn}{N}}\lb\Fmy\circ\Hm^{-1}(t)-\Fmx\circ \Hm^{-1}(t)\rb}{[t(1-t)]^{1/2}}.\]
    Note that $T_{m,n}^{\text{tsep}}(\Fmx,\Fmy)=\inf_{t\in[p,1-p]} \nu_{m,n}(t)$.

    We start by studying the numerator of the above display. Noting that $N\Hm=m\Fmx+n\Fmy$, we derive that
\begin{align*}
\MoveEqLeft\sqrt{\dfrac{mn}{N}}\lb\Fmy\circ\Hm^{-1}(t)-\Fmx\circ \Hm^{-1}(t)\rb\\
= &\ -\sqrt{\dfrac{mN}{n}}\lb\Fmx\circ \Hm^{-1}(t)-t\rb+\sqrt{\dfrac{mN}{n}}\lb \Hm\circ\Hm^{-1}(t)-t\rb.
\end{align*}
Combining the fact that $\sup_{t\in[0,1]}\bl \Hm\circ\Hm^{-1}(t)-t\bl \leq 1/N$ \citep[p. 762 of][]{pyke1968} with the fact that   $m/N\to\lambda$, we obtain that
 \begin{equation}\label{inlemma: equivelence to pyke}
 \sup_{t\in[0,1]}\bl\sqrt{\dfrac{mn}{N}}\lb\Fmy\circ\Hm^{-1}(t)-\Fmx\circ \Hm^{-1}(t)\rb+\sqrt{\dfrac{mN}{n}}\lb\Fmx\circ \Hm^{-1}(t)-t\rb\bl\to 0
 \end{equation}
 with probability one.
The above readily shows that
    \[
 \sup_{t\in[p,1-p]}\left|\nu_{m,n}(t)+\sqrt{\dfrac{mN}{n}}\dfrac{\Fmx\circ \Hm^{-1}(t)-t}{[t(1-t)]^{1/2}}\right|\as 0.\]
Combining the above with \eqref{intheeorwm: result: pyke}, we see that
 \begin{equation*}
 \sup_{t\in[p,1-p]}\bl\nu_{m,n}(t)+\sqrt{\dfrac{\lambda }{1-\lambda }}\dfrac{\mathbb{L}_0(t)}{[t(1-t)]^{1/2}}+\sqrt{\dfrac{mN}{n}}\dfrac{(F\circ H^{-1}(t)-t)}{[t(1-t)]^{1/2}}\bl\to_p 0.
 \end{equation*}
Upon noting that
 \[F\circ H^{-1}(t)-t=F\circ H^{-1}(t)-H\circ H^{-1}(t)=(1-\lambda )\lb F\circ H^{-1}(t)-G\circ H^{-1}(t)\rb,\]
the preceding limit reduces to
\begin{equation}\label{intheorem: smthing smthing}
 \sup_{t\in[p,1-p]}\bl\nu_{m,n}(t)+\sqrt{\dfrac{\lambda }{1-\lambda }}\dfrac{\mathbb{L}_0(t)}{[t(1-t)]^{1/2}}-(1-\lambda )\sqrt{\dfrac{mN}{n}}\nu(t)\bl\to_p 0,
 \end{equation} 
 where
 \begin{equation}\label{def: nu}
 \nu(t)=[G\circ H^{-1}(t)-F\circ H^{-1}(t)]/[t(1-t)]^{1/2}.
 \end{equation}
If we take any subsequence of the random sequence on the left side of the above, we can find a further subsequence that approaches zero almost surely. Suppose  that we can show, along the latter subsequence, that $T_{m,n}^{\text{tsep}}(\Fmx,\Fmy)-\inf_{t\in H(C_p)}\mathbb{U}(t)/[t(1-t)]^{1/2}$ converges almost surely to zero. 
In light of the fact that the limit does not depend on the choice of sequence or subsequence, Theorem 5.7 of \cite{shorack2000} would then imply  that the whole sequence converges weakly to the same limit, namely zero. Since weak convergence to a constant is equivalent to convergence in probability to that constant, this would complete the proof. Therefore, in what follows, we use $m',n'$ to denote members of a subsequence along which \eqref{intheorem: smthing smthing} holds almost surely and set out to prove that, as $m',n'\rightarrow\infty$,
 \begin{equation}\label{intheorem: T2 convergence}
T_{2,m',n'}(\mathbb{F}_{m'},\mathbb{G}_{n'})+\sqrt{\frac{\lambda}{1-\lambda}}\inf_{t\in H(C_p)}\dfrac{\mathbb{L}_0(t)}{\sqrt{t(1-t)}}\as 0.
 \end{equation}
Hence we assume that there exists $A'\subset\Omega$ such that $P(A')=1$ and, as $m',n'\to\infty$, 
 \begin{equation}\label{intheorem: contact: as}
 \sup_{t\in[p,1-p]}\bl\nu_{m',n'}(t)+\sqrt{\dfrac{\lambda }{1-\lambda }}\dfrac{\mathbb{L}_0(t)}{[t(1-t)]^{1/2}}-(1-\lambda )\sqrt{\dfrac{m'N'}{n'}}\nu(t)\bl\to 0
 \end{equation} 
 on $A'$, where $N'=m'+n'$. We choose $A'$ so that  $\mathbb{L}_0$ has continuous trajectories on $A'$, which Lemma~\ref{lem:pykeshorack} shows is possible.
 
  The rest of the proof is similar to the proof of part (A) because the asymptotics of the infimum of $\nu_{m',n'}$ over $[p,1-p]$ are largely governed by its numerator. Indeed, replacing the denominator of \eqref{def: wmn} from part~(A) by the denominator of $[t(1-t)]^{1/2}$ for part~(B) changes little since this new denominator is also bounded away from $0$ on $[p,1-p]$.  Nonetheless, there are some differences, which we detail below.

  Fix $\epsilon>0$. We replace $\s_{\e}$ from the proof of part~(A) by $\s'_{\e}$, where we define $\s'_{\e}$ as follows. 
  If $t,t'\in[p,1-p]$ satisfy $|t-t'|<\s'_{\e}$, then, on $A'$,
  \begin{equation}\label{intheorem: contavt: def of sigma}
  \sqrt{\dfrac{\lambda }{1-\lambda }}\bl\dfrac{\mathbb{L}_0(t)}{[t(1-t)]^{1/2}}-\dfrac{\mathbb{L}_0(t')}{(t'(1-t'))^{1/2}}\bl<\e/2.
  \end{equation}
  Note that since $[p,1-p]$ is a compact set, and the function $\mathbb{L}_0$ has continuous trajectories on $A'$, the random quantity  $\sigma'_\e>0$ on $A'$. 
 
 Recalling $h=\lambda \fx+(1-\lambda )\fy$,
we let 
\begin{equation}\label{def: bp}
    b_p=\sup_{z\in [H^{-1}(p),H^{-1}(1-p)]} h(z),
\end{equation}
 which is clearly positive. Because $f$ and $g$ are continuous, $h$ is also continuous, and therefore $b_p<\infty$. Taking $\tilde\sigma_{\e}=\s'_{\e}/b_p$, and similar to \eqref{def: C p sigma}, defining
 \begin{equation*}
  C_{p}(z)=\{x\in D_{p'}\  :\ Dist(x,C_{p'})<z\},\quad \text{ for all }z>0,
  \end{equation*}
we observe that $[p,1-p]$ can be written as the union of the following three sets:
 \begin{align*}
 \tilde B_p(\tilde\sigma_{\e})&\ =\{t\in[p,1-p]\ :\ H^{-1}(t)\in D_p\setminus C_p(\tilde\sigma_{\e})\},\\
 \tilde E_p(\tilde\sigma_{\e}) &\ = \{t\in[p,1-p]\ :\ H^{-1}(t)\in C_p(\tilde\sigma_{\e})\setminus C_p\},\\
 H(C_p) &\ = \{t\in[p,1-p]\ :\ H^{-1}(t)\in C_p\}.
 \end{align*}
 
  Note that, the $D_{p'}$ used in the proof of part (a) of the current theorem is replaced by $D_p$ in the above partitioning.
 Part (B) differs from part (A) in that $T_{m,n}^{\text{tsep}}(\Fmx,\Fmy)$ is the infimum of a random quantity, namely $\nu_{m,n}$, over a fixed set $[t,1-t]$, whereas $T_{m,n}^{\text{min}}(\Fmx,\Fmy)$ calculates the  infimum of $\omega_{m,n}$ over a random set $D_{p,m,n}$. To deal with this randomness, the asymptotics in part (A) were analyzed on a set $D_{p'}\supset D_p$ constructed so as to ensure $D_{p'}\supset D_{p,m,n}$ for sufficiently large $m$ and $n$ almost surely. Since part (B) does not have this additional difficulty, it suffices to study the behavior of $\nu_{m,n}$ on $D_p$, circumventing the need to consider $D_{p'}$.

We now show that 
 \begin{equation}\label{intheorem: contact: ttp inequality}
 |t-t'|\leq \s'_{\e}\quad\textnormal{for all $t\in \tilde E_p(\tilde\sigma_{\e})$ and $t'\in \text{bd}(H(C_p))$.}
 \end{equation}
To prove this, we first fix $t\in \tilde E_p(\tilde\sigma_{\e})$ and $t'\in \text{bd}(H(C_p))$. Because $H^{-1}$ is continuous, it holds that $H^{-1}(t')\in\text{bd}(C_p)$. Thus, $|H^{-1}(t)-H^{-1}(t')|\leq \tilde\sigma_{\e}$.  Note that the continuity of $H$ also implies that $H(H^{-1}(t_0))=t_0$ for all $t_0\in [p,1-p]$. By the mean value theorem applied to the function $H$, there exists an $a$ between $H^{-1}(t)$ and $H^{-1}(t')$ such that $t-t' = h(a)[H^{-1}(t)-H^{-1}(t')]$. Hence, $|t-t'|\le b_p |H^{-1}(t)-H^{-1}(t')|$ where $b_p$ is as defined in \eqref{def: bp}. Combining this display with the fact that $|H^{-1}(t)-H^{-1}(t')|\leq \tilde\sigma_{\e}$ and plugging in  $\tilde\sigma_{\e}=\s'_\e/b_p$ shows that \eqref{intheorem: contact: ttp inequality} indeed holds.

Recall the definition of $\nu(t)$ from \eqref{def: nu}. Because $\nu(t)\geq 0$, one can show that the infimum of  $\nu$ over the random set $\tilde B_p(\tilde\sigma_{\e})$ is  bounded below by some random number $\delta_{\tilde\sigma_\e}>0$. Therefore, using \eqref{intheorem: contact: as} and imitating the proof of \eqref{intheorem: contact: first minimum},  we can show that
 \begin{align}\label{intheorem: contact: part 2: first convgnce}
 \liminf_{m',n'\to\infty}\inf_{t\in \tilde B_p(\tilde\sigma_{\e})}\nu_{m',n'}(t)\to\infty\quad\textnormal{ on $A'$.}
 \end{align}
 
 Next let us consider $t\in \tilde E_p(\tilde\sigma_{\e})$ and $t'\in \text{bd}(H(C_p))$. Note that \eqref{intheorem: smthing smthing} and \eqref{intheorem: contact: ttp inequality} imply that  the following holds on $A'$:
 \begin{align*}
\MoveEqLeft \bl \nu_{m',n'}(t)-\nu_{m',n'}(t')-(1-\lambda )\sqrt{\dfrac{m'N'}{n'}}\nu(t)\bl\\
\leq &\ o(1)-\sqrt{\dfrac{\lambda }{1-\lambda }}\sup_{t,t'\in S_t}\bl\dfrac{\mathbb{L}_0(t)}{[t(1-t)]^{1/2}}-\dfrac{\mathbb{L}_0(t')}{(t'(1-t'))^{1/2}}\bl,
\end{align*}
 where $S_t=\{t,t'\in[p,1-p]\ :\ |t-t'|<\s'_{\e}\}$. Equation~\ref{intheorem: contavt: def of sigma} yields that
 \[\sqrt{\dfrac{\lambda }{1-\lambda }}\sup_{t,t'\in S_t}\bl\dfrac{\mathbb{L}_0(t)}{[t(1-t)]^{1/2}}-\dfrac{\mathbb{L}_0(t')}{(t'(1-t'))^{1/2}}\bl<\e/2\]
on $A'$, which indicates that, on this set, for all sufficiently large $m'$ and $n'$,
 \[\bl\nu_{m',n'}(t)-\nu_{m',n'}(t')-(1-\lambda )\sqrt{\dfrac{m'N'}{n'}}\nu(t)\bl<\e.\]
 Because the above holds for any $t\in \tilde E_p(\tilde\sigma_{\e})$ and $t'\in \text{bd}(H(C_p))$, we obtain that
\[
\inf_{t\in \tilde E_p(\tilde\sigma_{\e})}\nu_{m',n'}(t)\geq \inf_{t'\in H(C_p)}\nu_{m',n'}(t')+\sqrt{\dfrac{m'N'}{n'}}\inf_{t\in \tilde E_p(\tilde\sigma_{\e})}\nu(t)-\e.
\]
The fact that $\nu$ is non-negative on $[p,1-p]$ yields that $\inf_{t\in \tilde E_p(\tilde\sigma_{\e})}\nu(t)\geq 0$.
Thus,  the above shows that $\inf_{t\in \tilde E_p(\tilde\sigma_{\e})}\nu_{m',n'}(t)\geq \inf_{t\in H(C_p)}\nu_{m',n'}(t)-\e$. Therefore,  using \eqref{intheorem: contact: part 2: first convgnce}, we derive that on $A'$, for sufficiently large $m'$ and $n'$,
\[\bl\inf_{t\in [p,1-p]}\nu_{m',n'}(t)- \inf_{t\in H(C_p)}\nu_{m',n'}(t)\bl\leq \e.\]
As $\epsilon>0$ was arbitrary, the above shows that on $A'$, $\inf_{t\in [p,1-p]}\nu_{m',n'}(t)- \inf_{t\in H(C_p)}\nu_{m',n'}(t)$ converges to zero as $m',n'\rightarrow\infty$. 
Finally, \eqref{def: L0} and  \eqref{intheorem: contact: as} yield that on $A'$,
\[\limsup_{m',n'\to\infty}\sup_{t\in H(C_p)}\bl\nu_{m',n'}(t)-\sqrt{\frac{\lambda}{1-\lambda}}\dfrac{\mathbb{L}_0(t)}{\sqrt{t(1-t)}}\bl= 0.\]
Recall that $A'$ was chosen to satisfy $P(A')=1.$ Thus,  the above convergence holds with probability one, from which, \eqref{intheorem: T2 convergence}
follows. As was discussed above \eqref{intheorem: T2 convergence}, the fact that this equation holds completes the proof of part~(B).  Part (C) follows from Lemma~\ref{Dist of L0: under null}.
\hfill $\Box$

 The proof of Theorem \ref{thm:critical values} for the minimum t-statistic can be found in \cite{whang2019}  \citep[see also][]{davidson2013}. However, we still include it here for the sake of completeness.
\subsubsection*{Proof of Theorem \ref{thm:critical values}}

If $(F,G)\in \Hm_1$, we have $G(x)>F(x)$ for all $x\in D_p$. Since $D_p$ is compact, the continuous function $G-F$ attains its minima at some $x_0\in D_p$. Therefore, it follows that
  \[\inf_{x\in D_p}(G(x)-F(x))>3\delta\] 
  for some $\delta>0$. We will show that this implies that 
  \begin{equation}\label{inlemma: Theorem 2: plain diff of ecdf}
 \liminf_{m,n\to\infty} \inf_{x\in D_{p,m,n}}\lb\Fmy(x)-\Fmx(x)\rb>\delta
  \end{equation}
  with probability one. 
 As a result, $T_{m,n}^{\text{min}}\to_p\infty$ follows because \eqref{inlemma: Theorem 2: plain diff of ecdf}  indicates that with probability one, for all $x\in D_{p,m,n}$,
\[\dfrac{\Fmy(x)-\Fmx(x)}{\sqrt{\dfrac{\Fmx(x)(1-\Fmx(x))}{m}+\dfrac{\Fmy(x)(1-\Fmy(x))}{n}}}\geq \dfrac{\delta}{\sqrt{\dfrac{\Fmx(x)(1-\Fmx(x))}{m}+\dfrac{\Fmy(x)(1-\Fmy(x))}{n}}}\]
 for all large $m$ and $n$.
 However, the right hand side of the last display is bounded below by
 $\sqrt{mn/N}\delta$. Since $m/N\to\lambda$, it follows that $T_{m,n}^{\text{min}}\to_p\infty$. Hence, it suffices to prove \eqref{inlemma: Theorem 2: plain diff of ecdf}. To this end, note that, since $\max(f,g)>0$ on an open neighborhood of $D_p$,  $H=\lambda F+(1-\lambda)G$ has positive density  on this neighborhood. Therefore, $H$  is a continuous and strictly increasing function on this neighborhood. Therefore $H^{-1}$ is continuous and strictly increasing in this neighborhood as well. Hence, we can choose $p'<p$ such that $H^{-1}(p')<H^{-1}(p)$, and
  \[\inf_{x\in D_{p'}}(G(x)-F(x))>2\delta.\]
  For all sufficiently large $m$ and $n$,  $D_{p,m,n}\subset D_{p'}$ almost surely  by Fact~\ref{fact: quantile}.
  Therefore, with probability one,
  \begin{equation}\label{intheorem: power}
 \inf_{x\in D_{p,m,n} }\lb \Fy(x)-\Fx(x)\rb\geq 2\delta\quad 
  \end{equation}
  as $m,n\to\infty$. 
 On the other hand, note that
 \begin{align}\label{intheorem: type 1 error}
\MoveEqLeft\sup_{x\in D_{p,m,n}}\bl \lb\Fmy(x)-\Fmx(x)\rb-\lb \Fy(x)-\Fx(x)\rb\bl\nn\\
 \leq &\ \sup_{x\in\RR}\bl \Fmy(x)-\Fy(x)\bl+\sup_{x\in\RR}\bl\Fmx(x)-\Fx(x)\bl,
 \end{align}
 which converges to $0$ almost surely. Therefore, \eqref{inlemma: Theorem 2: plain diff of ecdf} follows, which completes the proof for $T_{m,n}^{\text{min}}$.


 For $T_{m,n}^{\text{tsep}}$, note that \eqref{inlemma: Theorem 2: plain diff of ecdf}  implies
  \[T_{m,n}^{\text{tsep}}(\Fmx,\Fmy)\geq \sqrt{\frac{mn}{N}}\inf_{x\in D_{p,m,n}}\frac{\Fmy(x)-\Fmx(x)}{\sqrt{t(1-t)}}>\sqrt{\dfrac{mn}{N}}\delta\inf_{t\in[p,1-p]}(t(1-t))^{-1/2}, \]
 which diverges to $+\infty$,  thus completing the proof.  \hfill $\Box$

\subsection{Proofs for the shape-constrained test statistics}
\label{app: shpe constrained tests}
Before going into the proof for the shape-constrained test statistics, 
we state and prove a useful lemma.

 \begin{lemma}\label{lemma: weak convergence of dist of unimodal density functions}
   Suppose that $\fx$ is a unimodal density satisfying Condition~\ref{Cond A}. Let $\fnx$ be the unimodal density estimator of \citeauthor{birge1997},  based on the independent observations $X_1,\ldots,X_m$ with density  $\fx$. Here we take  $\eta=o(m^{-1})$, where $\eta$ is the tuning parameter  in Section \ref{sec: density estimation}. Denote by $\Fnx$ the distribution function of $\fnx$.  Further suppose that $\fnxm$ is the Grenander estimator of $\fx$ based on the true mode $M$. 
    Then the following assertions hold:
  \begin{itemize}
  \item[(A)] $ \sqrt{m}\rint|\fnx(x)-\fnxm(x)|dx\as 0.$
  \item[(B)] 
  $\sqrt{m}\|\Fnx-\Fmx\|_{\infty}\as 0.$
 \item[(C)]   $\norm{\sqrt{m}(\Fnx-\Fx)-\mathbb{V}_1\circ\Fx}_{\infty}\as 0$,
 where $\mathbb{V}_1$ is as defined in \eqref{intheorem: convg of empirical process : F}.
  \end{itemize}   
   \end{lemma}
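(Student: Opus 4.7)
The plan is to reduce Parts (B) and (C) to Part (A) via routine triangle-inequality arguments, using classical results for the Grenander estimator with known mode together with the Brownian-bridge coupling from \eqref{intheorem: convg of empirical process : F}, which already holds on the probability-one set $A$.

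For Part (B), let $\widetilde F_m$ denote the CDF of $\fnxm$, and write $\Fnx - \Fmx = (\Fnx - \widetilde F_m) + (\widetilde F_m - \Fmx)$. The first summand is controlled by $\|\Fnx - \widetilde F_m\|_\infty \le \tfrac12\int|\fnx-\fnxm|\,dx$, which Part (A) asserts is $o(m^{-1/2})$ almost surely. The second summand is controlled by Kiefer--Wolfowitz-type bounds for Grenander estimators, applied separately on $(-\infty,M]$ and $[M,\infty)$: on each side $\fnxm$ is the Grenander estimator of a monotone density with no flat regions by Condition~\ref{Cond A}, so $\sqrt{m}\|\widetilde F_m - \Fmx\|_\infty \as 0$. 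Part (C) then follows from Part (B) by another triangle inequality together with \eqref{intheorem: convg of empirical process : F}.

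For Part (A) itself, I would exploit Birg\'e's explicit construction. His estimator $\fnx$ is a piecewise-constant modification of a Grenander-type estimator built from an estimated mode $\tilde M_m$ chosen on a deterministic grid of spacing $\eta$; the discretization itself contributes at most $O(\eta)$ to the total variation between $\fnx$ and the undiscretized mode-estimated Grenander estimator. Since $\eta = o(m^{-1})$ and Condition~\ref{Cond A} rules out flat pieces of $f$ near $M$, one has $|\tilde M_m - M| = O(\eta)$ eventually on $A$. Part (A) thus reduces to an $L^1$ stability statement for the Grenander functional: shifting the mode by $\delta = O(\eta) = o(m^{-1})$ changes the estimator by $O(\delta)$ in $L^1$. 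I would prove this by coupling the two LCM constructions: outside a shrinking window around $M$ the two LCM's coincide for all sufficiently large $m$ on $A$, and inside the window the integrated discrepancy is at most $2\delta$ times a uniform-in-$m$ sup-norm bound on $\fnxm$ near $M$, the latter being finite under Condition~\ref{Cond A} by the almost-sure uniform convergence of $\fnxm$ to the bounded density $f$ on compact neighborhoods of $M$.

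The main obstacle is the $L^1$ stability lemma for the Grenander functional under mode perturbations. The LCM construction is inherently non-local, and one has to argue carefully that the LCM's of the restricted empirical CDF on $(-\infty,M]$ and on $(-\infty,M+\delta]$ coincide outside a vanishing neighborhood of $M$. This requires Condition~\ref{Cond A} to preclude flat pieces where small mode shifts could propagate through the touchpoints of the majorant into long-range changes. With the stability lemma in place, Part (A) assembles quickly, and Parts (B) and (C) follow as outlined.
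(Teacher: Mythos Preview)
Your outline for Parts (B) and (C) matches the paper exactly: reduce to Part (A), use the Kiefer--Wolfowitz result on each side of the true mode $M$ to handle $\sqrt{m}\|\widehat F_m^0-\Fmx\|_\infty$, and finish (C) by the triangle inequality against \eqref{intheorem: convg of empirical process : F}.

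For Part (A), however, your route diverges from the paper's and contains a real gap. The paper does not argue via mode consistency or an $L^1$ stability lemma at all. Instead it simply invokes Theorem~1 of \citeauthor{birge1997}, which already packages the entire comparison:
\[
\frac{1}{2}\int_{\mathbb R}|\fnx(x)-\fnxm(x)|\,dx \;\le\; \eta + \|\widehat F_m^0-\Fmx\|_\infty.
\]
Since $\eta=o(m^{-1})$ and $\sqrt{m}\|\widehat F_m^0-\Fmx\|_\infty\as 0$ under Condition~\ref{Cond A} (the Kiefer--Wolfowitz step you already have), Part (A) is immediate. No statement about $|\tilde M_m-M|$ is ever needed.

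Your proposed argument, by contrast, rests on two claims that are not justified. First, the assertion that $|\tilde M_m-M|=O(\eta)$ eventually on $A$ does not follow from Condition~\ref{Cond A}. Birg\'e's construction selects a mode on a grid of spacing $\eta$ by a KS-type criterion, but the grid point that best fits the empirical CDF need not lie within $O(\eta)$ of the true mode: Condition~\ref{Cond A} only excludes flat pieces and says nothing about curvature at $M$, so if $f$ is very flat near $M$ (e.g.\ $f''(M)$ small or zero), the selected $\tilde M_m$ can sit far from $M$ even for large $m$. Second, the $L^1$ stability lemma you flag as the main obstacle is genuinely delicate for exactly the non-locality reason you mention, and you would need it to hold with the specific rate $O(\delta)$ uniformly in $m$. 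Birg\'e's Theorem~1 sidesteps both issues by comparing $\fnx$ to $\fnxm$ through the empirical CDF rather than through the estimated mode; I recommend citing that result directly.
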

 \begin{proof}
   Suppose that $M$ is the true mode of the density $\fx$.  In this case $\Fx$ can be written as \citep{prokasa} 
  \[\Fx=\alpha \Fx^{+}+(1-\alpha)\Fx^{-},\]
  where $\alpha=P_F(X\leq M)$, and $\Fx^{+}$ and $\Fx^{-}$ are the conditional distributions on $(-\infty,M]$ and  $[M,\infty)$, respectively, i.e.
  \[F^{+}(x)=\frac{F(x)1[x\leq M]}{F(M)}\quad\text{ and }\quad F^{-}(x)=\frac{\slb F(x)-F(M)\srb 1[x> M]}{1-F(M)}.\]
  
  Let us denote the distribution function of $\fnxm$ by $\Fnxm$. From \cite{prokasa} it follows that $\Fnxm$ can be expressed as 
  \begin{equation}\label{inlemma: shape: S2 Grenander estimator}
      \Fnxm=\hat{\alpha}_m\Fnx^{0,+}+(1-\hat{\alpha}_m)\Fnx^{0,-}
  \end{equation}
  where $\hat{\alpha}_m$ is the sample proportion on $(-\infty.M],$ and $\Fnx^{0,+}$ and $\Fnx^{0,-}$ are the monotonoe Grenander estimates of $\Fx^{+}$ and $\Fx^{-}$, respectively. Denote by $\Fmx^{+}$ and $\Fmx^{-}$, respectively, the empirical distribution functions corresponding to the observations in $(-\infty,M]$ and $[M,\infty)$. Since $\Fx$ is continuous, the probability that $X_i=M$ for some $i\in\{1,\ldots,m\}$ is $0$. Hence, there is no ambiguity in the above definition of  $\Fnxm$.  Also,  the empirical distribution of the $X_i$' writes as 
  \[\Fmx=\hat{\alpha}_m\Fmx^{+}+(1-\hat{\alpha}_m)\Fmx^{-}.\] 
  
   It is well known that under Condition~\ref{Cond A}, the Grenander estimator $\Fnx^{0,+}$ satisfies $\|\sqrt{m}(\Fnx^{0,+}-\Fx^{+})-\mathbb{V}_1\circ\Fx^{+}\|_{\infty}\as 0$ and $\sqrt{m}\|\Fnx^{0,+}-\Fmx^{+}\|_\infty\as 0$ (see Theorem 2.1 of \cite{beare2017}, the original result dates back to \cite{kiefer}). Similar results hold for $\Fnx^{0,-}$ and $\Fmx^{-}$.

  Since $\hat{\alpha}_m\as\alpha$ with probability one, we conclude that
  \begin{align}\label{weak convg: Grenander}
  \sqrt{m}\|\Fnxm-\Fmx\|_{\infty}\leq &\ \hat{\alpha}_m\sqrt{m}\|\Fnx^{0,-}-\Fmx^{-}\|_{\infty}+(1-\hat{\alpha}_m)\sqrt{m}\|\Fnx^{0,+}-\Fmx^{+}\|_{\infty}\as 0.
  \end{align}
  To prove part (A) of the current lemma,  now we invoke Theorem 1 of \citeauthor{birge1997}, which states that
 \[\dfrac{\sqrt{m}}{2}\rint|\fnx(x)-\fnxm(x)|dx\leq \sqrt{m}\eta+\sqrt{m}\|\Fnxm-\Fmx\|_{\infty},\]
 where $\eta$ is   as defined in Section \ref{sec: density estimation}, which implies that, in our case, $\eta=O(1/m)$. This, combined with \eqref{weak convg: Grenander}, proves that the right hand side of the above display approaches $0$ almost surely. Thus part (A) of the current lemma is proved. 
 
 Now note that since
  $\sqrt{m}\|\Fnx-\Fmx\|_{\infty}\leq \sqrt{m}\|\Fnx-\Fnxm\|_{\infty}+\sqrt{m}\|\Fnxm-\Fmx\|_{\infty}$,
  and
 \begin{align*}
 \|\Fnx-\Fnxm\|_{\infty}\leq \rint|\fnx(x)-\fnxm(x)|dx,
 \end{align*}
part (B) of the current lemma follows
by part (A) and \eqref{weak convg: Grenander}. 

  Finally, part (C) of the current lemma follows by noting
  that
  \begin{align*}
 \MoveEqLeft \norm{\sqrt{m}(\Fnx-F)-\mathbb{V}_1\circ F}_{\infty}\leq   \norm{\sqrt{m}(\Fnx-\Fmx)}_{\infty}+ \norm{\sqrt{m}(\Fmx-F)-\mathbb{V}_1\circ F}_{\infty},
  \end{align*}
  which converges to $0$ as $m\to\infty$ by part (B) of the current lemma and \eqref{intheorem: convg of empirical process : F}.
\end{proof} 
 
 \subsubsection{Proof of Lemma \ref{theorem: convergence: Ti: unimodal }}
Let us define
 \[\widehat{\vartheta}_{m,n}(x)=\dfrac{n}{N}\Fnx(x)\slb 1-\Fnx(x)\srb +\dfrac{m}{N}\Fny(x)\slb 1-\Fny(x)\srb.\]
 Recalling the definition of $\vartheta_{m,n}$ from \eqref{def: vartheta}, we obtain that
 \begin{align}\label{inlemma: T1 hat}
\MoveEqLeft T_{m,n}^{\text{min}}(\Fnx,\Fny)-T_{m,n}^{\text{min}}(\Fmx,\Fmy)\nn\\
\leq &\ \lb\dfrac{mn}{N}\rb ^{1/2}\lbs\inf_{x\in D_{p,m,n}}\dfrac{\Fnx(x)-\Fny(x)}{\widehat \vartheta_{m,n}(x)^{1/2}}
-\inf_{x\in D_{p,m,n}}\dfrac{\Fmx(x)-\Fmy(x)}{\vartheta_{m,n}(x)^{1/2}}\rbs\nn\\
\leq &\ \lb\dfrac{mn}{N}\rb ^{1/2}\sup_{x\in D_{p,m,n}}\bl 
\dfrac{\Fnx(x)-\Fmx(x)-\slb \Fny(x)-\Fmy(x)\srb}{\vartheta_{m,n}(x)^{1/2}}\nn\\
&\ +(\Fnx(x)-\Fny(x))\lb\widehat\vartheta_{m,n}(x)^{-1/2}-\vartheta_{m,n}(x)^{-1/2}\rb\bl\nn\\
\leq &\ \lb\dfrac{mn}{N}\rb ^{1/2}\lbs \sup_{x\in D_{p,m,n}}|\Fnx(x)-\Fmx(x)|+\sup_{x\in D_{p,m,n}}|\Fny(x)-\Fmy(x)|\rbs\|\vartheta_{m,n}^{-1/2}\|_{D_{p,m,n}}\nn\\
&\ + \lb\dfrac{mn}{N}\rb ^{1/2}\sup_{x\in D_{p,m,n}}\bl\dfrac{\vartheta_{m,n}(x)^{1/2}-\widehat\vartheta_{m,n}(x)^{1/2}}{\vartheta_{m,n}(x)^{1/2}\widehat \vartheta_{m,n}(x)^{1/2}}\bl.
 \end{align}
 Since $\fx,\fy$ are bounded away from $0$ on an open interval that includes $D_p$, we can find $p'<p$ so that $\fx$ and $\fy$ are bounded away from $0$ on $D_{p'}$. 
 Lemma~\ref{Lemma: Thm 2: 1} then indicates that $\|\vartheta_{m,n}^{-1/2}\|_{D_{p'}}$ is bounded away from $0$ with probability one. 
 Since $Dist(D_{p,m,n},D_p)\as 0$ by \eqref{conv: D and A}, $D_{p,m,n}\subset D_{p'}$ almost surely for all sufficiently large $m$ and $n$. As a result,  $\|\vartheta_{m,n}^{-1/2}\|_{D_{p,m,n}}$ is also bounded with probability one.
 
 Therefore, using  $m/N\to\lambda$ and Part (B) of Lemma \ref{lemma: weak convergence of dist of unimodal density functions}, we conclude that
 \begin{equation}\label{inlemma: T1 hat decomposition}
 \lb\dfrac{mn}{N}\rb ^{1/2}\sup_{x\in D_{p,m,n}}\lbs |\Fnx(x)-\Fmx(x)|+ |\Fny(x)-\Fmy(x)|\rbs\|\vartheta_{m,n}^{-1/2}\|_{D_{p,m,n}}\as 0.
 \end{equation}
 Now, observe that we can write
 \begin{align*}
 \MoveEqLeft \dfrac{\vartheta_{m,n}(x)^{1/2}-\widehat\vartheta_{m,n}(x)^{1/2}}{\vartheta_{m,n}(x)^{1/2}\widehat \vartheta_{m,n}(x)^{1/2}}=
 \dfrac{\vartheta_{m,n}(x)-\widehat{\vartheta}_{m,n}(x)}{\vartheta_{m,n}(x)^{1/2}\widehat \vartheta_{m,n}(x)^{1/2}\slb \vartheta^{1/2}_{m,n}(x)+\widehat{\vartheta}^{1/2}_{m,n}(x)\srb}.
 \end{align*}
 Since $\Fmx,\Fnx,\Fmy,$ and $\Fny$ take values between $0$ and $1$, it follows that
 \[|\vartheta_{m,n}(x)-\widehat{\vartheta}_{m,n}(x)|\leq \dfrac{2n}{N}|\Fnx(x)-\Fmx(x)|+\dfrac{2m}{N}|\Fny(x)-\Fmy(x)|.\]
 Therefore, another application of Part (B) of Lemma \ref{lemma: weak convergence of dist of unimodal density functions} combined with the fact that $m/N\to\lambda$ entails that
 \[\lb\dfrac{mn}{N}\rb^{1/2}\sup_{x\in D_{p,m,n}}|\widehat\vartheta_{m,n}(x)-{\vartheta}_{m,n}(x)|\as 0.\]
 Since $\|\vartheta_{m,n}^{-1/2}\|_{D_{p,m,n}}$ is bounded on $D_{p,m,n}$ almost surely, the above implies that $\|\widehat \vartheta_{m,n}^{-1/2}\|_{D_{p,m,n}}$ is also bounded on $D_{p,m.n}$ almost surely.
 Hence, 
  \[\lb\dfrac{mn}{N}\rb^{1/2}\sup_{x\in D_{p,m,n}}\dfrac{\vartheta_{m,n}(x)-\widehat{\vartheta}_{m,n}(x)}{\vartheta_{m,n}(x)^{1/2}\widehat \vartheta_{m,n}(x)^{1/2}\slb \vartheta^{1/2}_{m,n}(x)+\widehat{\vartheta}^{1/2}_{m,n}(x)\srb}\as 0,\]
  which combined with \eqref{inlemma: T1 hat} and \eqref{inlemma: T1 hat decomposition}, implies that
  \[T_{m,n}^{\text{min}}(\Fnx,\Fny)-T_{m,n}^{\text{min}}(\Fmx,\Fmy)\as 0.\]
   Similarly one can show that $T_{m,n}^{\text{min}}(\Fmx,\Fmy)-T_{m,n}^{\text{min}}(\Fnx,\Fny)\as 0$,  leading to 
 \[|T_{m,n}^{\text{min}}(\Fmx,\Fmy)-T_{m,n}^{\text{min}}(\Fnx,\Fny)|\as 0.\]

Using part (B) of Lemma \ref{lemma: weak convergence of dist of unimodal density functions} and  $m/N\to\lambda$ in the second step, we also deduce that
\begin{align*}
\MoveEqLeft |T_{m,n}^{\text{tsep}}(\Fnx,\Fny)-T_{m,n}^{\text{tsep}}(\Fmx,\Fmy)|\\
\leq &\ \lb\dfrac{mn}{N}\rb ^{1/2}\dfrac{\sup_{x\in \RR}|\Fnx(x)-\Fmx(x)|+\sup_{x\in \RR}|\Fny(x)-\Fmy(x)|}{ \inf\limits_{z\in [p,1-p]}\sqrt{z(1-z)}},
\end{align*}
which converges to zero almost surely.

 It remains to prove that  $|T_{m,n}^{\text{wrs}}(\Fnx,\Fny)-T_{m,n}^{\text{wrs}}(\Fmx,\Fmy)|$ converges to $0$ almost surely. To this end, we first note that
$\xi(\Fx,\Fy)=\rint\Fx d\Fy$ is Hadamard differentiable  with respect to the norm $\|\cdot\|_{\infty}$ at every pair of distribution functions $(\Fx,\Fy)$ \citep[see  Section 5, pages 362 - 371][]{lehmann1975},  where the derivative at $(\Fx,\Fy)$ is given by
\[\dot{\xi}( \Fx,\Fy;\mu_X,\mu_Y)=\rint \mu_{\subx}d\Fy-\rint \mu_{\suby}d\Fx,\]
where $\mu_X:\RR\mapsto\RR$ and $\mu_Y:\RR\mapsto\RR$ are bounded continuous functions.
Observe that we can write
\begin{align*}
\MoveEqLeft \lb\dfrac{mn}{N+1}\rb^{1/2}\lb\xi(\Fnx,\Fny)-\xi(F,G)\rb\\
=&\ \lb\dfrac{mn}{N(N+1)}\rb^{1/2}\ \dfrac{\xi\lb F+N^{-1/2}\widehat{\Delta}_{m,F},G+N^{-1/2}\widehat{\Delta}_{n,G}\rb-\xi(F,G)}{N^{-1/2}},
\end{align*}
where
\[\widehat{\Delta}_{m,\Fx}=\sqrt{N}(\Fnx-F);\quad \widehat{\Delta}_{n,\Fy}=\sqrt{N}(\Fny-G).\]
Note that part (C) of Lemma \ref{lemma: weak convergence of dist of unimodal density functions} and the fact that  $m/N\to\lambda$ imply that as $m,n\to\infty$,
\[\norm{\widehat{\Delta}_{m,\Fx}-\lambda ^{-1/2}\mathbb{V}_1\circ \Fx}_{\infty}\as 0;\quad [\norm{\widehat{\Delta}_{n,\Fy}-(1-\lambda )^{-1/2}\mathbb{V}_2\circ \Fy}_{\infty}\as 0,\]
and $\sqrt{mn/\{N(1+N)\}}\to\sqrt{\lambda(1-\lambda)}$.
Therefore, the Hadamard differentiability of $\xi$ implies that
\[\bl \lb\dfrac{mn}{N(N+1)}\rb^{1/2}\ \dfrac{\xi\lb F+N^{-1/2}\widehat{\Delta}_{m,F},G+N^{-1/2}\widehat{\Delta}_{n,G}\rb-\xi(F,G)}{N^{-1/2}}-\mathbb{Y}\bl\as 0,\]
where $\mathbb{Y}$ is the random variable $\dot{\xi}( \mu_{\subx}, \mu_{\suby};\Fx,\Fy)$ with $\mu_{\subx}=\sqrt{1-\lambda }\mathbb{V}_1\circ F$, and $\mu_{\suby}=\sqrt{\lambda }\mathbb{V}_2\circ G$.
Thus, we have established
\[\bl\lb\dfrac{mn}{N+1}\rb^{1/2}\lb\xi(\Fnx,\Fny)-\xi(F,G)\rb-\mathbb{Y}\bl\as 0.\]
Similarly using \eqref{intheorem: convg of empirical process : F}, \eqref{intheorem: convg of empirical process : G}, and  $m/N\to\lambda$, one can show that
\[\bl\lb\dfrac{mn}{N+1}\rb^{1/2}\lb\xi(\Fmx,\Fmy)-\xi(F,G)\rb-\mathbb{Y}\bl\as 0.\]
Then the proof for $T_{m,n}^{\text{wrs}}(\Fnx,\Fny)$ follows noting
\begin{align*}
\MoveEqLeft 12^{-1/2}|T_{m,n}^{\text{wrs}}(\Fnx,\Fny)-T_{m,n}^{\text{wrs}}(\Fmx,\Fmy)|\\
\leq &\  \bl\lb\dfrac{mn}{N+1}\rb^{1/2}\lb\xi(\Fnx,\Fny)-\xi(F,G)\rb-\mathbb{Y}\bl\\
&\ +\bl\lb\dfrac{mn}{N+1}\rb^{1/2}\lb\xi(\Fmx,\Fmy)-\xi(F,G)\rb-\mathbb{Y}\bl,
\end{align*}
which converges to $0$ almost surely.  \hfill $\Box$

 \subsubsection{Proof of Lemma \ref{theorem: convergence: Ti: log-concave }}
 
 We can find $p'<p$ such that  $\fx$ and $\fy$ are positive on $D_{p'}$.
 Theorem $4.4$ of \citeauthor{2009rufi},  Condition~\ref{Cond: B1}, and Condition~\ref{Cond: B2} imply that 
  \[\sup_{z\in D_{p'}}|\Flx(z)-\Fmx(z)|=o_p(m^{-1/2}),\]
and
 \[\sup_{z\in D_{p'}}|\Fly(z)-\Fmy(z)|=o_p(n^{-1/2}).\]
 Recall the set $D_{p,m,n}$ defined in \eqref{def: Dp}.
  We note that,  for sufficiently large $N$,  $D_{p,m,n}\subset D_{p'}$ with probability one, 
 indicating
 \begin{equation}\label{intheorem: lc: convergence on compacta}
 \sup_{u\in D_{p,m,n}}|\Flx(u)-\Fmx(u)|+\sup_{x\in D_{p,m,n}}|\Fly(x)-\Fmy(x)|=o_p(m^{-1/2})+o_p(n^{-1/2}),
 \end{equation}
 which is $o_p(N^{-1/2})$ since  $m/N\to\lambda$. This result is similar to Lemma~\ref{lemma: weak convergence of dist of unimodal density functions}(B) for unimodal densities, which is critical to proving the asymptotic equivalence between $T_{m,n}^{\text{min}}(\Fnx,\Fny)$ and $T_{m,n}^{\text{min}}(\Fmx,\Fmy)$, and between $T_{m,n}^{\text{tsep}}(\Fnx,\Fny)$ and $T_{m,n}^{\text{tsep}}(\Fmx,\Fmy)$ in Lemma~\ref{theorem: convergence: Ti: unimodal }. 
 As a consequence, the rest of the proof will be nearly identical to the proof of Lemma~\ref{theorem: convergence: Ti: unimodal }. Hence, we only highlight the main steps of the proof.
 
Let us define
 \[\tilde {\vartheta}_{m,n}(x)=\dfrac{n}{N}\Flx(x)\slb 1-\Flx(x)\srb +\dfrac{m}{N}\Fly(x)\slb 1-\Fly(x)\srb.\]
 Recalling the definition of $\vartheta_{m,n}$ from \eqref{def: vartheta}, and proceeding like the proof of Lemma~\ref{theorem: convergence: Ti: unimodal }, we can prove a log-concave analogue of \eqref{inlemma: T1 hat}, that is
 \begin{align*}
\MoveEqLeft T_{m,n}^{\text{min}}(\Flx,\Fly)-T_{m,n}^{\text{min}}(\Fmx,\Fmy)\\
\leq &\ \lb\dfrac{mn}{N}\rb ^{1/2}\lbs\inf_{x\in D_{p,m,n}}\dfrac{\Flx(x)-\Fly(x)}{\tilde \vartheta_{m,n}(x)^{1/2}}
-\inf_{x\in D_{p,m,n}}\dfrac{\Flx(x)-\Fly(x)}{\vartheta_{m,n}(x)^{1/2}}\rbs\\
\leq &\ \lb\dfrac{mn}{N}\rb ^{1/2}\lbs \sup_{x\in D_{p,m,n}}|\Flx(x)-\Fmx(x)|+\sup_{x\in D_{p,m,n}}|\Fly(x)-\Fmy(x)|\rbs\|\vartheta_{m,n}^{-1/2}\|_{D_{p,m,n}}\\
&\ + \lb\dfrac{mn}{N}\rb ^{1/2}\sup_{x\in D_{p,m,n}}\bl\dfrac{\vartheta_{m,n}(x)^{1/2}-\tilde\vartheta_{m,n}(x)^{1/2}}{\vartheta_{m,n}(x)^{1/2}\tilde \vartheta_{m,n}(x)^{1/2}}\bl.
 \end{align*}
Since $\fx,\fy$ are bounded away from $0$ on an open interval that includes $D_p$, using Lemma~\ref{Lemma: Thm 2: 1}, we can show that  $\|\vartheta_{m,n}^{-1/2}\|_{D_{p,m,n}}$ is bounded with probability one.
Since  $m/N\to\lambda$, by \eqref{intheorem: lc: convergence on compacta} it follows that
 \begin{equation*}
 \lb\dfrac{mn}{N}\rb ^{1/2}\sup_{x\in D_{p,m,n}}\lbs |\Flx(x)-\Fmx(x)|+ |\Fly(x)-\Fmy(x)|\rbs\|\vartheta_{m,n}^{-1/2}\|_{D_{p,m,n}}=o_p(1).
 \end{equation*}
Note that
 \begin{align*}
 \MoveEqLeft \dfrac{\vartheta_{m,n}(x)^{1/2}-\tilde\vartheta_{m,n}(x)^{1/2}}{\vartheta_{m,n}(x)^{1/2}\tilde \vartheta_{m,n}(x)^{1/2}}=
 \dfrac{\vartheta_{m,n}(x)-\tilde{\vartheta}_{m,n}(x)}{\vartheta_{m,n}(x)^{1/2}\tilde \vartheta_{m,n}(x)^{1/2}\slb \vartheta^{1/2}_{m,n}(x)+\tilde{\vartheta}^{1/2}_{m,n}(x)\srb}.
 \end{align*}
 Analogous to the proof of lemma~\ref{theorem: convergence: Ti: unimodal },  using \eqref{intheorem: lc: convergence on compacta}, we can show that
 \[\lb\dfrac{mn}{N}\rb^{1/2}\sup_{x\in D_{p,m,n}}|\tilde\vartheta_{m,n}(x)-{\vartheta}_{m,n}(x)|=o_p(1).\]
 Since $\|\vartheta_{m,n}^{-1/2}\|_{D_{p,m,n}}$ is bounded almost surely, the above implies that $\|\tilde \vartheta_{m,n}^{-1/2}\|_{D_{p,m,n}}$ is also bounded almost surely.
 Therefore another application of \eqref{intheorem: lc: convergence on compacta} yields that
  \[T_{m,n}^{\text{min}}(\Flx,\Fly)-T_{m,n}^{\text{min}}(\Fmx,\Fmy)=o_p(1).\]
 Similarly one can show that $T_{m,n}^{\text{min}}(\Fmx,\Fmy)-T_{m,n}^{\text{min}}(\Flx,\Fly)$ is $o_p(1)$,   which implies $|T_{m,n}^{\text{min}}(\Flx,\Fly)-T_{m,n}^{\text{min}}(\Fmx,\Fmy)|$ converges to $0$ in probability. The proof of
\[|T_{m,n}^{\text{tsep}}(\Flx,\Fly)-T_{m,n}^{\text{tsep}}(\Fmx,\Fmy)|\to_p 0\]
 is analogous to the  proof of $|T_{m,n}^{\text{tsep}}(\Fnx,\Fny)-T_{m,n}^{\text{tsep}}(\Fmx,\Fmy)|=o_p(1)$ in Lemma~\ref{theorem: convergence: Ti: unimodal }.
  \hfill $\Box$
 
\section{ Proofs for Section \ref{sec: measure of discrep}}
\label{sec: appendix: measure of discrep}

Although the aim of the current section is to derive the asymptotic distribution of  $\hd^2(\fnx,\fny)$, we will prove a more general result on plug-in estimators of integrated functionals, which may be of independent interest. Theorem \ref{thm: Hellinger distance} then follows as a special case.  

 We keep using the notations and terminologies developed in Appendix~\ref{sec: appendix: tests}.
  Recall  that we defined the set of all densities on $\RR$ by $\mP$. Let $\mP_1\subset\mP$. Suppose that $T:\mP_1^2\mapsto\RR$ is a functional of the form
\begin{equation}\label{def: measure of discrep}
T(f,g)=\rint v\slb f(x),g(x)\srb dx,
\end{equation}
  where $v:\RR^2\mapsto\RR$ is a known function. In our case,  $T(f,g)$ equals $H(f,g)^2$,  leading to
  \[v\slb f(x),g(x)\srb=2^{-1}\slb\sqrt{f(x)}-\sqrt{g(x)}\srb^2.\]
  
  Now we provide a brief background on a needed concept, namely on influence functions.  Define the set of all densities on $\RR$ by $\mP$. Consider a functional $T:\mP^{2}\mapsto\RR$. Suppose that $f$ and $g$ belong to $\mathcal{P}$, and denote the corresponding distribution functions by $F$ and $G$, respectively.  
Suppose the functions  $x\mapsto \psi_f(x;f,g)$ and $x\mapsto \psi_g(x;f,g)$   satisfy the following display for all $f_1$ and $g_1$ in $\mathcal{P}$: 
  \begin{align}\label{def: influence functions}
  \pdv{}{ t}T(F+t(F_1-F),G)\bl_{t=0}=&\rint \psi_f(x;f,g)f_1(x)dx,\\
\pdv{}{ t}T(F,G+t(G_1-G))\bl_{t=0}=&\rint \psi_g(x;f,g)g_1(x)dx,
  \end{align}
  where above $F_1$ and $G_1$ represent the cumulative distribution functions corresponding to $f_1$ and $g_1$. Then $\psi_f$ and $\psi_g$ represent the influence functions of $T$ (respectively $F$- and $G$-almost surely unique) under the nonparametric model \citep[][p. 292]{vdv}.
 When $T(f,g)$ equals the Hellinger distance $\hd^2(f,g)$, it follows that
 \begin{align}
 \psi_f(x;f,g)&=2^{-1}\lb 1-\sqrt{\dfrac{g(x)}{f(x)}}- \hd^2(f,g)\rb  1_{\supp(f)}(x),\label{def: influence function: f} \\
  \psi_g(x;f,g)&=2^{-1}\lb 1-\sqrt{\dfrac{f(x)}{g(x)}}- \hd^2(f,g)\rb  1_{\supp(g)}(x).\label{def: influence function: g}
  \end{align}
  
  We have already mentioned in Section \ref{sec: measure of discrep}  that the Von Mises Expansion (VME)  plays a critical role in the proofs of this section. 
  We define the first order VME of $T$ in the same lines as \cite{robin2015}. Suppose that $T$ is Gatea\^{u}x differentiable, and the corresponding influence functions $\psi_f$ and $\psi_g$ (see \eqref{def: influence functions} in Section~\ref{sec: measure of discrep}) exist.  Then we say that $T:\mP_1^2\mapsto\RR$  has a first order VME  if it satisfies the following  for all $f_1,f_2,g_1,g_2\in\mP_1$:
  \begin{align}\label{expansion: VME}
T(f_2,g_2)=&\ T(f_1,g_1)+\rint  \psi_f(x;f_1,g_1)f_2(x)dx+\rint \psi_g(x;f_1,g_1)g_2(x)dx\nonumber\\
&\ +O(\|f_1-f_2\|_2^2)+O(\|g_1-g_2\|_2^2).
\end{align}   
  The first order VME implies that $T$ can be written as a linear term plus second order bias term, i.e. $T$ is sufficiently smooth.  \cite{robin2015} gives examples of many  $T$ which has first order VME.
   
   Let $\hf$ and $\hg$ be estimators of $f$ and $g$ based on samples of size $m$ and $n$, respectively. We denote the corresponding distribution functions by $\hF$ and $\hG$. We aim to show that under some  regularity conditions,  the plug-in estimator $T(\hf,\hg)$ is $\sqrt{N}$-consistent for estimating $T(f,g)$. 
   
   The first condition we require is related to the weak convergence of the processes $ \sqrt{m}(\hF-F)$ and $ \sqrt{n}(\hG-G)$ to Brownian  processes.
\begin{cond}{C1}\label{supp: cond: C1}
The distribution functions $\hF$ and $\hG$ corresponding to density estimators $\hf$ and $\hg$ satisfy
 $\sqrt{m}(\hF-F)\to_d{\mathbb{V}}_1(F)$ and 
 $\sqrt{n}(\hG-G)\to_d{\mathbb{V}}_2(G)$,
 where ${\mathbb{V}}_1$ and ${\mathbb{V}}_2$ are Brownian bridges.
 \end{cond} 

 The second condition involves the order of the $L_2$ error in estimating 
 $f$ and $g$. In particular, we require $\|\hf-f\|^2_2$ and $\|\hg-g\|_2^2$ to be of order $o_p(m^{-1/2})$ and $o_p(n^{-1/2})$,  respectively.
 \begin{cond}{C2}\label{supp: Cond C2}
The density estimators $\hf$ and $\hg$ of $\fx$ and $\fy$ satisfy
 \begin{align}\label{condition: L2}
 O_p(\|\hf-f\|_2^2)=o_p(m^{-1/2})\quad \text{ and }\quad O_p(\|\hg-g\|_2^2)=o_p(n^{-1/2}).
 \end{align}
 \end{cond} 
 If the model is correctly specified, and $f$ is bounded, many density estimators $\hf$ are also bounded with high probability, leading to
 \[ O_p(\|\hf-f\|_2^2)=\hd^2(\hf, f)O_p(1).\]
Note that if $\hf$ also satisfies $\hd^2(\hf, f)=o_p(m^{-1/2})$,  Condition \ref{supp: Cond C2} follows. 
 
 { \color{black}
 Our next condition requires  the  influence functions $\psi_f(\cdot;f,g)$ and $\psi_g(\cdot;f,g)$ to be of bounded total variation on $\RR$. 
   We say a function $\mu:\RR\mapsto\RR$ is of bounded total variation on $\RR$,  if there exists a generalized derivative (in the sense of distribution) $\mu'$ of $\mu$ \citep[cf. Section 3.2 of][]{pallara2000} so that $\rint |\mu'(x)|dx<\infty$.
If $\mu$ is of bounded total variation on $\RR$, then $\mu$ is also of bounded variation on $\RR$. }
 \begin{cond}{I}\label{supp: cond: I}
 The maps
$x\mapsto\psi_f(x;f,g)$ and  $x\mapsto\psi_g(x;f,g)$
are of bounded total variation.
 \end{cond}
 
 \textcolor{black}{
 We are now ready to state the main theorem of this section, which gives the asymptotic distribution of $T(\hf,\hg)$ under the above-stated conditions. Later we will show that when $T(f,g)=\hd^2(f,g)$, the conditions are satisfied. Thus Theorem~\ref{thm: Hellinger distance} will follow as a corollary to Theorem~\ref{theorem: divergence: general}. Related literature \citep[cf.][]{robin2015} implies that the asymptotic variance of $T(\hf,\hg)$ as given by Theorem~\ref{theorem: divergence: general} agrees with the asymptotic lower bound for this case under the nonparametric model.  
  }
 
\begin{theorem}\label{theorem: divergence: general}
 Suppose  $\mP_1\subset\mP$. Let $T:\mP_1^2\mapsto\RR$ be a functional of the form \eqref{def: measure of discrep} satisfying the first order VME in \eqref{expansion: VME}.
Consider $f,g\in\mP_1$. We assume that the influence functions $\psi_f$ and $\psi_g$ defined in \eqref{def: influence functions} satisfy Condition \ref{supp: cond: I} .
 Let $\hf$ and $\hg$ be estimators of $f$ and $g$  based on two samples of size $m$ and $n$, respectively, where $m$ and $n$ satisfy  $m/N\to\lambda$.  Let us denote $N=m+n$. Further suppose  $\hf,\hg\in\mP_1$ satisfy Conditions \ref{supp: cond: C1} and \ref{supp: Cond C2}. 
 Then we have
 \[\sqrt{N}\lb T(\hf,\hg)-T(f,g)\rb \to_d N(0,\s^2_{f,g}),\]
 where
 \[\s^2_{f,g}=\lambda^{-1}\rint \psi_{f}(x;f,g)^2f(x)dx+(1-\lambda)^{-1}\rint \psi_{g}(x;f,g)^2g(x)dx.\]
  \end{theorem}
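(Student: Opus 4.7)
The plan is to linearize $T(\hf, \hg) - T(f, g)$ via the first-order VME, rewrite the linear piece as an integral against the signed measure $d\psi_f$ (and $d\psi_g$) using integration by parts, and then apply the continuous mapping theorem together with the weak convergence of $\sqrt{m}(\hF - F)$ supplied by Condition~\ref{supp: cond: C1}. Throughout, I will use the identity $\rint \psi_f(x;f,g) f(x)\,dx = \rint \psi_g(x;f,g) g(x)\,dx = 0$, which follows by setting $f_1 = f$ (resp.\ $g_1 = g$) in \eqref{def: influence functions} and observing that $t \mapsto T(F + t(F - F), G)$ is constant in $t$.

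First, I will apply \eqref{expansion: VME} with $(f_1, g_1) = (f, g)$ and $(f_2, g_2) = (\hf, \hg)$ to obtain
\[
T(\hf, \hg) - T(f, g) = \rint \psi_f(x;f,g)\hf(x)\,dx + \rint \psi_g(x;f,g)\hg(x)\,dx + O_p(\|\hf - f\|_2^2) + O_p(\|\hg - g\|_2^2).
\]
By Condition~\ref{supp: Cond C2} and $m/N \to \lambda \in (0,1)$, both remainder terms are $o_p(N^{-1/2})$. The zero-mean property of the influence functions then lets me rewrite the linear piece as $\rint \psi_f \, d(\hF - F) + \rint \psi_g \, d(\hG - G)$.

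Second, Condition~\ref{supp: cond: I} ensures each of $\psi_f(\cdot;f,g)$ and $\psi_g(\cdot;f,g)$ is of bounded total variation, hence bounded. An integration by parts (with boundary terms vanishing since $\hF - F \to 0$ at $\pm\infty$) yields
\[
\rint \psi_f \, d(\hF - F) = -\rint (\hF - F)\, d\psi_f.
\]
By Condition~\ref{supp: cond: C1}, $\sqrt{m}(\hF - F) \to_d \mathbb{V}_1 \circ F$ in $\ell^\infty(\RR)$, and the functional $\phi \mapsto \rint \phi \, d\psi_f$ is continuous on this space because $|d\psi_f|$ is a finite measure. The continuous mapping theorem then produces $\sqrt{m}\rint \psi_f \, d(\hF - F) \to_d -\rint (\mathbb{V}_1 \circ F)(x)\, d\psi_f(x)$, a centered Gaussian. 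I will identify its variance as $\rint \psi_f^2\, dF$ by comparison with the empirical case: if $\hF$ were $\Fmx$, the left-hand side equals $m^{-1/2}\sum_i \psi_f(X_i)$, which tends to $N(0, \rint \psi_f^2\, dF)$ by the CLT, and since $\sqrt{m}(\Fmx - F)$ has the same weak limit $\mathbb{V}_1 \circ F$, the limiting Gaussian obtained from $\hF$ must have the same variance.

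The $g$-piece is handled identically. Independence of the two samples yields asymptotic independence of the two Gaussian limits, and combining with $\sqrt{N/m} \to \lambda^{-1/2}$ and $\sqrt{N/n} \to (1-\lambda)^{-1/2}$ gives $\sqrt{N}(T(\hf, \hg) - T(f, g)) \to_d N(0, \sigma_{f,g}^2)$. The main obstacle I anticipate is justifying the continuous mapping step: Condition~\ref{supp: cond: C1} must be understood in a topology (e.g., $\ell^\infty(\RR)$ or the Skorokhod topology on $D(\RR)$) in which integration against the finite signed measure $d\psi_f$ is a continuous operation, and the integration-by-parts formula must be checked carefully since $\hF$ may be a piecewise constant distribution function with jump discontinuities.
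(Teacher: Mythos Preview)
Your proposal is correct and follows essentially the same route as the paper: linearize via the first-order VME, use Condition~\ref{supp: Cond C2} to kill the remainder, invoke the zero-mean property of the influence functions, integrate by parts using Condition~\ref{supp: cond: I}, and apply the continuous mapping theorem together with Condition~\ref{supp: cond: C1}. The only minor difference is in the variance identification: the paper cites an external result, whereas you compute it by comparison with the empirical distribution function (a valid and arguably more self-contained argument); your concern about $\hF$ having jumps is moot here because $\hf\in\mP_1\subset\mP$ is a density, so $\hF$ is absolutely continuous.
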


\begin{proof}
Since $T$ satisfies the first order VME, \eqref{expansion: VME} indicates that
 \begin{align*}
\MoveEqLeft T(\hf,\hg)-T(f,g)\\
=&\ 
\rint\psi_f(x;f,g)\hf(x)dx+\rint \psi_g(x;f,g)\hg(x)dx\\
&\ +O_p(\|\hf-f\|_2^2)+O_p(\|\hg-g\|_2^2)\\
=&\ \rint\psi_f(x;f,g)\hf(x)dx+\rint \psi_g(x;f,g)\hg(x)dx+o_p(N^{-1/2}),
\end{align*}
where the last step follows from Condition \ref{supp: Cond C2}. Denote by $F$, $\hF$, $G$, and $\hG$ the distribution functions corresponding to $f$, $\hf$, $g$, and $\hg$, respectively.
Since $\psi_f(x;f,g)$ is an influence function with respect to $f$, it satisfies
\[\rint \psi_f(x;f,g)f(x)dx=0.\]
Hence we can write
\[\rint\psi_f(x;f,g)\hf(x)dx=\rint \psi_f(x;f,g)d (\hF(x)-F(x)).\]
Now note that  $\psi_f(\cdot;f,g)$ is of bounded total variation on $\RR$ by Condition~\ref{supp: cond: I}.
 Therefore, integration by parts yields that
\begin{align*}
\MoveEqLeft\rint \psi_f(x;f,g)d (\hF(x)-F(x))\\
=&\ \psi_f(x;f,g)(\hF(x)-F(x))\bl _{-\infty}^{\infty}-\rint (\hF(x)-F(x))d\psi_f(x;f,g).
\end{align*}
The Riemann-Stieltjes integral in the second term on the right hand side of the last display exists because $\psi_f$ is of bounded total variation and $\hF-\Fx$ is continuous. 
Since $\psi_f(\cdot;f,g)$ is of bounded total variation,  it is also bounded, leading to
\[\lim_{x\to\pm\infty}\psi_f(x;f,g)(\hF(x)-F(x))=0.\]
Therefore, we deduce that
\[\rint \psi_f(x;f,g)\hf(x)dx=-\rint (\hF(x)-F(x))d\psi_f(x;f,g).\]
Similarly we can show that
\[\rint\psi_g(x;f,g)\hg(x)dx=-\rint(\hG(x)-G(x))d\psi_g(x;f,g).\]
Since $\hF$ and $\hG$ satisfy Condition \ref{supp: cond: C1}, it follows that
\[\lb\sqrt{m}(\hF-F),\sqrt{n}(\hG-G)\rb\to_d \lb{\mathbb{V}_1}(F),{\mathbb{V}_2}(G)\rb,\]
where $\mathbb{V}_1$ and $\mathbb{V}_2$ are independent standard Brownian bridges.
Here the underlying metric space corresponding to the weak convergence is $(l^{\infty},\|\cdot\|_{\infty})\times(l^{\infty},\|\cdot\|_{\infty}) $, where $l^{\infty}$  was defined to be the set of all bounded functions on $\RR$.
Since $m/N\to\lambda$,  Slutsky's Theorem yields
\[\lb\sqrt{N}(\hF-F),\sqrt{N}(\hG-G)\rb\to_d \lb \lambda^{-1/2}\mathbb{V}_1(F),(1-\lambda)^{-1/2}{\mathbb{V}_2}(G)\rb.\]
 Since  Condition \ref{supp: cond: I} holds, it follows that, for $\mu_1,\mu_2\in l^{\infty}$, the map
\[(\mu_1,\mu_2)\mapsto \rint \mu_1(x) \ d\psi_f(x;f,g)+\rint \mu_2(x) \ d\psi_g(x;f,g)\] 
is continuous with respect to the uniform metric $\|\cdot\|_{\infty}$. Therefore, invoking the continuous mapping theorem we obtain that
\begin{align*}
\MoveEqLeft \rint\sqrt{N}(\hF(x)-F(x))d\psi_f(x;f,g) +\rint\sqrt{N}(\hG(x)-G(x))d\psi_g(x;f,g)\\
 \to_d &\ \lambda^{-1/2}\rint {\mathbb{V}_1}(F(x))d\psi_f(x;f,g)
  +(1-\lambda)^{-1/2}\rint{\mathbb{V}_2}(G(x))d\psi_g(x;f,g).
\end{align*}
Now for any continuous distribution function $F$,  any Brownian bridge $\mathbb{V}$, and any function  $\mu$ with finite total variation, the random variable
\[\mathbb{Y}=\rint \mathbb{V}(F(x))d\mu(x)\sim N(0,\s_\mu^2),\]
where 
\[\s_\mu^2=\rint \mu(x)^2f(x)-\lb\rint \mu(x)f(x)dx\rb^2.\]
The above follows from the proof of Theorem $2.3$ of \cite{ramu2018}.
Therefore,
\begin{align*}
\lambda^{-1/2}\rint {\mathbb{V}_1}(F)d\psi_f(x;f,g)
  +(1-\lambda)^{-1/2}\rint{\mathbb{V}_2}(G)d\psi_g(x;f,g),
\end{align*}
which is distributed as a Gaussian random variable with variance $\s^2_{f,g}$, thus completing the proof.
 \end{proof}
 
 Now we focus on the special case at hand, i.e. $T(f,g)=\hd^2(f,g)$.  Towards this end, our first task is  to show the existence of the first order VME. We take $\mP_1$ to be $\mP(b,B)$, where
 f $\mP(b,B)$ is as defined in \eqref{def: bounded density class}. 
  
\begin{lemma}\label{lemma: VME}
Let $0<b<B<\infty$. Define the map $T:\mP(b,B)^2\mapsto\RR$  by
\[T(f,g)=\hd^2(f,g).\] 
 Then the first order VME in \eqref{expansion: VME} holds for $T$ for any $b,B>0$. 
\end{lemma}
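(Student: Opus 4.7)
The plan is to derive the first-order VME for $T(f,g)=H^2(f,g)$ by exploiting the identity $H^2(f,g) = 1 - \int \sqrt{f g}\,dx$ and performing a second-order Taylor expansion of the smooth map $\phi(u,v)=\sqrt{uv}$ on the rectangle $[b,B]^2$. Since all densities involved lie in $\mathcal{P}(b,B)$, the relevant values of $(u,v)$ at which $\phi$ is evaluated stay in this rectangle, at least on the intersected support $\mathrm{supp}(f_1)\cap\mathrm{supp}(g_1)$, which is where the influence functions $\psi_f,\psi_g$ are non-zero anyway.

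First I would write
\[
T(f_2,g_2) - T(f_1,g_1) = \int \bigl(\sqrt{f_1 g_1} - \sqrt{f_2 g_2}\bigr)\,dx,
\]
and apply Taylor's theorem with remainder to $\phi(u,v)=\sqrt{uv}$ around $(f_1(x),g_1(x))$, using that on $[b,B]^2$ the partials $\phi_u=\tfrac{1}{2}\sqrt{v/u}$ and $\phi_v=\tfrac{1}{2}\sqrt{u/v}$ are bounded, and the second partials $\phi_{uu},\phi_{vv},\phi_{uv}$ are uniformly bounded by a constant $C=C(b,B)$. This yields the pointwise identity
\[
\sqrt{f_2 g_2}-\sqrt{f_1 g_1} = \tfrac{1}{2}\sqrt{g_1/f_1}(f_2-f_1) + \tfrac{1}{2}\sqrt{f_1/g_1}(g_2-g_1) + R(x),
\]
with $|R(x)|\le C\bigl((f_2(x)-f_1(x))^2 + (g_2(x)-g_1(x))^2 + |f_2(x)-f_1(x)|\,|g_2(x)-g_1(x)|\bigr)$. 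After integrating and bounding the cross term by AM--GM, one obtains $\bigl|\int R\,dx\bigr| = O(\|f_1-f_2\|_2^2) + O(\|g_1-g_2\|_2^2)$.

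Next I would check that the integrated linear terms coincide with the influence-function representation. Using \eqref{def: influence function: f}, the identities $\int f_2\,dx=\int f_1\,dx=1$, and $\int \sqrt{f_1 g_1}\,dx=1-H^2(f_1,g_1)$, a short calculation gives
\[
\int \psi_f(x;f_1,g_1)\,f_2(x)\,dx = -\tfrac{1}{2}\int \sqrt{g_1/f_1}\,(f_2-f_1)\,dx,
\]
and symmetrically for $\psi_g$. Substituting these expressions into the Taylor expansion above recovers the VME \eqref{expansion: VME}.

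The main technical obstacle is handling potentially mismatched supports, since the pointwise Taylor step requires both $f_1$ and $g_1$ to be bounded below by $b$. I would decompose $\mathbb{R}$ into the intersection of the four supports (where the Taylor argument applies verbatim) and its complement. On the complement, one or more of $f_i,g_i$ vanishes while the others are bounded by $B$, so $|\sqrt{f_2 g_2}-\sqrt{f_1 g_1}|$ admits a direct pointwise bound which, combined with the lower bound $b$ on whichever density is positive there, lets the residual contribution be absorbed into the $O(\|f_1-f_2\|_2^2)+O(\|g_1-g_2\|_2^2)$ remainder. In the intended application in Theorem \ref{thm: Hellinger distance}, $(f_1,g_1)=(f,g)$ are the true densities and $(f_2,g_2)$ are the shape-constrained estimators whose supports converge to those of $(f,g)$, so this mismatch only contributes a negligible correction.
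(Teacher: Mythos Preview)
Your approach is correct and is essentially the standard direct argument underlying the cited result; the paper itself does not give a proof but simply invokes Lemma~10 of \cite{robin2015}. In that sense your route is more self-contained rather than genuinely different: a second-order Taylor expansion of $\phi(u,v)=\sqrt{uv}$ on $[b,B]^2$, followed by matching the linear terms to $\psi_f,\psi_g$, is exactly how one would establish the cited lemma for the Hellinger functional.

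Two small points worth tightening. First, your parenthetical ``which is where the influence functions $\psi_f,\psi_g$ are non-zero anyway'' is slightly off: by \eqref{def: influence function: f}--\eqref{def: influence function: g}, $\psi_f$ is supported on $\mathrm{supp}(f_1)$ (not $\mathrm{supp}(f_1)\cap\mathrm{supp}(g_1)$), and symmetrically for $\psi_g$. This does not break the argument but the sentence should be corrected. Second, the support-mismatch discussion can be made fully rigorous without appealing to the ``intended application'': on the symmetric difference $\mathrm{supp}(f_1)\triangle\mathrm{supp}(f_2)$ one has $|f_1-f_2|\ge b$ pointwise, so its Lebesgue measure is at most $b^{-2}\|f_1-f_2\|_2^2$, and since every term in sight (the integrand $\sqrt{f_ig_i}$, the influence functions, and the linear Taylor pieces) is uniformly bounded on $\mP(b,B)$, the contribution from this region is $O(\|f_1-f_2\|_2^2)$; the analogous bound holds for the $g$-supports. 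This closes the gap for \emph{all} $f_1,f_2,g_1,g_2\in\mP(b,B)$ as the lemma requires, so you need not hedge with the final sentence about Theorem~\ref{thm: Hellinger distance}.
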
  
  \begin{proof}
  Follows from  Lemma 10 of \cite{robin2015}.
 \end{proof}
 
  Recall that we defined $\fnxm$  to be the Grenander estimator of $\fx$ based on the true mode of $\fx$. Denote by $\fnym$ the the Grenander estimator of $\fy$ based on the true mode of $\fy$.
 Our next step is to obtain the asymptotic distribution of $\hd^2(\fnxm,\fnym)$.  
  \begin{corollary}\label{thm: Hellinger: grenander}
 Let   $f$ and $g$ be continuous unimodal densities in $\mP(b,B)$ for some $b,B>0$. Suppose $f$ and $g$ satisfy condition~\ref{Cond A}. 
  We let $\fnxm$ and $\fnym$ be  the Grenander estimators of $f$ and $g$  based on the true modes, constructed from  samples of size $m$ and $n$, respectively. Suppose $m$ and $n$ satisfy  $m/N\to\lambda$. Then  \[\sqrt{N}(\hd^2(\fnxm,\fnym)-\hd^2(f,g))\to_d N(0,\s^2_{f,g}),\]
  where $\s^2_{f,g}$ is as in \eqref{def: sigma f g}.
  \end{corollary}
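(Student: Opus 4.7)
The strategy is to realize Corollary~\ref{thm: Hellinger: grenander} as a direct application of Theorem~\ref{theorem: divergence: general} to the functional $T(f,g)=\hd^2(f,g)$ evaluated at the Grenander plug-ins $(\fnxm,\fnym)$. The four hypotheses of that theorem, namely the first-order VME \eqref{expansion: VME}, bounded total variation of the influence functions (Condition~\ref{supp: cond: I}), the Brownian-bridge convergence (Condition~\ref{supp: cond: C1}), and the $L_2$-rate condition (Condition~\ref{supp: Cond C2}), need to be verified one by one.

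The VME is immediate from Lemma~\ref{lemma: VME}, since $T=\hd^2$ satisfies the first-order VME on $\mP(b,B)^2$. To establish Condition~\ref{supp: cond: I}, I would argue as follows. Because $f,g\geq b$ on their supports and integrate to one, each support is a bounded interval, so $\mathbf{1}_{\supp(f)}$ and $\mathbf{1}_{\supp(g)}$ contribute only finitely many jumps. Using the explicit form \eqref{def: influence function: f}, the interesting factor in $\psi_f(\cdot;f,g)$ is an affine function of $\sqrt{g/f}$. Since $f$ and $g$ are unimodal on $\RR$, both $\sqrt f$ and $\sqrt g$ are of bounded total variation, and since $f\geq b$ on $\supp(f)$, the function $1/\sqrt f$ has bounded total variation on $\supp(f)$; therefore $\sqrt g\cdot(1/\sqrt f)$ has bounded total variation, and the same argument handles $\psi_g$.

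For Condition~\ref{supp: cond: C1}, the decomposition \eqref{inlemma: shape: S2 Grenander estimator} of $\Fnxm$ together with the Kiefer--Wolfowitz-type result quoted in the proof of Lemma~\ref{lemma: weak convergence of dist of unimodal density functions} gives $\sqrt{m}\|\Fnxm-\Fmx\|_\infty\as 0$ under Condition~\ref{Cond A}; combining this with \eqref{intheorem: convg of empirical process : F} yields $\sqrt{m}(\Fnxm-F)\to_d \mathbb V_1\circ F$, and analogously for $\Fnym$. For Condition~\ref{supp: Cond C2}, I would appeal to the known $L_2$ rate for the Grenander estimator of a bounded density supported on a compact interval under Condition~\ref{Cond A}, which yields $\|\fnxm-f\|_2^2=O_p(m^{-2/3})=o_p(m^{-1/2})$, and similarly for $\fnym$. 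With all four conditions verified, Theorem~\ref{theorem: divergence: general} applies and gives the stated limit, with variance $\sigma^2_{f,g}$ simplifying to \eqref{def: sigma f g} via a short computation: using $\int\sqrt{fg}\,dx=1-\hd^2(f,g)$ and the fact that $\psi_f$ has mean zero under $f$, one finds $\int \psi_f^2 f\,dx=\tfrac14(2\hd^2-\hd^4)$, and symmetrically for $g$.

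The delicate part I expect to be the main obstacle is ensuring that $(\fnxm,\fnym)$ actually lies in $\mP(b',B')$ for suitable $b',B'$ with high probability, as required for Lemma~\ref{lemma: VME} to be applicable at the plug-ins. The Grenander estimator is known to spike at the mode (the ``spike problem'' discussed in Section~\ref{sec: density estimation}) and to take values close to zero near the boundary of the empirical support, so strict membership in $\mP(b,B)$ will fail. The resolution would likely proceed by a stochastic-localization argument: restrict attention to compact sub-intervals of $\supp(f)\cap\supp(g)$ where $f,g$ are uniformly bounded away from $0$, use uniform rate-of-convergence bounds for Grenander estimators on such sub-intervals to place $(\fnxm,\fnym)$ inside $\mP(b',B')$ on an event of probability tending to one, and then control the contribution from the small tails by showing it is $o_p(N^{-1/2})$ using the integrability provided by the bounded support of $f$ and $g$.
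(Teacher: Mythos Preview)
Your proposal follows essentially the same route as the paper: verify the four hypotheses of Theorem~\ref{theorem: divergence: general} for $T=\hd^2$ at the Grenander plug-ins, then simplify the variance to \eqref{def: sigma f g}. Your treatments of the VME, Condition~\ref{supp: cond: C1}, Condition~\ref{supp: Cond C2}, and the variance computation all match the paper's. Your bounded-variation argument for Condition~\ref{supp: cond: I} is in fact more careful than the paper's one-line justification (``differentiable with integrable derivative''), since $f$ and $g$ are only assumed continuous.

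The one place where you overcomplicate matters is the localization into some $\mP(b',B')$. The paper does \emph{not} restrict to compact sub-intervals and then control tails. Instead it shows directly that, for every $\epsilon>0$, there exists $B_\epsilon$ with
\[
\liminf_{m,n}P\bigl(\fnxm,\fnym\in\mP(b/2,B_\epsilon)\bigr)>1-\epsilon,
\]
so that it suffices to work on this event with $\mP_1=\mP(b/2,B_\epsilon)$. Two observations make this easy. First, the spike at the mode is not unbounded: by Corollary~1.2(i) of \cite{balabdaoui2009} (see also \cite{Woodroofe1993}), $\fnxm(M)\to_d f(M)/\mathbb U$ with $\mathbb U\sim\mathrm{Unif}(0,1)$, so $\|\fnxm\|_\infty=O_p(1)$, which yields the upper bound $B_\epsilon$. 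Second, your concern about ``values close to zero near the boundary of the empirical support'' does not arise here: since $f\in\mP(b,B)$, $f\geq b$ on all of $\supp(f)$, including near the endpoints; combined with the uniform convergence of $\fnxm$ to $f$ on sets of the form $(-\infty,M-c]$ and $[M+c,\infty)$, and unimodality on $[M-c,M+c]$, this gives $\inf_{x\in\supp(\fnxm)}\fnxm(x)>b/2$ for all large $m$ almost surely. Thus no splitting into interior and tail pieces is needed.
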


   \begin{proof}
   First we will show that the conditions of Theorem \ref{theorem: divergence: general} are satisfied. Then we will show that the $\s_{f,g}$ of Theorem \ref{theorem: divergence: general} takes the form of \eqref{def: sigma f g} when $T(f,g)=\hd^2(f,g)$.
   Suppose $M$ is the mode of $f$. Since $f(M)<B$, and $\fnxm$ satisfies \eqref{inlemma: shape: S2 Grenander estimator}, the behavior of $\fnxm$ at $M$ is similar to that of the Grenander estimator of a monotone density at its maxima. Therefore, using Corollary 1.2(i) of \cite{balabdaoui2009}   \citep[see also ][]{Woodroofe1993}, we obtain that
   $\fnxm(M)\to_d f(M)/\mathbb U$, where $\mathbb U\sim \text{Uniform}(0,1)$, which implies $\sup_{x\in\RR}\fnxm(x)$ is $O_p(1)$.
 On the other hand, with probability one, $\fnxm$ converges to $\fx$ uniformly over any interval of the from $[M+c,\infty)$ or $(-\infty,M-c]$ where $c>0$ \citep[cf.][]{balabdaoui2009}. Therefore it can be shown that
 \begin{equation}\label{inlemma: lower bound of grenander}
    P\slb \liminf_n\inf_{x\in\supp(\fnxm)}\fnxm(x)>b/2\srb=1.
 \end{equation}
  Similar results hold for $\fnym$ as well. Thus given any $\e>0$, we can find $B_\e>B$ so that 
  \[\liminf_{m\to\infty}P\slb\fnxm,\fnym\in\mP(b/2,B_\e)\srb>1-\e.\]
  Since $\mP(b,B)\subset \mP(b/2,B_\e)$, $f,g\in\mP(b/2,B_\e)$ as well. 
  Thus it suffices to show that the conditions of Theorem \ref{theorem: divergence: general} are satisfied when $\fnxm,\fnym\in\mP_1\equiv\mP(b/2,B_\e)$. 
  
  Notice that Lemma \ref{lemma: VME} implies that the first order VME holds for the functional $\hd^2:\mP_1\mapsto\RR$ when $\mP_1=\mP(b/2,B_\e)$.
 Condition \ref{supp: cond: I} also follows in a straightforward way once we note that, when $f,g\in\mP(b/2,B_\e)$, \eqref{def: influence function: f} and \eqref{def: influence function: g} indicate that $\psi_f(\cdot;f,g)$ and $\psi_g(\cdot;f,g)$ are differentiable functions with integrable derivatives.
  Condition~\ref{supp: cond: C1} follows  from  \eqref{intheorem: convg of empirical process : F}, \eqref{intheorem: convg of empirical process : G}, and \eqref{weak convg: Grenander}.
  
  It remains to verify only Condition \ref{supp: Cond C2}, which we will do only for $\fnx$,  because the calculations for $\fny$ will be identical. Observe that
   \[\|\fnxm-\fx\|_2^2\lesssim \norm{\lb\sqrt{\fnxm}+\sqrt{\vphantom{\fnxm}\fx}\rb^2}_{\infty}\hd^2(\fnxm,\fx)\lesssim \lb \|\fnxm\|_{\infty}+\|\fx\|_{\infty}\rb \hd^2(\fnxm,\fx).\]
  Using Theorem $7.12$ of \cite{vandergeer} one can show that $\hd^2(\fnxm,\fx)=O_p(n^{-2/3})$, and we have already established that $\|\fnxm\|_{\infty}=O_p(1)$. Thus Condition $\ref{supp: Cond C2}$ also follows.
   Now the proof will follow if we can show that 
   \begin{align}\label{incorollary: hellinger: sigma}
       \frac{1}{\lambda}\edint \psi_f(x;f,g)^2f(x)dx+\frac{1}{1-\lambda}\edint \psi_g(y;f,g)^2g(x)dx=\frac{2\hd^2(f,g)-\hd^4(f,g)}{4\lambda(1-\lambda)},
   \end{align}
 where $\psi_f$ and $\psi_g$ are as in \eqref{def: influence functions}.
 To that end, note that
 \[\hd^2(f,g)=\frac{1}{2}\edint(\sqrt{f(x)}-\sqrt{g(x)})^2dx=1-\underbrace{\edint\sqrt{f(x)g(x)}dx}_{\rho(f,g)}.\]
We calculate 
\begin{align*}
 4\edint \psi_f(x;f,g)^2f(x)dx
=&\  \edint \lb 1-\frac{\sqrt{g(x)}}{\sqrt{f(x)}}-\hd^2(f,g)\rb^2 f(x)dx\\
=&\ \edint \slb \sqrt{f(x)}-\sqrt{g(x)}-\sqrt{f(x)}\hd^2(f,g)\srb^2dx,
\end{align*}
which equals
\begin{align*}
\MoveEqLeft \edint \slb \sqrt{f(x)}-\sqrt{g(x)}\srb^2dx +\hd^4(f,g)-2\hd^2(f,g)\edint \slb \sqrt{f(x)}-\sqrt{g(x)}\srb\sqrt{f(x)}dx\\
=&\ 2\hd^2(f,g)+\hd^4(f,g)-2\hd^2(f,g)(1-\rho(f,g))\\
=&\ 2\hd^2(f,g)+\hd^4(f,g)-2\hd^4(f,g)
\end{align*}
because $1-\rho(f,g)=\hd^2(f,g)$. Therefore,
\[\edint \psi_f(x;f,g)^2f(x)dx=\frac{2\hd^2(f,g)-\hd^4(f,g)}{4}.\]
By symmetry, \eqref{incorollary: hellinger: sigma} follows, thus completing the proof.
    \end{proof}
  Our next lemma establishes  that $\hd^2(\fnx,\fny)$ and $\hd^2(\fnxm,\fnym)$ differ by an $o_p(N^{-1/2})$ term. Observe that Corollary \ref{thm: Hellinger: grenander} combined with Lemma \ref{lemma: equivalence of Hellinger} implies Theorem \ref{thm: Hellinger distance}, and thus establishes the asymptotic distribution of $\hd^2(\fnx,\fny)$ as well.
  
    {\color{black}
 \begin{lemma}\label{lemma: equivalence of Hellinger}
Under the set up of Corollary~\ref{thm: Hellinger: grenander},
   \[\bl \hd^2(\fnx,\fny)-\hd^2(\fnxm,\fnym)\bl =o_p(N^{-1/2}).\]
 \end{lemma}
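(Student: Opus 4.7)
The plan is to work with the Hellinger affinity $\rho(f,g) = \rint\sqrt{fg}\,dx$. Since $\hd^2(f,g) = 1 - \rho(f,g)$, it suffices to show that $|\rho(\fnxm,\fnym) - \rho(\fnx,\fny)| = o_p(N^{-1/2})$. I would telescope by swapping one argument at a time, writing
\[
\rho(\fnxm,\fnym) - \rho(\fnx,\fny) = R_1 + R_2,
\]
where $R_1 = \rint\sqrt{\fnym}(\sqrt{\fnxm}-\sqrt{\fnx})\,dx$ and $R_2 = \rint\sqrt{\fnx}(\sqrt{\fnym}-\sqrt{\fny})\,dx$. By symmetry it is enough to prove $|R_1| = o_p(m^{-1/2})$; an analogous argument gives $|R_2| = o_p(n^{-1/2})$, and since $m/N \to \lambda \in (0,1)$ both contribute $o_p(N^{-1/2})$.

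To bound $R_1$, I would use the identity $\sqrt{a}-\sqrt{b} = (a-b)/(\sqrt{a}+\sqrt{b})$ (with the convention $0/0 = 0$) to rewrite
\[
R_1 = \rint \frac{\sqrt{\fnym}\,(\fnxm-\fnx)}{\sqrt{\fnxm}+\sqrt{\fnx}}\,dx,
\]
and then dominate $|R_1|$ by $\bigl\|\sqrt{\fnym}/(\sqrt{\fnxm}+\sqrt{\fnx})\bigr\|_{\infty}\cdot\rint|\fnxm-\fnx|\,dx$. The key inputs are: (i) $\|\fnym\|_\infty = O_p(1)$; (ii) $\inf_{x\in\supp(\fnxm)}\fnxm(x) \geq b/2$ with probability tending to one, both established in the proof of Corollary~\ref{thm: Hellinger: grenander} (see \eqref{inlemma: lower bound of grenander}); and (iii) Part~(A) of Lemma~\ref{lemma: weak convergence of dist of unimodal density functions}, which gives $\rint|\fnxm-\fnx|\,dx = o(m^{-1/2})$ almost surely. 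On the event in (ii), $\sqrt{\fnxm}+\sqrt{\fnx} \geq \sqrt{b/2}$ throughout $\supp(\fnxm)$, so that the portion of $R_1$ restricted to $\supp(\fnxm)$ is $O_p(1)\cdot o_p(m^{-1/2}) = o_p(N^{-1/2})$.

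The main technical obstacle is controlling the contribution from the set $A_m := \{\fnxm = 0,\ \fnx > 0\}$, where $\sqrt{\fnxm}+\sqrt{\fnx}$ collapses to $\sqrt{\fnx}$ and the crude sup-norm bound above fails. I would resolve this by first establishing an analogous lower bound $\inf_{x\in\supp(\fnx)}\fnx(x) \geq c > 0$ with probability tending to one, leveraging that Birg\'e's estimator is a piecewise constant unimodal approximant to $\fnxm$ and inherits the same structural lower bound (mirroring the argument used for $\fnxm$ itself in Corollary~\ref{thm: Hellinger: grenander}). Given this, $c\cdot|A_m| \leq \int_{A_m}\fnx\,dx = \int_{A_m}|\fnxm-\fnx|\,dx \leq \|\fnxm-\fnx\|_1 = o_p(m^{-1/2})$, so $|A_m| = o_p(m^{-1/2})$. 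Combining this with $\|\fnx\|_\infty, \|\fnym\|_\infty = O_p(1)$ then yields
\[
\int_{A_m}\sqrt{\fnym\fnx}\,dx \;\leq\; \sqrt{\|\fnym\|_\infty\|\fnx\|_\infty}\cdot|A_m| \;=\; o_p(m^{-1/2}),
\]
completing the bound on $R_1$. The term $R_2$ is handled by the mirror-image argument with the roles of $(f,\fnxm,\fnx,m)$ and $(g,\fnym,\fny,n)$ interchanged, concluding the proof.
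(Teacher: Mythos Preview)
Your telescoping via the affinity $\rho$ and use of the square-root identity are essentially the same mechanics the paper uses, and the split into $\supp(\fnxm)$ versus the excess set $A_m$, together with Lemma~\ref{lemma: weak convergence of dist of unimodal density functions}(A), is the right skeleton. The substantive difference is the choice of pivot. The paper telescopes through the \emph{true} densities $f,g$, writing
\[
\hd^2(\fnx,\fny)-\hd^2(\fnxm,\fnym)=[\hd^2(f,\fny)-\hd^2(f,\fnym)]+[\hd^2(\fnx,g)-\hd^2(\fnxm,g)]+\text{(two cross terms)},
\]
where each cross term factors as $\int(\sqrt{\hf}-\sqrt{f})(\sqrt{\hg}-\sqrt{g})$ and is $O\bigl(\hd(\hf,f)\hd(\hg,g)\bigr)=o_p(N^{-1/2})$ by Cauchy--Schwarz. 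The payoff is that the multipliers in the two main terms are $\sqrt{f}\le\sqrt{B}$ and $\sqrt{g}\le\sqrt{B}$, so only \emph{lower} bounds on the estimators are ever needed.

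Your two-term telescope instead places $\sqrt{\fnx}$ as the multiplier in $R_2$. The ``mirror-image'' argument for $R_2$ then requires $\|\fnx\|_\infty=O_p(1)$, which is neither established in the paper nor obvious: only $\|\fnxm\|_\infty=O_p(1)$ is shown (via $\fnxm(M)\to_d f(M)/U$ in the proof of Corollary~\ref{thm: Hellinger: grenander}), and Birg\'e's estimator is a Grenander-type estimator at the \emph{estimated} mode $\widehat M$, so it inherits the spike phenomenon there. The $A_m$ piece of $R_1$ can be rescued without a sup bound (once $\fnx\ge c$ on $A_m$, use $\sqrt{\fnx}\le\fnx/\sqrt{c}$ and integrate), but $R_2$ cannot be handled this way; you would either have to prove $\|\fnx\|_\infty=O_p(1)$ separately or re-pivot through $f,g$ as the paper does.

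Separately, your justification for $\inf_{x\in\supp(\fnx)}\fnx(x)\ge c$ does not go through by ``mirroring the argument used for $\fnxm$'': that argument relied on uniform convergence of the Grenander estimator to $f$ away from the true mode, which is not available for $\fnx$. The paper instead does a case analysis on the location of $\widehat M$ relative to $[X_{(1)},X_{(m)}]$, using the pointwise order relations between $\Fnx$ and $\Fmx$ that come from Birg\'e's construction (his (2.7)) to localize the excess set and extract the lower bound directly.
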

 
 \begin{proof}
 
Adding and subtracting terms shows that
\begin{align*}
    \hd^2&(\widehat{f}_m,\widehat{g}_n) -\hd^2(\widehat{f}_m^0,\widehat{g}_n^0) \\
    &= [\hd^2(f,\widehat{g}_n) - \hd^2(f,\widehat{g}_n^0)] + [\hd^2(\widehat{f}_m,g) - \hd^2(\widehat{f}_m^0,g)] \\
    &\quad+ \left[\hd^2(\widehat{f}_m,\widehat{g}_n) - \hd^2(\widehat{f}_m,g) - \hd^2(f,\widehat{g}_n) + \hd^2(f,g)\right] \\
    &\quad- [\hd^2(\widehat{f}_m^0,\widehat{g}_n^0) - \hd^2(\widehat{f}_m^0,g) - \hd^2(f,\widehat{g}_n^0) + \hd^2(f,g)].   
\end{align*}
Now observe that
\begin{align}\label{inlemma: equiv: dtv}
    \hd^2(\widehat{f}^0_m,\widehat{g}^0_n) - \hd^2(\widehat{f}^0_m,g) - \hd^2(f,\widehat{g}_n) + \hd^2(f,g) &= \rint [\sqrt{\widehat{f}^0_m} - \sqrt{f}][\sqrt{\widehat{g}^0_n}-\sqrt{g}]\nn \\
    &\le 2 H(\widehat{f}^0_m,f) H(\widehat{g}^0_n,g),
\end{align}
and
\begin{align}\label{inlemma: equiv: dtv: 2}
    \hd^2(\widehat{f}_m,\widehat{g}_n) - \hd^2(\widehat{f}_m,g) - \hd^2(f,\widehat{g}_n) + \hd^2(f,g) &= \rint [\sqrt{\widehat{f}_m} - \sqrt{f}][\sqrt{\widehat{g}_n}-\sqrt{g}] \nn\\
    &\le 2 H(\widehat{f}_m,f) H(\widehat{g}_n,g).
\end{align}
Since squared Hellinger distance is smaller than the $L_1$ distance, using Lemma~\ref{lemma: weak convergence of dist of unimodal density functions} and the fact that  $m/N\to\lambda$, we obtain
\[\hd^2(\fnxm,f)\leq \| \fnxm-f\|_1=o_p(m^{-1/2})=o_p(N^{-1/2}).\]
A similar result holds for $\hd^2(\fnym,\fy)$ as well.
Thus it follows that the right hand side of \eqref{inlemma: equiv: dtv} is $o_p(N^{-1/2})$.
 Lemma~\ref{lemma: hellinger grenander and birge equivalence} in Section~\ref{sec: addlemma for measure}  implies $H(\fnxm, \fnx)$ and $H(\fnxm, \fnx)$ are $o_p(N^{-1/4})$. Therefore, using triangle inequality, we can show that $H(\fnx,\fx)$ and $H(\fny,\fy)$ are $o_p(N^{-1/4})$ as well, which establishes that the right hand side of \eqref{inlemma: equiv: dtv: 2} is also $o_p(N^{-1/2})$.

Therefore, we have shown that
\begin{align*}
\MoveEqLeft    \hd^2(\widehat{f}_m,\widehat{g}_n) -\hd^2(\widehat{f}_m^0,\widehat{g}_n^0)\\
     &= [\hd^2(f,\widehat{g}_n) - \hd^2(f,\widehat{g}_n^0)] + [\hd^2(\widehat{f}_m,g) - \hd^2(\widehat{f}_m^0,g)] + o_p(N^{-1/2}).
\end{align*}
Hence, the proof will be complete if we can show that
\begin{align*}
    \hd^2(f,\widehat{g}_n)  - \hd^2(f,\fnym)&= o_p(N^{-1/2}), \label{eq:gVaries}\\
    \hd^2(\widehat{f}_m,g)  - \hd^2(\fnxm,g)&= o_p(N^{-1/2}). \nonumber
\end{align*}
We will only prove the first line of the above display because the argument for the second line is similar. 

First, we denote $\mathcal{X}_n=\text{supp}(\fnym)\cup \text{supp}(\fny)$
 and
 $\mathcal Y_n=\mathcal X_n\setminus \text{supp}(\fnym)=\text{supp}(\fny)\setminus \text{supp}(\fnym)$. Also denote $b'=b/2$ and $B'=B+b/2$.
 Since $\sqrt {\fny}+\sqrt {\fnym}>0$ on $\mathcal X_n$, it follows that
 \begin{align*}
     |\hd^2(f,\fnym)-\hd^2(f,\fny)|=&\ \left|\dint_{\mathcal{X}_n}\sqrt{f(x)}(\sqrt{\fny(x)}-\sqrt{\fnym(x)})dx\right |\\ =&\ \bl\dint_{\mathcal{X}_n}\frac{\sqrt{f(x)}}{(\sqrt{\fnym(x)}+\sqrt{\fny(x)})}(\fnym(x)-\fny(x))dx\bl\\
     \leq &\ \sup_{x\in\text{supp}(\fnym)}\frac{\sqrt{f(x)}}{\sqrt{\fnym(x)}}\|\fnym-\fny\|_1+\sup_{x\in\mathcal{Y}_n}\frac{\sqrt{f(x)}}{\sqrt{\fny(x)}}\|\fnym-\fny\|_1
 \end{align*}
 where we used the fact that $\mathcal X_n=\text{supp}(\fnym)\cup \mathcal Y_n$.
 Since (a) $f\in\mathcal P(b,B)$, (b) $\fnym(x)>b/2$ for $x\in\supp(\fnym)$ with probability one by \eqref{inlemma: lower bound of grenander}, and (c) $\|\fnym-\fny\|_1$ is {$o_p(N^{-1/2})$} by Lemma~\ref{lemma: weak convergence of dist of unimodal density functions}(A), we have
 \[ |\hd^2(f,\fnym)-\hd^2(f,\fny)|=o_p(N^{-1/2})O_p\lb\sup\limits_{x\in \mathcal Y_n}\frac{1}{\sqrt{\fny(x)}}\rb.\]
Hence, it only remains to show that $\sup\limits_{x\in \mathcal Y_n}\fny(x)^{-1/2}=O_p(1)$.
 
   Let us denote the mode of $\fny$ by $\Mn$.
   First we show that it suffices to only consider the case when $\Mn\notin[Y_{(1)},Y_{(n)}]$. To that end, we will mainly use the following property of $\Fmy$ that follows from (2.7) of \cite{birge1997}:
   \begin{equation}\label{inlemma; measure: birge  2.7)}
     \Fny(x)\leq   \Fmy(x)\quad \text{for}\quad x\in(-\infty,\widehat M_n]\quad\text{and}\quad \Fmy(x)\leq \Fny(x)\quad\text{for}\quad x\in(\widehat M_n,\infty).
   \end{equation}
    Suppose $\Mn\in[Y_{(1)},Y_{(n)}]$. Then by \eqref{inlemma; measure: birge 2.7)},  any $x<Y_{(1)}$ and $y>Y_{(n)}$ satisfy
   $\Fny(x)=0$ and  $\Fny(y)=1$, respectively.
   Therefore,  the support of $\fny$ is contained in  $[Y_{(1)},Y_{(n)}]$. 
   However,  $[Y_{(1)},Y_{(n)}]\subset \text{supp}(\fnym)$, 
   which implies $\mathcal Y_n=\emptyset$ in this case. Therefore, we only consider the case when $\Mn\notin [Y_{(1)},Y_{(n)}]$.
   
   
    First consider the case when $\Mn>Y_{(n)}$. By \eqref{inlemma; measure: birge 2.7)}, any $y\geq \Mn$ satisfies $\Fny(y)\geq \Fmy(y)=1$, and any $x<Y_{(1)}$ satisfies $\Fny(x)\leq \Fmy(x)=0$. Therefore,  we have $\text{supp}(\fny)\subset [Y_{(1)},\Mn]$. Hence, $\Fny(Y_{(n)})=\int_{Y_{(1)}}^{Y_{(n)}}\fny(x)dx$.
    Since $\fny$ is non-decreasing on $[Y_{(1)},Y_{(n)}]$, we have
    \[\Fny(Y_{(n)})\leq (Y_{(n)}-Y_{(1)})\fny(Y_{(n)}).\]
   Note that Because $\fy(y)>b$, this density has a bounded support, implying $(Y_{(n)}-Y_{(1)})<\Rn(Y)$ for some $\Rn(Y)>0$.
  On the other hand,  
  \[|\Fny(Y_{(n)})-1|\leq \|\Fny-\Fmy\|_{\infty}=o_p(N^{-1/2})\]
  by Lemma~\ref{lemma: weak convergence of dist of unimodal density functions}(B).
  Therefore,
  \[\fny(Y_{(n)})\geq\frac{1-o_p(n^{-1/2})}{\Rn(Y)}.\]
  Because $\text{supp}(\fny)\subset[Y_{(1)},\Mn]$ and $\text{supp}(\fnym)\supset [Y_{(1)},Y_{(n)}]$, we have $\mathcal Y_n\subset [Y_{(n)},\Mn]$, indicating 
\[\sup_{x\in\mathcal{Y}_n}\fny(x)^{-1/2}\leq \fny(Y_{(n)})^{-1/2}\le \frac{\sqrt{\Rn(Y)}}{1+o_p(n^{-1/2})}=O_p(1).\]

 Now suppose $\Mn<Y_{(1)}$. Then using  \eqref{inlemma; measure: birge 2.7)}, we deduce that $\Fny(Y_{(n)})\geq \Fmy(Y_{(n)})=1$ and  $\Fny(x)\leq \Fmy(x)=0$ for any $x<Y_{(1)}$. Therefore, $\text{supp}(\fny)\subset[\Mn,Y_{(n)}]$ and $\mathcal Y_n\subset [\Mn,Y_{(1)}]$. 
 Because $\fny$ is non-increasing on $\mathcal Y_n$ in this case, $\fny(x)>\fny(Y_{(1)})$ for any $x\in \mathcal Y_n$. Therefore, if  we can show that $\fny(Y_{(1)})$ is bounded away from $0$, the rest of the proof will follow similar to the case of $\Mn>Y_{(n)}$.
 
 Because $\fny$ is non-increasing on its support,
   \[\fny(Y_{(1)})(Y_{(n)}-Y_{(1)})\geq \Fny(Y_{(n)})-\Fny(Y_{(1)})\stackrel{(a)}{=} 1-o_p(n^{-1/2}),\]
 where (a) follows from Lemma~\ref{lemma: weak convergence of dist of unimodal density functions}(B). Since $Y_{(n)}-Y_{(1)}=\Rn(Y)<\infty$, we have $\fny(Y_{(1)})^{-1}=O_p(1)$, which completes the proof.
\end{proof}
 }
  \subsubsection*{Proof of Theorem~\ref{thm: Hellinger distance}}
 Theorem \ref{theorem: convergence: Ti: log-concave } follows from Corollary~\ref{thm: Hellinger: grenander} and Lemma \ref{lemma: equivalence of Hellinger}.
  \hfill $\Box$
 \subsubsection*{Proof of Lemma~\ref{lemma: measure of discrep: log-concave}}
 Theorem 4 of \citeauthor{theory} implies that  $\flx$  uniformly converges  to $\fx$ almost surely  provided (i) $\fx$  has finite first moment, (ii) the support of $\fx$ has nonempty interior, and (iii)
 $ \int \max\{\log f(x),0\}f(x)dx<\infty$.
For log-concave $\fx$, (i) follows from Lemma 1 of \citeauthor{theory}, (ii) follows from the continuity of $\fx$, and (iii) follows because $\fx$ is bounded \citep[cf. Lemma 5,][]{theory}. The similar result holds for $\fly$ as well. Because uniform convergence implies pointwise convergence, the above implies $\flx\fly$ pointwise converges to $\fx\fy$ almost surely. Therefore, an application of Scheff\'{e}'s Theorem \citep[cf. Theorem 16.12,][]{billingsley2013} yields that as $m,n\to\infty$, 
\[\rint \lb \flx(x)\fly(x)\rb ^{1/2}dx \as \rint \lb \fx(x)\fy(x)\rb ^{1/2}dx, \]
which indicates
 \begin{align*}
\hd^2(\flx,\fly) =& 1-\rint \lb \flx(x)\fly(x)\rb ^{1/2}dx 
\as\ =\hd^2(\fx,\fy).
  \end{align*}

  For the smoothed log-concave MLE, the uniform convergence of $\flx$  to $\fx$ follows from 
  Theorem 1 of \citeauthor{smoothed} provided $\fx$ has  finite second moment, which follows trivially because all moments of a log-concave density are finite \citep[Lemma 1,][]{theory}. The rest of the proof then follows from Scheff\'{e}'s Theorem as in the case of the log-concave MLE. \hfill $\Box$
%

 \subsection{Additional Lemma:}
 \label{sec: addlemma for measure}
 \begin{lemma}\label{lemma: hellinger grenander and birge equivalence}
 Under the set up of Lemma~\ref{lemma: equivalence of Hellinger},
  \[ \hd^2(\fnx,\fnxm)=o_p(m^{-1/2})\quad\text{ and }\quad \hd^2(\fny,\fnym)=o_p(n^{-1/2}).\]
 \end{lemma}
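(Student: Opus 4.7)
The plan is to leverage the already-established $L_1$ closeness between \birge's estimator and the Grenander estimator (Lemma~\ref{lemma: weak convergence of dist of unimodal density functions}(A)) together with the elementary inequality $|\sqrt a - \sqrt b|^2 \leq |a - b|$, valid for any $a,b \geq 0$. This inequality lets us dominate the squared Hellinger distance between two densities by their $L_1$ distance without imposing any lower bound on the densities, which is convenient here because $\fnxm$ and $\fnx$ may both be small or zero in parts of the real line.

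Concretely, for any densities $\mu,\nu$ the identity $(\sqrt{\mu(x)} - \sqrt{\nu(x)})(\sqrt{\mu(x)} + \sqrt{\nu(x)}) = \mu(x) - \nu(x)$ and the bound $|\sqrt{\mu(x)} - \sqrt{\nu(x)}| \leq \sqrt{\mu(x)} + \sqrt{\nu(x)}$ together give
\[
(\sqrt{\mu(x)} - \sqrt{\nu(x)})^2 \leq |\mu(x) - \nu(x)|,
\]
so upon integrating we obtain $\hd^2(\mu,\nu) \leq \tfrac{1}{2}\|\mu - \nu\|_1$. Applying this with $\mu = \fnx$ and $\nu = \fnxm$ yields
\[
\hd^2(\fnx,\fnxm) \leq \tfrac{1}{2}\|\fnx - \fnxm\|_1.
\]
By Lemma~\ref{lemma: weak convergence of dist of unimodal density functions}(A), $\sqrt{m}\,\|\fnx - \fnxm\|_1 \to 0$ almost surely, so in particular $\|\fnx - \fnxm\|_1 = o_p(m^{-1/2})$. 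Hence $\hd^2(\fnx,\fnxm) = o_p(m^{-1/2})$. Repeating the argument verbatim for $\fny$ and $\fnym$ gives $\hd^2(\fny,\fnym) = o_p(n^{-1/2})$.

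There is no real obstacle: the only ingredients are the elementary $L_1$/Hellinger inequality and the fact, already proved in Lemma~\ref{lemma: weak convergence of dist of unimodal density functions}(A), that \birge's construction with $\eta = o(m^{-1})$ brings $\fnx$ within $o(m^{-1/2})$ of the true-mode Grenander estimator in total variation. The only mild subtlety is that the two densities need not share a common support, but this causes no issue because $(\sqrt a - \sqrt b)^2 \leq |a-b|$ holds pointwise on $\RR$ (using the convention $\sqrt 0 = 0$), so the bound $\hd^2 \leq \tfrac{1}{2}\|\cdot\|_1$ is valid without any positivity assumption.
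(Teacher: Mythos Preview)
Your argument is correct: the well-known inequality $\hd^2(\mu,\nu)\le \tfrac{1}{2}\|\mu-\nu\|_1$ together with Lemma~\ref{lemma: weak convergence of dist of unimodal density functions}(A) immediately yields the claim, and the set-up of Lemma~\ref{lemma: equivalence of Hellinger} (via Corollary~\ref{thm: Hellinger: grenander}) does supply Condition~\ref{Cond A} and the choice $\eta=o(m^{-1})$ needed for that lemma to apply.

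The paper takes a different route. Instead of passing through the $L_1$ distance and invoking Lemma~\ref{lemma: weak convergence of dist of unimodal density functions}(A), it exploits structural facts about \birge's construction from Lemma~1 of \citeauthor{birge1997}: there exist points $a\le c\le\beta$ such that $\fnx=\fnxm$ outside $[a,\beta]$ and one density dominates the other on $(a,c)$ and $(c,\beta)$, respectively. A direct calculation then gives the deterministic bound $\hd^2(\fnx,\fnxm)\le 2|\Fnxm(c)-\Fnx(c)|\le 2\|\Fnxm-\Fnx\|_\infty$, which is at most $2\eta$ by construction. Your approach is shorter and recycles an already-proved lemma; the paper's approach is more self-contained (it does not need Condition~\ref{Cond A} or the Kiefer--Wolfowitz result hidden inside Lemma~\ref{lemma: weak convergence of dist of unimodal density functions}) and yields a slightly sharper, non-asymptotic bound of order $\eta$. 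Either route suffices for the stated $o_p(m^{-1/2})$ conclusion.
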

 
 \begin{proof}
  We will prove the result only for $\hd^2(\fnx,\fnxm)$ because the proof  of the other case will be identical.
  
  Let us denote the mode of $f$ and $\fnx$ by $M$ and $\hM$, respectively. First we consider the case when $\hM>M$.
The proof of Lemma 1 of \citeauthor{birge1997} entails that, in this case, there exist $a,\beta,c\in\RR$ such that 
 \begin{enumerate}
 \item $a\leq M$, $\beta\geq \hM$, and $c\in[M,\hM]$.
 \item $\fnxm(x)=\fnx(x)$  for $x<a$ and $x>\beta$. Therefore, $\Fnxm(x)=\Fnx(x)$  for $x\leq a$ and $x\geq \beta$.
 \item $\fnxm(x)\geq\fnx(x)$ for $x\in(a,c)$, and $\fnxm(x)\leq\fnx(x)$ for $x\in(c,\beta)$.
 \end{enumerate}
 Using the above relations, and denoting the distribution functions of $\fnx$ and $\fnxm$ by $\Fnx$ and $\Fnxm$, respectively, we deduce that
 \begin{align*}
\MoveEqLeft \rint \lb\sqrt{\fnx(x)}-\sqrt{\fnxm(x)}\rb^2dx\\
=&\ \dint_{a}^{\beta} \lb\sqrt{\fnx(x)}-\sqrt{\fnxm(x)}\rb^2dx\\
=&\ \dint_{a}^{\beta}\fnx(x)dx+\dint_{a}^{\beta}\fnxm(x)dx-2\dint_{a}^{\beta}\sqrt{\fnx(x)\fnxm(x)}dx\\
\leq &\ \Fnx(\beta)-\Fnx(a)+\Fnxm(\beta)-\Fnxm(a)-2\dint_{a}^{c}\fnx(x)dx -2\dint_{c}^{\beta}\fnxm(x)dx\\
=&\ \Fnx(\beta)-\Fnx(a)+\Fnxm(\beta)-\Fnxm(a)-2\slb\Fnx(c) -\Fnx(a)+\Fnxm(\beta)-\Fnxm(c)\srb
 \end{align*}
 which is $ 2\{ \Fnxm(c)-\Fnx(c)\}$
 because $\Fnxm(x)=\Fnx(x)$ for $x=a,\beta$.
 Hence, we observe that
 \[\hd^2(\fnx,\fnxm)\leq 2\|\Fnxm-\Fnx\|_{\infty},\]
 which is less than $\eta=1/N$ by the construction of \birge's estimator (see Section~\ref{sec: density estimation}). Hence the proof of Lemma \ref{lemma: equivalence of Hellinger} follows for this case.
 
  Now suppose that $\hM<M$. Then from the proof of Lemma 1 in \citeauthor{birge1997}, one can prove the existence of  $a,\beta,c\in\RR$ such that 
 \begin{enumerate}
 \item $a\leq \hM$, $\beta\geq M$, and $c\in[\hM,M]$.
 \item $\fnxm(x)=\fnx(x)$  for $x<a$ and $x>\beta$. Therefore, $\Fnxm(x)=\Fnx(x)$  for $x\leq a$ and $x\geq \beta$. 
 \item $\fnx(x)\geq\fnxm(x)$ for $x\in(a,c)$, and $\fnx(x)\leq\fnxm(x)$ for $x\in(c,\beta)$.
 \end{enumerate}
 Then in the same way as in the case of $\hM>M$, we can show that
 $\hd^2(\fnx,\fnxm)\leq \|\Fnxm-\Fnx\|_{\infty}$,
 which completes the proof of the current lemma.
 \end{proof}
 
 \section{Additional simulations}
 \label{app: conservative test simulations}
 \textcolor{black}{
 In this section, we perform simulations on the exact same settings as in Section \ref{sec:simulation:1}, but we use the critical value $C_{m,n}z_\alpha$  for the TSEP tests. We remind the readers that  $C_{m,n}$ was set to be $\max\{\sqrt{N/m},\sqrt{N/n}\}$. Since in this case $m=n$, we have $N=2n$, which implies  $C_{m,n}=\sqrt 2$. The power curves are given by Figure \ref{fig: power: corrected}, which implies that the resulting TSEP tests, which we will refer to as the conservative TSEP tests, have inferior power compared to the minimum t-tests. Moreover, a  comparison between Figure \ref{fig: power} and Figure \ref{fig: power: corrected} indicates that the  power of the conservative TSEP tests is much less compared to that of the ordinary TSEP tests, which use the  critical value $z_\alpha$. However, Figure \ref{fig: power: corrected} implies that in case (d), where one distribution is the heavy-tailed Pareto distribution, the conservative TSEP tests succeed to control the type I error at the LFC configuration. All  other tests, including the nonparametric minimum t-test and the nonparametric ordinary TSEP test with critical value $z_\alpha$, have type I error slightly higher than 0.05  at  the LFC configuration in case (d); see Figure \ref{fig: power}. The nonparametric  tests  control the type I error at all other cases, however.  Also, all TSEP tests control type I error in case (b), where the distributions cross each other at the boundary. 
 To summarize, the conservative TSEP tests might have a slight advantage over the ordinary counterparts in terms of  type I error in some boundary cases, but this advantage comes at the cost of a drastic power-loss.  In view of the above, we do not recommend the conservative TSEP tests for implementation. }

 \begin{figure}[ht]
\includegraphics[width=\textwidth]{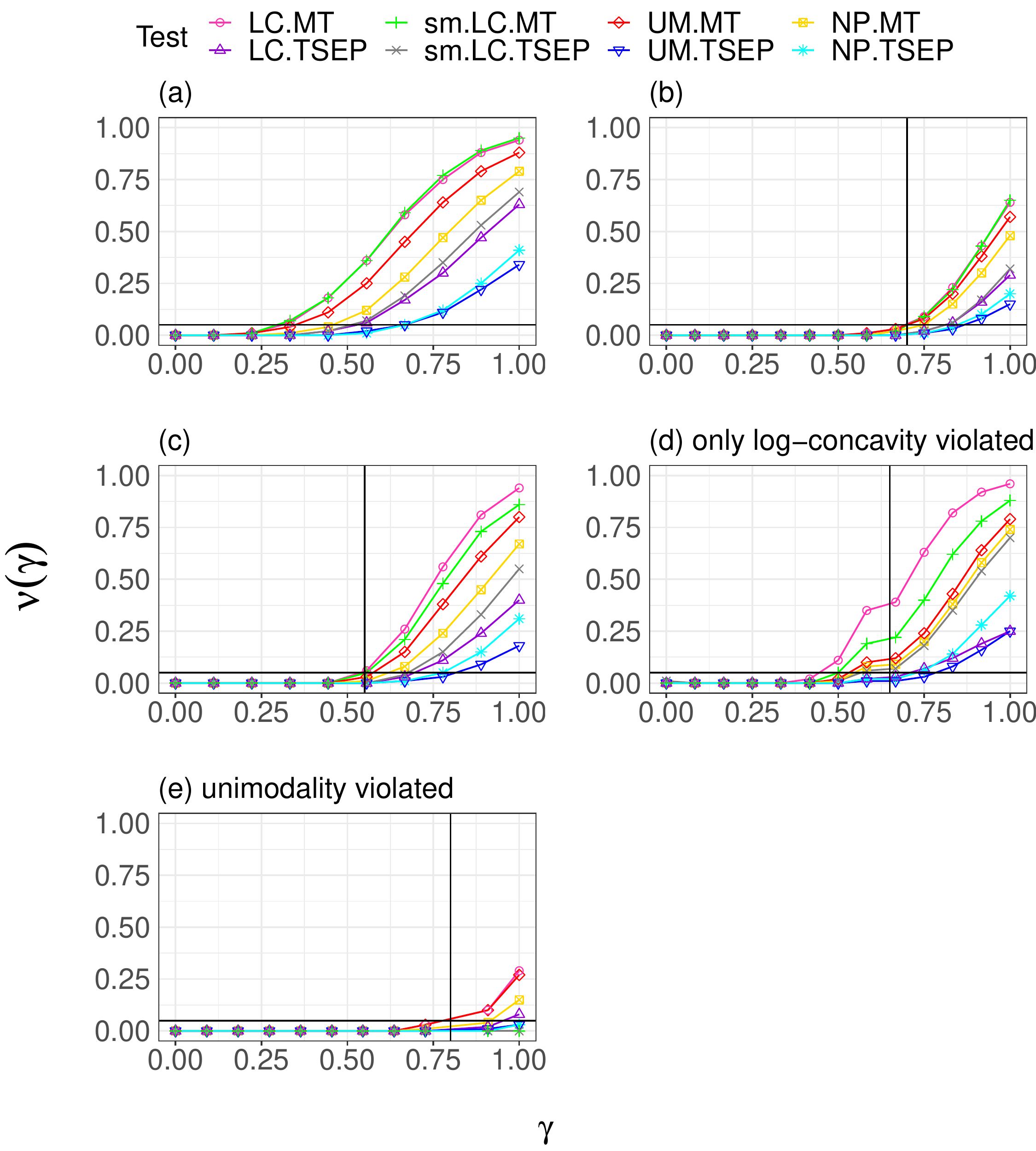}
 \caption{ Plot of estimated power $\nu(\gamma)$ vs $\gamma$  when $m=n=100$ for simulation schemes (a)-(e); here we use the critical value $C_{m,n}z_\alpha$ for thee TSEP tests. Here MT and TSEP correspond to the minimum t-test and the TSEP test, respectively. The standard deviation of the $\nu(\gamma)$ estimate in each case is less than $0.005$.
  The black  horizontal line corresponds to the level of the test, $\alpha=0.05$. For cases (b)-(e), the black vertical line   represents the LFC configuration $\gamma^*$, taking value 0.70 (b), 0.55 (c), 0.65 (d), and 0.80 (e). }
 \label{fig: power: corrected}
\end{figure}

 \section{Additional tables and Figures}
\label{app: additional tables and figures}

 \begin{table*}[h]
  \begin{tabular}{@{}lll@{}}
\toprule
Trials & HVTN 097 & HVTN 100 \\ 
\midrule

Phase & 1b & 1/2 \\

Site & 3 towns in South Africa & 6 towns in South Africa \\ 

Study design &  placebo controlled, randomized, & placebo controlled, randomized,\\ 
 &  double-blind &  double-blind\\

Enrollment & 100 & 252 \\ 

Vaccinee : Placebo ratio & 4:1 & 4:1\\ 

 Per protocol vaccinees & 73 & 185\\
 
Positive  respondents & 68 & 180\\

Age-range & 18-40 & 18-40\\ 
 
Enrollment period & June-December 2013 & February-May 2015 \\ 
 \textcolor{black}{\makecell{Clade of  HIV-1 insert\\ strains used in vaccines}} & B and E & C \\ 

Products used & ALVAC and AIDSVAX & ALVAC and gp120\\
\bottomrule
\end{tabular} 
\caption{Summary of the trial HVTN 097 and trial HVTN 100.  By positive respondents, we refer to vaccinees who developed immune response  for at least one of the seven clade C V1V2 antigens under consideration. }
\label{table: HVTN 097 and HVTN 100}
  \end{table*}

\begin{figure}[h]
  \begin{subfigure}{\textwidth}
\includegraphics[width=\textwidth]{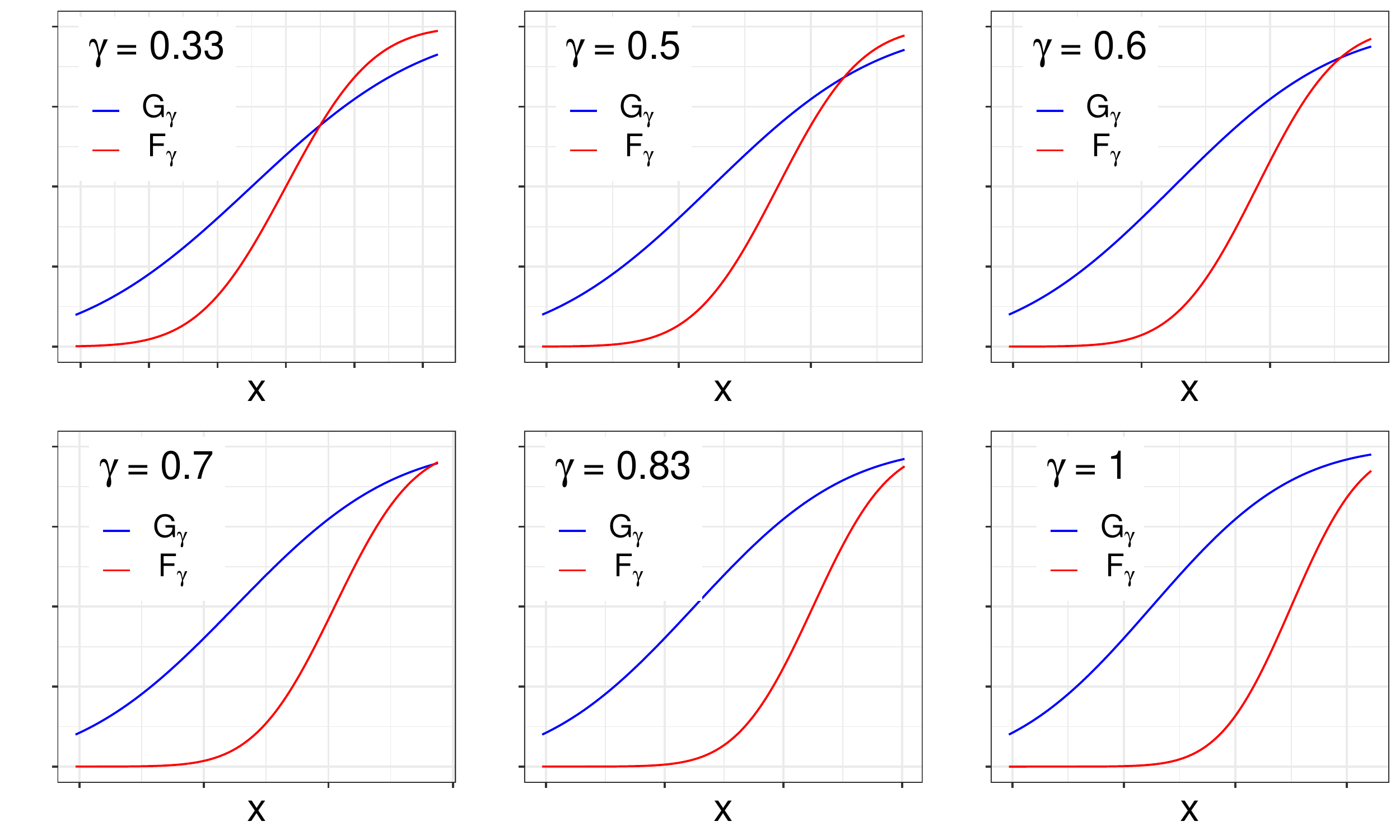}
\caption{Plots for case (b).}
\label{Figure: cross dist: d}
\end{subfigure}\\
\begin{subfigure}{\textwidth}
\includegraphics[width=\textwidth]{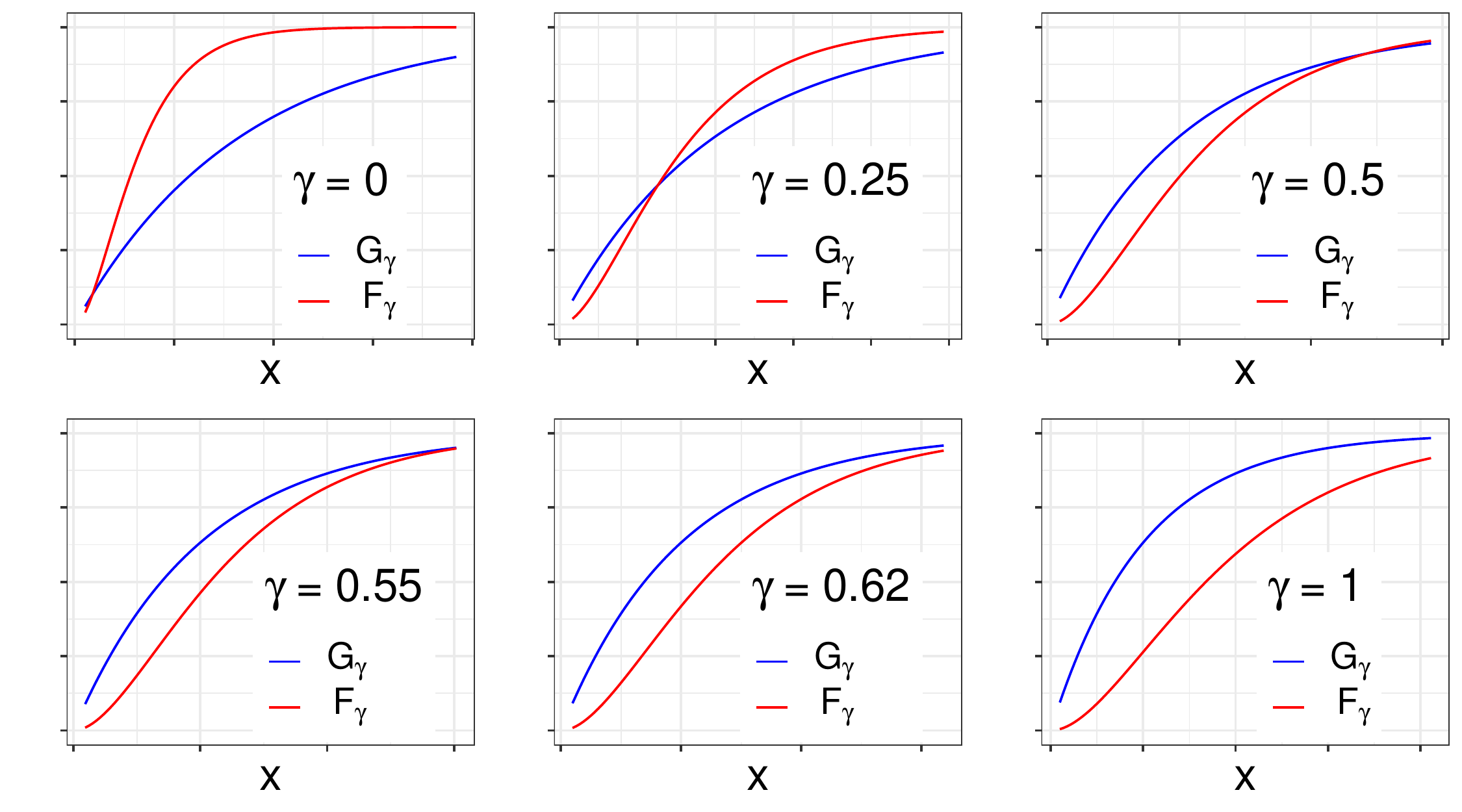}
\caption{Plots for case (c).}
\label{Figure: cross dist: e}
\end{subfigure}
\caption{Plots of the distribution functions $\Fx_{\gamma}$ and $\Fy_{\gamma}$ for several values of $\gamma$ in cases  (b) and (c). The distribution functions are shown  on the region $D_p(F_{\gamma},G_{\gamma})$.
}
\end{figure}

  \begin{figure}[h]
  \begin{subfigure}{\textwidth}
\includegraphics[width=\textwidth]{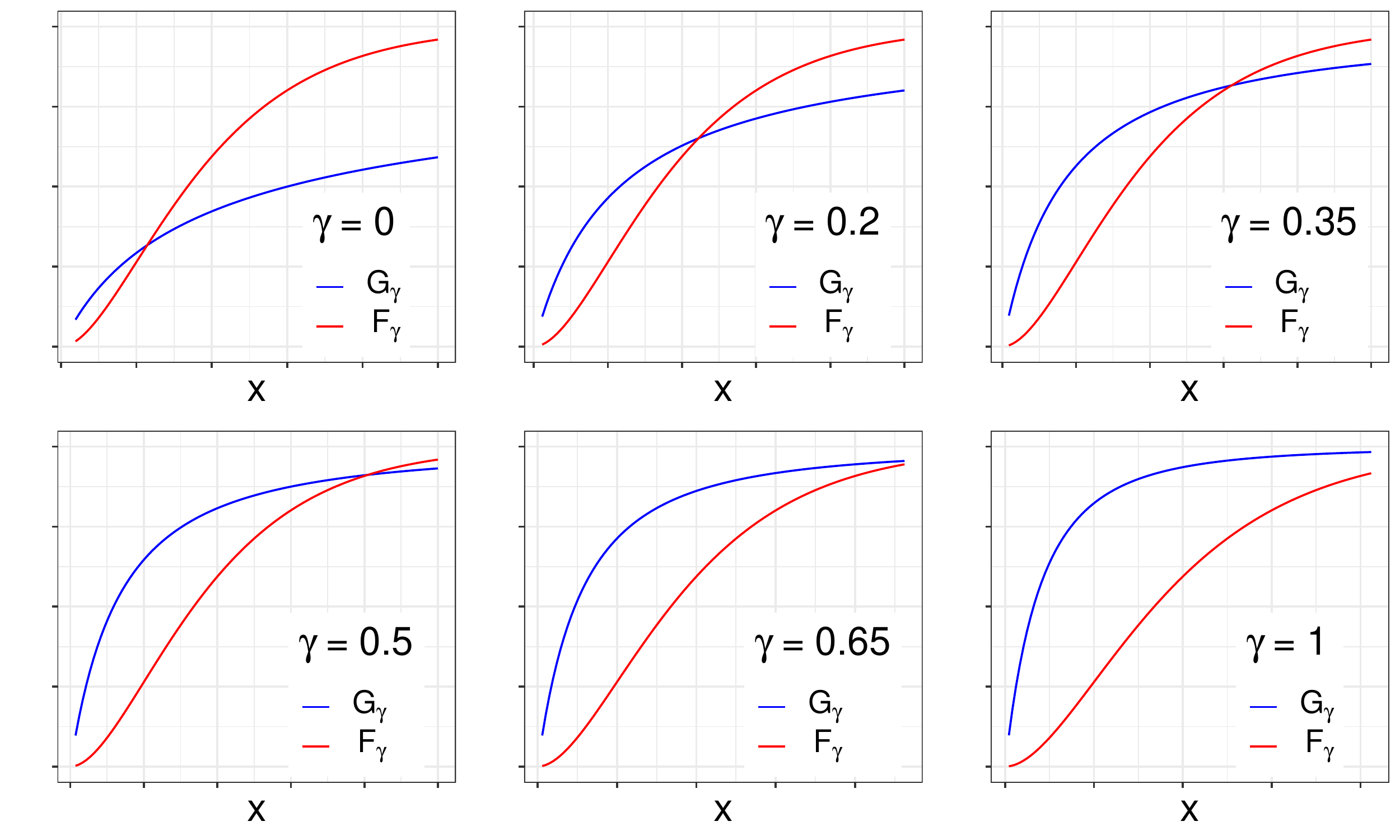}
\caption{Plots for case (d).}
\label{Figure: cross dist: a}
\end{subfigure}\\
\begin{subfigure}{\textwidth}
\includegraphics[width=\textwidth]{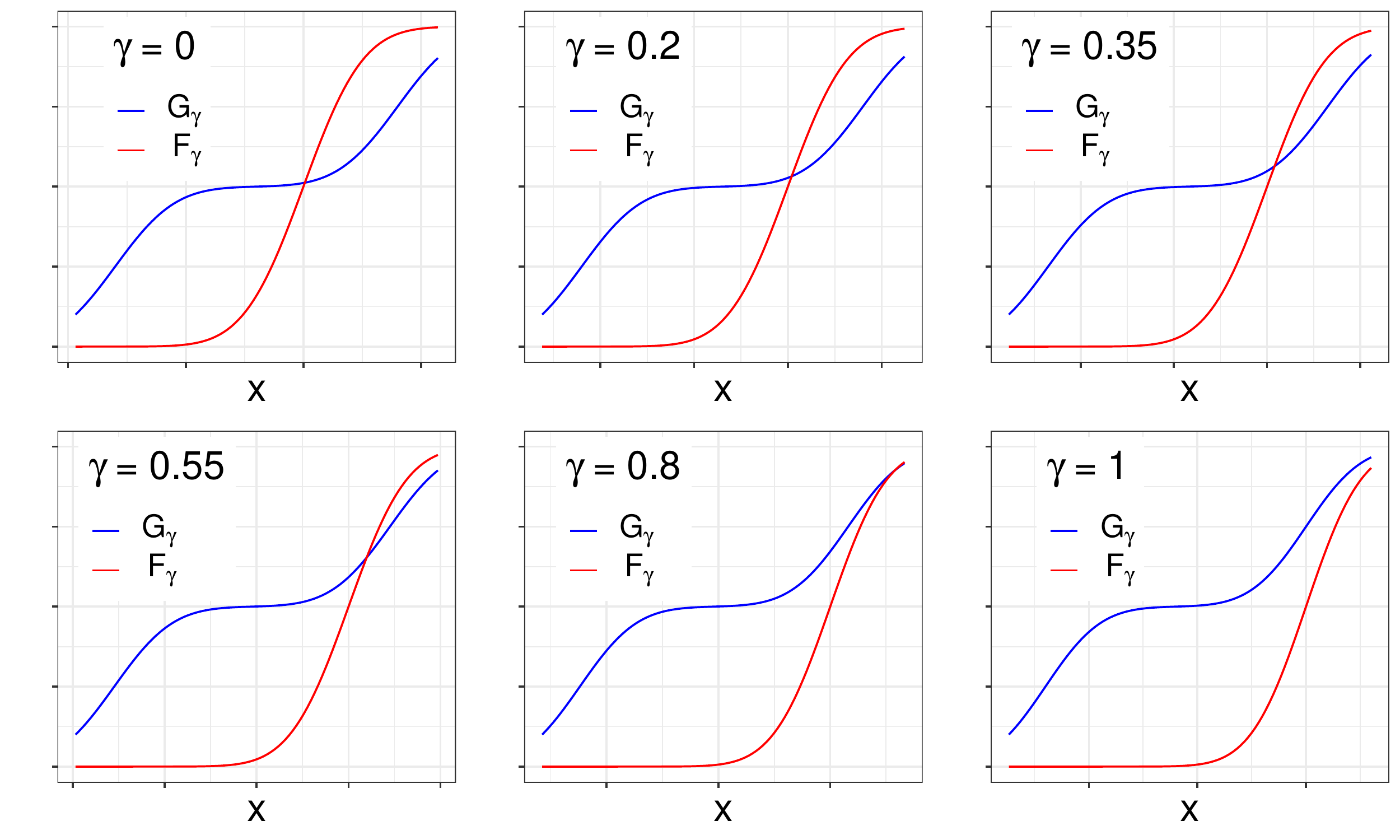}
\caption{Plots for case (e).}
\label{Figure: cross dist: b}
\end{subfigure}
\caption{Plots of the distribution functions $\Fx_{\gamma}$ and $\Fy_{\gamma}$ for several values of $\gamma$ in cases  (d) and (e). The distribution functions are shown  on the region $D_p(F_{\gamma},G_{\gamma})$.
}
\end{figure}

 


\begin{figure}
    \centering
    \includegraphics[width=\textwidth]{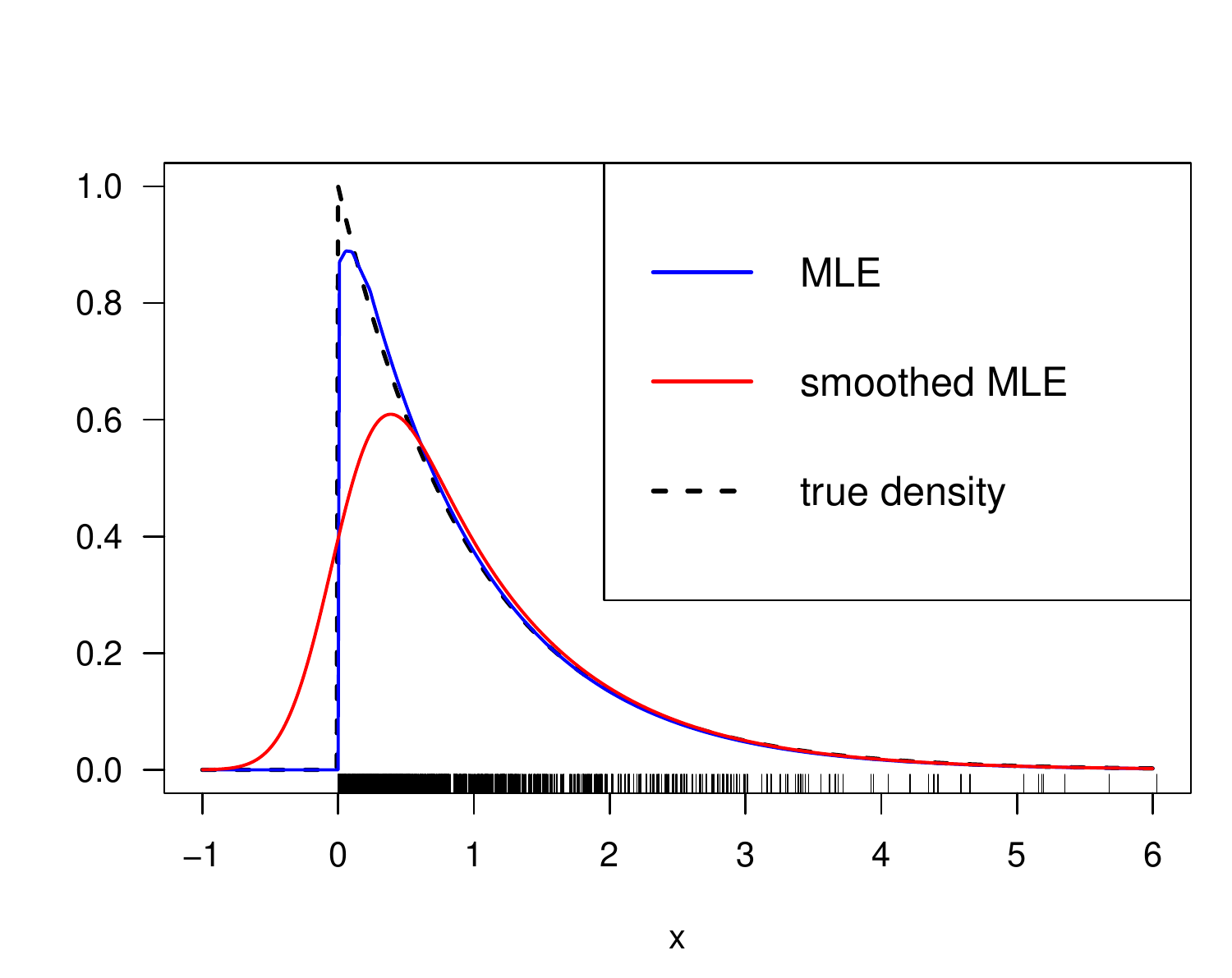}
    \caption{Plot of the log-concave density estimators based on a sample of size 1000 from standard exponential distribution. Observe that the smoothed log-concave MLE does not approximate the true density well near the origin, which is also the point of discontinuity. }
    \label{fig:smooth exp}
\end{figure}
\end{document}